\newlength{\fheight}
\newlength{\fwidth}
\newtheorem{theorem}{Theorem}
\newtheorem{proposition}{Proposition}
\newcommand{\vm}[1]{\boldsymbol{#1}}
\DeclareMathAlphabet{\mathcalligra}{T1}{calligra}{m}{n}
\renewcommand{\H}{{\operatorname{H}}}
\def\vec{\ensuremath\text{vec}}
\def\diag{\ensuremath\text{diag}}
\newcommand{\pr}{\ensuremath\mbox{\rm{Pr}}}
\def\bs{\ensuremath\boldsymbol}
\def\k{\kappa}
\def\hh{\vm{h}}
\def\mm{\vm{x}}
\def\mm{\vm{m}}
\def\l{\ell}
\begin{document}

\setlength\fheight{0.2\columnwidth}
\setlength\fwidth{0.42\columnwidth}

\title{From Blind deconvolution to Blind Super-Resolution through convex programming}
\author{Augustin Cosse}
\affil{Courant Institute of Mathematical Sciences and Center for Data Science, NYU, NYC.\\
D\'epartement de Math\'ematiques et Applications, Ecole Normale Sup\'erieure, Ulm, Paris,\\
Universit\'e Paris Sciences et Lettres.}
\date{\today}
\maketitle
\parindent 0pt

\maketitle

\begin{abstract}
This paper discusses the recovery of an unknown signal $\bs x\in \mathbb{R}^L$ through the result of its convolution with an unknown filter $\bs h \in \mathbb{R}^L$. This problem, also known as blind deconvolution, has been studied extensively by the signal processing and applied mathematics communities, leading to a diversity of proofs and algorithms based on various assumptions on the filter and its input. Sparsity of this filter, or in contrast, non vanishing of its Fourier transform are instances of such assumptions. The main result of this paper shows that blind deconvolution can be solved through nuclear norm relaxation in the case of a fully unknown channel, as soon as this channel is probed through a few $N \gtrsim \mu^2_m K^{1/2}$ input signals $\bs x_n = \bs C_n\bs m_n$, $n=1,\ldots,N,$ that are living in known $K$-dimensional subspaces $\bs C_n$ of $\mathbb{R}^L$. This result holds with high probability on the genericity of the subspaces $\bs C_n$ as soon as $L\gtrsim K^{3/2}$ and $N\gtrsim K^{1/2}$ up to log factors. Our proof system relies on the construction of a certificate of optimality for the underlying convex program. This certificate expands as a Neumann series and is shown to satisfy the conditions for the recovery of the matrix encoding the unknowns by controlling the terms in this series. We apply specific concentration bounds to the first two terms in the series, in order to reduce the sample complexities, and bound the remaining terms through a more general argument. The first term in the series is bounded through the subexponential Bernstein inequality. The second term is decomposed into two contributions, each corresponding to a fourth order gaussian chaos. The first contribution, containing the univariate fourth order monomials in the random vectors defining the subspaces $\bs C_n$, is bounded through a matrix version of the Rosenthal-Pinelis inequality. The second contribution, containing the cross terms, is bounded by using a decoupling argument for U-Statistics. An incidental consequence of the result of this paper, following from the lack of assumptions on the filter, is that nuclear norm relaxation can be extended from blind deconvolution to blind super-resolution, as soon as the unknown ideal low pass filter has a sufficiently large support compared to the ambient dimension $L$. Numerical experiments supporting the theory as well as its application to blind super-resolution are provided.
\end{abstract}
\hspace{1cm}\begin{minipage}{14.5cm}\date{\textbf{Acknowledgement.} AC was supported by the FNRS, FSMP, BAEF and Francqui Foundations. AC thanks MIT Math, Harvard IACS and The University of Chicago, for hosting him during this work. AC is grateful to Laurent Demanet and Ali Ahmed for interesting discussions as well as Joel Tropp for pointing out the matrix version of the Rosenthal-Pinelis inequality.}\end{minipage}

\section{\label{introduction}Introduction}

In \textit{Blind deconvolution}, a single unknown signal $x[n]$ is convolved with an unknown filter $h[n]$, resulting in the output signal $y[n] = h[n]\ast x[n]$. An additional additive noise $e[n]$ is also sometimes considered. The applications of this problem range from image processing, including medical and astronomical imaging as well as super-resolution~\cite{segall2003high}, to communication. The principal motivation behind this problem lies in the cost of high fidelity imaging and transmission devices. When images obtained through a lower quality device can be deblurred successfully, this is often more interesting than acquiring images through higher quality devices. Mathematically, the discrete problem reads
\begin{eqnarray}\mbox{find} \quad & \bs x, \bs h \in \mathbb{R}^L\nonumber \\
\mbox{subject to} \quad & \bs y[\ell]=(\bs h\ast \bs x)[\ell] + e[\ell],\quad \ell=1,\ldots ,L\label{blindDeconvZero}
\end{eqnarray}

Where $\bs h\ast \bs x$ is used to denote the discrete convolution. Problem~\eqref{blindDeconvZero} is ill-posed in the general case~\cite{bertero1998introduction}. As an illutration of the ill-posedness of the problem, note that both $\bs x$ and $\bs h$ are defined up to a scaling $\alpha$. A common approach to fix this uncertainty will be to assume that both $\|\bs x\|$ and $\|\bs h\|$ are unitary. Another improvement can be obtained by increasing the number of probing signals, thus replacing $\bs x$ with the input matrix $X = [\bs x_1,\ldots \bs x_N]$ and by requiring the probing signals $\bs x_i$ to "live" in lower dimensional subspaces. When multiple input signals are considered, the discrete convolution reads

\begin{equation}\label{blindDeconv0}
\bs y_n[\ell] = (\bs h\ast \bs x_n)[\ell]= \sum_{\ell' = 0}^{L-1}h[\ell']x_n[(\ell - \ell')\text{mod $L$}].
\end{equation}

Sensitivity to the noise also seems to remain a serious problem in blind deconvolution~\cite{campisi2007blind}. Several approaches have been introduced in order to first make this problem well-posed and then solve it efficiently. It seems that the oldest references to blind deconvolution go back to the 1970's~\cite{stockham1975blind, cannon1976blind}. Most approaches from engineering and signal processing assume a kernel $\bs h$ with small support which is reasonable when dealing with applications in communications where the filter is usually representing a blur. When dealing with Green operators such as in acoustics or inverse scattering, the support of the perturbation $m_1(\bs x)$ is not necessarily small anymore and those assumptions will not hold. Some of the most relevant attempts at solving blind deconvolution are listed in section~\ref{connectionsExistingWork} below. We essentially skim through those approaches. The list is non-exhaustive. For a more detailed summary we suggest to turn to~\cite{campisi2007blind} and references therein. 

\subsection{An interesting experiment}

When considering a single input, ill-posedness of the blind deconvolution problem manifests itself through an erroneous decomposition of the recovered image into a contribution to the filter and a weaker (partially recovered) image.  Such a wrong recovery is illustrated in Figs.~\ref{SingleFrameFail} and~\ref{SingleFrameFail2} below.
When considering multiple snapshots that are sufficiently distinct from each other yet sufficiently compressible in some appropriate domain; such as multiple slices of a same image volume, or multiple snapshots generated from a time sequence; the ill-posedness of the problem is mitigated and the inacurate splitting of the recovered image into a perturbation to the filter and some incomplete recovery of the snapshot itself is not permitted anymore. Indeed such error terms would accumulate in the subsequent slices and affect the other measurements of the sequence. In other words, if the first image is inaccurately recovered, then the filter is affected as well and leads to errors in the second frame as well. Such a wrong reconstruction of the second frame will then imply a modification of the filter to match the measurements which might not match the modification implied by the erroneous reconstruction of the first frame. The error is thus gradually reduced throughout the sequence of inputs. In practice, only a few ($N \gtrsim 1$) sufficiently distinct snapshots seem to be enough to prevent incorrect reconstruction.

\begin{figure}[h!]\centering
  \begin{minipage}{0.8\textwidth}
    \hspace{0.6cm}\includegraphics[trim = 0cm 0cm 0cm 7cm, clip, width=.9\textwidth]{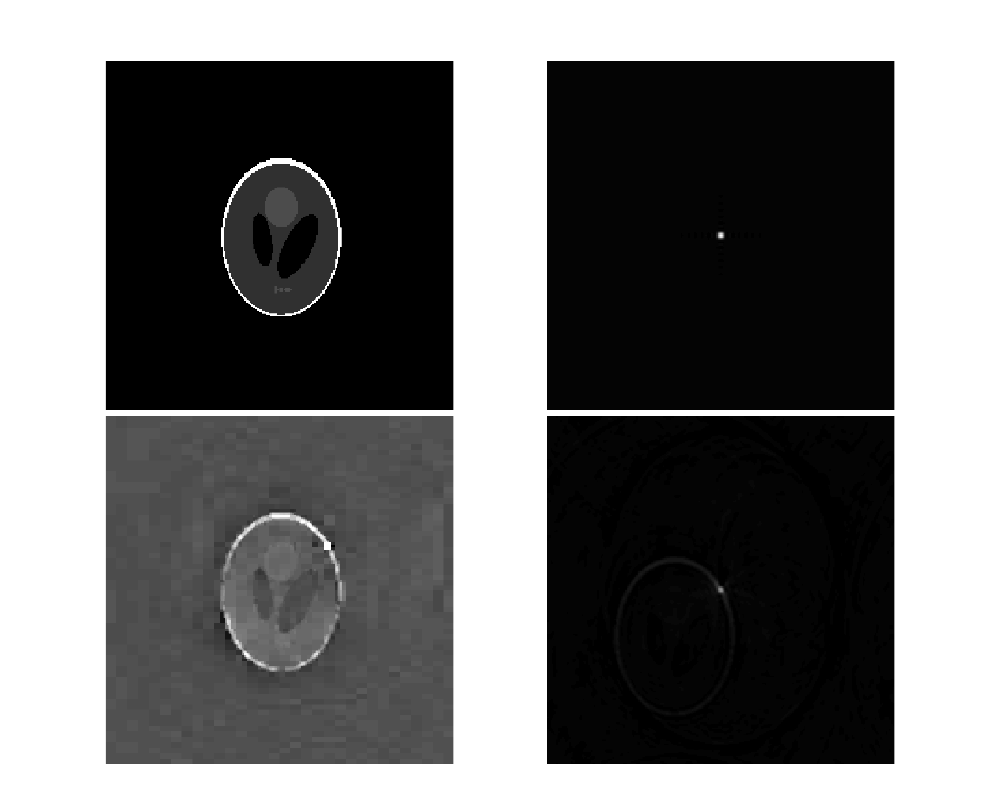}
  \end{minipage}
\caption{\label{SingleFrameFail}Blind deconvolution performed through the nuclear norm minimization program~\eqref{nuclearNormDeconvolutionLinearMap} with a single frame (i.e $n=1$). (Top left) Original image, (top right) point spread function corresponding to the ideal low pass filter, (bottom left) Recovered image, (bottom right) recovered low pass filter. Note the ghost image that appears in the recovered filter. This phenomenon is further highlighted in Fig.~\ref{SingleFrameFail2} below. When only one probing image is considered, the algorithm is not able to separate the low pass filter from the image. In the recovered image shown at the bottom, some of the finer details can also be seen to have completely disappeared. This image should be compared to Fig.~\ref{superResolution1} and~\ref{superResolution2} below where multiple slices of a same volume are used.}
\end{figure}

\begin{figure}[h!]\centering
%
%
%
%
\begin{tikzpicture}
\hspace{-.4cm}
\begin{axis}[%
width=5cm,
height=5cm,
axis on top,
scale only axis,
separate axis lines,
every outer x axis line/.append style={darkgray!60!black},
every x tick label/.append style={font=\color{darkgray!60!black}},
xmin=0.5,
xmax=256.5,
every outer y axis line/.append style={darkgray!60!black},
every y tick label/.append style={font=\color{darkgray!60!black}},
y dir=reverse,
ymin=0.5,
ymax=256.5,
hide axis,
name=plot1,
title={\bf Recovered Image}
]
\addplot graphics [xmin=0.5,xmax=256.5,ymin=0.5,ymax=256.5] {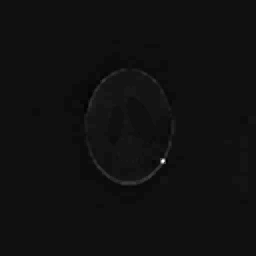};
\end{axis}
\hspace{1cm}

\begin{axis}[%
width=5cm,
height=5cm,
axis on top,
scale only axis,
separate axis lines,
every outer x axis line/.append style={darkgray!60!black},
every x tick label/.append style={font=\color{darkgray!60!black}},
xmin=0.5,
xmax=256.5,
every outer y axis line/.append style={darkgray!60!black},
every y tick label/.append style={font=\color{darkgray!60!black}},
y dir=reverse,
ymin=0.5,
ymax=256.5,
hide axis,
at=(plot1.right of south east),
anchor=left of south west,
title={\bf Recovered filter}
]
\addplot graphics [xmin=0.5,xmax=256.5,ymin=0.5,ymax=256.5] {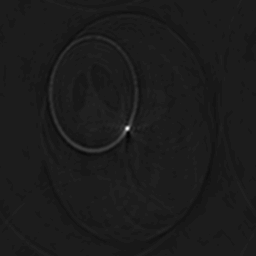};
\end{axis}
\end{tikzpicture}%
\caption{\label{SingleFrameFail2}Further illustration of the overlap between the recovered low pass filter and the recovered image resulting from an insufficient number of samples. When considering a single image, nuclear norm minimization is unable to split the input image from the filter. The use of a few (sufficiently distinct) compressible images makes the problem better posed. This phenomenon is also highlighted by the phase diagrams given in Fig.~\ref{figure1} below.}
\end{figure}

\subsection{\label{connectionsExistingWork}Connections with existing work}

Numerous approaches have been introduced to tackle the blind deconvolution problem. Without entering excessively into the details, we review the most important ones below. An important number of results rely on alternating methods that estimate each of the unknown signals sequentially rather then jointly. A common trend among those approaches is to use a statistical description of the input signal $\bs x$ which enables the use of the Bayesian framework and associated maximum a posteriori (MAP) estimators based on either joint or marginal probability distributions. 

\begin{itemize}
\item In~\cite{ayers1988iterative} Ayers and Dainty compute the Fourier transform of either of the unknown signals before dividing the output of the convolution by this Fourier transform to get an estimate of the Fourier transform of the second unknown. The method then proceeds iteratively.
\item Introductions to Bayesian methods in blind deconvolution can be found in~\cite{levin2009understanding, campisi2007blind}. In those papers, Levin et al. compare the joint MAP estimator against the MAP estimator for the filter only. They show that the former, even when combined with sparsity prior on the input signal gradient will generally favor recovery of a blurred input instead of a blurring filter thus leading to an inaccurate reconstruction. When using the MAP estimator for the filter only, however, for a gaussian prior on the input, it is possible to recover the true filter. They suggest to solve the resulting formulation through an expectation maximation algorithm. A similar approach is developed in~\cite{lagendijk1989blur} by Lagendijk et al. For both estimators, penalty terms favoring sparsity in the derivatives are considered. Other use of penalty functions include~\cite{you1999blind} (anisotropic diffusion) and~\cite{chan1998total} (TV regularization). 
\item In~\cite{reeves1992blur}, Reeves and Mersereau model the image as an autoregressive (AR) process and the blurr as a moving average (MA) process thus reducing problem~\eqref{blindDeconv0} to the recovery of $\bm H$ and $\bm A$ satisfying $\bs y = \bm H(\bm I-\bm A)^{-1}n_1 + n_2$ where $n_1$ and $n_2$ are noise models and $\bm H$ and $\bm A$ are respectively the filter and image (AR) parameters. The approach is somewhat equivalent to the Wiener filter. The original output $\bs y$ but one pixel is used to get an estimate of the filter and input image parameters and this estimate is then used to compute the misfit at the remaining pixel. The objective is the average over all pixels of this one pixel difference.
\item Another interesting recent line of work uses the bilinearity of the multiplication in Fourier space to rewrite the blind deconvolution/amplitude calibration problem as a linear problem in the inverse of the filter $h$ on the one hand and the input signals on the other. Early references along that line include~\cite{bilen2014convex} as well as~\cite{gribonval2012blind}, where the authors  provide numerical experiments in the case of inputs defined as $S$-sparse combinations of known basis vectors. The authors consider a filter with unknown phase, amplitude or both. The experiments show that the recovery is improved when the number of input signals is increased. This line of work extends the early approach from signal processing (see the aforementioned~\cite{ayers1988iterative}). Recovery guarantees along that line were recently provided in~\cite{ling2016self} as well as~\cite{cosse2017note} where the subspace assumption is removed. Despite its interest in terms of the lack of incoherence condition on the filter, the principal disadvantage of the least squares approach is its lack of stability and, as a consequence, its inability to handle functions with vanishing or very small entries. 


\end{itemize}

The approach developed in this paper follows the recent line of work on semidefinite programming relaxations of non linear and non convex problems in order to derive robustness and unconditional convergence garantees. Some of the most relevant results along that particular line of work include the following
\begin{itemize}
\item In~\cite{ahmed2014blind} Ahmed et al. certify recovery of both $\bs h$ and $\bs x$ up to a scaling through minimization of the nuclear norm of the matrix $\bm X$ used as a proxy for the rank one matrix $\bs h\bs x^*$. They consider a single input signal and construct their certificate of optimality using the \textit{golfing scheme}~\cite{gross2011recovering}. The main difference with the result of this paper is that their result requires both the impulse response and the input signal to "live" in lower dimensional subspaces. They certify recovery through the construction of a dual certificate whenever $L\gtrsim K_1+K_2$ up to log factors. $K_1$ and $K_2$ denote respectively the dimensions of the filter and input  subspaces. This paper is the last of a line of work by Ahmed and Romberg including~\cite{ahmed2013compressive} and~\cite{ahmed2015compressive2} where the authors study mixing and recovery of signals generated from a few basis elements and~\cite{ahmed2015compressive} where the authors consider optimal sampling and recovery of a similar ensemble. In each case the recovery of the matrix encoding the ensemble is carried out through nuclear norm minimization.

\item The idea from~\cite{ahmed2014blind} was further refined in~\cite{ahmed2016leveraging} where the authors remove the subspace assumption on the filter, while maintaining a sparsity constraint together with sufficient incoherence. In this paper,  a series of inputs $\bs x_n$ has to be recovered from convolutions of the form $\bs y = \bs w\ast \bs x_n$ where $\bs w$ is assumed to be $S$-sparse. As in this paper, the inputs are assumed to live in known $K$-dimensional subspaces. Recovery of both the inputs $\bs x_n$ and the filter $\bs w$ is guaranteed as soon as $K+S\log^2S\lesssim L/\log^4(LN)$ and $N\gtrsim \log^2(LN)$ for inputs $\bs x_n$ that are sufficiently distinct from each other. The relation of~\cite{ahmed2014blind} to the present paper is further discussed in section~\ref{sec:num}.

\item In~\cite{ling2015self}, Ling et al. study calibration problems of the form $\bs y = \bm D\bm A\bs x$ for unknown ($S$-sparse) signals $\bs x$ and diagonal calibration matrix $\bm D\in \mathbb{R}^{L\times L}$ defined as $\bm D = \mbox{diag}(\bm B\bs h)$ with $\bm B\in \mathbb{R}^{L\times M}$ such that $\bm B^*\bm B = \bm I$ and $\bs h\in \mathbb{R}^M$. They show that both $\bs x$ and $\bm D$ can be recovered through minimization of the $1$-norm of the matrix $\bm X$ used as a proxy for the rank one matrix $\bm X_0 = \bs h_0\bs x_0^T$ encoding the unknowns of the original problem. Exact recovery is certified for $\bm A\in \mathbb{R}^{L\times K}$ ($L<K$) random gaussian whenever $L\gtrsim SM$ up to $\log$ factors and for $\bm A$ random Fourier for comparable scalings. Corresponding stability results are provided. 

\item In~\cite{ling2015blind} the same authors study the recovery of input signals $\bs x_i$ and filters $\bs h_i$ when measurements $y_i$ are given by the sum of the outputs of the filters. They prove recovery as soon as $L\gtrsim R\max(K,\mu_h^2 N)$ where $R$ is the number of channels, $K$ is a bound on the support of the filters and $N$ is the size of the inputs. 

\item In~\cite{kech2016optimal}, Kech et al. study the general bilinear inverse problem with sparsity or subspace constraints on both $\bs h$ and $\bs x$ and certify injectivity of the bilinear map for $L\gtrsim S_1+S_2$ (when considering sparse vectors) or $L\gtrsim N_1 + N_2$ (when considering signals living in known subspaces). $S_1,S_2$ and $N_1,N_2$ respectively denote the sparsity and subspace dimension of each of the signals involved in the bilinear product.

\end{itemize}

Besides convexifying and linearizing approaches,  a recent trend from the statistics and optimization communities studies initialization and convergence guarantees of non convex optimization algorithms. The first results along that trend were obtained in~\cite{candes2015wirtinger}. Similar ideas have been applied to blind deconvolution by Cambareri et al.~\cite{cambareri2016non} 
as well as Li et al.~\cite{li2016rapid}. 

The proof of our main result follows the general idea developed in~\cite{candes2009exact} by Cand\`es et al., although, transposing this idea to the blind deconvolution framework.

Finally, an incidental result following from the lack of assumptions on the filter in the main result of this paper, is the extension of nuclear norm minimization from blind deconvolution to blind super-resolution. This idea is further discussed in section~\ref{sec:num}. The use of convex programming in blind super-resolution was recently discussed by Yang et al. in~\cite{yang2016super,yang2016non}. In these papers, the authors derive recovery guarantees for the atomic norm, in the case of the modulation of an unknown spike train satisfying a minimum separation condition and modulated by random waveforms generated from a random low dimensional subspace whose basis vectors satisfy some incoherence condition. 

Our paper derives recovery guarantees for a multiple inputs formulation similar to the one in~\cite{bilen2014convex} by using the nuclear norm relaxation framework introduced in~\cite{ahmed2014blind}. In the Fourier domain, problem~\eqref{blindDeconv0} can read as an affine rank minimization problem which can in turn be solved through nuclear norm minimization (see~\cite{fazel02ma, fazel2001rank,recht10gu}). Certifying recovery through the nuclear norm minimization program then relies on the construction of a so-called dual certificate. The main conclusion of this paper is that the restrictions on the impulse response $\bs h$ in~\cite{ahmed2014blind} can be lifted by considering a few $N\gtrsim \mu^2_m \sqrt{K}$ (up to log factors) input signals rather than a single one. The recovery also requires a sufficient number of measurements ($L\gtrsim K^{3/2}\mu^2_h$). 

The rest of the paper is organized as follows: Section~\ref{notations} derives the mathematical formulation of the problem and introduces the underlying notations. Section~\ref{secMainResult} states the main result of the paper.
Section~\ref{sec:strategy} outlines the main argument of the proof. The proof relies on several lemmas. Most of them are only stated in section~\ref{sec:strategy} and their proofs are detailled in section~\ref{auxiliaryLemmasProof}. Numerical experiments supporting the statement of Theorem~\ref{theorem:BlindDeconvTh} are provided in section~\eqref{sec:num} in which applications to super-resolution and medical imaging are discussed. Finally, the result of Theorem~\ref{theorem:BlindDeconvTh} is commented in section~\ref{sec:conclusion} together with a few related open questions.

%
%
%
%
%
%
%
%

\subsection{\label{notations}Problem Formulation}

As explained above, assuming both an unknown impulse response and unknown probing signals is ill-posed in the general case. For this reason  this chapter considers the problem of estimating the channel response $\bs h$ and the input signals whenever those input signals belong to $K$-dimensional generic subspaces $\mathcal{C}_n$ of $\mathbb{R}^L$. In other words we reduce the number of unknowns from $L\times N$ to $K\times N + L$ for $LN$ measurements. Intuitively, it should now be clear that by taking $L$ sufficiently large, the problem should become better posed. Each of the generic subspace $\mathcal{C}_n$ will be represented by a corresponding basis matrix $\bm C_n$ with gaussian i.i.d entries such that $\bs C_n[k,\ell] \sim \mathcal{N}(0,\frac{1}{L})$, thus with variance $\sigma^2 = 1/L$. Consequently, the input signals can thus read $\bs x_n = \bs C_n\bs m_n,\quad n=1,\ldots, N$ for some coefficients $\bs m_n \in \mathbb{R}^K$. For notational convenience, we let $\bs m$ denote the vector concatenating each of the coefficient vectors $\bs m_n$ as $\bs m = [\bs m_1,\ldots, \bs m_N]$. We will work in the Fourier domain since it turns the convolution~\eqref{blindDeconv0} into a Hadamard product. Let us introduce the DFT matrix $\bs F$ whose rows are defined as
\begin{align}\label{gr}
\bs f_\ell [k]= \frac{1}{\sqrt{L}}e^{-2\pi i (k-1) \ell/L},\quad \ell, k \in\{1,\ldots, L\}.
\end{align}
Note that the DFT matrix $\bs F$ satisfies $\bs F\bs F^* = \bs F^*\bs F = \bs I$ where $\bs F^*$ is used to define the inverse discrete Fourier tranform. We will use the notation $\hat{\bs y}$ to denote the Fourier transform of $\bs y$ and $\bs y^*$ to denote the conjugate transpose of $\bs y$. Let $[L] = \{1, \ldots, L\}$ and $[N] = \{1,\ldots, N\}$. We use $\hat{\bs c}_{\ell,n}$ to represent the vector defined from the basis $\bs C_N$ by putting the $\ell^{th}$ coefficient of the Fourier transform of the $k^{th}$ column of $\bs C_n$ at position $(n-1)K + k$ for each $1\leq k\leq K$ and zeros everywhere else. Using the canonical vector $\bs e_\ell$, $\bs c_{\ell,n}$ can thus read as $\hat{\bs c}_{\ell,n}[k] = \bs e_n\otimes (\bs f_\ell^*\bs C_n) $. Given this definition, and with the subspace decomposition $\bs x_n = \bs C_n\bs m_n$, in the Fourier domain, problem~\eqref{blindDeconv0} can be written as
\begin{eqnarray}\begin{split}	\label{deconv2}
\hat{\bs y}_n[\ell] &= \sqrt{L}\hat{\bs h}[\ell]\hat{\bs x}_n[\ell],\qquad  &\qquad(\ell,n)\in [L]\times [N],\\
 &= \sqrt{L} (\bs F\bs h)[\ell] (\bs F\bs C_n \bs m_n)[\ell],&\qquad(\ell,n)\in [L]\times [N],\\
& = \sqrt{L}\langle \bs f_\ell, \bs h\rangle\langle \hat{\bs c}_{\ell,n}, \bs m\rangle,&\qquad(\ell,n)\in [L]\times [N].
\end{split}
\end{eqnarray}
Now using the Frobenius inner product,
\begin{eqnarray}\begin{split}	\label{deconv3}
\hat{\bs y}_n[\ell] & = \sqrt{L} \langle \bs f_\ell \hat{\bs c}_{\ell,n}^*, \bs h\bs m^* \rangle, &\qquad(\ell,n)\in [L]\times [N].
\end{split}
\end{eqnarray}
The scaling $\sqrt{L}$ comes from equality between the Fourier transform of the convolution $y$ and the product of the Fourier transforms of $x$ and $h$. From now on we will include the scaling in the definition of $\hat{\bs c}_{\ell,n}$,  $\hat{\bs c}_{\ell,n} :=\hat{\bs c}_{\ell,n}\sqrt{L}$. 
We let the rank one matrices $\bm A_{\ell,n}$ encode the outer product $\bs f_\ell \hat{\bs c}_{\ell,n}^*$ so that the last line of~\eqref{deconv3} can be written compactly as $\hat{\bs y}_n[\ell] = \langle \bs A_{\ell,n},\bs h\bs m^*\rangle$. Following definition of those matrices, problem~\eqref{deconv3} can now be recast as the following affine rank minimization problem,  
\begin{subequations}\label{affineRankDeconvolution}
\begin{eqnarray}
\underset{\bm X}{\text{minimize}} & \quad \mbox{rank}(\bm X)& \\
\mbox{subject to}&\quad  \hat{\bs y}_{\ell,n} = \langle \bs A_{\ell,n},\bm X \rangle. &\qquad(\ell,n)\in [L]\times [N].
\end{eqnarray}
\end{subequations}
The matrix $\bm X$ is used as a proxy for the rank one matrix $\bs h\bs m^* $ encoding the impulse response and the coefficient vector. If the original problem is well posed, then the unique solution to problem~\eqref{affineRankDeconvolution} is given by $\bs h\bs m^*$. Affine rank minimization is hard in the general case. A common approach that has revealed efficient so far when dealing with affine rank minimization is to replace the hard minimal rank constraint by the minimization of the nuclear norm (see~\cite{fazel02ma, recht10gu}). Following this trend from convex optimization, we thus relax problem~\eqref{affineRankDeconvolution} into
\begin{subequations}\label{nuclearNormDeconvolution}
\begin{eqnarray}
\underset{\bm X}{\text{minimize}} & \quad \|\bm X\|_*& \\
\text{subject to} &\quad  \hat{y}[\ell,n] = \langle \bs A_{\ell,n},\bm X \rangle &\qquad(\ell,n)\in [L]\times [N].
\end{eqnarray}
\end{subequations}
We will sometimes write problem~\eqref{nuclearNormDeconvolution} compactly by introducing the linear map $\mathcal{A}\;:\;\mathbb{C}^{L\times KN} \mapsto \mathbb{C}^{L N}$ and defined from the matrices $\bm A_{\ell,n}$ as 
\begin{align}
\{\mathcal{A}(\bm X)\}_{(\ell,n)} \equiv\langle \bm A_{\ell,n}, \bm X\rangle = \mathcal{A}_{\ell,n}(\bm X).\qquad (\ell,n)\in [L]\times [N]. \label{DefinitionLinearMap}
\end{align}
Using definition~\eqref{DefinitionLinearMap}, formulation~\eqref{nuclearNormDeconvolution} can now read compactly as 
\begin{subequations}\label{nuclearNormDeconvolutionLinearMap}
\begin{eqnarray}
\underset{\bm X}{\text{minimize}} & \quad \|\bm X\|_*& \\
\text{subject to} &\quad  \mathcal{A}(\bm X) = \bs y.
\end{eqnarray}
\end{subequations}
The next section summarizes the strategy that will be used to certify recovery of the rank one matrix $\bm X_0 = \bs h\bs m^* $ through the convex relaxation~\eqref{nuclearNormDeconvolution}. We assume without loss of generality that $\|\bs h\| = 1$ and $\|\bs m\|=1$ which implies $\|\bm X_0\|_F = \|\bs h\bs m^*\| =1$. The next section gives the main result of the paper.

\subsection{\label{secMainResult}Main Result}

The main result of this paper shows that exact recovery through the nuclear norm relaxation~\eqref{nuclearNormDeconvolutionLinearMap} occurs with high probability (on the genericity of the subspaces) as soon as the dimension $K$ of the subspaces, the ambient dimension $L$ and the number of input signals $N$ obey $L\gtrsim K^{3/2} \mu^2_h$ and $N \gtrsim K^{1/2} \mu^2_m$ up to log factors. The coherence factors $\mu^2_h$ and $\mu^2_m$ measure the respective spreading of $\bs h$ and $\bs m$ and are defined as  
\noindent\begin{minipage}{0.5\linewidth}
\begin{equation}
\mu^2_m = \sup_n\frac{N\cdot \|\bs m_n\|^2}{\|\bs m\|^2},\label{coherencem}
\end{equation}
\end{minipage}%
\begin{minipage}{0.5\linewidth}
\begin{equation}
\mu^2_h = \sup_\ell \frac{L\cdot |\hat{h}[\ell]| }{\|\bs h\|^2}\label{coherenceh}
\end{equation}
\end{minipage}%
We can now state the main result of this paper.
\begin{theorem}\label{theorem:BlindDeconvTh}
Let $\bs C_{n}$ be random matrices of size $L\times K$ with gaussian independent and identically distributed (i.i.d) entries, i.e, $C_{n}[\ell,k]\sim\mathcal{N}(0,1/L)$ representing corresponding $K$-dimensional subspaces. Let $\bs m_n$ denote $K$-dimensional vectors representing the $L$-dimensional input signals $\bs x_n$ in the bases $\bs C_n$ so that $\bs x_n = \bs C_n\bs m_n$. The nuclear norm relaxation of~\eqref{nuclearNormDeconvolutionLinearMap} recovers the rank one matrix $\bm X_0 = \bs h\bs m^*$ where $\bs m = [\bs m_1,\ldots, \bs m_N]$, from the measurements $\bs y_n=\bs h\ast \bs x_n$, with probability at least $1 - c_1(LN)^{-\beta} - \sqrt{\left(\frac{K\mu_h^2}{L} + \frac{\mu_m^2}{N}\right)}$, where $c_1,\beta$ are known positive constants, as soon as 
\begin{equation*}
L\gtrsim K^{3/2}\mu^2_h ,\qquad N\gtrsim K^{1/2}\mu^2_m.
\end{equation*}
\end{theorem}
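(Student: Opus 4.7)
My plan is to establish exact recovery via the standard primal-dual analysis for nuclear norm minimization, adapted to the blind-deconvolution geometry. Let
\begin{equation*}
T := \{\bs h \bs a^* + \bs b \bs m^* : \bs a \in \mathbb{C}^{KN},\ \bs b \in \mathbb{C}^L\}
\end{equation*}
be the tangent space to the rank-one manifold at $\bm X_0 = \bs h \bs m^*$, with orthogonal projector $\PT$ and complement $\PTc$. Standard duality for nuclear-norm minimization reduces exact recovery to verifying (i) that $\cA$ restricted to $T$ is injective and (ii) that there exists $\bs Y \in \text{range}(\cA^*)$ with $\PT \bs Y = \bs h \bs m^*$ and $\|\PTc \bs Y\| < 1$, where $\|\cdot\|$ denotes the spectral norm.

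\textbf{Certificate construction.} Define the self-adjoint operator $\mathcal{H} := \PT \cA^* \cA \PT$ on $T$. Provided the local isometry $\|\PT - \mathcal{H}\| \leq 1/2$ holds (an operator-norm statement on $T$), the Neumann series $\mathcal{H}^{-1} = \sum_{k \geq 0}(\PT - \mathcal{H})^k$ converges. Set $\bs W := \mathcal{H}^{-1} \bm X_0 \in T$ and
\begin{equation*}
\bs Y := \cA^* \cA \bs W = \sum_{k \geq 0} \cA^* \cA\, \mathcal{M}^k \bm X_0, \qquad \mathcal{M} := \PT - \mathcal{H}.
\end{equation*}
By design $\bs Y \in \text{range}(\cA^*)$ and $\PT \bs Y = \bm X_0$; injectivity of $\cA|_T$ also follows from invertibility of $\mathcal{H}$. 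The remaining task is to show
\begin{equation*}
\|\PTc \bs Y\| \leq \sum_{k \geq 0} \|\PTc \cA^* \cA\, \mathcal{M}^k \bm X_0\| < 1
\end{equation*}
with high probability, by bounding term by term.

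\textbf{Term-by-term bounds.} Four estimates are needed. First, $\|\mathcal{M}\| \leq 1/2$ is a local restricted isometry on $T$ and I would obtain it by a standard matrix Bernstein argument applied to the sum $\sum_{\ell,n}\PT(\bs A_{\ell,n}\bs A_{\ell,n}^* - \E[\,\cdot\,])\PT$; its sample requirement is comparatively mild. Second, the $k=0$ term $\PTc \cA^* \cA \bm X_0$ is a sum of independent random matrices of the form $\hat h[\ell] \langle \bs m, \hat{\bs c}_{\ell,n}\rangle\, \PTc(\bs f_\ell \hat{\bs c}_{\ell,n}^*)$, and I would bound its spectral norm via a subexponential (matrix) Bernstein inequality; the coherence $\mu_h^2$ controls the pointwise size of $\hat h[\ell]$ and $\mu_m^2$ controls the block sizes of $\bs m$, so these two parameters enter the variance proxy multiplicatively in $L$ and $N$ respectively. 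Third, the $k=1$ term $\PTc \cA^* \cA\, \mathcal{M}\, \bm X_0$ expands as a fourth-order Gaussian chaos in the entries of $\bs C_1,\ldots,\bs C_N$, which I would decompose into a \emph{diagonal} piece (univariate fourth-order monomials in individual Gaussian columns) controlled via the matrix Rosenthal-Pinelis moment inequality, and an \emph{off-diagonal} piece (cross products of independent Gaussian factors) decoupled using the standard decoupling inequality for $U$-statistics of order two and then bounded by a conditional matrix Bernstein. Finally, the tail $\sum_{k \geq 2} \PTc \cA^* \cA\, \mathcal{M}^k \bm X_0$ is controlled crudely by $\|\PTc \cA^* \cA\| \cdot \sum_{k \geq 2}\|\mathcal{M}\|^k \cdot \|\bm X_0\|_F$; using $\|\mathcal{M}\| \leq 1/2$ together with a coarse bound on $\|\cA^* \cA\|$ gives a geometric series that is absorbed in the constants.

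\textbf{Main obstacle.} The principal technical hurdle is the $k=1$ term, where a naive matrix Bernstein would force $N \gtrsim K \mu_m^2$ rather than the advertised $N \gtrsim K^{1/2}\mu_m^2$. Recovering the square-root improvement requires taking full advantage of the fourth-order chaos structure rather than treating squared Gaussian entries as generic subexponential variables. Decoupling the cross-term part via the $U$-statistic inequality and handling the residual diagonal chaos via matrix Rosenthal-Pinelis is precisely what yields the sharper fourth-moment bound needed to trade a factor of $K$ for $K^{1/2}$. A secondary subtlety is the clean separation of the coherence parameters: because the Fourier vectors $\bs f_\ell$ are deterministic, $\mu_h^2$ enters only through the spectral flatness of $\bs h$ and controls the $L$-scaling, whereas $\mu_m^2$ enters only through the block flatness of $\bs m$ and controls the $N$-scaling. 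Careful bookkeeping of these two contributions throughout the four estimates above is what produces the decoupled conditions $L \gtrsim K^{3/2}\mu_h^2$ and $N \gtrsim K^{1/2}\mu_m^2$ rather than a joint condition on $LN$.
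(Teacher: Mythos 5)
Your overall architecture --- the tangent space $T$, the ansatz $\bs Y = \cA^*\cA\,(\PT\cA^*\cA\PT)^{-1}\bs h\bs m^*$, the Neumann expansion, a subexponential matrix Bernstein bound for the $k=0$ term, and the split of the $k=1$ fourth-order Gaussian chaos into a diagonal piece (matrix Rosenthal--Pinelis) plus a decoupled off-diagonal piece (de la Pe\~na--Montgomery-Smith $U$-statistics decoupling) --- is exactly the paper's. The genuine gap is in your treatment of the tail $\sum_{k\geq 2}$. You propose to bound it by $\|\PTc\cA^*\cA\|\cdot\sum_{k\geq 2}\|\mathcal{M}\|^k\cdot\|\bm X_0\|_F$ and claim this is ``absorbed in the constants'' using only $\|\mathcal{M}\|\leq 1/2$ and a coarse bound on $\|\cA^*\cA\|$. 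This does not close: the coarse bound here is $\|\cA\|\,\|\cA\PT\|\lesssim\sqrt{K\log(LN)}$ (the operator norm of $\cA$ genuinely grows like $\sqrt{K}$ in this model), so with $\|\mathcal{M}\|\le 1/2$ your tail estimate is of order $\sqrt{K\log(LN)}$, nowhere near less than $1$. What makes the argument work in the paper is the \emph{quantitative} contraction $\|\mathcal{M}\|\lesssim\sqrt{K\mu_h^2/L+\mu_m^2/N}$ coming out of the injectivity lemma: starting the geometric tail at $k_0=2$ contributes a factor $\|\mathcal{M}\|^{2}\lesssim K\mu_h^2/L+\mu_m^2/N$, and requiring $\sqrt{K\log(L\vee N)}\cdot\left(K\mu_h^2/L+\mu_m^2/N\right)$ to be small is precisely what forces $L\gtrsim K^{3/2}\mu_h^2$ and $N\gtrsim K^{1/2}\mu_m^2$.

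This also corrects your account of where the exponents come from. You attribute the $K^{1/2}$ in $N\gtrsim K^{1/2}\mu_m^2$ to the $k=1$ chaos term, but in the paper that term (and the $k=0$ term) only require $L\gtrsim K\mu_h^2$ and $N\gtrsim\mu_m^2$; the entire purpose of bounding the first two terms explicitly is to push the start of the crude geometric tail to $k_0=2$, and it is the tail that dictates the exponents $K^{1+1/k_0}=K^{3/2}$ and $K^{1/k_0}=K^{1/2}$. A smaller omission: after centering, the $k=1$ term splits into \emph{four} pieces, not two --- besides your diagonal and off-diagonal chaos contributions there is a purely deterministic covariance term and a term $\sum_{\ell,n}\bs A_{\ell,n}\langle\bs A_{\ell,n},\bs E_{\ell,n}\rangle$ with $\bs E_{\ell,n}=\mathbb{E}\,\PT(\bs A_{\ell,n})\langle\PT(\bs A_{\ell,n}),\bs h\bs m^*\rangle$ --- each of which needs its own (routine but nontrivial) bound. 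Finally, note that the Markov-type argument used for the diagonal chaos is what produces the non-exponential failure probability $\sqrt{K\mu_h^2/L+\mu_m^2/N}$ appearing in the theorem statement, which your proposal does not account for.
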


Numerical experiments are provided in section~\ref{sec:num} to support the claim of Theorem~\ref{theorem:BlindDeconvTh}. The proof of this theorem relies on the construction of a dual certificate and its analysis by means of a Neumann series and appropriate concentration bounds. The exponents arising in the sample complexities are due to the proof technique used to control the terms appearing in the expansion defining the certificate. Explicit concentration bounds are used for the first two terms, leading to sample complexities of $K^{1/2}$ and $K^{3/2}$. It is likely that extending those specific bounds to higher order terms in the Neumann series can further reduce those exponents.

\section{\label{sec:strategy}Proof of Theorem~\ref{theorem:BlindDeconvTh}}

This section outlines the argument that will be used to certify exact recovery of the matrix $\bs h\bs m^*$ through the nuclear norm minimization program~\eqref{nuclearNormDeconvolution}. It is organized as follows. Section~\ref{sec:optimalityBD} derives uniqueness and optimality conditions for problem~\eqref{nuclearNormDeconvolution}. Proving optimality and uniqueness of the solution $\bs h\bs m^*$ for the nuclear norm minimization program~\eqref{nuclearNormDeconvolution} is equivalent to exhibiting a dual vector $\bs Y$ defining a hyperplane separating the nuclear norm ball and the affine subspace $\mathcal{S}\equiv\left\{\bs X\;:\;\mathcal{A}(\bs X) = \bs b\right\}$. Proving Theorem~\ref{theorem:BlindDeconvTh} can thus be done by first suggesting a candidate for $\bs Y$
and then, proving that $\bs Y$ is indeed corresponding to a separating hyperplane (i.e that it satisfies the conditions of section~\ref{sec:optimalityBD}). Section~\ref{subgaussianAndSubexponential} recalls the important concepts that are needed to bound subexponential and subgaussian random variables. Such variables will appear extensively in the construction of $\bs Y$. Section~\ref{sectionCandidateDualCert} introduces a candidate for the normal vector $\bs Y$ and then explains how this particular ansatz can be proved to satisfy the optimality conditions from section~\ref{sec:optimalityBD} and thus to correspond to a valid separating hyperplane for the underlying nuclear norm and affine subspace.

\subsection{\label{sec:optimalityBD}Optimality and Uniqueness}

Proving optimality of the solution $\bm X_0 = \bs h\bs m^*$ is equivalent to exhibiting a (dual) vector $\bm Y$ that would be simultaneously normal to one of the supporting hyperplanes to the nuclear norm ball $\|\bm X\|_*\leq 1$ at $\bs h\bs m^*$ and to the affine subspace $\mathcal{A}(\bm X) = \bs y$. This is also equivalent to finding a subgradient to the nuclear norm that is normal to the affine subspace. Such a vector implies that the affine subspace is locally tangent to the nuclear norm ball (or equivalently to the nuclear norm) at $\bs h\bs m^*$ and so, that any further reduction in the nuclear norm would require to leave the affine subspace. Another way to understand the geometric meaning of such a normal vector is that it defines a separating hyperplane between the nuclear norm ball (resp. nuclear norm) and the affine subspace. We now derive the mathematical conditions defining the normal vector. The tangent space $T$ of the nuclear norm at $\bs h\bs m^*$ can be defined as 
\begin{align}
T = \left\{\bs h\bs x^* + \bs y\bs m^*, \bs x\in \mathbb{R}^{KN}, \bs y\in \mathbb{R}^L\right\}\label{tangentSpaceNN}
\end{align}
The projection $\mathcal{P}_T(\bm Y) = (\bm Y)_T$ onto this tangent space  can be defined as (see for example~\cite{candes2009exact})
\begin{align}
\mathcal{P}_T(\bm Y) = \bs h\bs h^*\bm Y + \bm Y\bs m\bs m^* - \bs h\bs h^*\bm Y\bs m\bs m^*.\label{projectorTBD}  
\end{align}
As a consequence, the corresponding projector onto the orthogonal complement  of $T$, $\mathcal{P}_T^\perp(\bm Y)$ is defined simply as $(\bm Y)_T^\perp = \mathcal{P}_T^\perp(\bm Y) = \bm Y - \mathcal{P}_T(\bm Y)$. From the definition of those two projectors, for a general matrix $\bm X = \bm U\bm \Sigma \bm V$, the subdifferential of the nuclear norm is known to be defined as~\cite{recht2010guaranteed}
\begin{align}
\partial \|\bm X\|_* =\left\{\bm Y\;|\;\mathcal{P}_T(\bm Y)=\bm U\bm V,\;\|\mathcal{P}_T^\perp(\bm Y)\|_\infty \leq 1\right\}\label{eq:subdifferentialNN}
\end{align}
Here $\|\bm X\|_\infty$ denotes the operator norm of the matrix $\bm X$. From definition~\eqref{eq:subdifferentialNN} and the discussion above, proving optimality of $\bm X_0$ thus reduces to exhibiting a vector $\bm Y$ in the range of $\mathcal{A}^*$ such that
\begin{align}
(\bm Y)_T = \bs h\bs m^*,\qquad \|(\bm Y)_{T^\perp}\|_\infty\leq  1.\label{ConditionforDualCert}
\end{align}
Finding a vector $\bm Y$ that satisfies the conditions~\eqref{ConditionforDualCert} above might be hard. In particular, the strict condition $(\bm Y)_T = \bs h\bs m^*$ might be difficult to satisfy. Some recent results~\cite{recht11si, ahmed2014blind, demanet2014stable} actually show that such an \\textit{{exact} dual vector is not always required and that one can instead certify recovery through the construction of a so-called \textit{inexact} dual certificate by a careful analysis of the properties of the linear map $\mathcal{A}$. Let $\bs Z$ be such that $\mathcal{A}(\bm Z) = 0$. One can always define two matrices $\bs H_\perp$ and $\bs M_\perp$ such that $[\bs h,\;\bs H_{\perp}]$ and 
$[\bs m,\;\bs M_\perp]$ are unitary and such that $\bs H_\perp$ and $\bs M_\perp$ satisfy $\langle \bs H_\perp \bs M_\perp, \mathcal{P}_T^\perp(\bs Z)\rangle = \|\mathcal{P}_T^\perp(\bs Z)\|_*$ (it suffices to take $\bs M_\perp$ and $\bs H_\perp$ to be the left and right unitary matrices in the singular value decomposition of $\mathcal{P}_{T}^\perp(\bm Z)$ multiplied by the sign matrix from the eigenvalues. For $\bm M = \bs h\bs m^*$, we then have~\cite{recht11si}, 
\begin{align}
\|\bs M + \bs Z\|_*&\geq \langle\bs h\bs m^* +  \bs H_\perp\bs M_\perp^* ,\bs M + \bs Z\rangle  \label{inexactDual}\\
&\geq   \|\bs M\|_* + \langle\bs h\bs m^* +  \bs H_\perp\bs M_\perp^* - \bs Y,\bs Z\rangle\nonumber\\
&\geq \|\bs M\|_* +  \langle \bs h\bs m^* - \mathcal{P}_T(\bs Y), \mathcal{P}_T(\bs Z)\rangle + \langle \bs H_\perp\bs M_\perp^T - \mathcal{P}_T^\perp(\bs Y), \mathcal{P}_T^\perp(\bs Z) \rangle \nonumber \\
&\geq \|\bs M\|_* - \|\bs h\bs m^* - \mathcal{P}_T(\bs Y)\|_F\|\mathcal{P}_T(\bs Z)\|_F + (1 -  \| \mathcal{P}_T^\perp(\bs Y)\|  )\|\mathcal{P}_T^\perp(\bs Z)\|_*\nonumber
\end{align}
In the first inequality, we use the duality between the nuclear norm and the operator norm,
\begin{align}
\|\bm A\|_* = \sup_{\|\bm X\|_{\infty}\leq 1} \langle \bs A, \bm X \rangle.
\end{align}
In the second line, we use the fact that since $\bs Z$ is in the kernel of the linear map $\mathcal{A}$, any dual vector in the range of $\mathcal{A}^*$ implies $\langle \bs Y,\bs Z\rangle  = 0$. In the last line, we use H\"older's inequality. The last line of~\eqref{inexactDual} shows that $\|\bs M + \bs Z\|_*>\|\bs M\|_*$ as soon as 
\begin{align} 
(1 - \| \mathcal{P}_T^\perp(\bs Y)\|)\|\mathcal{P}_T^\perp(\bs Z)\|_* - \|\bs h\bs m^* - \mathcal{P}_T(\bs Y)\|_F\|\mathcal{P}_T(\bs Z)\|_F >0\label{inexactStep1}\end{align}
In particular, this condition thus implies uniqueness of the solution $\bs X_0 =\bs h\bs m^*$ on top of optimality which followed from~\eqref{ConditionforDualCert}.

From the condition~\eqref{inexactStep1} also follows the notion of inexact dual certificate. The idea of an inexact dual certificate relies on relaxing the difficult constraint $\mathcal{P}_T(\bm Y) = \bs h\bs m^*$ at the expense of a slight strengthening of the operator norm constraint $\|\mathcal{P}_T^\perp(\bm Y)\|_\infty\leq 1$. Formally, this requires finding the relation between $\|\mathcal{P}_T^\perp(\bs Z)\|_*$ and $\|\mathcal{P}_T(\bs Z)\|_F$. This is where the norm of $\mathcal{A}$ will be needed. To relate the norms of $\mathcal{P}_T(\bm Z)$ and $\mathcal{P}_T^\perp(\bm Z)$ we start by using the fact that $\bm Z$ is in the nullspace of $\mathcal{A}$, so that 
\begin{align}
\left|\|\mathcal{A}(\mathcal{P}_T(\bm Z))\| - \|\mathcal{A}(\mathcal{P}^\perp_T(\bm Z))\|\right|\leq \|\mathcal{A}(\mathcal{P}_T(\bm Z) + \mathcal{P}_T^\perp(\bm Z))\| = 0
\end{align}
which implies $\|\mathcal{A}(\mathcal{P}_T(\bm Z))\| = \|\mathcal{A}(\mathcal{P}^\perp_T(\bm Z))\|$. We then have 
\begin{align}
\|\mathcal{A}(\mathcal{P}_T(\bm Z))\| \leq \|\mathcal{A}\|\|\mathcal{P}^\perp_T(\bm Z)\|.\label{eq:boudAT}
\end{align}
The proof of lemma~\ref{lemmaT} below can be used to show that
\begin{align}
\|\mathcal{P}_T\mathcal{A}^*\mathcal{A}\mathcal{P}_T - \mathcal{P}_T\|_2\leq \frac{1}{2},\quad \mbox{w.h.p.}
\end{align}
which can then be used to derive 
\begin{align}
\left|\|\mathcal{A}\mathcal{P}_T(\bm Z)\|_2^2 - \|\mathcal{P}_T(\bm Z)\|_2^2\right|\leq \frac{1}{2}\|\mathcal{P}_T(\bm Z)\|_2^2\label{eq:tmpEqn}
\end{align}
Equation~\eqref{eq:tmpEqn} in turn implies 
\begin{align}
\frac{1}{2}\|\mathcal{P}_T(\bm Z)\|_2^2\leq \|\mathcal{A}\mathcal{P}_T(\bm Z)\|_2^2.\label{tmpBoundPT}
\end{align}
Substituting expression~\eqref{tmpBoundPT} into~\eqref{eq:boudAT}, one gets,
\begin{align}
\frac{1}{\sqrt{2}}\|\mathcal{P}_T(\bm Z)\|_2\leq \|\mathcal{A}\|\|\mathcal{P}^\perp_T(\bm Z)\|_F.
\end{align}
Finally, using the equivalence of the norms yields
\begin{align}
\frac{1}{\sqrt{2}}\|\mathcal{P}_T(\bm Z)\|_F \leq \|\mathcal{A}\|\|\mathcal{P}^\perp_T(\bm Z)\|_*\label{eq:BoundNormA}
\end{align}
We can now use this last relation together with the result of~\eqref{inexactDual} to obtain a final lower bound on $\|\bm M+ \bm Z\|_*$ as,
\begin{align}
\|\bm M+\bm Z\|_*\geq \|\bm M\|_* +\left(-\sqrt{2}\|\bs h\bs m^*-\mathcal{P}_T(\bm Y)\|_F\|\mathcal{A}\| + \left(1-\|\mathcal{P}_T^\perp(\bm Y)\|\right)\right)\|\mathcal{P}_T^\perp(\bm Z)\|_*\label{eq:uniquenessTmp}
\end{align}
Certifying recovery of $\bm X_0 = \bs h\bs m^*$ thus reduces to finding a dual vector $\bm Y$ satisfying the following relation, for all $\bm Z \in \mbox{null}(\mathcal{A})$, $\gamma  =\|\mathcal{A}\|$,
\begin{align}
\left(1 - \| \mathcal{P}_T^\perp(\bs Y)\|- \sqrt{2}\gamma\|\bs h\bs m^* - \mathcal{P}_T(\bs Y)\|_F\right) \|\mathcal{P}_T^\perp(\bs Z)\|_* >0.
\end{align}
From~\eqref{eq:BoundNormA}, if $\mathcal{P}_T^\perp(\bm Z) = 0$ then $\mathcal{P}_T(\bm Z)$ vanishes as well, which implies $\bm Z = 0$ and we can thus assume that $\|\mathcal{P}_T^\perp(\bm Z)\|>0$. With this in mind and using~\eqref{eq:uniquenessTmp}, we can thus move on and assume recovery as soon as there exists a $\bm Y$ satisfying
\begin{align}
\left(1 - \| \mathcal{P}_T^\perp(\bs Y)\|- \sqrt{2}\gamma\|\bs h\bs m^* - \mathcal{P}_T(\bs Y)\|_F\right)>0. 
\end{align}
In the rest of the chapter, we will construct a $\bm Y\in \mbox{Ran}\mathcal{A}^*$ satisfying the following two conditions,
\begin{align}
\|\bs h\bs m^* - \mathcal{P}_T(\bs Y)\|_F\leq \frac{1}{\sqrt{2}\gamma c_1},\qquad \| \mathcal{P}_T^\perp(\bs Y)\|_{\infty}<1-\frac{1}{c_1}
\label{dualCertConditions}
\end{align}
for some constant $c_1$. Computing the operator norm of $\mathcal{A}$ can be done through proposition~\ref{bernstein} and gives $\sqrt{K\log(LN) +\beta\log(L)}$ with probability at least $1-L^{-\beta}$.
The next section introduces some results on sub-gaussian and sub-exponential random variables as well as the corresponding Bernstein inequality which will be used extensively to derive the recovery guarantees.

\subsection{Sub-gaussian and sub-exponential tails\label{subgaussianAndSubexponential}}

Concentration through the Bernstein inequality requires the terms in the sum to be bounded which is not possible with the the linear map $\mathcal{A}$ given the infinite tail of the gaussian basis matrices $\bs C_n$. However, the gaussian distribution is known to be bounded with respect to the probability measure and so is its moments generating function. For this reason, we will consider the following ensemble of (Orlicz) norms introduced for a general measure $\mu$ and a non decreasing, convex function $\Psi(x)$ such that $\Psi(0)=0$ (see~\cite{van1996weak}) as,
\begin{align}
\|X\|_{\Psi} = \inf\{k>0\;|\;\int \Psi\left(\frac{|X|}{k}\right)\;d\mu\;\leq 1\}.\label{eq:OrliczNorm}
\end{align}
For the probability measure and the functions $\Psi_q(x) = e^{x^q} - 1$, we have the following concentration result which will be used extensively throughout the proofs of Theorem~\ref{theorem:BlindDeconvTh} to construct the dual certificate (see~\cite{tropp2015introduction, koltchinskii10nu, koltchinskii2013remark})
\begin{proposition}[Bernstein concentration (Orlicz version)]\label{bernstein}
Let $\bs Z_1,\ldots, \bs Z_P$ be i.i.d. random matrices of size $m\times n$ with $\mathbb{E}\{\bs Z_i\} = 0$. Define 
\begin{align}
\sigma_Z = \max \left\{\left\|\frac{1}{P}\sum_{i=1}^P \mathbb{E}\left(\bs Z_i\bs Z_i^*\right)\right\|^{1/2}, \left\|\frac{1}{P}\sum_{i=1}^P \mathbb{E}\left(\bs Z_i^*\bs Z_i\right)\right\|^{1/2}\right\}\label{varianceBernstein}
\end{align}
Let $X = \|\bs Z\|$ and assume that $\|X\|_{\Psi_q}\leq U^{(q)}<\infty$ for some $q\geq 1$. Further let $M = m+n$. 
Then there exists a constant $C>0$ such that $\forall t>0$, the following bound holds with probability $1-e^{-t}$.
\begin{align}
\left\|\bs Z_1+ \ldots+ \bs Z_P\right\|\leq C P\max\left\{\sigma \sqrt{\frac{t+\log(M)}{P}}, U^{(q)}\left(\log \frac{U^{(q)}}{\sigma}\right)^{1/q}\frac{t+ \log(M)}{P}\right\}\label{eq:BernsteinSum}
\end{align} 
\end{proposition}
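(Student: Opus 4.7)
The plan is to reduce to the standard (uniformly bounded) matrix Bernstein inequality by a truncation argument, essentially the Koltchinskii--Tropp packaging of Orlicz-tailed summands.  First I would fix a threshold $B>0$ and split each $\bs Z_i = \bs Z_i^{\leq B}+\bs Z_i^{>B}$ with $\bs Z_i^{\leq B}:=\bs Z_i\,\mathbf{1}\{\|\bs Z_i\|\leq B\}$.  Since $\mathbb{E}\bs Z_i=0$, I would write
\begin{equation*}
\sum_{i=1}^P \bs Z_i \;=\; \underbrace{\sum_{i=1}^P\bigl(\bs Z_i^{\leq B}-\mathbb{E}\bs Z_i^{\leq B}\bigr)}_{S_1}\;+\;\sum_{i=1}^P\bs Z_i^{>B}\;-\;\sum_{i=1}^P\mathbb{E}\bs Z_i^{>B}.
\end{equation*}
The centered summands in $S_1$ are bounded by $2B$ in operator norm, and since truncation only shrinks the matrix second moment, the full sum-of-variances appearing in Bernstein is still controlled by $P\sigma_Z^2$ with the $\sigma_Z^2$ of the statement.

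Second, I would apply Tropp's bounded matrix Bernstein to $S_1$, obtaining with probability at least $1-e^{-t}$ the estimate $\|S_1\|\lesssim \sigma_Z\sqrt{P(t+\log M)}+B(t+\log M)$.  To kill the random tail $\sum_i \bs Z_i^{>B}$, I would invoke the Orlicz Markov inequality $\mathbb{P}(\|\bs Z_i\|>s)\leq 2\exp(-(s/U^{(q)})^q)$ together with a union bound over $i$: as soon as $B$ exceeds roughly $U^{(q)}(\log P + t)^{1/q}$, the event $\{\bs Z_i^{>B}\equiv 0 \text{ for all } i\}$ holds with probability at least $1-e^{-t}$, so $\sum_i \bs Z_i^{>B}$ vanishes on this event.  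The remaining deterministic bias $\bigl\|\sum_i\mathbb{E}\bs Z_i^{>B}\bigr\|$ is then estimated by integrating the same tail bound: $\mathbb{E}\|\bs Z_i\|\mathbf{1}\{\|\bs Z_i\|>B\}\leq \int_B^\infty 2e^{-(s/U^{(q)})^q}\,ds + 2Be^{-(B/U^{(q)})^q}$, which is super-polynomially small in $B/U^{(q)}$ and therefore absorbed into the $B(t+\log M)$ Bernstein contribution.

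The hard part will be producing the specific logarithmic factor $\bigl(\log(U^{(q)}/\sigma_Z)\bigr)^{1/q}$ advertised in the statement, rather than the naive $(\log P)^{1/q}$ coming from the union bound.  The correct truncation level turns out to be
\begin{equation*}
B \;\asymp\; U^{(q)}\bigl(\log(U^{(q)}/\sigma_Z)\bigr)^{1/q},
\end{equation*}
which equates the sub-gaussian contribution $\sigma_Z\sqrt{P(t+\log M)}$ with the Orlicz contribution $B(t+\log M)$ in the transition regime and, crucially, decouples $B$ from $P$; this is the Koltchinskii reduction and is where most of the care is required, since one has to verify that on the complementary low-probability event the tail bound remains dominant.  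A final bookkeeping step rewrites the additive bound $\sigma_Z\sqrt{\cdots}+B(\cdots)$ into the symmetric $\max\{\cdot,\cdot\}$ form stated in the proposition.  I would not re-derive bounded matrix Bernstein, but invoke it as a black box from Tropp's monograph, restricting the work to this truncation/Orlicz packaging.
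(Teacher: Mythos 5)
The paper does not prove Proposition~\ref{bernstein} at all: it is imported as a known result with a pointer to Tropp's monograph and to Koltchinskii's papers, so there is no in-paper argument to match yours against. Your general strategy --- truncate at a level $B$, apply the uniformly bounded matrix Bernstein inequality to the centered truncated sum, and control the tail and the bias via the $\Psi_q$ tail estimate --- is indeed the standard route by which this statement is established in the cited literature, so the approach is the right one.

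There is, however, a genuine gap at exactly the point you flag, and your sketch as written is internally inconsistent there. You first (correctly) observe that for the event $\{\bs Z_i^{>B}\equiv 0 \text{ for all } i\}$ to hold with probability at least $1-e^{-t}$ via the union bound, you need $B\gtrsim U^{(q)}\,(t+\log P)^{1/q}$. You then declare that the ``correct'' truncation level is $B\asymp U^{(q)}\bigl(\log(U^{(q)}/\sigma_Z)\bigr)^{1/q}$, ``decoupling $B$ from $P$.'' But with that smaller $B$ the union bound yields failure probability of order $P\exp\bigl(-(B/U^{(q)})^q\bigr)\asymp P\,\sigma_Z/U^{(q)}$, which bears no relation to $e^{-t}$ and can exceed $1$; so on your own argument the tail term does not vanish with the claimed probability. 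Closing this gap is the actual content of the Koltchinskii reduction and requires one of two additional ideas that your sketch omits: either (i) keep $B\asymp U^{(q)}(t+\log P)^{1/q}$ and run a case analysis showing that whenever $\log P$ genuinely exceeds $\log(U^{(q)}/\sigma_Z)$ (up to constants), the variance term $\sigma_Z\sqrt{P(t+\log M)}$ dominates the offending $U^{(q)}(t+\log P)^{1/q}(t+\log M)$ contribution, so the $\log P$ never survives into the final bound; or (ii) abandon the union bound entirely and instead bound $\bigl\|\sum_i \bs Z_i^{>B}\bigr\|\le \sum_i\|\bs Z_i\|\mathbf{1}\{\|\bs Z_i\|>B\}$ by a scalar Bernstein/Adamczak-type inequality for nonnegative $\Psi_q$ variables, exploiting that $\mathbb{E}\bigl[\|\bs Z\|\mathbf{1}\{\|\bs Z\|>B\}\bigr]$ is already of order $\sigma_Z$ at the chosen $B$. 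Without one of these, the step ``the tail vanishes with probability $1-e^{-t}$ at $B\asymp U^{(q)}(\log(U^{(q)}/\sigma_Z))^{1/q}$'' fails, and the proof only delivers the weaker bound with $(t+\log P)^{1/q}$ in place of $\bigl(\log(U^{(q)}/\sigma_Z)\bigr)^{1/q}$.
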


In the rest of this section we will show how to derive bounds on the Orlicz norm of random variables with sub-gaussian or sub-exponential tails. Those are two important classes of random variables which will appear in the proof of Theorem~\ref{theorem:BlindDeconvTh}. We start by introducing a formal characterization of those random variables. Those two classes are respectively bounded in the $\|X\|_{\Psi_2}$ and $\|X\|_{\Psi_1}$ norms for $\Psi_2(x) = e^{x^2}- 1$ and $\Psi_1(x) = e^x - 1$ as we will see. We now define the notion of sub-gaussian and sub-exponential random variables. Propositions~\ref{subGauss1} and~\ref{subExp1} can be found, for example in~\cite{wainwright2015high}.

\begin{proposition}[Equivalent characterization of Sub-Gaussian random variables~\label{subGauss1}]
For a zero mean random variable, the following properties are equivalent
\begin{enumerate}[{\upshape i)}]
\item There is a constant $\sigma$ such that $\mathbb{E}e^{\lambda (X-\mu)}\leq e^{\frac{\lambda^2\sigma^2}{2}}$ for all $\lambda\in \mathbb{R}$.
\item There is a constant $c\geq 1$ and a Gaussian variable $Z\sim \mathcal{N}(0,\tau^2)$ such that 
\begin{align}
\mathbb{P}\left(|X|\geq A\right)\leq c\mathbb{P}\left(|Z|\geq A\right),\quad\mbox{for all $A\geq 0$.} 
\end{align} 
\item For all $\lambda\in [0,1)$, $\displaystyle \mathbb{E}e^{\frac{\lambda X^2}{2\sigma^2}}\leq \frac{1}{\sqrt{1-\lambda}}.$
\end{enumerate}
\end{proposition}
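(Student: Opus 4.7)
The plan is to establish Proposition~\ref{subGauss1} by proving the cyclic chain of implications (i)$\Rightarrow$(ii)$\Rightarrow$(iii)$\Rightarrow$(i), with the understanding that the constants $\sigma$, $\tau$, $c$ may change between implications (only existence is claimed in each clause). Throughout, $\mu=0$ since $X$ is assumed zero mean.

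\textbf{(i)$\Rightarrow$(ii).} The first step would be to apply the Chernoff bound: for any $A>0$ and $\lambda>0$, (i) gives
\begin{align}
\mathbb{P}(X\geq A)\;\leq\; e^{-\lambda A}\,\mathbb{E}e^{\lambda X}\;\leq\; e^{-\lambda A+\lambda^2\sigma^2/2},
\end{align}
and optimizing at $\lambda=A/\sigma^2$ yields $\mathbb{P}(X\geq A)\leq e^{-A^2/(2\sigma^2)}$. Applying the same argument to $-X$ and union-bounding, $\mathbb{P}(|X|\geq A)\leq 2e^{-A^2/(2\sigma^2)}$. I would then compare this to the tail of $Z\sim\mathcal{N}(0,\tau^2)$ with $\tau^2$ chosen strictly larger than $\sigma^2$ (for concreteness $\tau^2=4\sigma^2$). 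Using the lower tail estimate coming from the Mills ratio, $\mathbb{P}(|Z|\geq A)\geq\frac{C_0\tau}{A}e^{-A^2/(2\tau^2)}$ for $A\geq\tau$, the ratio is bounded by $C_0^{-1}\,A\,e^{-A^2/(2\sigma^2)+A^2/(2\tau^2)}$, which tends to $0$ as $A\to\infty$ since $\tau>\sigma$. For small $A$ both probabilities are bounded, so the supremum over $A\geq 0$ of the ratio is finite, supplying the required constant~$c$.

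\textbf{(ii)$\Rightarrow$(iii).} I would use the layer-cake (Fubini) identity applied to the non-negative random variable $e^{\lambda X^2/(2\sigma^2)}-1$:
\begin{align}
\mathbb{E}e^{\lambda X^2/(2\sigma^2)}\;=\;1+\frac{\lambda}{2\sigma^2}\int_0^\infty e^{\lambda u/(2\sigma^2)}\,\mathbb{P}(X^2>u)\,du.
\end{align}
Substituting the Gaussian domination from (ii) and then recognizing the result as the MGF of $Z^2$, one obtains $\mathbb{E}e^{\lambda X^2/(2\sigma^2)}\leq 1+c\bigl(\mathbb{E}e^{\lambda Z^2/(2\sigma^2)}-1\bigr)=1+c\bigl((1-\lambda\tau^2/\sigma^2)^{-1/2}-1\bigr)$. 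A clean way to absorb $c$ and $\tau$ into the $1/\sqrt{1-\lambda}$ form stated in (iii) is to rescale $\sigma$ (replace $\sigma$ by $\sigma'=C\sigma$ for a large enough constant $C$ depending on $c$ and $\tau/\sigma$), which only affects the implicit constant in the definition of the sub-Gaussian parameter in (iii).

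\textbf{(iii)$\Rightarrow$(i).} Here the cleanest route is to extract moment bounds and then resum. Taking $\lambda=1/2$ in (iii) gives $\mathbb{E}e^{X^2/(4\sigma^2)}\leq\sqrt{2}$. Expanding in Taylor series, this provides $\mathbb{E}X^{2k}\leq k!\,(4\sigma^2)^k\cdot\sqrt{2}$, hence by Jensen $\mathbb{E}|X|^k\leq(4\sigma^2)^{k/2}(k!)^{1/2}\cdot 2^{1/4}$. Plugging into $\mathbb{E}e^{\lambda X}=1+\sum_{k\geq 2}\lambda^k\mathbb{E}X^k/k!$ (the $k=1$ term vanishes by mean-zero) and using $(k!)^{1/2}/k!\leq 1/\lceil k/2\rceil!$ would yield a geometric-type series bounded by $\exp(C\lambda^2\sigma^2)$ for some absolute constant $C$, which is the form in (i) with $\sigma$ replaced by $\sqrt{C}\,\sigma$. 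An alternative I would keep in mind if the series bookkeeping becomes unpleasant is the completing-the-square trick $2\lambda X-X^2/(4\sigma^2)\leq 4\sigma^2\lambda^2$, combined with Cauchy--Schwarz and the $\lambda=1/2$ bound above, which directly gives the MGF estimate.

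\textbf{Main obstacle.} The routine step (i)$\Rightarrow$(ii) is actually the delicate one, because a constant-factor domination of $\mathbb{P}(|X|\geq A)$ by $\mathbb{P}(|Z|\geq A)$ requires choosing the Gaussian variance $\tau^2$ strictly larger than $\sigma^2$; the natural choice $\tau=\sigma$ fails due to the polynomial factor in the Mills-ratio denominator. Once one accepts that the parameters shift between conditions, all three implications go through cleanly.
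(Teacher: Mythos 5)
The paper does not prove Proposition~\ref{subGauss1} at all: it is stated as a known equivalence and attributed to the reference~\cite{wainwright2015high}, so there is no internal proof to compare against. Your cyclic argument (i)$\Rightarrow$(ii)$\Rightarrow$(iii)$\Rightarrow$(i) is the standard textbook route and is essentially correct, including the two points that are genuinely easy to get wrong: the need to take $\tau^2$ strictly larger than $\sigma^2$ in (i)$\Rightarrow$(ii) so that the polynomial factor in the Mills-ratio lower bound is dominated, and the fact that the sub-Gaussian parameter is allowed to drift by absolute constants between the three conditions (without which (ii)$\Rightarrow$(iii) cannot be closed, since the layer-cake bound produces $1+c\bigl((1-\lambda\tau^2/\sigma^2)^{-1/2}-1\bigr)$ rather than $(1-\lambda)^{-1/2}$ and one must inflate $\sigma$ to absorb both $c$ and $\tau/\sigma$). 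One caution on (iii)$\Rightarrow$(i): your primary route via even-moment bounds and resummation is the right one precisely because the $k=1$ term of the series vanishes by the mean-zero hypothesis; the ``alternative'' you mention (completing the square plus Cauchy--Schwarz) yields $\mathbb{E}e^{\lambda X}\leq K e^{c\lambda^2\sigma^2}$ with a multiplicative constant $K>1$, and such a constant cannot be absorbed into $e^{\lambda^2\sigma'^2/2}$ uniformly as $\lambda\to 0$, so that shortcut does not by itself deliver (i) as stated; it must be combined with a separate small-$\lambda$ argument exploiting $\mathbb{E}X=0$. As written, with the Taylor-series route as the main argument, the proposal is a valid self-contained proof of the cited proposition.
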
	

The first condition is also called \textit{Laplace transform condition}. The second one is known as \textit{sub-gaussian tail estimate}. A similar ensemble of equivalent definitions can be derived for sub-exponential random variables,

\begin{proposition}[Equivalent characterization of Sub-exponential random variables~\label{subExp1}]
For a zero mean random variable, the following properties are equivalent
\begin{enumerate}[{\upshape i)}]
\item There are non-negative numbers $(\nu,b)$ such that
\begin{align}
\mathbb{E}\{e^{\lambda (X-\mu)}\}\leq e^{\frac{\nu^2\lambda^2}{2}}\qquad \mbox{for all $|\lambda|<\frac{1}{b}$}\label{BoundMoMGenFunSubExp}
\end{align}
\item There is a positive number $c_0>0$ such that $\mathbb{E}\{e^{\lambda X}\}<\infty$ for all $|\lambda|\leq c_0$
\item There are constants $c_1,c_2>0$ such that
\begin{align}
\mathbb{P}\left(|X|\geq t\right)\leq c_1e^{-c_2t}\qquad \mbox{for all $t>0$}.
\end{align}
\end{enumerate}
\end{proposition}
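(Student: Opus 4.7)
The plan is to establish the three-way equivalence by proving the cycle (i) $\Rightarrow$ (iii) $\Rightarrow$ (ii) $\Rightarrow$ (i). Since $X$ is assumed to have zero mean we take $\mu = 0$ throughout, so condition (i) reads $\mathbb{E}\{e^{\lambda X}\}\leq e^{\nu^{2}\lambda^{2}/2}$ for $|\lambda|<1/b$.

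For (i) $\Rightarrow$ (iii), I would apply the Chernoff bound separately to $X$ and $-X$. For any $\lambda\in[0,1/b)$, Markov's inequality gives $\mathbb{P}(X\geq t)\leq e^{-\lambda t}\mathbb{E}\{e^{\lambda X}\}\leq \exp(-\lambda t + \nu^{2}\lambda^{2}/2)$. Optimizing the exponent within the admissible range yields two regimes: for $t\leq \nu^{2}/b$ the choice $\lambda = t/\nu^{2}$ delivers $\exp(-t^{2}/(2\nu^{2}))$, and for $t>\nu^{2}/b$ the boundary choice $\lambda = 1/b$ delivers $\exp(-t/(2b))$ after absorbing the residual constant $\exp(\nu^{2}/(2b^{2}))$. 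Both decays fit into a single envelope of the form $c_{1}e^{-c_{2}t}$ with $c_{2}=1/(2b)$ and $c_{1}$ an absolute constant; applying the same estimate to $-X$ and taking a union bound controls $\mathbb{P}(|X|\geq t)$.

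For (iii) $\Rightarrow$ (ii), I would use the layer-cake identity $\mathbb{E}\{e^{|\lambda X|}\} = 1 + |\lambda|\int_{0}^{\infty} e^{|\lambda|t}\,\mathbb{P}(|X|>t)\,dt$. Substituting the hypothesis,
\begin{align*}
\mathbb{E}\{e^{|\lambda X|}\} \;\leq\; 1 + |\lambda|\,c_{1}\int_{0}^{\infty} e^{(|\lambda|-c_{2})t}\,dt \;=\; 1 + \frac{|\lambda|c_{1}}{c_{2}-|\lambda|},
\end{align*}
which is finite for any $|\lambda|<c_{2}$. Any $c_{0}\in(0,c_{2})$ therefore works in (ii).

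The implication (ii) $\Rightarrow$ (i) is the main obstacle, because one has to upgrade mere finiteness of the MGF on a neighborhood to a quantitative Gaussian-type majorant. My strategy is to pass to the cumulant function $\psi(\lambda) = \log \mathbb{E}\{e^{\lambda X}\}$. Finiteness of the MGF on $[-c_{0},c_{0}]$ lets one differentiate twice under the integral by dominated convergence, since the envelope $|X|^{k}e^{\lambda' X}$ for $|\lambda'|$ slightly exceeding $|\lambda|$ is integrable once one notes that $\mathbb{E}\{e^{c_{0}X}\}$ and $\mathbb{E}\{e^{-c_{0}X}\}$ are both finite. Hence $\psi$ is real-analytic on $(-c_{0},c_{0})$, and in particular $\psi''$ is continuous there. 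The zero-mean hypothesis gives $\psi(0)=0$ and $\psi'(0)=\mathbb{E}\{X\}=0$; Taylor's theorem with remainder then yields $\psi(\lambda) = \tfrac{1}{2}\lambda^{2}\psi''(\xi)$ for some $|\xi|\leq|\lambda|$. Choosing any $b$ with $1/b<c_{0}$ and setting $\nu^{2} = \sup_{|s|\leq 1/b}\psi''(s)<\infty$ (finite by compactness) gives $\psi(\lambda)\leq \nu^{2}\lambda^{2}/2$ for $|\lambda|<1/b$, which upon exponentiating is exactly (i). The delicate point is certifying the uniform boundedness of $\psi''$ on a strict sub-interval, which amounts to uniformly controlling the tilted second moment $\psi''(\lambda)=\mathrm{Var}_{\lambda}(X)$; this is where restricting to $1/b$ strictly less than $c_{0}$ and invoking continuity of $\psi''$ becomes essential.
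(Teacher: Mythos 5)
Your proof is correct. Note first that the paper does not prove Proposition~\ref{subExp1} at all: it is stated as a standard fact and attributed to~\cite{wainwright2015high}, so there is no in-paper argument to compare against. The cycle (i)~$\Rightarrow$~(iii)~$\Rightarrow$~(ii)~$\Rightarrow$~(i) that you give is exactly the standard textbook route (it is essentially the proof in Wainwright, Chapter~2). The two places where such proofs usually go wrong are handled adequately: in (i)~$\Rightarrow$~(iii) you correctly split at $t=\nu^{2}/b$, and the only point worth making explicit is that in the regime $t\leq \nu^{2}/b$ the Gaussian bound $e^{-t^{2}/(2\nu^{2})}$ is \emph{weaker} than $e^{-t/(2b)}$, so the prefactor $c_{1}$ must absorb $\sup_{t}\exp\bigl(t/(2b)-t^{2}/(2\nu^{2})\bigr)=e^{\nu^{2}/(8b^{2})}$, a constant depending on $(\nu,b)$ but still a legitimate choice of $c_{1}$ in (iii); and in (ii)~$\Rightarrow$~(i) you correctly retreat to a strictly smaller interval $[-1/b,1/b]\subset(-c_{0},c_{0})$ so that continuity of $\psi''=\mathrm{Var}_{\lambda}(X)$ on the open interval yields a finite supremum by compactness, with $\psi''\geq 0$ guaranteeing that $\nu^{2}$ is a valid variance proxy. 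The layer-cake step in (iii)~$\Rightarrow$~(ii) is also fine. In short, the argument is complete and is the one the cited reference would supply.
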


From propositions~\ref{subGauss1} and~\ref{subExp1} above and in particular from (i), it should be clear that sub-gaussian random variables are sub-exponential. We now show that subexponential random variables are always bounded in the $\Psi_1$ norm whether subgaussian variables are always bounded in the $\Psi_2$ norm. For sub-exponential random variables, the bound on the expectation of the moment generating function~\eqref{BoundMoMGenFunSubExp} provides a direct bound on the $\Psi_1$-norm. Indeed note that by using definition~\eqref{BoundMoMGenFunSubExp} together with~\eqref{eq:OrliczNorm} for $\Psi(x) = \Psi_1(x)$ (subexponential), we have,
\begin{align*}
\inf \{u>0\;:\;\mathbb{E}\exp(X/u)\leq 2\} &\leq \inf \{u> b\;:\;\exp(\nu^2/2u^2)\leq 2\} \\
& \lesssim  \max(b,\frac{\nu}{\sqrt{2\log 2}}). \\
&\lesssim \max(b,\nu).\label{lastBoundPsi1}
\end{align*}

The subexponential parameters $(b,\nu)$ thus provide a bound on the Orlicz $1$-norm of \textit{subexponential} variables. For \textit{sub-gaussian} variables, we can use the following result from~\cite{vershynin2010introduction}:

\begin{proposition}[\label{EqNormPsi}Equivalence of the subexponential and subgaussian norms]
A random variable $X$ is sub-gaussian if and only if $X^2$ is sub-exponential. Moreover,
\begin{align}
\|X\|_{\Psi_2}^2\leq \|X^2\|_{\Psi_1}\leq 2\|X\|_{\Psi_2}^2.
\end{align}
\end{proposition}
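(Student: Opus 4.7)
The plan is to read off the identity between the two norms directly from the Orlicz definition~\eqref{eq:OrliczNorm}, and then use the equivalent characterizations already recorded in Propositions~\ref{subGauss1}--\ref{subExp1} to turn this identity into the ``iff'' statement about sub-gaussian versus sub-exponential tails. The central observation is a change of variables. Unpacking $\mathbb{E}\Psi_q(|Y|/k)\leq 1$ as $\mathbb{E}\exp((|Y|/k)^q)\leq 2$, one obtains $\|X\|_{\Psi_2}=\inf\{k>0:\mathbb{E}\exp(X^2/k^2)\leq 2\}$ on the one hand, and $\|X^2\|_{\Psi_1}=\inf\{t>0:\mathbb{E}\exp(X^2/t)\leq 2\}$ on the other. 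Setting $t=k^2$ puts the two admissible sets in bijection, so the infima satisfy $\|X^2\|_{\Psi_1}=\|X\|_{\Psi_2}^2$; the inequalities in the statement, with the factor $2$, then hold trivially.

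For the biconditional I would chain this norm identity with the propositions already in hand. On the one hand, $X$ being sub-gaussian is equivalent, by part (iii) of Proposition~\ref{subGauss1}, to the existence of some $\sigma>0$ with $\mathbb{E}\exp(\lambda X^2/(2\sigma^2))$ bounded on a neighborhood of $\lambda=1$, which after rescaling is exactly the condition $\|X\|_{\Psi_2}<\infty$. On the other hand, $X^2\geq 0$ being sub-exponential is, by part (ii) of Proposition~\ref{subExp1}, equivalent to the moment generating function of $X^2$ being finite in a neighborhood of zero, which is exactly $\|X^2\|_{\Psi_1}<\infty$. Since the identity $\|X^2\|_{\Psi_1}=\|X\|_{\Psi_2}^2$ makes the two finiteness conditions equivalent, the desired biconditional follows.

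There is no real obstacle: once the two infima are written in the same variable $X^2$, the proof reduces to a one-line substitution. The only bookkeeping worth flagging is that the Orlicz norm with exponent $q$ encodes the bound $\mathbb{E}\exp((|Y|/k)^q)\leq 2$ with the \emph{same} right-hand side $2$ for every $q$, so one must check that this common threshold is what legitimizes pushing through the substitution $t=k^2$; had the definitions used different thresholds for $\Psi_1$ and $\Psi_2$, the ratio would have propagated into the stated inequality, making the factor $2$ necessary rather than the comfortable slack it currently represents.
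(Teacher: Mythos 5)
Your proof is correct, and it is worth noting that the paper itself offers no proof of this proposition: it is imported verbatim from Vershynin's survey. Your argument is therefore not a rederivation of the paper's proof but a genuinely different (and, under the paper's conventions, cleaner) route. In the cited source the $\psi_q$ norms are defined through moment growth, $\sup_p p^{-1/q}(\mathbb{E}|X|^p)^{1/p}$, and the constants $1$ and $2$ in the two-sided bound are genuinely needed; they come from comparing a supremum over all $p$ with a supremum over even $p$. Under the paper's own definition~\eqref{eq:OrliczNorm} with $\Psi_q(x)=e^{x^q}-1$, your change of variables $t=k^2$ shows that the admissible sets for $\|X\|_{\Psi_2}$ and $\|X^2\|_{\Psi_1}$ are carried onto each other by the increasing bijection $k\mapsto k^2$ (both sets are upward closed, so the infima transform accordingly), giving the exact identity $\|X^2\|_{\Psi_1}=\|X\|_{\Psi_2}^2$; the displayed two-sided inequality is then trivially true with the factor $2$ as pure slack. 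Your closing remark correctly identifies why: both Orlicz norms share the same threshold $\mathbb{E}\Psi(|Y|/k)\leq 1$. The only point I would tighten is the biconditional. Propositions~\ref{subGauss1} and~\ref{subExp1} are stated for zero-mean variables, and $X^2$ is not centered, so when you invoke part (ii) of Proposition~\ref{subExp1} you should say explicitly that finiteness of $\mathbb{E}e^{\lambda X^2}$ for small $\lambda>0$ (negative $\lambda$ being free since $X^2\geq 0$) combined with dominated convergence as $\lambda\downarrow 0$ yields $\mathbb{E}e^{\lambda X^2}\leq 2$ for $\lambda$ small enough, which is what identifies finiteness of the mgf near the origin with finiteness of $\|X^2\|_{\Psi_1}$. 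This is routine bookkeeping, not a gap.
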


From proposition~\ref{EqNormPsi}, to derive abound on the $\Psi_2$ norm of a subgaussian variable $X$, one can thus simply use the bound on the moment generating function of the corresponding sub-exponential variable $X^2$ and then take the square root of this bound. Two other important results which will be assumed throughout the paper are $\|X_1X_2\|_{\psi_1}\leq \|X_1\|_{\psi_2}\|X_2\|_{\psi_2}$ whenever $X_1$ and $X_2$ are subgaussians with $\|X_1\|_{\psi_2},\|X_2\|_{\psi_2}<\infty$, and $\|X - \mathbb{E}X\|_{\psi_q}\leq 2\|X\|_{\psi_q}$ (see for example~\cite{ahmed2016leveraging} and lemma 7 in~\cite{ahmed2015compressive2}).

An important class of variables that will be used in the proof of Theorem~\ref{theorem:BlindDeconvTh} are chi-squared variables, $Z=X^2$ with $X\sim \mathcal{N}(0,1)$. Let us first show that those variables are subexponential. The expectation of the moment generating function $e^{\lambda X}$ reads,
\begin{align}
\mathbb{E} e^{\lambda (Z-1)}&= \frac{1}{\sqrt{2\pi}}\int e^{-X^2/2\sigma^2}e^{\lambda (X^2-1)}\;dX\nonumber\\
& = \frac{1}{\sqrt{2\pi}}e^{-\lambda}\int e^{-(1/2-\lambda)X^2}\;dX\nonumber\\
& = \frac{1}{\sqrt{\pi}}\frac{e^{-\lambda}}{\sqrt{1-2\lambda}}\int e^{-X^2}\;dX\nonumber \\
& = \frac{1}{\sqrt{\pi}}\frac{e^{-\lambda}}{\sqrt{1-2\lambda}}.\label{chiSquaredRV1}
\end{align}
The last equality~\eqref{chiSquaredRV1} holds whenever $\lambda<1/2$. In order to derive a bound of the form~\eqref{BoundMoMGenFunSubExp} we first require $|\lambda|<1/4$ to bound the denominator of~\eqref{chiSquaredRV1}. For this interval, we want a $\nu$ satisfying $\mathbb{E} e^{\lambda (Z-1)}= e^{-\lambda}\; e^{-\log\sqrt{(1-2\lambda)\pi}}\leq e^{\lambda^2\nu}$. One can check that this holds for $\nu=2$. $Z$ is thus sub-exponential with parameters $(2,4)$. The proof of Theorem~\ref{theorem:BlindDeconvTh} will also repeatedly require us to bound the Orlicz norm of a sum of the form $Z = \sum_k X_k^2$ for independent $X_k\sim\mathcal{N}(0,1)$. Since the $X_k$ are independent, the expectation of the moment generating function of $Z$ reads,
\begin{align}
\mathbb{E}e^{\lambda \sum_{k=1}^n X_k^2} = \prod_{k=1}^n \mathbb{E} e^{\lambda X_k^2} \leq \prod_{k=1}^n e^{\lambda^2\nu^2}\leq e^{n\lambda^2\nu^2}, \qquad \mbox{for all $|\lambda|<1/b$.}
\end{align}
This shows that $\chi = \sum_{k=1}^n X_k^2$ is subexponential with parameters $(b, \nu \sqrt{n})$. The next section introduces a candidate for the dual vector $\bs Y$.

\subsection{\label{sectionCandidateDualCert}Ansatz}

In order to satisfy the conditions~\eqref{dualCertConditions}, whenever the map $\mathcal{A}^*\mathcal{A}$ concentrates to the identity, the most obvious choice for $\bs Y$ would be to consider the certificate $\bs Y_1$ defined as,  
\begin{align}
\bs Y_1 &= \mathcal{A}^*\mathcal{A}(\bs h\bs m^*).
\end{align}  
However, this particular construction doesn't match the sample complexity observed empirically (see section~\ref{sectionPhaseTrans} and the discussion therein). In particular, at sample complexities observed empirically, it fails to satisfy the condition $\mathcal{P}_T(\bs Y_1)\approx \bs h\bs m^*$. For this reason, we consider instead the better ansatz,
\begin{align}
\bs Y = \mathcal{A}^*(\mathcal{A}\mathcal{P}_T)(\mathcal{P}_T\mathcal{A}^*\mathcal{A}\mathcal{P}_T)^{-1}\bs h\bs m^*\label{candidateCert0},
\end{align}
where, as in~\cite{candes2009exact}, the notation $(\mathcal{P}_T\mathcal{A}^*\mathcal{A}\mathcal{P}_T)^{-1}\bs h\bs m^*$ really means the element $\bs F$ in $T$ obeying $(\mathcal{P}_T\mathcal{A}^*\mathcal{A}\mathcal{P}_T)\bs F = \bs h\bs m^*$. In particular, the candidate certificate $\bs Y$ is well defined as soon as $(\mathcal{P}_T\mathcal{A}^*\mathcal{A}\mathcal{P}_T)$ is a one-to-one mapping from $T$ onto $T$. As soon as the ansatz~\eqref{candidateCert0} is well defined, by construction we immediately have $\mathcal{P}_T(\bs Y) = \bs h\bs m^*$, and the first condition in~\eqref{dualCertConditions} is immediately satisfied. Injectivity of this map is the point of lemma~\eqref{lemmaT} below. This lemma is proved in section~\eqref{proofT}.
\begin{restatable}{lemma}{lemmaT}
\label{lemmaT}
Let $T$ be defined as in~\eqref{tangentSpaceNN} with corresponding projector $\mathcal{P}_T$ defined as in~\eqref{projectorTBD}. Further let $\mu_m^2$ and $\mu_h^2$ be defined as in~\eqref{coherencem} and~\eqref{coherenceh} respectively. Then for any constant $\delta_1,\beta_1$, as soon as $L\gtrsim \beta_1(1/\delta_1) K\mu_h^2 \log(LN)$ and $N\gtrsim \beta_1 (1/\delta_1)  \mu_m^2\log (LN) $,
\begin{align}
\|\mathcal{P}_T\mathcal{A}^*\mathcal{A}\mathcal{P}_T - \mathcal{P}_T\|\lesssim \delta_1
\label{lemmaboundPalpha}
\end{align}
with probability at least $1-(L\vee N)^{-\beta_1}$ .
\end{restatable}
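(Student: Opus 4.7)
The plan is to write the operator $\mathcal{P}_T\mathcal{A}^*\mathcal{A}\mathcal{P}_T-\mathcal{P}_T$ as a sum of $LN$ independent, mean-zero, rank-one self-adjoint random operators on $\mathbb{C}^{L\times KN}$, one per pair $(\ell,n)$, and to invoke the Orlicz-form operator Bernstein inequality of Proposition~\ref{bernstein}. Starting from $\mathcal{A}^*\mathcal{A}(\bs X)=\sum_{\ell,n}\langle\bs A_{\ell,n},\bs X\rangle\bs A_{\ell,n}$ and using the self-adjointness of $\mathcal{P}_T$, define $\mathcal{S}_{\ell,n}(\bs X)=\langle\mathcal{P}_T\bs A_{\ell,n},\bs X\rangle\mathcal{P}_T\bs A_{\ell,n}$ and $\mathcal{Z}_{\ell,n}=\mathcal{S}_{\ell,n}-\mathbb{E}\mathcal{S}_{\ell,n}$. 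A short computation exploiting $\mathbb{E}[\bs C_n^*\bs C_n]=\bs I_K$ (which is the reason for the $\sqrt{L}$ scaling absorbed into $\hat{\bs c}_{\ell,n}$) together with $\sum_\ell \bs f_\ell\bs f_\ell^*=\bs I_L$ shows $\mathbb{E}\mathcal{A}^*\mathcal{A}=\mathcal{I}$, so that $\sum_{\ell,n}\mathbb{E}\mathcal{S}_{\ell,n}=\mathcal{P}_T$ and the object to bound is $\|\sum_{\ell,n}\mathcal{Z}_{\ell,n}\|$.

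The key per-summand estimate is the Orlicz-$\Psi_1$ norm of $\|\mathcal{Z}_{\ell,n}\|\le 2\|\mathcal{P}_T\bs A_{\ell,n}\|_F^2$. Expanding the projector as
$$\mathcal{P}_T\bs A_{\ell,n}=(\bs h^*\bs f_\ell)\bs h\hat{\bs c}_{\ell,n}^{\,*}+(\hat{\bs c}_{\ell,n}^{\,*}\bs m)\bs f_\ell\bs m^*-(\bs h^*\bs f_\ell)(\hat{\bs c}_{\ell,n}^{\,*}\bs m)\bs h\bs m^*$$
and taking squared Frobenius norms, the three contributions reduce, respectively, to $|\hat h[\ell]|^2\|\hat{\bs c}_{\ell,n}\|^2$, $|\langle\hat{\bs c}_{\ell,n},\bs m\rangle|^2$, and their product. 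Since, after absorbing $\sqrt{L}$, $\|\hat{\bs c}_{\ell,n}\|^2$ is a sum of $K$ independent squared standard Gaussians, the $\chi^2_K$ computation at the end of Section~\ref{subgaussianAndSubexponential} gives $\|\|\hat{\bs c}_{\ell,n}\|^2\|_{\Psi_1}\lesssim K$, which together with the coherence bound $|\hat h[\ell]|^2\le\mu_h^2/L$ yields $\||\hat h[\ell]|^2\|\hat{\bs c}_{\ell,n}\|^2\|_{\Psi_1}\lesssim K\mu_h^2/L$. Likewise, $\langle\hat{\bs c}_{\ell,n},\bs m\rangle\sim\mathcal{N}(0,\|\bs m_n\|^2)$ with $\|\bs m_n\|^2\le\mu_m^2/N$, so its square has Orlicz-$\Psi_1$ norm $O(\mu_m^2/N)$. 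Dropping the (smaller) cross term, we obtain
$$U:=\bigl\|\|\mathcal{P}_T\bs A_{\ell,n}\|_F^2\bigr\|_{\Psi_1}\lesssim\frac{K\mu_h^2}{L}+\frac{\mu_m^2}{N}.$$

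For the variance statistic of Proposition~\ref{bernstein}, the rank-one identity $\mathcal{S}_{\ell,n}^{\,2}=\|\mathcal{P}_T\bs A_{\ell,n}\|_F^2\,\mathcal{S}_{\ell,n}$, $\mathcal{S}_{\ell,n}\succeq 0$, and $\sum_{\ell,n}\mathbb{E}\mathcal{S}_{\ell,n}=\mathcal{P}_T$ give
$$\sigma^2=\Bigl\|\tfrac{1}{LN}\sum_{\ell,n}\mathbb{E}\mathcal{Z}_{\ell,n}^{\,2}\Bigr\|\le\tfrac{1}{LN}\Bigl(\max_{\ell,n}\mathbb{E}\|\mathcal{P}_T\bs A_{\ell,n}\|_F^2\Bigr)\bigl\|\sum_{\ell,n}\mathbb{E}\mathcal{S}_{\ell,n}\bigr\|\lesssim\tfrac{1}{LN}\Bigl(\tfrac{K\mu_h^2}{L}+\tfrac{\mu_m^2}{N}\Bigr).$$
Plugging $P=LN$, ambient dimension $M\lesssim L+KN$, and $t=\beta_1\log(L\vee N)$ into~\eqref{eq:BernsteinSum}, the sub-Gaussian regime contributes $\sqrt{(K\mu_h^2/L+\mu_m^2/N)\log(LN)}$ and the sub-exponential regime contributes $(K\mu_h^2/L+\mu_m^2/N)\log(LN)$ up to log-ratios; both are $\lesssim\delta_1$ under the hypotheses $L\gtrsim\beta_1(1/\delta_1)K\mu_h^2\log(LN)$ and $N\gtrsim\beta_1(1/\delta_1)\mu_m^2\log(LN)$, yielding the claim with probability at least $1-(L\vee N)^{-\beta_1}$.

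The main obstacle is the clean control of the Orlicz-$\Psi_1$ norm of $\|\mathcal{P}_T\bs A_{\ell,n}\|_F^2$: it is a $K$-dimensional Gaussian quadratic form scaled by $|\hat h[\ell]|^2$ added to a one-dimensional one scaled by $\|\bs m_n\|^2$, and one needs both the $\chi^2_K$ sub-exponential analysis and the coherence scalings $\mu_h^2/L$ and $\mu_m^2/N$ to combine so that the two hypotheses $L\gtrsim K\mu_h^2\log$ and $N\gtrsim\mu_m^2\log$ emerge on an equal footing. Once $U$ and $\sigma$ are in hand, everything else is a routine application of Proposition~\ref{bernstein}.
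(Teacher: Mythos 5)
Your proposal is correct and follows essentially the same route as the paper's proof: decompose $\mathcal{P}_T\mathcal{A}^*\mathcal{A}\mathcal{P}_T-\mathcal{P}_T$ into the centered rank-one summands $\mathcal{P}_T\mathcal{A}_{\ell,n}^*\mathcal{A}_{\ell,n}\mathcal{P}_T$, use $\|\mathcal{P}_T(\bs A_{\ell,n})\|_F^2\le|\hat{h}[\ell]|^2\|\hat{\bs c}_{\ell,n}\|^2+|\langle\hat{\bs c}_{\ell,n},\bs m\rangle|^2$ together with the $\chi^2$ analysis to bound both the variance and the $\Psi_1$-norm by $K\mu_h^2/L+\mu_m^2/N$, and apply Proposition~\ref{bernstein} with $t=\beta_1\log(L\vee N)$. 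The one step to tighten is pulling $\max_{\ell,n}\mathbb{E}\|\mathcal{P}_T(\bs A_{\ell,n})\|_F^2$ out of $\mathbb{E}\mathcal{S}_{\ell,n}^2=\mathbb{E}\left[\|\mathcal{P}_T(\bs A_{\ell,n})\|_F^2\,\mathcal{S}_{\ell,n}\right]$, since the two factors are dependent; because $\vec(\mathcal{P}_T(\bs A_{\ell,n}))$ is a Gaussian vector, the identity $\mathbb{E}[\|\bs v\|^2\bs v\bs v^*]=\operatorname{tr}(\bs\Sigma)\bs\Sigma+2\bs\Sigma^2\preceq 3\operatorname{tr}(\bs\Sigma)\bs\Sigma$ validates the step up to a constant absorbed in $\lesssim$, which is precisely the explicit fourth-moment computation the paper carries out at that point.
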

Lemma~\ref{lemmaT} shows that the mapping $\mathcal{P}_T\mathcal{A}^*\mathcal{A}\mathcal{P}_T -\mathcal{P}_T$ is a contraction. For a sufficiently small $\delta_1$ (see for example~\cite{kubrusly2012spectral}), anf for $\bs h\bs m^*\in T$, one can thus express the element $(\mathcal{P}_T\mathcal{A}^*\mathcal{A}\mathcal{P}_T)^{-1}\bs h\bs m^*$ as
\begin{align}
\left(\mathcal{P}_T - (\mathcal{P}_T - \mathcal{P}_T\mathcal{A}^*\mathcal{A}\mathcal{P}_T)\right)^{-1}\bs h\bs m^* &= \left(\mathcal{P}_T\mathcal{A}^*\mathcal{A}\mathcal{P}_T\right)^{-1} \bs h\bs m^*\\
&= \bs h\bs m^*+ \left(\mathcal{P}_T - \mathcal{P}_T\mathcal{A}^*\mathcal{A}\mathcal{P}_T \right)\bs h\bs m^* + \left(\mathcal{P}_T - \mathcal{P}_T\mathcal{A}^*\mathcal{A}\mathcal{P}_T \right)^2 \bs h\bs m^* + \ldots \label{seriesNeumann1}
\end{align}
Adding the second factor from~\eqref{candidateCert0} gives the following expansion for the ansatz~\eqref{candidateCert0}  
\begin{align}
\mathcal{A}^*\mathcal{A}\mathcal{P}_T\left(\mathcal{P}_T\mathcal{A}^*\mathcal{A}\mathcal{P}_T\right)^{-1}\bs h\bs m^* &= \mathcal{A}^*\mathcal{A}\mathcal{P}_T\left(\mathcal{P}_T\mathcal{A}^*\mathcal{A}\mathcal{P}_T\right)^{-1}\bs h\bs m^* \\
&= \mathcal{A}^*\mathcal{A}\mathcal{P}_T\bs h\bs m^*+\mathcal{A}^*\mathcal{A}\mathcal{P}_T \left(\mathcal{P}_T-\mathcal{P}_T\mathcal{A}^*\mathcal{A}\mathcal{P}_T \right)\bs h\bs m^* + \ldots \label{seriesNeumann2}
\end{align}
In order to show the second condition in~\eqref{ConditionforDualCert}, we are thus left with showing that the projection onto $T^\perp$ of each of the terms in~\eqref{seriesNeumann2} can be controlled. Following the approach in~\cite{candes2009exact} we will only use explicit concentration results to bound the first two terms in the series in order to reduce the general sample complexity and then use a more general argument for the remaining terms. We start with the first term
\begin{align}
\|\mathcal{P}_T^\perp \left(\mathcal{A}^*\mathcal{A}\mathcal{P}_T\bs h\bs m^*\right)\| &= \|\mathcal{P}_T^\perp \left(\mathcal{A}^*\mathcal{A}\mathcal{P}_T\bs h\bs m^* -\mathbb{E}\mathcal{A}^*\mathcal{A}\mathcal{P}_T\bs h\bs m^*\right)\|,
\end{align}
noting that $\mathbb{E}\mathcal{A}^*\mathcal{A}\mathcal{P}_T\bs h\bs m^* \in T$. This first term can be bounded through lemma~\ref{lemmaTperp} below whose proof is given in section~\ref{secSizeProperty}.
\begin{restatable}{lemma}{lemmaTperp} 
Let $\hat{\bs c}_{\ell,n}$ be defined as in section~\ref{notations} (including the scaling $\sqrt{L}$), where $c_{\ell,n}[k]\sim \mathcal{N}(0,1/L)$ are i.i.d. gaussian,. and $\bs A_{\ell,n} = \bs f_\ell\hat{\bs c}_{\ell,n}^*$. Let the operator $\mathcal{A}\;:\;\mathbb{C}^{L\times KN} \mapsto \mathbb{C}^{L\times N}$ be defined from the matrices $\bs A_{\ell,n}$ as in~\eqref{DefinitionLinearMap}. The coherences $\mu_m^2$ and $\mu_h^2$ are defined as in~\eqref{coherencem} and~\eqref{coherenceh}. Then for any constants $\delta_2,\beta_2$, as soon as $L\gtrsim \beta_2(1/\delta_2)K\mu_h^2$ and $N\gtrsim\beta_2(1/\delta_2)\mu_m^2$
\label{lemmaTperp} 
\begin{align}
\|\mathcal{P}_T^\perp \left(\mathcal{A}^*\mathcal{A}\mathcal{P}_T\bs h\bs m^*\right)\|\lesssim \delta_2
\end{align}\label{secondTermNeumann}
with probability at least $1-\left(LN\right)^{-\beta_2}$. 
\end{restatable}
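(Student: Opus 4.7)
The plan is to bound $\|\mathcal{P}_T^\perp(\mathcal{A}^*\mathcal{A}\mathcal{P}_T\bs h\bs m^*)\|$ by applying the Orlicz matrix Bernstein inequality of Proposition~\ref{bernstein} to the decomposition
\[
\mathcal{P}_T^\perp(\mathcal{A}^*\mathcal{A}(\bs h\bs m^*)) \;=\; \sum_{\ell=1}^L\sum_{n=1}^N \mathcal{P}_T^\perp(Z_{\ell,n}),\qquad Z_{\ell,n}\;:=\;\langle \bs A_{\ell,n},\bs h\bs m^*\rangle\, \bs A_{\ell,n}.
\]
The summands are (up to the conjugation symmetry of the real DFT, which only doubles constants) mutually independent: the matrices $\bs C_n$ are independent across $n$, and for each fixed $n$ the rows $\bs f_\ell^*\bs C_n$ of $\bs F\bs C_n$ have uncorrelated entries because $\bs F$ is unitary. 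A short second-moment computation using $\mathbb{E}[\hat{\bs c}_{\ell,n}\hat{\bs c}_{\ell,n}^*]=\bs e_n\bs e_n^*\otimes\bs I_K$ on its support block together with $\sum_\ell \bs f_\ell\bs f_\ell^*=\bs I$ gives $\mathbb{E}[\mathcal{A}^*\mathcal{A}(\bs h\bs m^*)]=\bs h\bs m^*\in T$, so the mean of each $\mathcal{P}_T^\perp(Z_{\ell,n})$ sums to zero and Bernstein applies to the centered sum.

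Each summand admits the rank-one factorization
\[
\mathcal{P}_T^\perp(Z_{\ell,n}) \;=\; \langle \bs f_\ell,\bs h\rangle\, \overline{\langle \hat{\bs c}_{\ell,n},\bs m\rangle}\, \bs u_\ell\bs v_{\ell,n}^*,\qquad \bs u_\ell=(\bs I-\bs h\bs h^*)\bs f_\ell,\quad \bs v_{\ell,n}=(\bs I-\bs m\bs m^*)\hat{\bs c}_{\ell,n}.
\]
The coherence hypothesis yields $|\langle \bs f_\ell,\bs h\rangle|\leq \mu_h/\sqrt{L}$ and $\|\bs u_\ell\|\leq 1$ deterministically, so the $\Psi_1$-Orlicz norm of $\|\mathcal{P}_T^\perp(Z_{\ell,n})\|$ reduces to $(\mu_h/\sqrt{L})$ times the $\Psi_1$-norm of the random product $|\langle \hat{\bs c}_{\ell,n},\bs m\rangle|\cdot\|\bs v_{\ell,n}\|$. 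The first factor is subgaussian with $\Psi_2$-norm $\lesssim\|\bs m_n\|\leq \mu_m/\sqrt{N}$, and $\|\bs v_{\ell,n}\|$ is controlled by the norm of a $K$-dimensional standard complex Gaussian vector, with $\Psi_2$-norm $\lesssim\sqrt{K}$ (Proposition~\ref{EqNormPsi} applied to $\|\bs v\|^2$). The product bound $\|X_1X_2\|_{\Psi_1}\leq \|X_1\|_{\Psi_2}\|X_2\|_{\Psi_2}$ recalled in Section~\ref{subgaussianAndSubexponential} then gives
\[
\|\mathcal{P}_T^\perp(Z_{\ell,n})\|_{\Psi_1}\;\lesssim\; \mu_h\mu_m\sqrt{K/(LN)}.
\]

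For the two variances, I would compute the fourth-order Gaussian moments $\mathbb{E}[|\langle \hat{\bs c}_{\ell,n},\bs m\rangle|^2\|\bs v_{\ell,n}\|^2]$ and $\mathbb{E}[|\langle \hat{\bs c}_{\ell,n},\bs m\rangle|^2\,\bs v_{\ell,n}\bs v_{\ell,n}^*]$ directly from Wick's theorem for the isotropic Gaussian $\hat{\bs c}_{\ell,n}|_{\text{block }n}$. The dominant contributions are $\|\bs m_n\|^2 K$ and $\|\bs m_n\|^2(\bs I-\bs m\bs m^*)|_{\text{block }n}$ respectively, with lower-order cross terms of size $O(\|\bs m_n\|^4)$ absorbed using $\|\bs m_n\|^2\leq 1$. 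The crucial point is that the block-$n$ support of $\hat{\bs c}_{\ell,n}$ keeps the accumulated second-moment matrix block-diagonal across $n$, so its operator norm is controlled by $\max_n\|\bs m_n\|^2\leq \mu_m^2/N$ rather than by $\sum_n\|\bs m_n\|^2=1$. Combined with $\sum_\ell|\hat h[\ell]|^2\|\bs u_\ell\|^2\leq 1$ and $\max_\ell|\hat h[\ell]|^2\leq \mu_h^2/L$, this yields the two variances of order $K\mu_h^2/L$ and $\mu_m^2/N$. Substituting the Orlicz bound and these variances into Proposition~\ref{bernstein} with $M\leq L+KN$ and $t=\beta_2\log(LN)$ produces the claim once $L\gtrsim \beta_2\delta_2^{-2}K\mu_h^2$ and $N\gtrsim\beta_2\delta_2^{-2}\mu_m^2$, with logarithmic factors absorbed in constants.

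The main technical obstacle will be the variance computation: one must exploit the block-diagonal support of $\hat{\bs c}_{\ell,n}$ carefully so that the operator norm of the accumulated second-moment matrix depends on $\max_n\|\bs m_n\|^2$ instead of $\sum_n\|\bs m_n\|^2$, which is precisely what converts a naive factor of $N$ into the coherence factor $\mu_m^2$ and produces the tight sample complexity in $N$. The analysis is otherwise of the same flavor as the proof of Lemma~\ref{lemmaT}, but the $\mathcal{P}_T^\perp$ projection in place of $\mathcal{P}_T$ forces a more delicate bookkeeping of the supports of the rank-one summands; it is exactly this structural gain that lets the first term of the Neumann series in~\eqref{seriesNeumann2} be controlled without needing the higher-order chaos arguments required later.
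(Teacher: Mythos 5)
Your proposal is correct and follows essentially the same route as the paper: center the sum using $\mathbb{E}\mathcal{A}^*\mathcal{A}(\bs h\bs m^*)=\bs h\bs m^*\in T$, bound the two second-moment matrices by $K\mu_h^2/L$ and $\mu_m^2/N$ via the block structure of $\hat{\bs c}_{\ell,n}$, bound the Orlicz $\Psi_1$-norm of each rank-one summand by $\mu_h\mu_m\sqrt{K/(LN)}$ through the subgaussian product rule, and conclude with Proposition~\ref{bernstein} at $t=\beta_2\log(LN)$. The only cosmetic difference is that you carry the projector $\mathcal{P}_T^\perp$ explicitly through the rank-one factorization $\bs u_\ell\bs v_{\ell,n}^*$, whereas the paper simply discards it as a contraction before bounding the unprojected sum.
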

For the second term in~\eqref{seriesNeumann2}, we want to show that the event $\mathcal{E}_3$ defined as 
\begin{align}
\mathcal{E}_3\equiv \left\{\left\|\mathcal{P}_T^\perp\mathcal{A}^*\mathcal{A}\left(\mathcal{P}_T\mathcal{A}^*\mathcal{A}\mathcal{P}_T - \mathcal{P}_T\right)\bs h\bs m^*\right\|\lesssim \delta_3\right\}\label{4thorderNorm}
\end{align}
holds with sufficient probability, for a sufficiently small constant $\delta_3$. We prove this result through lemma~\ref{lemmaTotalSecondTerm} below. The proof of this lemma is given in section~\ref{secSizeProperty}.
\begin{restatable}{lemma}{lemmaTotalSecondTerm} 
Let $\hat{\bs c}_{\ell,n}$ be defined as in section~\ref{notations} (including the scaling $\sqrt{L}$) where $c_{\ell,n}[k]\sim \mathcal{N}(0,1/L)$. Let $\bs A_{\ell,n} = \bs f_\ell\hat{\bs c}_{\ell,n}^*$ and let $\mathcal{A}$ denote the linear map defined from the matrices $\bs A_{\ell,n}$ as in~\eqref{DefinitionLinearMap}. Then for any constant $\delta_3$, for coherences $\mu_m^2$ and $\mu_h^2$ defined as in~\eqref{coherencem} and~\eqref{coherenceh},  
\label{lemmaTotalSecondTerm} 
 \begin{align}
\Bigg\|\mathcal{P}_T^\perp\mathcal{A}^*\mathcal{A}\left(\mathcal{P}_T\mathcal{A}^*\mathcal{A}\mathcal{P}_T - \mathcal{P}_T\right)\bs h\bs m^*\Bigg\|&\lesssim \delta_3\label{normLemmaEbound}
\end{align}
with probability at least $1-(1/\delta_3)\sqrt{\left(\frac{K\mu_h^2}{L} + \frac{\mu_m^2}{N}\right)} - c_3(LN)^{-\beta}$ as soon as $L\gtrsim \beta(1/\delta_3) K\mu_h^2$, $N\gtrsim \beta (1/\delta_3)\mu_m^2$. 
\end{restatable}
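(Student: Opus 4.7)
The strategy is to bound the random matrix
\begin{equation*}
\bs W \equiv \PTc \mathcal{A}^*\mathcal{A}(\PT \mathcal{A}^*\mathcal{A}\PT - \PT)\bs h\bs m^*
\end{equation*}
in expected operator norm and then convert the estimate to a tail probability via Markov's inequality. Since $\bs h\bs m^*\in T$ and $\mathbb{E}[\PT\mathcal{A}^*\mathcal{A}\PT]=\PT$ (this identity is what underlies Lemma~\ref{lemmaT}), the inner factor $(\PT\mathcal{A}^*\mathcal{A}\PT - \PT)\bs h\bs m^*$ is a mean-zero element of $T$, so $\bs W$ is a zero-mean quartic polynomial in the Gaussian entries of the bases $\bs C_n$. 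Expanding $\mathcal{A}^*\mathcal{A}=\sum_{\ell,n}\bs A_{\ell,n}\langle \bs A_{\ell,n},\cdot\rangle$ with $\bs A_{\ell,n}=\bs f_\ell\hat{\bs c}_{\ell,n}^*$, the matrix $\bs W$ unfolds into a double sum over index pairs $(\ell,n),(\ell',n')$ of products of four Gaussian quantities, all projected onto $T^\perp$.

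Following the route announced in the abstract, I would split $\bs W=\bs D+\bs O$, where $\bs D$ gathers the diagonal terms with $(\ell',n')=(\ell,n)$ (each summand being a univariate fourth-order monomial in the single random vector $\hat{\bs c}_{\ell,n}$) and $\bs O$ gathers the off-diagonal cross terms. For $\bs D$, the summands are independent and centered, and the matrix Rosenthal--Pinelis inequality yields a bound on $\mathbb{E}\|\bs D\|$ in terms of a matrix variance $\sigma_D^2=\|\sum_{\ell,n}\mathbb{E}(\bs D_{\ell,n}\bs D_{\ell,n}^*)\|$ together with an Orlicz-norm bound on $\|\bs D_{\ell,n}\|$. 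Both ingredients are accessible via the $\chi^2$-type moment calculus recalled in Section~\ref{subgaussianAndSubexponential} together with the coherence definitions~\eqref{coherencem}--\eqref{coherenceh}, and should give a contribution of order $\sqrt{K\mu_h^2/L+\mu_m^2/N}$.

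For $\bs O$, the summands are no longer independent, but each one is bilinear in two distinct Gaussian vectors $\hat{\bs c}_{\ell,n}$ and $\hat{\bs c}_{\ell',n'}$, exposing the structure of a Gaussian U-statistic of order two. The plan is to apply the standard decoupling inequality for U-statistics (de la Pe\~na--Gin\'e) to pass from $\bs O$ to a decoupled surrogate $\tilde{\bs O}$ in which the two index families range over independent copies of the Gaussian ensemble. Conditional on one copy, $\tilde{\bs O}$ becomes a sum of independent mean-zero random matrices to which the matrix Bernstein bound of Proposition~\ref{bernstein} applies; iterating the argument on the other copy controls $\mathbb{E}\|\tilde{\bs O}\|$ at the same order as $\bs D$, while the Bernstein tail contributes the $c_3(LN)^{-\beta}$ failure probability. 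Combining the two estimates gives $\mathbb{E}\|\bs W\|\lesssim \sqrt{K\mu_h^2/L+\mu_m^2/N}$ up to log factors, after which Markov's inequality $\mathbb{P}\{\|\bs W\|>\delta_3\}\le \mathbb{E}\|\bs W\|/\delta_3$ delivers the remaining $(1/\delta_3)\sqrt{K\mu_h^2/L+\mu_m^2/N}$ term in the stated probability.

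The main obstacle, I expect, is the combinatorial bookkeeping inside the quartic chaos: after decoupling one must still track how $\mu_h^2$ and $\mu_m^2$ enter both the Rosenthal--Pinelis variance for $\bs D$ and the Bernstein variance for $\tilde{\bs O}$ without inflating the $K$-dependence, since careless bounds either mix the roles of $L$ and $N$ inside the coherences or pick up an extra factor of $K$. Preserving the outer projection $\PTc$ throughout the expansion is what kills the dominant rank-one component of each summand and will be essential for recovering the scaling claimed in the lemma.
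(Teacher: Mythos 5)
Your overall architecture matches the paper's: split the double sum into the diagonal block $(\ell',n')=(\ell,n)$ handled by matrix Rosenthal--Pinelis plus Markov (which is where the polynomial failure probability $(1/\delta_3)\sqrt{K\mu_h^2/L+\mu_m^2/N}$ comes from) and the off-diagonal block handled by de la Pe\~na--Gin\'e decoupling followed by a conditional Bernstein bound (which contributes the $c_3(LN)^{-\beta}$ term). That is exactly the route taken in Lemmas~\ref{FourthOrderlemma} and~\ref{lemmaTperpSecondTerm}.

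There is, however, one concrete gap. You assert that $\bs W$ is a zero-mean quartic chaos because the inner factor $(\PT\mathcal{A}^*\mathcal{A}\PT-\PT)\bs h\bs m^*$ is centered. This is false: the outer operator $\mathcal{A}^*\mathcal{A}$ is built from the \emph{same} Gaussian vectors as the inner factor, so $\mathbb{E}\bs W$ picks up the correlation on the diagonal, namely the deterministic quantity $\mbox{Cov}(\mathcal{A}^*\mathcal{A},\PT\mathcal{A}^*\mathcal{A}\PT\bs h\bs m^*)$ of~\eqref{expandedCovariance}. Consequently your diagonal summands $\bs D_{\ell,n}$ are \emph{not} centered, and Rosenthal--Pinelis (which requires $\mathbb{E}\bs Z_i=0$) can only be applied after subtracting $\mathbb{E}\bs D_{\ell,n}$; the leftover deterministic sum $\sum_{\ell,n}\mathbb{E}\bs D_{\ell,n}$ must then be bounded separately. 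The paper does this in Lemma~\ref{lemmaBoundCovariance} via an explicit Gaussian fourth-moment computation, showing it is $O(K\mu_h^2/L+\mu_m^2/N)$ and hence absorbed by the sample-complexity assumptions. Without this step your argument either applies Rosenthal--Pinelis to non-centered matrices or silently drops a bias term of the same order as the bound you are trying to prove. Two smaller omissions, more forgivable at proposal level: the paper peels off the subexponential piece $\sum_{\ell,n}\bs A_{\ell,n}\langle\bs A_{\ell,n},\bs E_{\ell,n}\rangle$ and treats it with plain Bernstein (Lemma~\ref{lemmaEbound}) rather than folding it into the quartic chaos, and after decoupling one must both control the coherences of the conditioned-on copy on a high-probability event (Lemma~\ref{convergenceCoherence}) and handle the mismatch between the full expectation and the conditional expectation over the outer copy (Lemma~\ref{lemmaLemmaExpectationSwap}); your ``iterate on the other copy'' gestures at the former but not the latter.
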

The last lemma below concludes the proof by bounding the remaining terms in the series~\eqref{seriesNeumann2}. The proof of this lemma, which is almost identical to the proof given in~\cite{candes2009exact} is recalled for clarity in section~\ref{finNeumann}. 
\begin{restatable}{lemma}{lemmafinNeumann}
\label{lemmafinNeumann} 
Under the assumptions of Theorem~\ref{theorem:BlindDeconvTh}, there exists a constant $C_{k_0}$ such that 
\begin{align}
\sum_{k=k_0}^\infty\|\mathcal{P}_T^\perp\mathcal{A}^*\mathcal{A}\mathcal{P}_T \left(\mathcal{P}_T - \mathcal{P}_T\mathcal{A}^*\mathcal{A}\mathcal{P}_T\right)^{k}\bs h\bs m^*\| \leq C_{k_0}
\end{align}
with probability at least $1-(LN)^{-\beta}$. In particular, the constant $C_{k_0}$ can be made smaller than $\delta$ as soon as $L\gtrsim (1/\delta)K^{1+\frac{1}{k_0}}\mu_h^2$, $N\gtrsim (1/\delta)K^{1/k_0}$.

%

\end{restatable}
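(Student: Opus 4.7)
The plan is to bound each summand in $\sum_{k \ge k_0}\|\mathcal{P}_T^\perp\mathcal{A}^*\mathcal{A}\mathcal{P}_T (\mathcal{P}_T - \mathcal{P}_T\mathcal{A}^*\mathcal{A}\mathcal{P}_T)^{k}\bs h\bs m^*\|$ by three independent ingredients: (i) the universal inequality $\|\cdot\|_{\infty}\le\|\cdot\|_F$ to pass from spectral to Frobenius norm on the outside, (ii) a uniform Frobenius-to-Frobenius bound on the composite operator $\mathcal{P}_T^\perp\mathcal{A}^*\mathcal{A}\mathcal{P}_T$ obtained from the operator norm of $\mathcal{A}$ already recorded after Section~\ref{sec:optimalityBD}, and (iii) the contraction bound of Lemma~\ref{lemmaT} applied iteratively to the inner factor. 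The result will then follow by summing a geometric series in the contraction constant.

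More concretely, I would first write, using $\|\mathcal{P}_T^\perp\|_{F\to F}\le 1$, $\|\mathcal{P}_T\|_{F\to F}\le 1$, and the submultiplicativity of the $F\to F$ operator norm,
\begin{equation*}
\|\mathcal{P}_T^\perp\mathcal{A}^*\mathcal{A}\mathcal{P}_T (\mathcal{P}_T - \mathcal{P}_T\mathcal{A}^*\mathcal{A}\mathcal{P}_T)^{k}\bs h\bs m^*\|_\infty \le \|\mathcal{A}^*\mathcal{A}\|_{F\to F}\cdot \|(\mathcal{P}_T - \mathcal{P}_T\mathcal{A}^*\mathcal{A}\mathcal{P}_T)^{k}\bs h\bs m^*\|_F.
\end{equation*}
Then Lemma~\ref{lemmaT} gives $\|\mathcal{P}_T - \mathcal{P}_T\mathcal{A}^*\mathcal{A}\mathcal{P}_T\|_{F\to F}\le \delta_1$ (with probability $1-(L\vee N)^{-\beta_1}$), which iterates to $\delta_1^k$ when applied to $\bs h\bs m^*\in T$ of unit Frobenius norm. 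For the prefactor, Proposition~\ref{bernstein} (as remarked after Section~\ref{sec:optimalityBD}) yields $\|\mathcal{A}\|\lesssim \sqrt{K\log(LN)+\beta\log L}$ with probability $1-L^{-\beta}$, so that $\|\mathcal{A}^*\mathcal{A}\|_{F\to F}\le \|\mathcal{A}\|^2 \lesssim K\log(LN)$ absorbing $\beta$ into the implicit constant.

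Summing the geometric series gives, on the intersection of these high-probability events,
\begin{equation*}
\sum_{k=k_0}^{\infty}\|\mathcal{P}_T^\perp\mathcal{A}^*\mathcal{A}\mathcal{P}_T (\mathcal{P}_T - \mathcal{P}_T\mathcal{A}^*\mathcal{A}\mathcal{P}_T)^{k}\bs h\bs m^*\|_\infty \lesssim K\log(LN)\,\frac{\delta_1^{k_0}}{1-\delta_1}.
\end{equation*}
To drive this quantity below a prescribed $\delta$, I would pick $\delta_1\asymp (\delta/K)^{1/k_0}$ (absorbing logs), and then substitute into the hypothesis of Lemma~\ref{lemmaT}, which requires $L\gtrsim (1/\delta_1)K\mu_h^2\log(LN)$ and $N\gtrsim(1/\delta_1)\mu_m^2\log(LN)$. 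With the chosen $\delta_1$ these become $L\gtrsim (1/\delta)K^{1+1/k_0}\mu_h^2$ and $N\gtrsim (1/\delta)K^{1/k_0}\mu_m^2$, matching the statement of the lemma up to log factors. A union bound over the events from Lemma~\ref{lemmaT} and the operator norm bound on $\mathcal{A}$ delivers the claimed probability $1-(LN)^{-\beta}$.

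I do not expect a serious obstacle here: the proof is essentially bookkeeping once the two key estimates (the contraction bound of Lemma~\ref{lemmaT} and the operator norm of $\mathcal{A}$) are in hand, which is precisely why one uses the explicit sharper bounds of Lemmas~\ref{lemmaTperp} and~\ref{lemmaTotalSecondTerm} only for the first two terms and defers this cruder geometric argument to the tail $k\ge k_0$. The one point worth double-checking is the compatibility of norms when invoking submultiplicativity, specifically that $\|\mathcal{P}_T - \mathcal{P}_T\mathcal{A}^*\mathcal{A}\mathcal{P}_T\|_{F\to F}$ coincides with the spectral norm used in Lemma~\ref{lemmaT}, which it does since this operator is self-adjoint on the Hilbert space of matrices with the Frobenius inner product.
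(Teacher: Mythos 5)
Your proposal is correct and follows essentially the same route as the paper: submultiplicativity to peel off the prefactor, the contraction bound of Lemma~\ref{lemmaT} iterated $k$ times, and a geometric series, with the sample complexity read off by tuning $\delta_1$. The only cosmetic difference is that the paper bounds the prefactor by $\|\mathcal{A}^*\|\|\mathcal{A}\mathcal{P}_T\|\lesssim\sqrt{K\log(L\vee N)}$ (using that $\mathcal{A}$ is a near-isometry on $T$) rather than your cruder $\|\mathcal{A}\|^2\lesssim K\log(LN)$, but after rebalancing the choice of $\delta_1$ both yield the stated scalings $L\gtrsim K^{1+1/k_0}\mu_h^2$, $N\gtrsim K^{1/k_0}$.
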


As we only need to bound the terms corresponding to $k>1$, we can just take $k_0=2$ which gives the sample complexities of Theorem~\ref{theorem:BlindDeconvTh}.

Combining the results of lemmas~\ref{lemmaT} to~\ref{lemmafinNeumann}, and choosing $\delta$ such that $(1/\delta)\geq \left\{(1/\delta_i)\right\}_{i=1}^4$ as well as $\beta\geq \left\{\beta_1,\beta_2,\beta_3,\beta_4\right\}$, we have that as soon as $L\gtrsim (1/\delta)\beta K^{3/2}\mu_h^2$, $N\gtrsim (1/\delta)\beta\sqrt{K} \mu_m^2$ for any $\beta_1$, $\beta_2<1$, with probability at least $1 - (LN)^{-\sum\beta_i}  - \sqrt{\left(\frac{K\mu_h^2}{L} + \frac{\mu_m^2}{N}\right)}$, we have
\begin{align}
\|\mathcal{P}_T^\perp(\bs Y)\|&< \delta_1 + \delta_2 +\delta_3 + C_{k_0}\end{align}
which can be made sufficiently smaller than $1$ for sufficiently small constants $\delta_i, C_{k_0}$. This concludes the proof of Theorem~\ref{theorem:BlindDeconvTh}. The remaining sections proceed with the proofs of each of the lemmas mentioned above.

\subsection{\label{proofT}The injectivity property}
In this section, we prove injectivity of the normal operator $\mathcal{A}^*\mathcal{A}$ on $T$. This condition certifies that the ansatz~\eqref{candidateCert0} is well defined. We start by recalling lemma~\ref{lemmaT} below. 
\lemmaT*
\begin{proof}

To bound the operator norm of $\mathcal{P}_T\mathcal{A}_p^*\mathcal{A}_p\mathcal{P}_T$ we will use proposition~\eqref{bernstein}. We start by bounding the variance~\eqref{varianceBernstein}. We then derive a corresponding bound on the Orlicz norm of each of the terms appearing within the norm~\eqref{lemmaboundPalpha}. 

The operator $\mathcal{P}_T\mathcal{A}^*\mathcal{A}\mathcal{P}_T :\bm X\mapsto \mathcal{P}_T\mathcal{A}^*\mathcal{A}\mathcal{P}_T(\bm X)$ expands as
\begin{align}\begin{split}
\mathcal{P}_T\mathcal{A}^*\mathcal{A}\mathcal{P}_T(\bm X) &= \sum_{\ell=1}^L\sum_{n=1}^N \mathcal{P}_T (\bm A_{\ell,n})\langle \bm A_{\ell,n}, \mathcal{P}_T(\bm X) \rangle \\
& =  \sum_{\ell=1}^L\sum_{n=1}^N  \mathcal{P}_T (\bm A_{\ell,n})\langle \mathcal{P}_T(\bm A_{\ell,n}),\bm X \rangle \\
& = \sum_{\ell=1}^L\sum_{n=1}^N \vec(\mathcal{P}_T(\bm A_{\ell,n}))\otimes \vec(\mathcal{P}_T(\bm A_{\ell,n}))^* \vec(\bm X).
\end{split}\label{OperatorBernstein1}
\end{align}
Now taking the operator norm, we get,
\begin{align*}
&\left\|\sum_{\ell=1}^L\sum_{n=1}^N \mathcal{P}_T\mathcal{A}_{\ell,n}^*\mathcal{A}_{\ell,n}\mathcal{P}_T - \mathbb{E}\{\mathcal{P}_T\mathcal{A}_{\ell,n}^*\mathcal{A}_{\ell,n}\mathcal{P}_T\}\right\|\\
=& \left\|\sum_{\ell=1}^L\sum_{n=1}^N \mathcal{P}_T\mathcal{A}_{\ell,n}^*\mathcal{A}_{\ell,n}\mathcal{P}_T - \mathcal{P}_T\right\|
\end{align*}

%

where we used $\mathbb{E}\sum_{\ell=1}^L\sum_{n=1}^N\mathcal{A}_{\ell,n}^*\mathcal{A}_{\ell,n} = \mathcal{I}$.
We start by deriving the bound for the variance~\eqref{varianceBernstein}. Let us use $\mathcal{Z}_{\ell,n}$ to denote the operators defined as 
\begin{align*}
\mathcal{Z}_{\ell,n} &\equiv\mathcal{P}_T\mathcal{A}^*_{\ell,n}\mathcal{A}_{\ell,n}\mathcal{P}_T - \mathbb{E}\left\{\mathcal{P}_T\mathcal{A}^*_{\ell,n}\mathcal{A}_{\ell,n}\mathcal{P}_T\right\}\end{align*}
Recall that the variance is defined as 
\begin{align}
\sigma^2& = \max \left\{\left\|\sum_{\ell=1}^L\sum_{n=1}^N \mathbb{E}\mathcal{Z}_{\ell,n}^*\mathcal{Z}_{\ell,n}\right\|^2, \left\|\sum_{\ell=1}^L\sum_{n=1}^N \mathbb{E}\mathcal{Z}_{\ell,n}\mathcal{Z}_{\ell,n}^* \right\|^2\right\}\\
& = \max \left\{\sigma_1^2,\sigma_2^2\right\}\label{boundsVariance}
\end{align}
Since the variables $\mathcal{Z}_{\ell,n}$ are symmetric, the two bounds $\sigma_1^2$ and $\sigma_2^2$ in~\eqref{boundsVariance} are exactly the same in this case and we can thus focus on  either of them.
\begin{align}
\sigma^2& = \left\|\sum_{\ell=1}^L\sum_{n=1}^N \mathbb{E}\mathcal{P}_T\mathcal{A}^*_{\ell,n}\mathcal{A}_{\ell,n}\mathcal{P}_T\|\mathcal{P}_T(\bm A_{\ell,n})\|^2_F - \left(\mathbb{E} \mathcal{P}_T\mathcal{A}^*_{\ell,n}\mathcal{A}_{\ell,n}\mathcal{P}_T\right)^2\right\|^2\nonumber \\
&  =\left\|\sum_{\ell=1}^L\sum_{n=1}^N \mathbb{E}\left(\vec(\mathcal{P}_T(\bm A_{\ell,n}))\otimes \vec(\mathcal{P}_T(\bm A_{\ell,n}))\right)^*\|\vec(\mathcal{P}_T(\bm A_{\ell,n}))\|^2 - \left(\mathbb{E} \mathcal{P}_T\mathcal{A}^*_{\ell,n}\mathcal{A}_{\ell,n}\mathcal{P}_T\right)^2\right\|^2.\label{varianceDef1}
\end{align}
The Frobenius norm $\|\mathcal{P}_T\bm A_{\ell,n}\|$ can be bounded from the definition of the projector $\mathcal{P}_T$~\eqref{projectorTBD} and from the definition of the matrices $\bm A_{\ell,n}$ as 
\begin{align}\begin{split}
\|\mathcal{P}_T(\bm A_{\ell,n})\|_F^2 &= \|\bs h\bs h^*\bs f_\ell \bs c_{\ell,n}^* + \bs f_\ell \bs c_{\ell,n}^*\bs m\bs m^* - \bs h\bs h^*\bs f_\ell\bs c_{\ell,n}^*\bs m\bs m^*\|_F^2\\
& = \|\bs h\|^2|\hat{h}[\ell]|^2\|\bs c_{\ell,n}\|^2 + \|\bs m\|^2|\langle\bs c_{\ell,n}, \bs m \rangle |^2\|\bs f_\ell\|^2+ \|\bs h\|^2|\hat{h}[\ell]|^2|\langle \bs m, \bs c_{\ell,n}\rangle |^2\\
&   + 2\langle \bs f_\ell\bs c_{\ell,n}^*\bs m\bs m^*, \bs h \bs h^*\bs f_\ell\bs c_{\ell,n} \rangle 
- 2\langle \bs h\bs h^*\bs f_\ell\bs c_{\ell,n}^*\bs m\bs m^*, \bs h\bs h^*\bs f_\ell \rangle \\
&   - 2\langle \bs h\bs h^*\bs f_\ell\bs c_{\ell,n}^*\bs m\bs m^*, \bs f_\ell \bs c_{\ell,n}\bs m\bs m^*\rangle \\
& = |\hat{h}[\ell]|^2\|\bs c_{\ell,n}\|^2 + |\langle\bs c_{\ell,n}, \bs m \rangle |^2- |\hat{h}[\ell]|^2 |\langle \bs m, \bs c_{\ell,n}\rangle |^2 \\
&\leq |\hat{h}[\ell]|^2\|\bs c_{\ell,n}\|^2 + |\langle\bs c_{\ell,n}, \bs m \rangle |^2.
\end{split}\label{PTAfrobenius}
\end{align}
In~\eqref{PTAfrobenius} we use the fact that $\|\bs h\| = \|\bs m\| = 1$ as well as $\|\bs f_\ell\|=1$. Now note that for any projection operator $\mathcal{P}$, we always have $\|\mathcal{P}\mathcal{A}\|\leq \|\mathcal{A}\|$ and $\|\mathcal{A}\mathcal{P}\|\leq \|\mathcal{A}\|$. The norm~\eqref{varianceDef1} can thus simplify to
\begin{align}
\sigma^2 &= \left\|\sum_{\ell=1}^L\sum_{n=1}^N \mathbb{E}\left(\vec(\mathcal{P}_T(\bm A_{\ell,n}))\otimes \vec(\mathcal{P}_T(\bm A_{\ell,n}))\right)\|\vec(\mathcal{P}_T(\bm A_{\ell,n}))\|^2 - \left(\mathbb{E} \mathcal{P}_T\mathcal{A}^*_{\ell,n}\mathcal{A}_{\ell,n}\mathcal{P}_T\right)^2\right\|^2\\
& \leq \left\|\mathcal{P}_T\left(\sum_{\ell=1}^L\sum_{n=1}^N \mathbb{E}\mathcal{A}^*_{\ell,n}\mathcal{A}_{\ell,n}\|\mathcal{P}_T(\bm A_{\ell,n})\|^2_F \right)\mathcal{P}_T\right\|^2\\
&\leq \left\|\sum_{\ell=1}^L\sum_{n=1}^N \mathbb{E}\mathcal{A}^*_{\ell,n}\mathcal{A}_{\ell,n}\|\mathcal{P}_T(\bm A_{\ell,n})\|^2_F \right\|^2\label{varianceWithoutP}
\end{align}
In the second line, we use the positive semidefiniteness of the variance and the fact that for matrices $\bm A$ and $\bm B$ with $\bm A-\bm B\succeq 0$, $\|\bm A - \bm B\|\leq \|\bm A\|$.
The operator $\mathcal{A}_{\ell,n}^*\mathcal{A}_{\ell,n}$ can be written in matrix form as 
\begin{align}
\mathcal{A}_{\ell,n}^*\mathcal{A}_{\ell,n}(\bm X) & = \left(\vec(\bm A_{\ell,n})\vec(\bm A_{\ell,n})^*\right)\vec(\bm X)\\
& =(\bs e_n\bs e _n^*)\otimes  \left(\bs c_{\ell,n}\bs c_{\ell,n}^*\otimes \bs f_\ell\bs f_\ell^*\right)\vec(\bm X).\label{OperatorAln}
\end{align}
Substituting the last line of~\eqref{PTAfrobenius} together with~\eqref{OperatorAln} into~\eqref{varianceWithoutP}, and using the expression for the moments of multivariate gaussian random variables, one can write,
\begin{align}
\left\|\sum_{\ell=1}^L\sum_{n=1}^N\mathbb{E}\vec(\bm A_{\ell,n})\otimes \vec(\bm A_{\ell,n})\|\mathcal{P}_T(\bs A_{\ell,n})\|_F^2\right\| &\leq \left\|\sum_{\ell=1}^L\sum_{n=1}^N(\bs e_n\bs e_n^*)\otimes \left(K\bm I_K\otimes |\hat{h}[\ell]|^2\bs f_\ell\bs f_\ell^* \right)\right\|\nonumber\\
&+\left\|\sum_{\ell=1}^L\sum_{n=1}^N 2(\bs e_n\bs e_n^*)\otimes \left(\bs m_n\bs m_n^* \otimes \bs f_\ell\bs f_\ell^* \right)\right\|\label{boundIntermediate}
\end{align}
Now using the definitions of $\mu_h^2$ and $\mu_m^2$ from~\eqref{coherencem} and~\eqref{coherenceh}, and noting that $\sum_\ell \bs f_\ell\bs f_\ell^* = \bs F\bs F^* =  \bm I$, the two terms of expression~\eqref{boundIntermediate} can be upper bounded respectively as
\begin{align}
\left\|\sum_{\ell=1}^L\sum_{n=1}^N(\bs e_n\bs e_n^*)\otimes \left(K\bm I_K\otimes |\hat{h}[\ell]|^2\bs f_\ell\bs f_\ell^* \right)\right\|\leq \frac{K\mu_h^2}{L},\label{Bound1}
\end{align}
and 
\begin{align}
\left\|\sum_{\ell=1}^L\sum_{n=1}^N 2(\bs e_n\bs e_n^*)\otimes \left(\bs m_n\bs m_n^* \otimes \bs f_\ell\bs f_\ell^* \right)\right\|\leq 2\frac{\mu^2_m}{N}. \label{Bound2}
\end{align}
For the first bound, we use the fact that $\|\sum_n (\bs e_n\bs e_n)\otimes\bm A_n  \|\leq \sup_n \|\bm A_n\|$ for any given matrix $\bs A_n$ and $\|\sum_\ell \bs f_\ell \bs f_\ell^*\| = 1$. For the second bound we use $\|\bs m\bs m^*\otimes \bm I\|\leq \|\bs m\bs m^*\|$. Combining~\eqref{Bound1} and~\eqref{Bound2} into~\eqref{boundIntermediate} gives
\begin{align}
\left\|\sum_{\ell,n}\mathbb{E}\vec(\bm A_{\ell,n})\otimes \vec(\bm A_{\ell,n})^*\|\mathcal{P}_T\mathcal{A}_{\ell,n}\|_F^2\right\| \leq  \left(\frac{\mu_h^2 K}{L}  + 2\frac{\mu^2_m}{N} \right)
\end{align}
As explained earlier, the exact same result holds for the $\mathcal{Z}_{\ell,n}^*\mathcal{Z}_{\ell,n}$ since $\mathcal{Z}_{\ell,n} = \mathcal{Z}_{\ell,n}^*$. We now derive a bound on the Orlicz norm of the $\mathcal{Z}_{\ell,n}$. We use the norms derived from the functions $\Psi_2(z) = \exp(z^2) - 1$ and $\Psi_1(z)=\exp(x)-1$ respectively for sub-gaussian and sub-exponential random variables. For a gaussian random vector $\bs c_{\ell,n}$, we have shown above that $\|\bs c_{\ell,n}\|^2$ follows a $\chi^2$ distribution with $K$ degrees of freedom and is therefore subexponential.

For two matrices $\bm A, \bm B$, with $\|\bm A\|\leq \|\bm B\|$, we also have $\|\bm A\|_{\Psi_1}\leq \|\bm B\|_{\Psi_1}$ and so from~\eqref{PTAfrobenius}, 
\begin{align}
\|\mathcal{Z}_{\ell,n}\|_{\Psi_1}&\leq \|\|\vec(\mathcal{P}_T(\bm A_{\ell,n})\otimes \vec(\mathcal{P}_T(\bm A_{\ell,n})\|\|_{\Psi_1} \nonumber \\
&\leq \|\|\mathcal{P}_T(\bm A_{\ell,n})\|_F^2\|_{\Psi_1}\nonumber \\
&\leq \| |\hat{h}[\ell]|^2\|\bs c_{\ell,n}\|^2 + |\langle\bs c_{\ell,n}, \bs m \rangle |^2 \|_{\Psi_1}\nonumber \\
&\leq |\hat{h}[\ell]|^2\|\|\bs c_{\ell,n}\|^2\|_{\Psi_1} + \||\langle\bs c_{\ell,n}, \bs m \rangle |^2 \|_{\Psi_1}\label{Psi1Bound}
\end{align}
The first term in~\eqref{Psi1Bound} is a (subexponential) chi-squared distribution with $K$ degrees of freedom. Following the discussion in section~\ref{subgaussianAndSubexponential}, the first term of~\eqref{Psi1Bound} can be bounded by 
\begin{align}
|\hat{h}[\ell]|^2\|\|\bs c_{\ell,n}\|^2\|_{\Psi_1} &\lesssim \frac{K\mu_h^2 }{L}\label{Psi1ZlnB1}
\end{align}
To bound the second term, we use the fact that $\langle \bs c_{\ell,n},\bs m_n\rangle$ is a sum of zero-mean gaussian random variables (i.e gaussian mixture) for which the variance is simply given by the mixture of the variances. The square of this mixture is thus a chi-squared. Applying an argument similar to~\eqref{chiSquaredRV1}, one can show that the sub-exponential parameters of this chi-squared are given by $(b,\nu) = (\|m_n\|^2, 2)$ which gives the following bound,
\begin{align}
\||\langle \bs c_{\ell,n},\bs m_n\rangle |^2\|_{\Psi_1} & \lesssim \|\bs m_n\|^2\lesssim \frac{\mu^2_m}{N}.\label{Psi1ZlnB2}
\end{align}
The bound on the $\Psi_1$-norm of $\mathcal{Z}_{\ell,n}$ is thus finally given by combining~\eqref{Psi1ZlnB1} and~\eqref{Psi1ZlnB2} into
\begin{align}
\|\mathcal{Z}_{\ell,n}\|_{\Psi_1} \lesssim \frac{K\mu_h^2 }{L}+ \frac{\mu^2_m}{N}
\end{align}
We can now apply proposition~\ref{bernstein} which gives 
\begin{multline*}
\left\|\sum_{\ell=1}^L\sum_{n=1}^N \mathcal{P}_T\mathcal{A}_{\ell,n}^*\mathcal{A}_{\ell,n}\mathcal{P}_T - \mathcal{P}_T\right\|\lesssim \\
\max\left\{\sqrt{\left(\frac{\mu_h^2 K}{L}  + 2\frac{\mu^2_m}{N} \right)}\sqrt{t+\log(LKN)}, \left(\frac{\mu^2_hK}{L}  + 2\frac{\mu^2_m}{N} \right) \log(LKN) (t+ \log(LKN))\right\}\label{eq:BernsteinSum}
\end{multline*}
with probability $1-e^{-t}$. Taking $t \gtrsim \beta_1  \log LN$ for a constant $\beta_1$, and $L\gtrsim (1/\delta_1)K\mu^2_h $ and $N\gtrsim (1/\delta_1)\beta_1\mu^2_m $ up to log factors concludes the proof.
\end{proof}

\subsection{\label{secSizeProperty}The size property}

In this section, we prove the second condition from~\eqref{ConditionforDualCert}. To do so, we will bound each of the terms in~\eqref{seriesNeumann2}. We start with the first one. This term is bounded through lemma~\eqref{lemmaTperp},

\lemmaTperp*

\begin{proof}

The proof of lemma~\ref{lemmaTperp} relies, once again on proposition~\ref{bernstein}. We want to bound the quantity 
\begin{align}
\|\mathcal{P}_T^\perp(\mathcal{A}^*\mathcal{A})\bs h\bs m^*\| &= \|\mathcal{P}_T^\perp \left(\mathcal{A}^*\mathcal{A}(\bs h\bs m^*) - \mathbb{E}\mathcal{A}^*\mathcal{A}(\bs h\bs m^*)\right)\|\\
&\leq \|\mathcal{A}^*\mathcal{A}(\bs h\bs m^*) - \mathbb{E}\mathcal{A}^*\mathcal{A}(\bs h\bs m^*)\|\label{normToBound}
\end{align}
Where we use the fact that $\mathbb{E}\mathcal{A}^*\mathcal{A} = \mathcal{I}$. Again we introduce variables $\mathcal{Z}_{\ell,n}$ to denote each of the terms in the sum~\eqref{normToBound}. 
\begin{align}
\mathcal{Z}_{\ell,n}&\equiv\mathcal{P}_T^\perp(\mathcal{A}_{\ell,n}^*\mathcal{A}_{\ell,n})\bs h\bs m^* -\mathbb{E}\mathcal{P}_T^\perp(\mathcal{A}_{\ell,n}^*\mathcal{A}_{\ell,n})\bs h\bs m^*
\end{align}
The norm~\eqref{normToBound} expands as
\begin{align}
\left\|\sum_{\ell=1}^L\sum_{n=1}^N\mathcal{Z}_{\ell,n}\right\|&= \left\|\sum_{\ell=1}^L\sum_{n=1}^N\bm  A_{\ell,n}\langle \bm A_{\ell,n},\bs h\bs m^*\rangle - \mathbb{E}\bm A_{\ell,n}\langle \bm A_{\ell,n},\bs h\bs m^* \rangle\right\|\label{sumZlnTperp}
\end{align}
In order to apply proposition~\ref{bernstein}, we again start by deriving the bound for the variance. The $\mathcal{Z}_{\ell,n}$ are not hermitian anymore as is shown by $\bm A_{\ell,n}\bm A_{\ell,n}^* = \|\bs c_{\ell,n}\|^2\bs f_\ell\bs f_\ell^*$ and $\bm A_{\ell,n}^*\bm A_{\ell,n} = \|\bs f_\ell\|^2\bs c_{\ell,n}\bs c_{\ell,n}^*$ and we need to consider the two variance bounds in~\eqref{varianceBernstein} separately. For the first bound, we have
\begin{align}
\left\|\sum_{\ell=1}^L\sum_{n=1}^N\mathbb{E}\mathcal{ Z}_{\ell,n}\mathcal{Z}_{\ell,n}^*\right\|& = \left\|\sum_{\ell=1}^L\sum_{n=1}^N\mathbb{E}\bm A_{\ell,n}\bm A_{\ell,n}^* |\langle \bm A_{\ell,n}, \bs h\bs m^*\rangle |^2 - \left|\mathbb{E}\bm A_{\ell,n}\langle \bm A_{\ell,n},\bs h\bs m^*\rangle \right|^2\right\|\nonumber\\
&\leq \left\|\sum_{\ell=1}^L\sum_{n=1}^N\mathbb{E}\bm A_{\ell,n}\bm A_{\ell,n}^* |\langle \bm A_{\ell,n}, \bs h\bs m^*\rangle |^2\right\|\nonumber\\
&\leq \left\|\sum_{\ell=1}^L\sum_{n=1}^N \mathbb{E}\bs f_\ell\bs f_\ell^*\|\bs c_{\ell,n}\|^2 |\langle \bs c_{\ell,n}, \hat{\bs h}[\ell]\bs m^*_n\rangle |^2\right\|\nonumber\\
&\leq K\left\|\sum_{\ell=1}^L\sum_{n=1}^N\bs f_\ell\bs f_\ell^* \|\hat{\bs h}[\ell]\bs m^*_n\|^2\right\|\label{boundlemma2variance1aa}
\end{align}
 Now using the fact that $\sum_{\ell=1}^L \bs f_\ell \bs f_\ell^* = \bs I$ as well as $\|\bs m\|=1$, the last line of~\eqref{boundlemma2variance1aa} can be reduced to 
\begin{align}
\left\|\sum_{\ell=1}^L\sum_{n=1}^N\mathbb{E}\mathcal{Z}_{\ell,n}\mathcal{Z}_{\ell,n}^*\right\|&\leq K \cdot \frac{\mu^2_{h}}{L}\label{boundlemma2variance1baa}
\end{align}
For the second bound, a similar argument gives,
\begin{align}
\left\|\mathbb{E}\sum_{\ell=1}^L\sum_{n=1}^N  \mathcal{Z}_{\ell,n}^*\mathcal{Z}_{\ell,n}\right\|& = \left\|\mathbb{E}\sum_{\ell=1}^L\sum_{n=1}^N\|\bs f_\ell\|^2\bs c_{\ell,n}\bs c_{\ell,n}^* |\langle \bs f_\ell\bs c_{\ell,n}, \bs h\bs m^*\rangle |^2 \right\|,\nonumber \\
&\leq \nonumber \left\| \mathbb{E}\sum_{\ell=1}^L\sum_{n=1}^N\|\bs f_\ell\|^2\bs c_{\ell,n}\bs c_{\ell,n}^* |\langle\bs c_{\ell,n}, \hat{\bs h}[\ell]\bs m^*_n\rangle |^2\right\|\\
&\lesssim \left\|\sum_{\ell=1}^L\sum_{n=1}^N\|\hat{\bs h}[\ell]\bs m^*_n\|^2 (\bs e_n\bs e_n^*)\otimes \bm I_K\right\|\nonumber + \left\|\sum_{(\ell,n)\in [L]\times [N]}\hat{\bs h}[\ell]\bs m_n\hat{\bs h}[\ell]^*\bs m^*_n\right\|\nonumber \\
&\lesssim \frac{\mu^2_{m}}{N}. \label{boundlemma2variance2aa}
\end{align}
Combining~\eqref{boundlemma2variance1baa} with~\eqref{boundlemma2variance2aa}, we get the following bound on the variance $\sigma$,
\begin{align}
\sigma \lesssim \left( \frac{\mu^2_{h}K}{L} + \frac{\mu^2_{m}}{N}\right)\label{boundVarianceLemmaN1a}
\end{align}
We now bound the Orlicz $1$-norm of each of the variables $\mathcal{Z}_{\ell,n}$ in~\eqref{sumZlnTperp}. We start by establishing the distribution of $\|\bm A_{\ell,n}\langle \bm A_{\ell,n}, \bs h\bs m^* \rangle - \mathbb{E} \bm A_{\ell,n}\langle \bm A_{\ell,n}, \bs h\bs m^* \rangle\|$
\begin{align}
\|\bm A_{\ell,n}\langle \bm A_{\ell,n}, \bs h\bs m^* \rangle\|^2& = \|\bs f_{\ell}\bs c_{\ell,n}^*\langle \bs c_{\ell,n},\hat{h}[\ell]\bs m_n\rangle -\hat{h}[\ell] \bs f_\ell\bs m_n^*\|^2\\
&\leq \frac{\mu_h^2}{L}\|\bs c_{\ell,n}^*\langle \bs c_{\ell,n},\hat{h}[\ell]\bs m_n\rangle -\bs m_n^*\|\\
 &\leq \frac{\mu_h^2}{L} \|\bs c_{\ell,n}\|^2 |\langle \bs c_{\ell,n},\bs m_n\rangle |^2\label{psi1tmpLemma2aa}
\end{align}
Both $\|\bs c_{\ell,n}\|^2$ as well as $|\langle \bs c_{\ell,n},\hat{h}[\ell]\bs m_n\rangle |^2$ are subexponential (chi-squared) variables for which the Orlicz-$1$ norm can be bounded by using the discussion in section~\eqref{subgaussianAndSubexponential} (apply~\eqref{chiSquaredRV1} for a general gaussian $X$ with mean $0$ and variance $\sigma^2$ or alternatively use lemma 7 in~\cite{ahmed2014blind} together with lemma 2.2.1 in~\cite{vandervaart96we}). Then using proposition~\ref{EqNormPsi}, note that $\|\|\bs c_{\ell,n}\|\|_{\psi_2}\lesssim \sqrt{K}$ and $\||\langle \bs c_{\ell,n},\bs m_n\rangle |\|_{\psi_2}\lesssim \|\bs m_n\|$. Finally note that for two subgaussian random variables $X$ and $Y$ with $\|X\|_{\psi_2},\|Y\|_{\psi_2}<\infty$, we have $\|XY\|_{\psi_1}\leq \|X\|_{\psi_2}\|Y\|_{\psi_2}$ (see lemma 7 in~\cite{ahmed2015compressive2}). 
\begin{align}
\|\|\bs c_{\ell,n}\| |\langle \bs c_{\ell,n},\bs m_n\rangle |\|_{\Psi_1}\lesssim \frac{\mu_h\mu_m\sqrt{K}}{\sqrt{LN}}
\end{align}
\begin{align}
\|\|\bm A_{\ell,n}\|\langle \bm A_{\ell,n}, \bs h\bs m^*\rangle\| \|_{\Psi_1}&\leq \frac{\mu_h\mu_m\sqrt{K}}{\sqrt{LN}}\label{boundOrliczLemmaN1a}.
\end{align}
Substituting~\eqref{boundOrliczLemmaN1a} and~\eqref{boundVarianceLemmaN1a} into proposition~\ref{bernstein}, we have
\begin{multline*}
\left\|\sum_{\ell=1}^L\sum_{n=1}^N \mathcal{A}_{\ell,n}^*\mathcal{A}_{\ell,n}\bs h\bs m^* - \bs h\bs m^*\right\|\lesssim \\
\max\left\{\sqrt{\left(\frac{\mu^2_h K}{L}  + \frac{\mu_m^2}{N} \right)}\sqrt{t+\log(LKN)},  \frac{\mu_h\mu_m\sqrt{K}}{\sqrt{LN}} \log(LKN) (t+ \log(LKN))\right\}
\end{multline*}
with probability at least $1-e^{-t}$. Taking $t=\beta_2\log(LN)$ gives the desired result.
\end{proof}

We now bound the second term in the series~\eqref{seriesNeumann2}. The general idea behind this second bound is summarized through lemma~\ref{lemmaTotalSecondTerm} which we recall below.

\lemmaTotalSecondTerm* 

The second term in the Neumann series reads as 
\begin{align}
\mathcal{A}^*\mathcal{A}\mathcal{P}_T \left(\mathcal{P}_T\mathcal{A}^*\mathcal{A}\mathcal{P}_T-\mathcal{P}_T \right)\bs h\bs m^*\label{recallExpressionNeumann2}
\end{align}
This term is a $4^{th}$ order gaussian chaos involving products of dependent gaussians. It is not sub-exponential anymore and we will thus need to turn to a generalization of the Bernstein concentration bound~\eqref{bernstein}. To derive a tail bound on the event $\mathcal{E}_3$ in~\eqref{4thorderNorm}, we proceed as follows. Let $\bs E_{\ell,n}$ denote the expectation,
\begin{align}
\bs E_{\ell,n} & \equiv \mathbb{E}\mathcal{P}_T(\bs A_{\ell,n})\langle \mathcal{P}_T(\bs A_{\ell,n}), \bs h \bs m^*\rangle \\
& = \bs h|\hat{h}[\ell]|^2\bs m_n^* + \bs f_\ell |\hat{h}[\ell]| \|\bs m_n\|^2\bs m^* - \bs h|\hat{h}[\ell]|^2\|\bs m_n\|^2\bs m^*\label{definitionEelln}
\end{align}
so that clearly, $\sum_{\ell,n } \bs E_{\ell,n} = \bs h\bs m^*$. We start by decomposing the norm in~\eqref{4thorderNorm} as a sum of four contributions contributions.
\begin{align}
&\Bigg\|\mathcal{P}_T^\perp \mathcal{A}^*\mathcal{A}\left(\mathcal{P}_T\mathcal{A}^*\mathcal{A}\mathcal{P}_T\left(\bs h\bs m^*\right) - \mathcal{P}_T\left(\bs h\bs m^*\right) \right)\Bigg\|\\
\begin{split}
&=\left\|\mathcal{P}_T^\perp\sum_{(\ell,n)} \bs A_{\ell,n}\left\langle \bs A_{\ell,n}, \left(\sum_{\ell',n'}\mathcal{P}_T(\bs A_{\ell',n'})\langle \mathcal{P}_T(\bs A_{\ell',n'}), \bs h \bs m^*\rangle - \bs E_{\ell',n'}\right) \right\rangle \right.\\
&\left.-\mathcal{P}_T^\perp\mathbb{E}_{(\ell,n)}  \sum_{(\ell,n)} \bs A_{\ell,n}\left\langle \bs A_{\ell,n},\mathbb{E}_{\ell',n'}\left(\sum_{\ell',n'}\mathcal{P}_T(\bs A_{\ell',n'})\langle \mathcal{P}_T(\bs A_{\ell',n'}), \bs h \bs m^*\rangle - \bs E_{\ell',n'}\right)\right\rangle\right\|
\end{split}\\
\label{decompositionNormSecondTermlastbeforeCovZero}
\begin{split}
&\leq \left\|\mathcal{P}_T^\perp\sum_{(\ell,n)}\bs A_{\ell,n}\left\langle \bs A_{\ell,n}, \left(\sum_{(\ell',n')=(\ell,n)} \mathcal{P}_T(\bs A_{\ell',n'})\langle \mathcal{P}_T(\bs A_{\ell',n'}), \bs h \bs m^*\rangle - \bs E_{\ell',n'}\right)  \right\rangle\right.\\
&\left.-\mathcal{P}_T^\perp\mathbb{E}_{(\ell,n)}  \sum_{(\ell,n)} \bs A_{\ell,n}\left\langle \bs A_{\ell,n},\mathbb{E}_{\ell',n'}\left(\sum_{(\ell',n')=(\ell,n)} \mathcal{P}_T(\bs A_{\ell',n'})\langle \mathcal{P}_T(\bs A_{\ell',n'}), \bs h \bs m^*\rangle - \bs E_{\ell',n'}\right)  \right\rangle\right\|
\end{split}\\
\label{decompositionNormSecondTermlastbeforeCovOne}
\begin{split}
& + \left\|\mathcal{P}_T^\perp\sum_{(\ell,n)}\bs A_{\ell,n}\left\langle \bs A_{\ell,n}, \left(\sum_{(\ell',n')\neq (\ell,n)} \mathcal{P}_T(\bs A_{\ell',n'})\langle \mathcal{P}_T(\bs A_{\ell',n'}), \bs h \bs m^*\rangle - \bs E_{\ell',n'}\right)  \right\rangle\right.\\
&\left.-\mathcal{P}_T^\perp\mathbb{E}_{(\ell,n)}  \sum_{(\ell,n)} \bs A_{\ell,n}\left\langle \bs A_{\ell,n}, \mathbb{E}_{\ell',n'}\left(\sum_{(\ell',n')\neq (\ell,n)} \mathcal{P}_T(\bs A_{\ell',n'})\langle \mathcal{P}_T(\bs A_{\ell',n'}), \bs h \bs m^*\rangle - \bs E_{\ell',n'}\right)  \right\rangle\right\|\end{split}
\end{align}
\begin{align}
&\Bigg\|\mathcal{P}_T^\perp \mathcal{A}^*\mathcal{A}\left(\mathcal{P}_T\mathcal{A}^*\mathcal{A}\mathcal{P}_T\left(\bs h\bs m^*\right) - \mathcal{P}_T\left(\bs h\bs m^*\right) \right)\Bigg\|\\
\begin{split}
&\leq \left\|\mathcal{P}_T^\perp\sum_{(\ell,n)}\bs A_{\ell,n}\left\langle \bs A_{\ell,n}, \left(\sum_{(\ell',n')=(\ell,n)} \mathcal{P}_T(\bs A_{\ell',n'})\langle \mathcal{P}_T(\bs A_{\ell',n'}), \bs h \bs m^*\rangle \right)  \right\rangle\right.\\
&\left.-\mathcal{P}_T^\perp\mathbb{E}_{(\ell,n)}  \sum_{(\ell,n)} \bs A_{\ell,n}\left\langle \bs A_{\ell,n},\mathbb{E}_{\ell',n'}\left(\sum_{(\ell',n')=(\ell,n)} \mathcal{P}_T(\bs A_{\ell',n'})\langle \mathcal{P}_T(\bs A_{\ell',n'}), \bs h \bs m^*\rangle \right)  \right\rangle\right\|
\end{split}\label{lastTermDecZero}\\
\begin{split}
& + \left\|\mathcal{P}_T^\perp\sum_{(\ell,n)}\bs A_{\ell,n}\left\langle \bs A_{\ell,n}, \left(\sum_{(\ell',n')\neq (\ell,n)} \mathcal{P}_T(\bs A_{\ell',n'})\langle \mathcal{P}_T(\bs A_{\ell',n'}), \bs h \bs m^*\rangle - \bs E_{\ell',n'}\right)  \right\rangle\right.\\
&\left.-\mathcal{P}_T^\perp\mathbb{E}_{(\ell,n)}  \sum_{(\ell,n)} \bs A_{\ell,n}\left\langle \bs A_{\ell,n}, \mathbb{E}_{\ell',n'}\left(\sum_{(\ell',n')\neq (\ell,n)} \mathcal{P}_T(\bs A_{\ell',n'})\langle \mathcal{P}_T(\bs A_{\ell',n'}), \bs h \bs m^*\rangle - \bs E_{\ell',n'}\right)  \right\rangle\right\|\end{split}\label{lastTermDecOne}\\
& + \left\|\sum_{\ell,n}\bs A_{\ell,n}\langle \bs A_{\ell,n}, \bs E_{\ell,n}\rangle - \mathbb{E} \sum_{\ell,n}\bs A_{\ell,n}\langle \bs A_{\ell,n}, \bs E_{\ell,n}\rangle\right\|
\end{align}
Now let $\mbox{Cov}(\mathcal{A}^*\mathcal{A},\mathcal{P}_T\mathcal{A}^*\mathcal{A}\mathcal{P}_T\bs h\bs m^*)$ be defined as
\begin{multline}
\mbox{Cov}(\mathcal{A}^*\mathcal{A},\mathcal{P}_T\mathcal{A}^*\mathcal{A}\mathcal{P}_T\bs h\bs m^* )\equiv\mathcal{P}_T^\perp\mathbb{E}_{\ell,n} \sum_{(\ell,n)} \bs A_{\ell,n}\langle \bs A_{\ell,n},\left(\mathcal{P}_T(\bs A_{\ell,n})\langle \mathcal{P}_T(\bs A_{\ell,n}), \bs h \bs m^*\rangle \right)\rangle \\
 -  \mathcal{P}_T^\perp\mathbb{E}_{(\ell,n)}  \sum_{(\ell,n)} \bs A_{\ell,n}\left\langle \bs A_{\ell,n},\left(\sum_{(\ell',n')=(\ell,n)}\mathbb{E}_{\ell',n'} \mathcal{P}_T(\bs A_{\ell',n'})\langle \mathcal{P}_T(\bs A_{\ell',n'}), \bs h \bs m^*\rangle \right)\right\rangle\label{expandedCovariance}
\end{multline}
From~\eqref{expandedCovariance}, we can expand~\eqref{decompositionNormSecondTermlastbeforeCovZero} into
\begin{align}\begin{split}
&\leq \left\|\mathcal{P}_T^\perp\sum_{(\ell,n)}\bs A_{\ell,n}\left\langle \bs A_{\ell,n}, \left( \mathcal{P}_T(\bs A_{\ell,n})\langle \mathcal{P}_T(\bs A_{\ell,n}), \bs h \bs m^*\rangle\right)  \right\rangle\right.\\
&\left.-\mathcal{P}_T^\perp\mathbb{E}_{(\ell,n)}  \sum_{(\ell,n)} \bs A_{\ell,n}\left\langle \bs A_{\ell,n},\left(\mathcal{P}_T(\bs A_{\ell,n})\langle \mathcal{P}_T(\bs A_{\ell,n}), \bs h \bs m^*\rangle \right)  \right\rangle\right\|\end{split}
\label{firstTermSecondOrderN0}\\
& +\Bigg\|\mbox{Cov}(\mathcal{A}^*\mathcal{A},\mathcal{P}_T\mathcal{A}^*\mathcal{A}\mathcal{P}_T\bs h\bs m^*)\Bigg\|\label{covarianceTerm0}\\
\begin{split}
& + \left\|\mathcal{P}_T^\perp\sum_{(\ell,n)}\bs A_{\ell,n}\left\langle \bs A_{\ell,n}, \left(\sum_{(\ell',n')\neq (\ell,n)} \mathcal{P}_T(\bs A_{\ell',n'})\langle \mathcal{P}_T(\bs A_{\ell',n'}), \bs h \bs m^*\rangle - \bs E_{\ell',n'}\right)  \right\rangle\right.\\
&\left.-\mathcal{P}_T^\perp\mathbb{E}_{(\ell,n)}  \sum_{(\ell,n)} \bs A_{\ell,n}\left\langle \bs A_{\ell,n}, \left(\sum_{(\ell',n')\neq (\ell,n)} \mathcal{P}_T(\bs A_{\ell',n'})\langle \mathcal{P}_T(\bs A_{\ell',n'}), \bs h \bs m^*\rangle - \bs E_{\ell',n'}\right)  \right\rangle\right\|\end{split}\label{mixedLastTermSecondOrderNeumann0}\\
& + \left\|\sum_{\ell,n}\bs A_{\ell,n}\langle \bs A_{\ell,n}, \bs E_{\ell,n}\rangle - \mathbb{E} \sum_{\ell,n}\bs A_{\ell,n}\langle \bs A_{\ell,n}, \bs E_{\ell,n}\rangle\right\|\label{lastNormEtoBound}
\end{align}
We will successively bound each of terms~\eqref{firstTermSecondOrderN0},~\eqref{covarianceTerm0},~\eqref{mixedLastTermSecondOrderNeumann0} and~\eqref{lastNormEtoBound} through corresponding lemmas~\ref{lemmaEbound}, ~\eqref{lemmaBoundCovariance}, ~\eqref{FourthOrderlemma}, and~\eqref{lemmaTperpSecondTerm} below. We start with the last term. This term is a sum of sub-exponential random variables and can be bounded through proposition~\ref{bernstein}. This idea is summarized by lemma~\ref{lemmaEbound} below which is proved in section~\ref{proofLemmaE},
\begin{restatable}{lemma}{lemmaEbound} 
Let $\hat{\bs c}_{\ell,n}$ be defined as in section~\ref{notations} (including the scaling $\sqrt{L}$) with $c_{\ell,n}[k]\sim \mathcal{N}(0,1/L)$ i.i.d. gaussian and $\bs A_{\ell,n} = \bs f_\ell\hat{\bs c}_{\ell,n}^*$. Let $\bs E_{\ell,n}$ be defined as in~\eqref{definitionEelln}. Then as soon as $L\gtrsim (1/\delta_5)\beta_5 K\mu_h^2$, $N\gtrsim (1/\delta_5)\beta_5 \mu_m^2$ for coherences $\mu_m^2$ and $\mu_h^2$ defined as in~\eqref{coherencem} and~\eqref{coherenceh},  
\label{lemmaEbound} 
 \begin{align}
\left\|\sum_{\ell,n}\bs A_{\ell,n}\langle \bs A_{\ell,n}, \bs E_{\ell,n}\rangle - \mathbb{E} \sum_{\ell,n}\bs A_{\ell,n}\langle \bs A_{\ell,n}, \bs E_{\ell,n}\rangle\right\|&\lesssim \delta_5\label{normLemmaEbound}
\end{align}
with probability at least $1 - (LN)^{-\beta_5}$.
\end{restatable}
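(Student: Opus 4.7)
The plan is to apply the matrix Bernstein inequality (Proposition~\ref{bernstein}) directly to the sum $\sum_{(\ell,n)} \mathcal{Z}_{\ell,n}$, where $\mathcal{Z}_{\ell,n} := \bs A_{\ell,n}\langle \bs A_{\ell,n}, \bs E_{\ell,n}\rangle - \mathbb{E}\bs A_{\ell,n}\langle \bs A_{\ell,n}, \bs E_{\ell,n}\rangle$. The crucial structural observation is that $\bs E_{\ell,n}$ in \eqref{definitionEelln} is \emph{deterministic}---it is itself an expectation over the Gaussians. Hence $\mathcal{Z}_{\ell,n}$ is a centered function of the single Gaussian vector $\hat{\bs c}_{\ell,n}$, and rewriting $\bs A_{\ell,n}\langle \bs A_{\ell,n},\bs E_{\ell,n}\rangle = \bs f_\ell\hat{\bs c}_{\ell,n}^* \cdot (\bs f_\ell^*\bs E_{\ell,n}\hat{\bs c}_{\ell,n})$ exhibits it as a rank-one matrix whose entries are polynomials of degree at most two in one centered Gaussian vector---hence subexponential, by the discussion of chi-squared variables in Section~\ref{subgaussianAndSubexponential}. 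The $\mathcal{Z}_{\ell,n}$ are independent across $(\ell,n)$ because the $\hat{\bs c}_{\ell,n}$ depend on disjoint blocks of the $\bs C_n$. Proposition~\ref{bernstein} therefore applies, and the proof reduces to computing the variance parameter $\sigma$ and the Orlicz $\Psi_1$-norm of a typical summand.

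For the variance \eqref{varianceBernstein}, the identity $\bs A_{\ell,n}\bs A_{\ell,n}^* = \|\hat{\bs c}_{\ell,n}\|^2 \bs f_\ell\bs f_\ell^*$ and the positive-semidefinite domination $\mathbb{E}\mathcal{Z}_{\ell,n}\mathcal{Z}_{\ell,n}^*\preceq \mathbb{E}[\bs A_{\ell,n}\bs A_{\ell,n}^*|\langle \bs A_{\ell,n},\bs E_{\ell,n}\rangle|^2]$ reduce the computation to expectations of products of linear and quadratic Gaussian forms, contracted against the explicit three-term expression for $\bs E_{\ell,n}$. Using $|\hat h[\ell]|^2\lesssim \mu_h^2/L$ and $\|\bs m_n\|^2\leq \mu_m^2/N$, together with the identities $\sum_\ell \bs f_\ell\bs f_\ell^* = \bs I$ and $\sum_n \bs e_n\bs e_n^* = \bs I$ exactly as in the proof of Lemma~\ref{lemmaT}, this should yield a variance of order $K\mu_h^2/L + \mu_m^2/N$. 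The adjoint variance $\|\sum\mathbb{E}\mathcal{Z}_{\ell,n}^*\mathcal{Z}_{\ell,n}\|$ is handled by the symmetric calculation.

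For the Orlicz bound, after centering and the triangle inequality one has $\|\mathcal{Z}_{\ell,n}\|\lesssim \|\hat{\bs c}_{\ell,n}\|\cdot |\bs f_\ell^*\bs E_{\ell,n}\hat{\bs c}_{\ell,n}|$, a product of two subgaussian quantities. By Proposition~\ref{EqNormPsi} and the chi-squared computation of Section~\ref{subgaussianAndSubexponential}, $\|\,\|\hat{\bs c}_{\ell,n}\|\,\|_{\Psi_2}\lesssim \sqrt{K}$; and $\bs f_\ell^*\bs E_{\ell,n}\hat{\bs c}_{\ell,n}$ is a scalar Gaussian linear form with $\Psi_2$-norm bounded by $\|\bs E_{\ell,n}^*\bs f_\ell\|$, which---after inspecting the three summands of \eqref{definitionEelln}---is dominated by $\mu_h\mu_m^2/(\sqrt{L}\,N)$. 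The product rule $\|XY\|_{\Psi_1}\leq \|X\|_{\Psi_2}\|Y\|_{\Psi_2}$ then gives $\|\mathcal{Z}_{\ell,n}\|_{\Psi_1}\lesssim \sqrt{K}\,\mu_h\mu_m^2/(\sqrt{L}\,N)$, which is comfortably dominated by the variance term at the target sample complexity.

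Plugging these bounds into \eqref{eq:BernsteinSum} with $t\asymp \beta_5\log(LN)$ delivers the stated deviation bound as soon as $L\gtrsim(1/\delta_5)\beta_5 K\mu_h^2$ and $N\gtrsim(1/\delta_5)\beta_5\mu_m^2$, up to logarithmic factors. The main obstacle is the variance bookkeeping: expanding $|\langle \bs A_{\ell,n},\bs E_{\ell,n}\rangle|^2$ from the three-term form of $\bs E_{\ell,n}$ produces nine cross contributions, each of which must be paired with $\bs A_{\ell,n}\bs A_{\ell,n}^*$ and summed over $(\ell,n)$ using the Fourier and canonical-basis orthogonality relations without losing the $1/L$ and $1/N$ factors that dictate the sample complexity.
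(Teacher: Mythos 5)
Your proposal follows essentially the same route as the paper's proof: center the summands, note that $\bs E_{\ell,n}$ is deterministic so each term is a degree-two Gaussian polynomial, and apply Proposition~\ref{bernstein} after bounding the two variance directions and the $\Psi_1$-norm via the subgaussian product rule. Your rank-one/linear-form framing $\bs A_{\ell,n}\langle\bs A_{\ell,n},\bs E_{\ell,n}\rangle=\bs f_\ell\hat{\bs c}_{\ell,n}^*\,(\hat{\bs c}_{\ell,n}^*\bs E_{\ell,n}^*\bs f_\ell)^{*}$ is a slightly cleaner bookkeeping device, and your claimed variance $K\mu_h^2/L+\mu_m^2/N$ is looser than the paper's (which carries extra factors of $\mu_h^2/L$ and $\mu_m^2/N$) but still suffices for the stated sample complexities.
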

The covariance term~\eqref{covarianceTerm0} is a purely deterministic term. It can be controlled through lemma~\eqref{lemmaBoundCovariance} below which is proved in section~\ref{lemmaCovarianceSec}.
\begin{restatable}{lemma}{lemmaBoundCovariance} 
\label{lemmaBoundCovariance}
Let the covariance $\mbox{\upshape Cov}(\mathcal{A}^*\mathcal{A},\mathcal{P}_T\mathcal{A}^*\mathcal{A}\mathcal{P}_T\bs h\bs m^*)$ be defined as in~\eqref{expandedCovariance}. Then for any constant $\delta_6$, as soon as $L\gtrsim (1/\delta_6)K\mu_h^2$ and $N\gtrsim (1/\delta_6)\mu_m^2$,
\begin{align}
\Bigg\|\mbox{\upshape Cov}(\mathcal{A}^*\mathcal{A},\mathcal{P}_T\mathcal{A}^*\mathcal{A}\mathcal{P}_T\bs h\bs m^*)\Bigg\|\lesssim \delta_6
\end{align}
\end{restatable}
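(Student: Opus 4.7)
The covariance in~\eqref{expandedCovariance} is a purely deterministic quantity, so the plan is to evaluate it explicitly by Wick's (Isserlis') theorem and then to exploit cancellations induced by $\mathcal{P}_T^\perp$.

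First I would simplify using two facts: because $\bs h\bs m^*\in T$, $\langle \mathcal{P}_T(\bs A_{\ell,n}),\bs h\bs m^*\rangle=\langle \bs A_{\ell,n},\bs h\bs m^*\rangle$, and because $\mathcal{P}_T$ is an orthogonal projection, $\langle \bs A_{\ell,n},\mathcal{P}_T(\bs A_{\ell,n})\rangle=\|\mathcal{P}_T(\bs A_{\ell,n})\|_F^2$. Writing $s_{\ell,n}\equiv\langle \bs A_{\ell,n},\bs h\bs m^*\rangle=\hat h[\ell]^{*}\langle \bs c_{\ell,n},\bs m\rangle$ and $q_{\ell,n}\equiv\|\mathcal{P}_T(\bs A_{\ell,n})\|_F^2$, this rewrites the covariance as
\begin{align*}
\mathcal{P}_T^\perp\sum_{\ell,n}\Big(\mathbb{E}[\bs A_{\ell,n}\,q_{\ell,n}\,s_{\ell,n}]-\mathbb{E}[\bs A_{\ell,n}\,\langle \bs A_{\ell,n},\bs E_{\ell,n}\rangle]\Big),
\end{align*}
with $\bs E_{\ell,n}$ as in~\eqref{definitionEelln}. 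Both terms are fourth-order Gaussian moments in $\bs c_{\ell,n}$: in the first, all three Wick pairings between the four Gaussian factors contribute, while in the second the pairings inside $\bs E_{\ell,n}=\mathbb{E}[\mathcal{P}_T(\bs A_{\ell,n})s_{\ell,n}]$ have already been carried out and are forced to be intra-$\bs E_{\ell,n}$. The difference is therefore exactly the sum of the \emph{cross-pairings}, in which a Gaussian from the outer $\bs A_{\ell,n}$ is contracted against a Gaussian inside $\mathcal{P}_T(\bs A_{\ell,n})s_{\ell,n}$.

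Next I would evaluate these surviving contractions using $\mathbb{E}[\bs c_{\ell,n}\bs c_{\ell,n}^{*}]=(1/L)(\bs e_n\bs e_n^{*})\otimes\bs I_K$ together with the three-piece expansion $\mathcal{P}_T(\bs A_{\ell,n})=\bs h\bs h^{*}\bs A_{\ell,n}+\bs A_{\ell,n}\bs m\bs m^{*}-\bs h\bs h^{*}\bs A_{\ell,n}\bs m\bs m^{*}$. Each surviving contribution is a rank-one matrix whose left factor lies in $\{\bs f_\ell,\bs h\}$ and whose right factor lies in the span of $\bs m$ or of a single block $\bs e_n\otimes\bs m_n$, scaled by a factor involving $|\hat h[\ell]|^2$ or $\hat h[\ell]^2$. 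Applying $\mathcal{P}_T^\perp$ annihilates those pieces whose left factor is $\bs h$ \emph{or} whose right factor is $\bs m$, which kills the leading contributions and leaves a small residual. Summing over $(\ell,n)$ and using $\sum_\ell \bs f_\ell\bs f_\ell^{*}=\bs I$, $\sup_\ell L|\hat h[\ell]|^2\le \mu_h^2$ and $\sup_n N\|\bs m_n\|^2\le \mu_m^2$, the operator norm of each residual piece reduces to a quantity of order $K\mu_h^2/L+\mu_m^2/N$, which is $\lesssim \delta_6$ under the stated sample complexity.

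The hard part will be the bookkeeping of the Wick contractions together with the three components of $\mathcal{P}_T$, since there are several contraction patterns to enumerate and one must verify both that every cross-pairing term is counted exactly once and that $\mathcal{P}_T^\perp$ indeed annihilates the leading-order rank-one contributions. The block-sparsity of $\bs c_{\ell,n}$ (nonzero only on its $n$-th block) substantially restricts the admissible pairings and should make this enumeration tractable.
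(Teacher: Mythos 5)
Your proposal is correct and lands on the same computation as the paper: both proofs reduce the covariance to an explicit fourth-order Gaussian moment evaluation for each $(\ell,n)$ and then bound the resulting deterministic rank-one sums by $K\mu_h^2/L$, $\mu_m^2/N$ and their product via $\sum_\ell \hat h[\ell]\bs f_\ell=\bs h$ and the coherence definitions. The one genuine difference is the bookkeeping of the subtracted term in~\eqref{expandedCovariance}: the paper asserts that the second term (the one built from $\bs E_{\ell,n}$) vanishes and then evaluates the full expectation of the first term, whereas you identify that second term as exactly the intra-$\bs E_{\ell,n}$ Wick pairing of the first, so that the covariance is precisely the sum of the two cross-pairings. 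Your accounting is the more careful one — the subtracted term is of the form $\sum_{\ell,n}\bs f_\ell\bs f_\ell^*\bs E_{\ell,n}$ restricted to block $n$ and does not literally vanish, it is merely of the same $K\mu_h^2/L+\mu_m^2/N$ order as everything else — but since each individual pairing is bounded by the same quantities, both routes give the lemma under $L\gtrsim(1/\delta_6)K\mu_h^2$, $N\gtrsim(1/\delta_6)\mu_m^2$. One small caveat on your phrasing: the dominant surviving contribution $K|\hat h[\ell]|^2\hat h[\ell]\bs f_\ell(\bs e_n\otimes\bs m_n)^*$ has left factor $\bs f_\ell$ and right factor supported on a single block, so $\mathcal{P}_T^\perp$ does \emph{not} annihilate it; it is controlled only by the coherence bound, exactly as in the paper, so the annihilation step buys you nothing essential and should not be leaned on for the leading term.
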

The first term~\eqref{firstTermSecondOrderN0} is a sum of $LN$ independent random variables, each defined from fourth order monomials in the gaussian vectors $\bs c_{\ell,n}$, $1\leq \ell\leq L$, $1\leq n\leq N$. We will bound this first term through the Rosenthal-Pinelis inequality (see (3.1) in~\cite{gine2012high} as well as~\cite{tropp2016expected}, Theorem 1, for the matrix version.) which extends proposition~\ref{bernstein}. This inequality bounds the expectation of the operator norm of a sum of independent random matrices from the knowledge of the variance and a bound on the expectation of the largest operator norm among those matrices. It is recalled through proposition~\ref{rosenthalTropp} below,
\begin{proposition}[Rosenthal-Pinelis~\cite{tropp2016expected}]\label{rosenthalTropp}
Let $\bs Z_1,\ldots, \bs Z_n$ be i.i.d. random matrices of dimension $d_1\times d_2$ with $\mathbb{E}\{\bs Z_i\} = 0$. Let
\begin{equation}
\sigma_Z = \max \left\{\left\|\sum_{i=1}^n \mathbb{E}\left(\bs Z_i\bs Z_i^*\right)\right\|, \left\|\sum_{i=1}^n \mathbb{E}\left(\bs Z_i^*\bs Z_i\right)\right\|\right\},\label{varianceRosenthal}
\end{equation}
as well as 
\begin{align}
B = \left(\mathbb{E} \max_i \|\bs Z_i\|^2\right)^{1/2}, \quad \mbox{and} \quad C(d_1,d_2)= 4\left(1+2\lceil \log(d_1+d_2)\rceil\right).\label{expectedOpNormRosP}
\end{align}
Then the expectation of the norm of $\sum_{i=1}^n \bs Z_i$ can be bounded as 
\begin{align}
\mathbb{E}\|\bs Z\|^2\leq \left(\sqrt{C(d_1,d_2)\sigma} + C(d_1,d_2)B\right)^2.\label{rosenthalTroppBound1}
\end{align} 
\end{proposition}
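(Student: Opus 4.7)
The plan is to prove the matrix Rosenthal--Pinelis inequality by combining a symmetrization step with a matrix Khintchine moment bound, and then closing a self-bounding inequality in order to replace the empirical quadratic variation by the deterministic variance proxy $\sigma_Z$.

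First I would symmetrize. Letting $\bs Z_i'$ be an independent copy of $\bs Z_i$ and using that $\E \bs Z_i=0$ together with Jensen's inequality and the triangle inequality, one obtains $\E\|\sum_i \bs Z_i\|^2 \le 4\,\E\|\sum_i\varepsilon_i \bs Z_i\|^2$, where $\varepsilon_i$ are i.i.d.\ Rademacher signs independent of the $\bs Z_i$. It is therefore enough to bound a matrix Rademacher series. Next, conditional on the $\bs Z_i$, I would apply the non-commutative (matrix) Khintchine inequality of Lust-Piquard, Pisier and Rudelson, in the form
\begin{equation*}
\E_\varepsilon\|\sum_i \varepsilon_i \bs Z_i\|^2 \;\le\; C(d_1,d_2)\,\max\Bigl\{\|\sum_i \bs Z_i \bs Z_i^*\|,\;\|\sum_i \bs Z_i^* \bs Z_i\|\Bigr\},
\end{equation*}
with $C(d_1,d_2)=4(1+2\lceil\log(d_1+d_2)\rceil)$; the logarithm arises when passing from a Schatten $p$-norm with $p\sim\log(d_1+d_2)$ to the operator norm, where the noncommutative Khintchine bound is dimension-free.

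Then I would take the expectation over the $\bs Z_i$ and compare the random empirical variance against the deterministic $\sigma_Z$. The triangle inequality yields
\begin{equation*}
\E\|\sum_i \bs Z_i\bs Z_i^*\| \;\le\; \sigma_Z + \E\|\sum_i(\bs Z_i\bs Z_i^*-\E \bs Z_i\bs Z_i^*)\|,
\end{equation*}
and the analogous bound holds for the other quadratic form. A second symmetrization together with the contraction principle bounds the fluctuation by $2\,\E\bigl[\max_i \|\bs Z_i\|\cdot\|\sum_i\varepsilon_i \bs Z_i\|\bigr]$, and Cauchy--Schwarz bounds this in turn by $2B\cdot(\E\|\sum_i \bs Z_i\|^2)^{1/2}$. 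Setting $M^2:=\E\|\sum_i \bs Z_i\|^2$ one obtains the self-bounding quadratic $M^2\lesssim C(d_1,d_2)\sigma_Z + C(d_1,d_2)BM$, whose solution yields exactly $M\le \sqrt{C(d_1,d_2)\sigma_Z}+C(d_1,d_2)B$, matching the claimed bound.

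The main obstacle is this last step: extracting the $B$ factor from the fluctuation term while accruing only one copy of $M$ and no additional logarithmic factors beyond $C(d_1,d_2)$, so that the recursion closes with the stated constant. The delicate point is that the contraction principle must be applied at the level of the operator norm (which is a norm, so this is legitimate) and that the $(\max_i\|\bs Z_i\|)^{2}$ extracted by Cauchy--Schwarz is \emph{exactly} the quantity $B^{2}$ entering the statement. This careful bookkeeping is precisely what Tropp's moment-method argument in~\cite{tropp2016expected} achieves, and it is the reason the proposition is quoted here rather than reproven from scratch.
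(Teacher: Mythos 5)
The paper does not prove this proposition: it is imported verbatim from Tropp's paper~\cite{tropp2016expected} (Theorem 1 there), so there is no internal proof to compare against. Your outline reconstructs the architecture of Tropp's argument correctly --- symmetrization to reduce to a matrix Rademacher series, a matrix Khintchine-type bound with constant $C(d_1,d_2)=4(1+2\lceil\log(d_1+d_2)\rceil)$ coming from the passage from a Schatten $p$-norm with $p\asymp\log(d_1+d_2)$ to the operator norm, and a self-bounding quadratic that trades the empirical quadratic variation for the deterministic $\sigma_Z$ at the cost of the term $C(d_1,d_2)B$.

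The one step that is not right as written is the appeal to the contraction principle to bound $\mathbb{E}\bigl\|\sum_i(\bs Z_i\bs Z_i^*-\mathbb{E}\bs Z_i\bs Z_i^*)\bigr\|$ by $2\,\mathbb{E}\bigl[\max_i\|\bs Z_i\|\cdot\|\sum_i\varepsilon_i \bs Z_i\|\bigr]$. After symmetrizing the fluctuation, the Rademacher series that appears is $\sum_i\varepsilon_i \bs Z_i\bs Z_i^*$, and the map $\bs Z_i\mapsto \bs Z_i\bs Z_i^*$ is not a coordinatewise scalar contraction, so the Ledoux--Talagrand contraction principle does not convert it into the series $\sum_i\varepsilon_i\bs Z_i$. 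The correct mechanism (and the one in~\cite{tropp2016expected}) is to apply the matrix Khintchine bound a second time, now to the Hermitian matrices $\bs Z_i\bs Z_i^*$, which produces $\bigl\|\sum_i(\bs Z_i\bs Z_i^*)^2\bigr\|^{1/2}$; the semidefinite domination $\sum_i(\bs Z_i\bs Z_i^*)^2\preceq(\max_i\|\bs Z_i\|^2)\sum_i\bs Z_i\bs Z_i^*$ then factors out $\max_i\|\bs Z_i\|$, and Cauchy--Schwarz over the outer expectation yields $B$ times the square root of the \emph{empirical variance}, so the recursion closes in the variable $\mathbb{E}\|\sum_i\bs Z_i\bs Z_i^*\|$ rather than in $M$ directly. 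This is a repair of a step you already flagged as the delicate one, and since the proposition is quoted in the paper rather than proven, deferring the bookkeeping to the cited reference is appropriate.
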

Let $Q$ denote the norm
\begin{align}
Q &\equiv \left\|\mathcal{P}_T^\perp\sum_{(\ell,n)}\bs A_{\ell,n}\left\langle \bs A_{\ell,n}, \left(\mathcal{P}_T(\bs A_{\ell,n})\langle \mathcal{P}_T(\bs A_{\ell,n}), \bs h \bs m^*\rangle \right)  \right\rangle\right.\\
&\left.-\mathcal{P}_T^\perp\mathbb{E}_{(\ell,n)}  \sum_{(\ell,n)} \bs A_{\ell,n}\left\langle \bs A_{\ell,n},\left(\mathcal{P}_T(\bs A_{\ell,n})\langle \mathcal{P}_T(\bs A_{\ell,n}), \bs h \bs m^*\rangle \right)  \right\rangle\right\|
\end{align}
As soon as one can control the expectation of the operator norm, one can then use standard concentration tools such as Markov's inequality to derive a tail bound on the event $\mathcal{E}_5$ defined as $\mathcal{E}_5 \equiv\left\{ Q \geq \delta\right\}$ as 
\begin{align}
&\pr\left(Q \geq t \right)\leq \frac{\mathbb{E} Q}{t}
\end{align}
Lemma~\ref{FourthOrderlemma} below precisely derives such a bound on the expectation $\mathbb{E}Q$. This lemma is proved in section~\eqref{prooflemmaFourthOrderDep}.
\begin{restatable}[Fourth order dependence]{lemma}{FourthOrderlemma}
\label{FourthOrderlemma}
Let $\hat{\bs c}_{\ell,n}$ be defined as in section~\ref{notations} (including the scaling $\sqrt{L}$), where $C_n[\ell,k]\sim \mathcal{N}(0,1/L)$ are i.i.d. gaussian. Let $\bs A_{\ell,n} = \bs f_\ell\bs c_{\ell,n}^*$ and let the operators $\mathcal{A}_{\ell,n}\;:\;\mathbb{C}^{L\times KN} \mapsto \mathbb{C}^{L\times N}$ be defined from those matrices $\bs A_{\ell,n}$ as in~\eqref{DefinitionLinearMap}. The coherences $\mu_m^2$ and $\mu_h^2$ are defined as in~\eqref{coherencem} and~\eqref{coherenceh}. Then,
\begin{align}
\mathbb{E}\left\|\sum_{\ell,=1}^L\sum_{n=1}^N \mathcal{A}_{\ell,n}^*\mathcal{A}_{\ell,n}(\mathcal{P}_T\mathcal{A}_{\ell,n}^*\mathcal{A}_{\ell,n}\mathcal{P}_T)\bs h\bs m^*  - \mathbb{E} \mathcal{A}_{\ell,n}^*\mathcal{A}_{\ell,n}(\mathcal{P}_T\mathcal{A}_{\ell,n}^*\mathcal{A}_{\ell,n}\mathcal{P}_T)\bs h\bs m^*\right\| \lesssim \max \left(\frac{\mu_L^2}{L},\frac{\mu_m^2}{N}\right) \label{fourthOrderequation1}
\end{align}
\end{restatable}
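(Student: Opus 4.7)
The plan is to apply the matrix Rosenthal--Pinelis inequality (Proposition~\ref{rosenthalTropp}) to the sum $\sum_{\ell,n}\bs Z_{\ell,n}$ of independent centered random matrices
\[
\bs Z_{\ell,n}\equiv \mathcal{A}_{\ell,n}^*\mathcal{A}_{\ell,n}(\mathcal{P}_T\mathcal{A}_{\ell,n}^*\mathcal{A}_{\ell,n}\mathcal{P}_T)\bs h\bs m^* - \mathbb{E}\,\mathcal{A}_{\ell,n}^*\mathcal{A}_{\ell,n}(\mathcal{P}_T\mathcal{A}_{\ell,n}^*\mathcal{A}_{\ell,n}\mathcal{P}_T)\bs h\bs m^*,
\]
each of which depends only on the Gaussian vector $\bs c_{\ell,n}$. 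Unrolling the operator, the uncentered part equals $\bs f_\ell\hat{\bs c}_{\ell,n}^*\cdot\|\mathcal{P}_T(\bs A_{\ell,n})\|_F^2\cdot\langle \mathcal{P}_T(\bs A_{\ell,n}),\bs h\bs m^*\rangle$, that is, a fourth order Gaussian chaos in $\bs c_{\ell,n}$, which is why the sub-exponential Bernstein bound of Proposition~\ref{bernstein} is not directly applicable and why a Rosenthal type inequality becomes the natural tool.

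The first step is to bound the variance parameter $\sigma_Z$ from~\eqref{varianceRosenthal}. Using that the centered second moment is dominated by the uncentered one (subtracting the outer product of the mean keeps the expression positive semidefinite), it suffices to bound
\[
\left\|\sum_{\ell,n}\bs f_\ell\bs f_\ell^*\cdot \mathbb{E}\!\left[\|\hat{\bs c}_{\ell,n}\|^2\,\|\mathcal{P}_T(\bs A_{\ell,n})\|_F^4\,|\langle\mathcal{P}_T(\bs A_{\ell,n}),\bs h\bs m^*\rangle|^2\right]\right\|
\]
together with the analogous expression in the reverse order. Substituting the expansion $\|\mathcal{P}_T(\bs A_{\ell,n})\|_F^2\leq |\hat h[\ell]|^2\|\bs c_{\ell,n}\|^2+|\langle \bs c_{\ell,n},\bs m\rangle|^2$ from~\eqref{PTAfrobenius} and writing $\langle \mathcal{P}_T(\bs A_{\ell,n}),\bs h\bs m^*\rangle=\hat h[\ell]\langle \bs c_{\ell,n},\bs m_n\rangle$ turns each term into an $8$th order Gaussian moment in $\bs c_{\ell,n}$ which I would evaluate by Isserlis' (Wick's) theorem. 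After collecting the pairings, the coherence bounds $|\hat h[\ell]|^2\leq \mu_h^2/L$ and $\|\bs m_n\|^2\leq \mu_m^2/N$, combined with $\sum_\ell\bs f_\ell\bs f_\ell^*=\bs I$ and $\sum_n\bs e_n\bs e_n^*=\bs I$, should yield $\sigma_Z\lesssim K\mu_h^2/L+\mu_m^2/N$.

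Next I would bound $B=(\mathbb{E}\max_{\ell,n}\|\bs Z_{\ell,n}\|^2)^{1/2}$ in~\eqref{expectedOpNormRosP}. Each scalar factor appearing in $\bs Z_{\ell,n}$, namely $\|\hat{\bs c}_{\ell,n}\|$, $|\langle\hat{\bs c}_{\ell,n},\bs e_n\otimes\bs m_n\rangle|$ and $|\hat h[\ell]|\cdot\|\bs c_{\ell,n}\|$, is sub-gaussian with an explicit $\Psi_2$ norm by Section~\ref{subgaussianAndSubexponential}; the iterated product rule $\|XY\|_{\Psi_1}\leq \|X\|_{\Psi_2}\|Y\|_{\Psi_2}$ applied twice then gives that $\|\bs Z_{\ell,n}\|$ is sub-Weibull of order $1/2$ with Orlicz norm controlled by the product of the four factors. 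A standard maximal inequality $\mathbb{E}\max_{i\leq LN}X_i\lesssim (\log LN)^{1/q}\sup_i\|X_i\|_{\Psi_q}$ with $q=1/2$ then produces an extra polylogarithmic factor, leaving $B$ of order $(K\mu_h^2/L+\mu_m^2/N)\cdot\mathrm{polylog}(LN)$. Plugging $\sigma_Z$ and $B$ into~\eqref{rosenthalTroppBound1} yields the claimed expectation bound, and Markov's inequality as indicated in the main text converts it into the desired tail estimate on the event $\mathcal{E}_5$.

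The main obstacle is the variance computation: the $8$th order Gaussian expectation produces a large number of Wick pairings, which couple the indices $(\ell,n)$ only through $\hat h[\ell]$ and $\bs m_n$. The bookkeeping needed to separate the diagonal pairings (which must reassemble into $\sum_\ell\bs f_\ell\bs f_\ell^*$ or $\sum_n\bs e_n\bs e_n^*$) from the mixed ones, and in particular to show that the dominant contributions scale as $K\mu_h^2/L+\mu_m^2/N$ rather than a weaker combination, is the delicate technical step; the tightness of this estimate is what ultimately determines the $K^{3/2}\mu_h^2$ exponent in the sample complexity of Theorem~\ref{theorem:BlindDeconvTh}.
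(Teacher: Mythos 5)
Your proposal follows essentially the same route as the paper: decompose the sum into independent centered matrices $\bs Z_{\ell,n}$, apply the matrix Rosenthal--Pinelis inequality with a variance bound obtained from explicit Gaussian moment (Wick) computations on the sixth- and eighth-order chaoses, and convert the resulting expectation bound into a tail bound via Markov's inequality. The only divergence is minor: where you control $B=(\mathbb{E}\max_{\ell,n}\|\bs Z_{\ell,n}\|^2)^{1/2}$ through sub-Weibull Orlicz norms and a maximal inequality (at the cost of polylogarithmic factors), the paper bounds $\mathbb{E}\|\bs Z_{\ell,n}\|^2$ term by term by reusing the same moment estimates already derived for the variance.
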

Finally, the third term~\eqref{mixedLastTermSecondOrderNeumann0} is of the form
\begin{align}
\left\|\sum_{(\ell,n)\neq (\ell',n')}h(\mbox{vec}(\bs A_{\ell,n})\otimes\mbox{vec}(\bs A_{\ell,n})^*, \mbox{vec}(\bs A_{\ell',n'})\otimes\mbox{vec}(\bs A_{\ell',n'})^*)\right\|
\end{align}
for appropriate measurable functions $h(\bs X,\bs Y)$. This is a special case of a $U$-statistics (see~\cite{de2012decoupling}, chapter 3. as well as~\cite{hoeffding1948class}). We bound this last term by using a decoupling argument from de la Pen\~a et al.~\cite{de1995decoupling}. This result is summarized by the following lemma which is proved in section~\ref{sectionUstatistics} below.
\begin{restatable}{lemma}{lemmaTperpSecondTerm} 
Let $\hat{\bs c}_{\ell,n}$ be defined as in section~\ref{notations} (including the scaling $\sqrt{L}$) where $c_{\ell,n}[k]\sim \mathcal{N}(0,1/L)$ are i.i.d gaussian. Let $\bs A_{\ell,n} = \bs f_\ell\bs c_{\ell,n}^*$. And $\mathcal{A}_{\ell,n}$ be defined from the matrices $\bs A_{\ell,n}$ as the corresponding operators $\mathcal{A}_{\ell,n}(\bs X) = \langle \bs A_{\ell,n},\bs X\rangle$. The coherences $\mu_m^2$ and $\mu_h^2$ are defined as in~\eqref{coherencem} and~\eqref{coherenceh}. Then for any constant $\delta_8, \beta_8$, as soon as $L\gtrsim \beta_8 (1/\delta_8)K\mu_h^2$ and $N\gtrsim (1/\delta_8)\beta_8 \mu_m^2$
\label{lemmaTperpSecondTerm} 
\begin{align}
& \left\|\mathcal{P}_T^\perp\sum_{(\ell,n)}\bs A_{\ell,n}\left\langle \bs A_{\ell,n}, \left(\sum_{(\ell',n')\neq (\ell,n)} \mathcal{P}_T(\bs A_{\ell',n'})\langle \mathcal{P}_T(\bs A_{\ell',n'}), \bs h \bs m^*\rangle - \bs E_{\ell',n'}\right)  \right\rangle\right.\\
&\left.-\mathcal{P}_T^\perp\mathbb{E}  \sum_{(\ell,n)} \bs A_{\ell,n}\left\langle \bs A_{\ell,n}, \left(\sum_{(\ell',n')\neq (\ell,n)} \mathcal{P}_T(\bs A_{\ell',n'})\langle \mathcal{P}_T(\bs A_{\ell',n'}), \bs h \bs m^*\rangle - \bs E_{\ell',n'}\right)  \right\rangle\right\|\lesssim \delta_8\label{normMixed2UStatistics}
\end{align}
with probability at least $1-(LN)^{-\beta_8}$
\end{restatable}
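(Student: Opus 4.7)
Writing $\bs T(\bs A_{\ell',n'}) := \mathcal{P}_T(\bs A_{\ell',n'})\langle \mathcal{P}_T(\bs A_{\ell',n'}), \bs h\bs m^*\rangle - \bs E_{\ell',n'}$, the quantity inside the norm is the centred second-order $U$-statistic
\begin{align*}
\bs M \;=\; \sum_{(\ell,n)\neq (\ell',n')} h\!\left(\bs A_{\ell,n},\bs A_{\ell',n'}\right),
\qquad
h(\bs A,\bs A') = \mathcal{P}_T^\perp \bs A\,\bigl\langle \bs A,\bs T(\bs A')\bigr\rangle,
\end{align*}
with $\mathbb{E}\,\bs T(\bs A_{\ell',n'})=0$ by the very definition of $\bs E_{\ell',n'}$ in~\eqref{definitionEelln}. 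The first step of the plan is to apply a de~la~Pe\~na--Montgomery-Smith decoupling inequality for $U$-statistics (\cite{de1995decoupling}; \cite{de2012decoupling}, Chap.~3), which controls the tail of $\bs M$, up to a universal constant, by the tail of its \emph{decoupled} counterpart
\begin{align*}
\bs M' \;=\; \sum_{(\ell,n)}\sum_{(\ell',n')} \mathcal{P}_T^\perp \bs A_{\ell,n}\,\bigl\langle \bs A_{\ell,n},\bs T(\bs A'_{\ell',n'})\bigr\rangle,
\end{align*}
where $\{\bs A'_{\ell',n'}\}$ is an independent copy of $\{\bs A_{\ell,n}\}$.

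Next, setting
\begin{align*}
\bs W \;:=\; \sum_{(\ell',n')} \bs T(\bs A'_{\ell',n'}) \;=\; \bigl(\mathcal{P}_T\mathcal{A}'^{*}\mathcal{A}'\mathcal{P}_T - \mathcal{P}_T\bigr)\bs h\bs m^{*},
\end{align*}
with $\mathcal{A}'$ built from the primed family, the decoupled expression reduces to the conditionally independent matrix sum
\begin{align*}
\bs M' \;=\; \sum_{(\ell,n)} \mathcal{P}_T^\perp \bs A_{\ell,n}\,\langle \bs A_{\ell,n},\bs W\rangle.
\end{align*}
I would then apply Proposition~\ref{bernstein} to $\bs M'$ conditionally on $\bs W$, exactly as in the proof of Lemma~\ref{lemmaTperp} but with $\bs h\bs m^*$ replaced by $\bs W$. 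The conditional variance computation follows the pattern of~\eqref{boundlemma2variance1aa}--\eqref{boundlemma2variance2aa}, while the conditional Orlicz--$1$ norm of $\|\bs A_{\ell,n}\langle \bs A_{\ell,n},\bs W\rangle\|$ is controlled via the subexponential character of $\|\bs c_{\ell,n}\|^2$ and $|\langle \bs c_{\ell,n},\bs W^{*}\bs f_\ell\rangle|^2$, mirroring~\eqref{Psi1ZlnB1}--\eqref{Psi1ZlnB2}.

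To close the argument I would invoke Lemma~\ref{lemmaT} applied to the primed family, which guarantees $\|\bs W\|\lesssim \delta_1$ with probability at least $1-(L\vee N)^{-\beta_1}$ as soon as $L\gtrsim (1/\delta_1)K\mu_h^2$ and $N\gtrsim (1/\delta_1)\mu_m^2$ up to log factors. Intersecting this event with the conditional Bernstein bound, choosing $t\asymp \beta_8\log(LN)$ in Proposition~\ref{bernstein}, and absorbing the universal decoupling constant, yields $\|\bs M\|\lesssim \delta_8$ with overall failure probability at most $(LN)^{-\beta_8}$ after a final union bound.

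The hard part is the careful accounting of the \emph{conditional} Orlicz norms after decoupling. Even once $\bs W$ is frozen, the relevant scalar summands are products of a Gaussian linear form $\langle \bs c_{\ell,n},\bs W^{*}\bs f_\ell\rangle$ and a Gaussian quadratic form $\|\bs c_{\ell,n}\|^2$, whose subexponential parameters depend on $\bs W$ through row-wise norms such as $\sup_\ell \|\bs W^{*}\bs f_\ell\|$ and Frobenius-type sums $\sum_\ell \|\bs W^{*}\bs f_\ell\|^2$. One must therefore show that these directional norms of $\bs W$ are simultaneously small with high probability---a genuinely stronger statement than the operator-norm bound provided by Lemma~\ref{lemmaT}---which is precisely where the coherences $\mu_h^{2}, \mu_m^{2}$ enter the analysis and where the argument is most delicate.
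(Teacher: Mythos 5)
Your architecture matches the paper's: decouple the $U$-statistic via de la Pe\~na--Montgomery-Smith, freeze the inner (primed) family, and run a conditional Bernstein bound on the now-independent outer sum. However, there are two genuine gaps. The first, and more serious, is the one you yourself flag in your closing paragraph and then leave unresolved: the conditional variance and Orlicz parameters after freezing $\bs W$ are governed not by $\|\bs W\|$ but by directional quantities of the type $\sup_{\ell_2}\sum_{n_2}\|\bs W[\ell_2,\sim n_2]\|^2$, $\sup_{n_2}\sum_{\ell_2}\|\bs W[\ell_2,\sim n_2]\|^2$ and $\sup_{\ell_2,n_2}\|\bs W[\ell_2,\sim n_2]\|^2$. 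Invoking Lemma~\ref{lemmaT} gives only the operator-norm bound $\|\bs W\|\lesssim\delta_1$, which does not control these row/block-wise suprema at the scale $\mu_h^2/L$, $\mu_m^2/N$ needed for the conditional Bernstein step to close at the claimed sample complexity. The paper devotes a separate lemma (Lemma~\ref{convergenceCoherence}) to exactly this: an entrywise scalar Bernstein bound on each coefficient of the random matrix $\bs H_{\ell,n}$, followed by a union bound over $(\ell_2,n_2)$ and $(\ell,n)$, establishing $\mu^2_{\bs H_{\ell,n}}\lesssim\mu_h^2$, $\rho^2_{\bs H_{\ell,n}}\lesssim\mu_m^2$, $\nu^2_{\bs H_{\ell,n}}\lesssim\mu_h^2\mu_m^2$ with probability $1-(LN)^{-\beta}$. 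Without supplying this argument (or an equivalent), the proof is incomplete precisely at the step you identify as the delicate one; the conditional Bernstein bound of Lemma~\ref{lemmaSecondTermIndependent} cannot be instantiated.

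The second gap is the diagonal. Your decoupled sum $\bs M'$ runs over all pairs $(\ell,n),(\ell',n')$, so that the inner sum collapses to a single matrix $\bs W$ independent of $(\ell,n)$; but the statistic being bounded excludes $(\ell',n')=(\ell,n)$, and the decoupling inequality preserves that exclusion. The discrepancy is the sum $\sum_{(\ell,n)}\mathcal{P}_T^\perp\bs A_{\ell,n}\langle\bs A_{\ell,n},\bs T(\bs A'_{\ell,n})\rangle$ (together with its conditional expectation over the outer family), which must be bounded separately; the paper does this in Lemma~\ref{lemmaLemmaExpectationSwap} by a further application of Proposition~\ref{bernstein}, exploiting the independence of $\bs A_{\ell,n}$ and $\tilde{\bs A}_{\ell,n}$. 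This term is of lower order but it is not zero and cannot simply be absorbed into $\bs W$.
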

Combining the results of lemmas~\ref{lemmaEbound} to~\ref{lemmaTperpSecondTerm} with the discussion above, with $(1/\delta_3)\gtrsim \max\{(1/\delta_i)\}_{i=4}^8$, $\beta\geq \max\{\beta_i\}$, we have that as soon as $L\gtrsim \beta(1/\delta_3)\beta_3 K\mu_h^2$ and $N\gtrsim \beta_3(1/\delta_3)\mu_m^2$, the bound of lemma~\ref{lemmaTotalSecondTerm} is satisfied with probability at least $1 -  c(LN)^{-\beta_3}$ where $c$ is a constant. The next section proceeds with the proofs of each lemma.

\section{\label{auxiliaryLemmasProof}Proofs of Auxiliary lemmas}

\subsection{\label{proofLemmaE}Proof of lemma~\ref{lemmaEbound}}
\lemmaEbound*
The norm on the LHS of~\eqref{normLemmaEbound} is the norm of a sum of subexponential random variables and we can thus use proposition~\ref{bernstein}. We start by deriving the bound on the variance. Recall that $\bs E_{\ell,n}$ is defined as $\bs E_{\ell,n} = \mathbb{E}\mathcal{P}_T(\bs A_{\ell,n})\langle \mathcal{P}_T(\bs A_{\ell,n}), \bs h\bs m^*\rangle$. Let $\mathcal{Z}_{\ell,n}$ be defined as 
\begin{align}
\mathcal{Z}_{\ell,n} &\equiv \sum_{\ell,n} \bs A_{\ell,n}\langle \bs A_{\ell,n}, \bs E_{\ell,n}\rangle - \mathbb{E}\sum_{\ell,n} \bs A_{\ell,n}\langle \bs A_{\ell,n}, \bs E_{\ell,n}\rangle
\end{align}
Using~\eqref{definitionEelln},  $|\langle \bs A_{\ell,n},\bs E_{\ell,n}\rangle |^2$ expands as 
\begin{align}
|\langle \bs A_{\ell,n},\bs E_{\ell,n}\rangle |^2 &= \left|\hat{h}[\ell]^3\langle \bs m_n,\bs c_{\ell,n}\rangle + \|\bs f_{\ell}\|^2\hat{h}[\ell]\|\bs m_n\|^2 \langle \bs m_n,\bs c_{\ell,n}\rangle - \hat{h}[\ell]^3\|\bs m_n\|^2\langle \bs m_n,\bs c_{\ell,n}\rangle\right|^2\\
&\lesssim|\hat{h}[\ell]|^6|\langle \bs m_n,\bs c_{\ell,n}\rangle|^2 + |\hat{h}[\ell]|^2 \|\bs m_n\|^4 |\langle \bs m_n,\bs c_{\ell,n}\rangle|^2 + |\hat{h}[\ell]|^6\|\bs m_n\|^4|\langle \bs m_n,\bs c_{\ell,n}\rangle|^2
\end{align}
Following the proofs of lemmas~\ref{lemmaT} and~\ref{lemmaTperp}, each of the variance bounds in~\eqref{varianceBernstein} can be expressed as
\begin{align}
\sigma & \leq \max \left\{\left\|\mathbb{E}\sum_{\ell,n}\mathcal{Z}_{\ell,n}^*\mathcal{Z}_{\ell,n}\right\|, \left\|\mathbb{E}\sum_{\ell,n}\mathcal{Z}_{\ell,n}\mathcal{Z}_{\ell,n}^*\right\|\right\}\\
&\leq \max \left\{\left\|\mathbb{E}\sum_{\ell,n}\bs f_\ell\bs f_{\ell}^* \|\bs c_{\ell,n}\|^2 |\langle \bs A_{\ell,n},\bs E_{\ell,n}\rangle |^2  \right\|, \left\|\mathbb{E}\sum_{\ell,n}\bs c_{\ell,n}\bs c_{\ell,n}^* \|\bs f_{\ell}\|^2 |\langle \bs A_{\ell,n},\bs E_{\ell,n}\rangle |^2 \right\|\right\}
\end{align}
The first term can be bounded as 
\begin{align}
\left\|\mathbb{E}\sum_{\ell,n}\bs f_\ell\bs f_{\ell}^* \|\bs c_{\ell,n}\|^2 |\langle \bs A_{\ell,n},\bs E_{\ell,n}\rangle |^2  \right\|&\lesssim \sup_{\ell }\left\|\mathbb{E}\sum_{n}\bs f_\ell\bs f_{\ell}^* \|\bs c_{\ell,n}\|^2 |\langle \bs A_{\ell,n},\bs E_{\ell,n}\rangle |^2\right\|\\
&\lesssim K \frac{\mu_h^6}{L^3}\|\bs m_n\|^2 + K\frac{\mu_h^2}{L} \|\bs m_n\|^6 + K\frac{\mu_h^6}{L^3}\|\bs m_n\|^6\\
&\lesssim K\frac{\mu_h^2}{L}\left(\frac{\mu_h^4}{L^2}\frac{\mu_m^2}{N} + \frac{\mu_m^6}{N^3} + \frac{\mu_h^4}{L^2}\frac{\mu_m^6}{N^3}\right) 
\end{align}
For the second term, we can similarly write,
\begin{align}
\left\|\mathbb{E}\sum_{\ell,n}\bs c_{\ell,n}\bs c_{\ell,n}^* \|\bs f_{\ell}\|^2 |\langle \bs A_{\ell,n},\bs E_{\ell,n}\rangle |^2 \right\|& \lesssim \left\|\sum_{\ell,n}\left(\frac{\mu_h^4}{L^2}|\hat{h}[\ell]|^2 + |\hat{h}[\ell]|^2\|\bs m_n\|^4 + \frac{\mu_h^4}{L^2}|\hat{h}[\ell]|^2\|\bs m_n\|^4\right)\bs m_n\bs m_n^*\right\|\label{secondBoundLemmaEvara}\\
& +  \left\|\sum_{\ell,n}\left(|\hat{h}[\ell]|^2\frac{\mu_h^4}{L^2} + |\hat{h}[\ell]|^2\|\bs m_n\|^4 + \frac{\mu_h^4}{L^2}|\hat{h}[\ell]|^2\|\bs m_n\|^4\right)\bs \|\bs m_n\|^2\bs I\right\|\label{secondBoundLemmaEvarb}
\end{align}
Equations~\eqref{secondBoundLemmaEvara} and~\eqref{secondBoundLemmaEvarb} can further be reduced to 
\begin{align}
\left\|\mathbb{E}\sum_{\ell,n}\bs c_{\ell,n}\bs c_{\ell,n}^* \|\bs f_{\ell}\|^2 |\langle \bs A_{\ell,n},\bs E_{\ell,n}\rangle |^2 \right\|& \lesssim\sup_n \left(\frac{\mu_h^4}{L^2} +\|\bs m_n\|^4 + \frac{\mu_h^4}{L^2}\|\bs m_n\|^4\right)\|\bs m_n\|^2\\
& +  \left(\frac{\mu_h^4}{L^2} + \|\bs m_n\|^4 + \frac{\mu_h^4}{L^2}\|\bs m_n\|^4\right)\bs \|\bs m_n\|^2\\
&\lesssim \frac{\mu_h^4}{L^2} + \frac{\mu_m^4}{N^2}
\end{align}
Finally the Orlicz norm is bounded by noting that the variables are subexponentials and by developing $\|\mathcal{Z}_{\ell,n}\|_{\Psi_1}$. Following the same reasoning as in the proof of lemmas~\ref{lemmaTperp} and~\ref{lemmaT}, we can write,
\begin{align}
&\|\bs f_{\ell}\bs c_{\ell,n}\left(\hat{h}[\ell]^3 |\langle \bs c_{\ell,n},\bs m_n\rangle + |\hat{h}[\ell]|\|\bs m_n\|^2\langle \bs m_n,\bs c_{\ell,n}\rangle - \hat{h}[\ell]^3\|\bs m_n\|^2\langle \bs m_n,\bs c_{\ell,n}\rangle\right)\|\\
&\lesssim \|\|\bs c_{\ell,n}\|\|_{\psi_2} \||\langle \bs c_{\ell,n},\bs m_n\rangle|\|_{\psi_2}\left(\frac{\mu_h^3}{L^{3/2}}+ \frac{\mu_h}{\sqrt{L}}\frac{\mu_m^2}{N} \right) \\
&\lesssim \sqrt{K} \frac{\mu_m}{\sqrt{N}}\left(\frac{\mu_h^3}{L^{3/2}}+ \frac{\mu_h}{\sqrt{L}}\frac{\mu_m^2}{N} \right)
\end{align}
\begin{align}
\|\|\bs f_\ell\bs c_{\ell,n}\|\|_{\Psi_1}&\lesssim \sqrt{K}\left(\frac{\mu_h^3}{L^{3/2}}\frac{\mu_m}{\sqrt{N}} + \frac{\mu_h}{\sqrt{L}}\frac{\mu_m^{3}}{N^{3/2}}\right) \\
&\lesssim \frac{\mu_h\mu_m\sqrt{K}}{\sqrt{LN}}
\end{align}
Using the fact that $\|\bs c_{\ell,n}\|$ and $|\langle \bs c_{\ell,n}, \bs m_n\rangle |$ are sub-gaussians. We can now apply proposition~\ref{bernstein} from which the conclusion follows,

\subsection{\label{lemmaCovarianceSec}Proof of lemma~\ref{lemmaBoundCovariance}}

The lemma below first shows that the covariance~\eqref{expandedCovariance} can be made arbitrarily small.

\lemmaBoundCovariance*

\begin{proof}
The covariance in~\eqref{expandedCovariance} expands as 
\begin{align}\begin{split}
\mbox{Cov}(\mathcal{A}^*\mathcal{A},\left(\mathcal{P}_T - \mathcal{P}_T\mathcal{A}^*\mathcal{A}\mathcal{P}_T\right)\bs h\bs m^*) = &\mathcal{P}_T^\perp\mathbb{E}\sum_{\ell,n} \bs f_\ell\bs c_{\ell,n}^*\langle \bs f_{\ell}\bs c_{\ell,n}^*,\mathcal{P}_T(\bs f_\ell\bs c_{\ell,n}^*)\langle \mathcal{P}_T(\bs f_\ell\bs c_{\ell,n}^*) ,\bs h\bs m^*\rangle \rangle\\
&- \mathcal{P}_T^\perp\sum_{\ell,n} \mathbb{E}\bs f_\ell\bs c_{\ell,n}^*\langle \bs f_{\ell}\bs c_{\ell,n}^*,\mathbb{E}\mathcal{P}_T(\bs f_\ell\bs c_{\ell,n}^*)\langle \mathcal{P}_T(\bs f_\ell\bs c_{\ell,n}^*),\bs h\bs m^* \rangle \rangle
\end{split} \label{dependentSum} \end{align}
As explained above, the second term vanishes. Developing the first term in~\eqref{dependentSum}, we have,
\begin{align}
&\mbox{\upshape Cov}(\mathcal{A}^*\mathcal{A},\left( \mathcal{P}_T\mathcal{A}^*\mathcal{A}\mathcal{P}_T\right))\nonumber \\
&=\mathbb{E}\sum_{\ell=1}^L\sum_{n=1}^N\bs f_\ell \bs c_{\ell,n}^* \left(|\hat{h}[\ell]|^2 \|\bs c_{\ell,n}\|^2 + |\langle \bs c_{\ell,n},\bs m_n\rangle |^2 - |\hat{h}[\ell]|^2|\langle \bs c_{\ell,n},\bs m_n\rangle |^2\right)\hat{h}[\ell]\langle \bs c_{\ell,n},\bs m_n\rangle  \\
&= \mathbb{E}\sum_{\ell=1}^L\sum_{n=1}^N \bs f_\ell\bs c_{\ell,n}^* \left(|\hat{h}[\ell]|^2\hat{h}[\ell] \|\bs c_{\ell,n}\|^2 \langle \bs c_{\ell,n},\bs m_n\rangle + |\langle \bs c_{\ell,n},\bs m_n\rangle |^2 \langle \bs c_{\ell,n},\bs m_n\rangle \hat{h}[\ell] \right)\\
& -\mathbb{E}\sum_{\ell=1}^L\sum_{n=1}^N \bs f_\ell\bs c_{\ell,n}^*\left( |\hat{h}[\ell]|^2\hat{h}[\ell] |\langle \bs c_{\ell,n},\bs m_n \rangle|^2\langle \bs c_{\ell,n},\bs m_n\rangle  \right)\\
&= \sum_{\ell=1}^L\sum_{n=1}^N K|\hat{h}[\ell]|^2\hat{h}[\ell]\bs f_\ell\bs m_n^*  + \mathbb{E}\bs f_\ell\bs c_{\ell,n}^*|\langle \bs c_{\ell,n},\bs m_n\rangle |^2 \langle \bs c_{\ell,n},\bs m_n\rangle \hat{h}[\ell]\\
& -\mathbb{E}\sum_{\ell=1}^L\sum_{n=1}^N\bs f_\ell\bs c_{\ell,n}^* |\hat{h}[\ell]|^2\hat{h}[\ell] |\langle \bs c_{\ell,n},\bs m_n \rangle|^2\langle \bs c_{\ell,n},\bs m_n\rangle \label{convTempBound}
\end{align}
In the last line we use $\mathbb{E}(\bs f_\ell \bs c_{\ell,n}^*)\|\bs c_{\ell,n}\|^2 \langle \bs c_{\ell,n},\bs m_n\rangle = K \bs f_\ell \bs m_n^*$. When considering the remaining terms, $|\langle \bs c_{\ell,n},\bs m_n\rangle|^2\langle \bs c_{\ell,n},\bs m_n\rangle $,  those terms can be decomposed as 
\begin{equation}
\begin{aligned}
\label{STOrderb}
\displaystyle |\langle \bs c_{\ell,n},\bs m_n\rangle|^2\langle \bs c_{\ell,n},\bs m_n\rangle& \displaystyle = \sum_{k=1}^K (c_{\ell,n}[k])^3 m_n^3[k] + \sum_{k=1}^K 3c_{\ell,n}^2[k] m_n^2[k] \sum_{j=1}^K c_{\ell,n}[j]m_n[j]\\
& \displaystyle+ 6\sum_{i\neq j\neq k} c_{\ell,n}[i]c_{\ell,n}[j]c_{\ell,n}[k]m_{n}[i]m_{n}[j]m_{n}[k].
\end{aligned}
\end{equation}
When multiplying~\eqref{STOrderb} by $\bs f_\ell\bs c_{\ell,n}^*$, the expectation of the sum reduces to 
\begin{align}
\mathbb{E}\bs f_\ell \bs c_{\ell,n}^*|\langle \bs c_{\ell,n},\bs m_n\rangle|^2\langle \bs c_{\ell,n},\bs m_n\rangle& = \mathbb{E}\bs f_\ell \bs c_{\ell,n}\left(\sum_{k=1}^K (c_{\ell,n}[k])^3 m_n^3[k]+ \sum_{k=1}^K 3c_{\ell,n}^2[k] m_n^2[k] \sum_{j=1}^K c_{\ell,n}[j]m_n[j]\right)\\
& + \mathbb{E}\bs f_\ell \bs c_{\ell,n}^* 6\sum_{i\neq j\neq k} c_{\ell,n}[i]c_{\ell,n}[j]c_{\ell,n}[k]m_{n}[i]c_{n}[j]c_{n}[k]\\
& = \sum_{k=1}^K \bs f_\ell m_n[k]^3(\bs e_{n}\otimes \bs e_k)^* + 3 \|\bs m_n\|^2 \bs f_\ell \bs m_n^*\label{expandedExp}
\end{align}
Plugging~\eqref{expandedExp} into~\eqref{convTempBound}, gives 
\begin{align}
\mbox{\upshape Cov}(\mathcal{A}^*\mathcal{A},\left( \mathcal{P}_T\mathcal{A}^*\mathcal{A}\mathcal{P}_T\right))= &\sum_{\ell=1}^L\sum_{n=1}^N K|\hat{h}[\ell]|^2\hat{h}[\ell]\bs f_\ell\bs m_n^*\label{firstTermExpectedCov}\\
&  + \sum_{\ell=1}^L\sum_{n=1}^N\left(\sum_{k=1}^K \bs f_\ell m_n[k]^3 (\bs e_{\ell,n}\otimes \bs e_k)^* + \bs f_\ell \bs m_n^* \|\bs m_n\|^2\right) \left(\hat{h}[\ell] + |\hat{h}[\ell]|^2\hat{h}[\ell]\right) \label{secondTermExpectedCov} \\
\end{align}
Each of the two terms in~\eqref{firstTermExpectedCov} and~\eqref{secondTermExpectedCov} have operator norms respectively bounded as
\begin{align}
\left\|\sum_{\ell=1}^L\sum_{n=1}^N K|\hat{h}[\ell]|^2\hat{h}[\ell]\bs f_\ell\bs m_n^*\right\|&\lesssim K\frac{\mu_h^2}{L}\left\|\sum_{\ell=1}^L\bs f_\ell\hat{h}[\ell]\right\|\|\bs m\|\\
& \lesssim \frac{K\mu_h^2}{L}\label{boundCovariance1}
\end{align}
where we use $\sum_\ell \hat{h}[\ell]\bs f_\ell = \bs h$ as well as $\|\bs h\| = \|\bs m\|=1$. 
\begin{align}
\left\|\sum_{\ell=1}^L\sum_{n=1}^N\left(\sum_{k=1}^K \bs f_\ell m_n[k]^3 (\bs e_{n}\otimes \bs e_k)^* + \bs f_\ell \bs m_n^* \|\bs m_n\|^2\right)\left(\hat{h}[\ell] + |\hat{h}[\ell]|^2\hat{h}[\ell]\right)  \right\|\lesssim \frac{\mu_m^2}{N}\label{boundCovariance2}
\end{align}
where we use $\sum_{\ell =1}^L \hat{h}[\ell]\bs f_\ell = \bs h $ and $\left\|\sum_n \sum_k m_n[k]^3 (\bs e_{n}\otimes \bs e_k)^* \right\|\leq \sup_{n,k}|m_n[k]|^2\left\|\sum_n \sum_k m_n[k] (\bs e_{\ell,n}\otimes \bs e_k)^* \right\|$ as $\|\bs m_n\|\leq 1$. Finally,~\eqref{secondTermExpectedCov} can thus be bounded as
\begin{align}
\left\|\sum_{\ell,n}\left(\sum_k \bs f_\ell m_n[k]^3 (\bs e_{\ell,n}\otimes \bs e_k)^* + \sum_{\ell,n}\bs f_\ell \bs m_n^* \|\bs m_n\|^2\right) |\hat{h}[\ell]|^2\hat{h}[\ell]\right\|\lesssim \frac{\mu_h^2}{L} \frac{\mu_m^2}{N}\label{boundCovariance3}
\end{align}
Note that each of the bounds~\eqref{boundCovariance1},~\eqref{boundCovariance2} and~\eqref{boundCovariance3} can be made smaller than $\delta$ for any constant $\delta$ as soon as $L\gtrsim (1/\delta)\mu_h^2 K$ and $N\gtrsim \mu_m^2  (1/\delta)$. This concludes the proof of lemma~\ref{lemmaBoundCovariance}.
\end{proof}

\subsection{\label{prooflemmaFourthOrderDep}Proof of lemma~\ref{FourthOrderlemma}}

Before giving the proof of the lemma, we recall the statement for clarity.

\FourthOrderlemma*

\begin{proof}

Let us first develop the sum $\sum_{\ell=1}^L\sum_{n=1}^N \mathcal{A}_{\ell,n}^*\mathcal{A}_{\ell,n}(\mathcal{P}_T\mathcal{A}_{\ell,n}^*\mathcal{A}_{\ell,n}\mathcal{P}_T)\bs h\bs m^*$. We have 
\begin{align}
&\sum_{\ell=1}^L\sum_{n=1}^N \mathcal{A}_{\ell,n}^*\mathcal{A}_{\ell,n}(\mathcal{P}_T\mathcal{A}_{\ell,n}^*\mathcal{A}_{\ell,n}\mathcal{P}_T)\bs h\bs m^*\\
& = \sum_{\ell=1}^L\sum_{n=1}^N \bs f_\ell \bs c_{\ell,n}^* \left(\langle \bs f_\ell\bs c_{\ell,n}^*, \bs h\hat{h}[\ell]\bs c_{\ell,n}^* +  \bs f_\ell \langle \bs c_{\ell,n}, \bs m_n\rangle \bs m^* - \bs h\hat{h}[\ell] \langle \bs c_{\ell,n}, \bs m_n\rangle \bs m^*  \rangle \right)\hat{h}[\ell] \langle \bs c_{\ell,n},\bs m_n\rangle \\
& = \sum_{\ell=1}^L\sum_{n=1}^N \bs f_\ell \bs c_{\ell,n}^* \left(|\hat{h}[\ell]|^2 \|\bs c_{\ell,n}\|^2 +  |\langle \bs c_{\ell,n}, \bs m_n\rangle|^2  - |\hat{h}[\ell]|^2 |\langle \bs c_{\ell,n}, \bs m_n\rangle|^2 \right)\hat{h}[\ell] \langle \bs c_{\ell,n},\bs m_n\rangle \label{sumConcentrationCovariancetmp1}
\end{align}
To prove~\eqref{fourthOrderequation1} through proposition~\eqref{rosenthalTropp}, we need to bound the variance and the expected operator norm of each of the variables within the norm. We first compute the variance. Deriving the bound on the variance is the point of section~\ref{secVarianceFourthOrder} below. Section~\ref{secOpNormFourthOrder} then derives a bound on the expectation of any of the variable operator norm.

\subsubsection{\label{secVarianceFourthOrder}Bound on the variance}
Squaring the weights of each of the terms in~\eqref{sumConcentrationCovariancetmp1}, we get
\begin{align}
&\left||\hat{h}[\ell]|^2 \|\bs c_{\ell,n}\|^2 +  |\langle \bs c_{\ell,n}, \bs m_n\rangle|^2  - |\hat{h}[\ell]|^2 |\langle \bs c_{\ell,n}, \bs m_n\rangle|^2 \right|^2 |\hat{h}[\ell]|^2 |\langle \bs c_{\ell,n},\bs m_n\rangle|^2 \\
& \lesssim |\hat{h}[\ell]|^6 \|\bs c_{\ell,n}\|^4 |\langle \bs c_{\ell,n},\bs m_n\rangle|^2 + |\hat{h}[\ell]|^2 |\langle \bs c_{\ell,n}, \bs m_n\rangle|^6  + |\hat{h}[\ell]|^6 |\langle \bs c_{\ell,n}, \bs m_n\rangle|^6.\label{scalarWeightCovarianceConc}
\end{align}
Let us, once again, use $\mathcal{Z}_{\ell,n}$ to denote each of the variables 
\begin{align}
\mathcal{Z}_{\ell,n} &\equiv \mathcal{A}_{\ell,n}^*\mathcal{A}_{\ell,n}(\mathcal{P}_T\mathcal{A}_{\ell,n}^*\mathcal{A}_{\ell,n}\mathcal{P}_T)\bs h\bs m^* - \mathbb{E}\mathcal{A}_{\ell,n}^*\mathcal{A}_{\ell,n}(\mathcal{P}_T\mathcal{A}_{\ell,n}^*\mathcal{A}_{\ell,n}\mathcal{P}_T)\bs h\bs m^*.
\end{align}
Recall that the variance bound is defined as
\begin{align}
\sigma = \max \left\{\mathbb{E}\sum_{\ell=1}^L\sum_{n=1}^N\mathcal{Z}_{\ell,n}\mathcal{Z}_{\ell,n}^*, \mathbb{E}\sum_{\ell=1}^L\sum_{n=1}^N\mathcal{Z}^*_{\ell,n}\mathcal{Z}_{\ell,n}\right\}\label{varianceBoundreminder}
\end{align}
Using~\eqref{scalarWeightCovarianceConc}, the first term in~\eqref{varianceBoundreminder} can be expressed as 
\begin{align}
&\mathbb{E}\sum_{\ell=1}^L\sum_{n=1}^N\left(\mathcal{A}_{\ell,n}^*\mathcal{A}_{\ell,n}(\mathcal{P}_T\mathcal{A}_{\ell,n}^*\mathcal{A}_{\ell,n}\mathcal{P}_T)\bs h\bs m^*\right)\left(\mathcal{A}_{\ell,n}^*\mathcal{A}_{\ell,n}(\mathcal{P}_T\mathcal{A}_{\ell,n}^*\mathcal{A}_{\ell,n}\mathcal{P}_T)\bs h\bs m^*\right)^*\nonumber \\
& = \mathbb{E}\sum_{\ell=1}^L\sum_{n=1}^N \bs f_{\ell}\bs f_\ell^* \|\bs c_{\ell,n}\|^2 \left(|\hat{h}[\ell]|^6 \|\bs c_{\ell,n}\|^4 |\langle \bs c_{\ell,n},\bs m_n\rangle|^2 + |\hat{h}[\ell]|^2 |\langle \bs c_{\ell,n}, \bs m_n\rangle|^6  + |\hat{h}[\ell]|^6 |\langle \bs c_{\ell,n}, \bs m_n\rangle|^6\right)\\
& = \mathbb{E}\sum_{\ell=1}^L\sum_{n=1}^N \bs f_{\ell}\bs f_\ell^* \|\bs c_{\ell,n}\|^6 |\hat{h}[\ell]|^6 |\langle \bs c_{\ell,n},\bs m_n\rangle|^2 + \mathbb{E}\sum_{\ell=1}^L\sum_{n=1}^N \bs f_{\ell}\bs f_\ell^* \|\bs c_{\ell,n}\|^2 |\langle \bs c_{\ell,n},\bs m_n\rangle|^6 |\hat{h}[\ell]|^2\label{firstVarianceBoundCov0}\\
& + \mathbb{E}\sum_{\ell=1}^L\sum_{n=1}^N\bs f_{\ell}\bs f_\ell^* \|\bs c_{\ell,n}\|^2  |\hat{h}[\ell]|^6 |\langle \bs c_{\ell,n}, \bs m_n\rangle|^6\label{firstVarianceBoundCov1}
\end{align}
We now bound the operator norm of each of the terms in~\eqref{firstVarianceBoundCov0} and~\eqref{firstVarianceBoundCov1}. For the first term, first note that 
\begin{align}
\|\bs c_{\ell,n}\|^6 &= \left(\sum_{k=1}^K |c_{\ell,n}[k]|^2\right)^3\\
& = \sum_{k=1}^K |c_{\ell,n}[k]|^6 + 3\sum_{k\neq k'} |c_{\ell,n}[k]|^4 |c_{\ell,n}[k']|^2 + 6\sum_{k_1\neq k_2\neq k_3} |c_{\ell,n}[k_1]|^2|c_{\ell,n}[k_2]|^2|c_{\ell,n}[k_3]|^2 \label{expandedNormFour}
\end{align}
as well as 
\begin{align}
|\langle \bs c_{\ell,n},\bs m_n\rangle |^2 = \sum_{k=1}^K |c_{\ell,n}[k]|^2 m_n^2[k]+ 2\Re e \left\{ \sum_{k,k'}c_{\ell,n}[k]c^*_{\ell,n}[k'] m_n[k]m_n[k']\right\}.\label{expandedInnerProdCov}
\end{align}
When multiplying~\eqref{expandedNormFour} by~\eqref{expandedInnerProdCov} and taking the expectation, we thus get 
\begin{align}
\left\|\mathbb{E}\sum_{\ell=1}^L\sum_{n=1}^N \bs f_\ell\bs f_\ell^* \|\bs c_{\ell,n}\|^6 |\langle \bs c_{\ell,n},\bs m_n\rangle |^2 |\hat{h}[\ell]|^6\right\| \lesssim  \left\|\sum_{\ell=1}^L\sum_{n=1}^N \bs f_\ell\bs f_\ell^* \left(K^3\|\bs m_n\|^2\right) \frac{\mu_h^6}{L^3} \right\| \lesssim \frac{K^3\mu_h^6}{L^3}\label{boundFirstTerm4OrderVar1}
\end{align}
Noting that for $X\sim\mathcal{N}(0,\sigma)$, $\mathbb{E} \left\{X^n\right\} = \sigma (n-1)!! = c$, where $c$ is a constant. For the second term, we can expand the inner product $|\langle \bs c_{\ell,n},\bs m_n\rangle|^6$ in a similar way, 
\begin{align}
&\mathbb{E} \|\bs f_{\ell}\|^2\|\bs c_{\ell,n}\|^2 |\hat{h}[\ell]|^4 |\langle \bs c_{\ell,n},\bs m_n\rangle|^6 |\hat{h}[\ell]|^2\label{firstTermBound00}  \\
&\leq \mathbb{E} \|\bs c_{\ell,n}\|^2 \frac{\mu_h^6}{L^3} |\langle \bs c_{\ell,n},\bs m_n\rangle|^6\\
&\lesssim \frac{\mu_h^6}{L^3}\left(\sum_{k=1}^K |c_{\ell,n}[k]|^2\right) |\langle \bs c_{\ell,n},\bs m_n\rangle |^6\\
&\lesssim  \frac{\mu_h^6}{L^3}\left(\sum_{k=1}^K |c_{\ell,n}[k]|^2\right) \left(\sum_{i,j}c_{\ell,n}[i]c_{\ell,n}[j]m_n[i]m_n[j] + \sum_{i=1}^K|c_{\ell,n}[i]|^2m_n[i]^2\right)^3
\end{align}
Developing the right factor in the expression above gives
\begin{align}
&\mathbb{E} \|\bs f_{\ell}\|^2\|\bs c_{\ell,n}\|^2 |\hat{h}[\ell]|^4 |\langle \bs c_{\ell,n},\bs m_n\rangle|^6 |\hat{h}[\ell]|^2\label{firstTermOpN4} \\
&\lesssim  \mathbb{E}\frac{\mu_h^6}{L^3}\left(\sum_{k=1}^K |c_{\ell,n}[k]|^2\right) \left( \sum_{i=1}^K|c_{\ell,n}[i]|^2m_n[i]^2\right)^3\label{secondTermOpN4}\\
&+ \frac{\mu_h^6}{L^3}\mathbb{E} \left(\sum_{k=1}^K |c_{\ell,n}[k]|^2\right)\left(\sum_{i,j}^Kc_{\ell,n}[i]c_{\ell,n}[j]m_n[i]m_n[j]\right)^3\label{thirdTermOpN4}\\
& +\frac{\mu_h^6}{L^3}\mathbb{E} \left(\sum_{k=1}^K |c_{\ell,n}[k]|^2\right)\left(\sum_{i,j}^Kc_{\ell,n}[i]c_{\ell,n}[j]m_n[i]m_n[j]\right)^2 \left(\sum_{i=1}^K|c_{\ell,n}[i]|^2m_n[i]^2\right)\label{fourthTermOpN4}\\
& +\frac{\mu_h^6}{L^3}\mathbb{E} \left(\sum_{k=1}^K |c_{\ell,n}[k]|^2\right)\left(\sum_{i,j}^Kc_{\ell,n}[i]c_{\ell,n}[j]m_n[i]m_n[j]\right) \left(\sum_{i=1}^K|c_{\ell,n}[i]|^2m_n[i]^2\right)^2\label{fifthTermOpN5}
\end{align}
The first term~\eqref{secondTermOpN4} above can read as 
\begin{align}
&\mathbb{E}\frac{\mu_h^6}{L^3}\left(\sum_{k=1}^K |c_{\ell,n}[k]|^2\right) \left( \sum_{i=1}^K|c_{\ell,n}[i]|^2m_n[i]^2\right)^3\lesssim\\
&\mathbb{E}\frac{\mu_h^6}{L^3}\left(\sum_{k=1}^K |c_{\ell,n}[k]|^2\right) \left( \sum_{i=1}^K|c_{\ell,n}[i]|^6m_n[i]^6 + \sum_{i,j}^K|c_{\ell,n}[i]|^4c_{\ell,n}[j]|^2 m_n[i]^4m_n[j]^2\right)\\
& + \mathbb{E}\frac{\mu_h^6}{L^3}\left(\sum_{k=1}^K |c_{\ell,n}[k]|^2\right) \left(6\sum_{i_1,i_2,i_3} |c_{\ell,n}[i_1]|^2 m_n[i_1]^2|c_{\ell,n}[i_2]|^2 m_n[i_2]^2|c_{\ell,n}[i_3]|^2 m_n[i_3]^2\right)\\
&\lesssim K \frac{\mu_h^6}{L^3} \left(\sum_{i=1}^K m_n[i]^6 +  \sum_{i,j}^K m_n[i]^4m_n[j]^2 +\sum_{i_1,i_2,i_3}  m_n[i_1]^2 m_n[i_2]^2 m_n[i_3]^2 \right)\\
&\lesssim K \frac{\mu_h^6}{L^3} \left(\frac{\mu_m^6}{N^3} + \frac{\mu_m^4}{N^2} \frac{\mu_m^2}{N} + \left(\frac{\mu_m^2}{N}\right)^3 \right)\lesssim \frac{\mu_m^6}{N^3}K \frac{\mu_h^6}{L^3}
\end{align}
For~\eqref{thirdTermOpN4}, we get,
\begin{align}
&\frac{\mu_h^6}{L^3}\mathbb{E} \left(\sum_{k=1}^K |c_{\ell,n}[k]|^2\right)\left(\sum_{i,j}^Kc_{\ell,n}[i]c_{\ell,n}[j]m_n[i]m_n[j]\right)^3\lesssim \\
&\mathbb{E}\frac{\mu_h^6}{L^3}\left(\sum_{k=1}^K |c_{\ell,n}[k]|^2\right) \left(\sum_{i,j}^Kc_{\ell,n}^3[i]c_{\ell,n}^3[j]m_n^3[i]m_n^3[j]\right)\label{line1tmpa}\\
& + \mathbb{E}\frac{\mu_h^6}{L^3}\left(\sum_{k=1}^K |c_{\ell,n}[k]|^2\right) \left(6\sum_{i_1,j_1,i_2,j_2,i_3,j_3} \prod_{k=1}^3 c_{\ell,n}[i_k]c_{\ell,n}[j_k]m_n[i_k]m_n[j_k] \right)\label{line1tmpb} \\
& + \mathbb{E}\frac{\mu_h^6}{L^3}\left(\sum_{k=1}^K |c_{\ell,n}[k]|^2\right) \left(\sum_{i_1,j_1,i_2,j_2}^K |c_{\ell,n}[i_1]|^2|c_{\ell,n}[j_1]|^2m_n[i_1]^2m_n[j_2]^2 c_{\ell,n}[i_2]c_{\ell,n}[j_2]m_n[i_2]m_n[j_2]\right)\label{line1tmpc}
\end{align}
The first and last terms always vanish and~\eqref{line1tmpb} contributes to the expectation only for chains of the form $i_1,i_1,i_2,i_2,i_3,i_3$ for some $i_1,i_2,i_3$. We can thus write
\begin{align}
&\frac{\mu_h^6}{L^3}\mathbb{E} \left(\sum_{k=1}^K |c_{\ell,n}[k]|^2\right)\left(\sum_{i,j}^Kc_{\ell,n}[i]c_{\ell,n}[j]m_n[i]m_n[j]\right)^3\\
&\lesssim  \frac{\mu_h^6}{L^3} \left(\sum_{k=1}^K |c_{\ell,n}[k]|^2\right)\left(\sum_{i_1,i_2,i_3} m_n[i_1]^2m_n[i_2]^2m_n[i_3]^2\right)\lesssim  K\frac{\mu_h^6}{L^3} \frac{\mu_m^6}{N^3}
\end{align}
For the last two terms~\eqref{fourthTermOpN4} and~\eqref{fifthTermOpN5}, we can respectively write,
\begin{align}
&\frac{\mu_h^6}{L^3}\mathbb{E} \left(\sum_{k=1}^K |c_{\ell,n}[k]|^2\right)\left(\sum_{i,j}^Kc_{\ell,n}[i]c_{\ell,n}[j]m_n[i]m_n[j]\right)^2 \left(\sum_{i=1}^K|c_{\ell,n}[i]|^2m_n[i]^2\right)\\
&\lesssim \frac{\mu_h^6}{L^3}\mathbb{E} \left(\sum_{k=1}^K |c_{\ell,n}[k]|^2\right)\left(\sum_{i,j}^K |c_{\ell,n}[i]|^2|c_{\ell,n}[j]|^2m^2_n[i]m^2_n[j]\right) \left(\sum_{i=1}^K|c_{\ell,n}[i]|^2m_n[i]^2\right) \\
&+ 2\frac{\mu_h^6}{L^3}\mathbb{E} \left(\sum_{k=1}^K |c_{\ell,n}[k]|^2\right)\left( \sum_{i_1,j_1,i_2,j_2}^K \prod_{k=1}^2 c_{\ell,n}[i_k]c_{\ell,n}[j_k] m_n[i_k]m_n[j_k]\right) \left(\sum_{i=1}^K|c_{\ell,n}[i]|^2m_n[i]^2\right)\label{bound1FirstofLastTwoTerms}\\
&\lesssim K\frac{\mu_h^6}{L^3}\frac{\mu_m^2}{N}\left(\|\bs m_n\|^4\right)\lesssim K\frac{\mu_h^6}{L^3}\frac{\mu_m^6}{N^3}
\end{align}
as well as 
\begin{align}
&\frac{\mu_h^6}{L^3}\mathbb{E} \left(\sum_{k=1}^K |c_{\ell,n}[k]|^2\right)\left(\sum_{i,j}^Kc_{\ell,n}[i]c_{\ell,n}[j]m_n[i]m_n[j]\right) \left(\sum_{i=1}^K|c_{\ell,n}[i]|^2m_n[i]^2\right)^2\\
&\lesssim \frac{\mu_h^6}{L^3}\mathbb{E} \left(\sum_{k=1}^K |c_{\ell,n}[k]|^2\right)\left(\sum_{i,j}^Kc_{\ell,n}[i]c_{\ell,n}[j]m_n[i]m_n[j]\right) \left(\sum_{i=1}^K|c_{\ell,n}[i]|^4m_n[i]^4\right)\\
&+ \frac{\mu_h^6}{L^3}\mathbb{E} \left(\sum_{k=1}^K |c_{\ell,n}[k]|^2\right)\left(\sum_{i,j}^Kc_{\ell,n}[i]c_{\ell,n}[j]m_n[i]m_n[j]\right) \left(\sum_{i,j}^K|c_{\ell,n}[i]|^2m_n[i]^2|c_{\ell,n}[j]|^2m_n[j]^2\right)\label{lastBoundClnSix}
\end{align}
Both of these lines vanish when taking the expectation because of the second factor. The total bound on~\eqref{firstTermBound00} is thus given by 
\begin{align}
\mathbb{E} \|\bs c_{\ell,n}\|^2 |\hat{h}[\ell]|^6 |\langle \bs c_{\ell,n},\bs m_n\rangle|^6 \lesssim K\frac{\mu_h^6}{L^3}\frac{\mu_m^6}{N^3}\label{lasBoundOpNorm001}
\end{align}
For the second term in~\eqref{boundOpnormCase1}, one can derive a similar bound by dividing~\eqref{lasBoundOpNorm001} by $\mu_h^4/L^2$ as,
\begin{align}
&\mathbb{E}\|\bs c_{\ell,n}\|^2 |\hat{h}[\ell]|^2 |\langle \bs c_{\ell,n},\bs m_n\rangle |^6 \\
&\lesssim \frac{\mu_h^2}{L} K\frac{\mu_m^6}{N^3}\label{bboundFOrder42}
\end{align}
All three bounds~\eqref{boundFirstTerm4OrderVar1},~\eqref{lasBoundOpNorm001} and~\eqref{bboundFOrder42} can be made sufficiently small as soon as $L\gtrsim K\mu_h^2$ and $N\gtrsim \mu_m^2$. We now bound the second term in~\eqref{varianceBoundreminder}. For this term we have to bound the sum,
 \begin{align}
&\sum_{\ell=1}^L\sum_{n=1}^N\mathcal{Z}^*_{\ell,n}\mathcal{Z}_{\ell,n}\nonumber \\
&= \sum_{\ell=1}^L\sum_{n=1}^N \|\bs f_\ell\|^2 \bs c_{\ell,n}\bs c_{\ell,n}^*\left(|\hat{h}[\ell]|^6 \|\bs c_{\ell,n}\|^4 |\langle \bs c_{\ell,n},\bs m_n\rangle|^2 + |\hat{h}[\ell]|^2 |\langle \bs c_{\ell,n}, \bs m_n\rangle|^6  + |\hat{h}[\ell]|^6 |\langle \bs c_{\ell,n}, \bs m_n\rangle|^6\right)\label{sumVariance2a}
\end{align}
The sum~\eqref{sumVariance2a} is a block diagonal matrix. When taking the operator norm, one can thus only take the supremum over the $n$ indices as  
\begin{align}
&\left\|\mathbb{E}\sum_{\ell=1}^L\sum_{n=1}^N\mathcal{Z}^*_{\ell,n}\mathcal{Z}_{\ell,n}\right\|\nonumber \\
&\lesssim \sup_n \left\|\mathbb{E}\sum_{\ell=1}^L \|\bs f_\ell\|^2 \bs c_{\ell,n}\bs c_{\ell,n}^*\left(|\hat{h}[\ell]|^6 \|\bs c_{\ell,n}\|^4 |\langle \bs c_{\ell,n},\bs m_n\rangle|^2 + |\hat{h}[\ell]|^2 |\langle \bs c_{\ell,n}, \bs m_n\rangle|^6  + |\hat{h}[\ell]|^6 |\langle \bs c_{\ell,n}, \bs m_n\rangle|^6\right)\right\|\label{sumVariance2b}
\end{align}
Now, as before, we split the sum into the diagonal contribution and the off-diagonal terms.  For the diagonal, each of the three terms in~\eqref{sumVariance2b} read as 
\begin{align}
&\sup_n \mathbb{E}\sum_{\ell=1}^L \|\bs f_\ell\|^2 \diag(|\bs c_{\ell,n}[k]|^2)\left(|\hat{h}[\ell]|^6 \|\bs c_{\ell,n}\|^4 |\langle \bs c_{\ell,n},\bs m_n\rangle|^2\right) \label{diagonalContribSecondVariance1}\\
&\sup_n \mathbb{E}\sum_{\ell=1}^L \|\bs f_\ell\|^2 \diag(|\bs c_{\ell,n}[k]|^2) \left(|\hat{h}[\ell]|^2 |\langle \bs c_{\ell,n}, \bs m_n\rangle|^6\right)\label{diagonalContribSecondVariance2} \\
& \sup_n \mathbb{E}\sum_{\ell=1}^L \|\bs f_\ell\|^2 \diag(|\bs c_{\ell,n}[k]|^2) \left(|\hat{h}[\ell]|^6 |\langle \bs c_{\ell,n}, \bs m_n\rangle|^6\right)\label{diagonalContribSecondVariance3}
\end{align}
In~\eqref{diagonalContribSecondVariance1},~\eqref{diagonalContribSecondVariance2} and~\eqref{diagonalContribSecondVariance3}, we use $\diag(|\bs c_{\ell,n}[k]|^2)$ to denote the diagonal matrix built from the diagonal of $\bs c_{\ell,n}\bs c_{\ell,n}^*$. Following the same reasoning as above, we get 
\begin{align}\begin{split}
\left\|\mathbb{E}\sup_n \sum_{\ell=1}^L \|\bs f_\ell\|^2 \diag(|\bs c_{\ell,n}[k]|^2)\left(|\hat{h}[\ell]|^6 \|\bs c_{\ell,n}\|^4 |\langle \bs c_{\ell,n},\bs m_n\rangle|^2\right) \right\|
&\lesssim \frac{K^2\mu_h^4}{L^2}\|\bs m_n\|^2\\
&\lesssim \frac{K^2 \mu_h^4}{L^2}\frac{\mu^2_m}{N}\end{split}\label{diagonal1}\\
\begin{split}
\left\|\mathbb{E}\sup_n \sum_{\ell=1}^L \|\bs f_\ell\|^2 \diag(|\bs c_{\ell,n}[k]|^2) \left(|\hat{h}[\ell]|^2 |\langle \bs c_{\ell,n}, \bs m_n\rangle|^6\right)\right\|
&\lesssim \frac{\mu_m^2}{N}\end{split}\label{diagonal2}\\
\begin{split}
\left\|\mathbb{E}\sup_n \sum_{\ell=1}^L \|\bs f_\ell\|^2 \diag(|\bs c_{\ell,n}[k]|^2) \left(|\hat{h}[\ell]|^6 |\langle \bs c_{\ell,n}, \bs m_n\rangle|^6\right)\right\|
&\lesssim \frac{\mu^4_h}{L^2}\frac{\mu_m^2}{N}\end{split}\label{diagonal3}
\end{align}
In~\eqref{diagonal1} we use $\sum_{\ell} |\hat{h}[\ell]|^2\|\bs f_{\ell}\|^2 = \|\bs h\|^2=1$. For~\eqref{diagonal2} and~\eqref{diagonal3} we use the same reasoning as the one used to derive~\eqref{lasBoundOpNorm001} except that we now have
\begin{align}
&\left\|\mathbb{E}\sup_n \sum_{\ell=1}^L \|\bs f_\ell\|^2 \diag(|\bs c_{\ell,n}[k]|^2) \left(|\hat{h}[\ell]|^2 |\langle \bs c_{\ell,n}, \bs m_n\rangle|^6\right)\right\|\\
&\leq \left|\sum_{\ell=1}^L \|\bs f_\ell\|^2|\hat{h}[\ell]|^2\right| \sup_{\ell,k}\left| \mathbb{E}\sup_n \diag(|\bs c_{\ell,n}[k]|^2) \left( |\langle \bs c_{\ell,n}, \bs m_n\rangle|^6\right)\right|
\end{align}
Since the inner product $|\langle \bs c_{\ell,n}, \bs m_n\rangle|^6$ is now multiplied with a single element $|\bs c_{\ell,n}[k]|^2$ from the diagonal $\diag(c_{\ell,n}[k]^2)$, the bounds in~\eqref{lasBoundOpNorm001},~\eqref{bboundFOrder42} also get divided by $K$. For the off-diagonal contributions in~\eqref{sumVariance2b}, only the terms exhibiting an even power in the $c_{\ell,n}[k]$ monomials corresponding to the off-diagonal entry will remain when taking the expectation. Focusing on those off-diagonal entries, the three contributions of~\eqref{diagonalContribSecondVariance1},~\eqref{diagonalContribSecondVariance2}~\eqref{diagonalContribSecondVariance1} can now read as, 
\begin{align}
&\mathbb{E}\sum_{n=1}^N \sum_{\ell=1}^L \|\bs f_\ell\|^2 (\bs c_{\ell,n}\bs c_{\ell,n}^* - \diag(|\bs c_{\ell,n}[k]|^2))\left(|\hat{h}[\ell]|^6 \|\bs c_{\ell,n}\|^4 |\langle \bs c_{\ell,n},\bs m_n\rangle|^2\right) \label{diagonalContribSecondVarianceOffDiag1c}\\
&\mathbb{E}\sum_{n=1}^N \sum_{\ell=1}^L \|\bs f_\ell\|^2 (\bs c_{\ell,n}\bs c_{\ell,n}^* - \diag(|\bs c_{\ell,n}[k]|^2)) \left(|\hat{h}[\ell]|^2 |\langle \bs c_{\ell,n}, \bs m_n\rangle|^6\right)\label{diagonalContribSecondVarianceOffDiag2c} \\
&\mathbb{E} \sum_{n=1}^N \sum_{\ell=1}^L \|\bs f_\ell\|^2 (\bs c_{\ell,n}\bs c_{\ell,n}^* - \diag(|\bs c_{\ell,n}[k]|^2)) \left(|\hat{h}[\ell]|^6 |\langle \bs c_{\ell,n}, \bs m_n\rangle|^6\right)\label{diagonalContribSecondVarianceOffDiag3c}
\end{align}
In~\eqref{diagonalContribSecondVarianceOffDiag1c}, all the even powers generated from the weight $\left(|\hat{h}[\ell]|^6 \|\bs c_{\ell,n}\|^4 |\langle \bs c_{\ell,n},\bs m_n\rangle|^2\right) $ will vanish. The only remaining term can thus be bounded as  
\begin{align}
&\left\|\sup_n \sum_{\ell=1}^L \|\bs f_\ell\|^2 \bs m_n\bs m_n^*|\hat{h}[\ell]|^6 \right\|\lesssim \frac{\mu_h^4 K^2}{L^2}\frac{\mu_m^2}{N} \label{diagonalContribSecondVarianceOffDiag1d}
\end{align}
For the last two terms, we again use~\eqref{firstTermBound00} to~\eqref{lastBoundClnSix}. Retaining only~\eqref{secondTermOpN4} to~\eqref{fifthTermOpN5}, we can thus write 
\begin{align}
&\mathbb{E} \sum_{n=1}^N \sum_{\ell=1}^L \|\bs f_\ell\|^2 (\bs c_{\ell,n}\bs c_{\ell,n}^* - \diag(|\bs c_{\ell,n}[k]|^2))  |\langle \bs c_{\ell,n}, \bs m_n\rangle|^6\label{offDiagonalVar1}\\
&=   \sum_{n=1}^N\sum_{\ell=1}^L \|\bs f_\ell\|^2 \sum_{k,k'} m_n^3[k]m_n^3[k'](\bs e_{n}\otimes \bs e_k) (\bs e_{n}\otimes \bs e_{k'})^*\label{offDiagonalVar2}\\
& + \sum_{n=1}^N\sum_{\ell=1}^L \|\bs f_\ell\|^2 \left(\bs m_n\bs m_n^*\right)\sum_{k_1, k_2} m_n[k_1]^2m_n[k_2]^2\label{offDiagonalVar3}\\
& +\sum_{n=1}^N \sum_{\ell=1}^L  \|\bs f_\ell\|^2 \left(\|\bs m_n\|^4 \bs m_n\bs m_n^*\right)\label{offDiagonalExp4}\\
&+ \sum_{n=1}^N \sum_{\ell=1}^L \|\bs f_\ell\|^2\left(\sum_{k=1}^K |m_n[k]|^2\right)\left(\bs m_n\bs m_n^*\right)\|\bs m_n\|^2\label{offDiagonalVar6}\\
&+ \sum_{n=1}^N \sum_{\ell=1}^L \|\bs f_\ell\|^2\left(\sum_{k=1}^K |m_n[k]|^2\right)^2\bs m_n\bs m_n^*\label{offDiagonalVar5} 
\end{align}
For~\eqref{offDiagonalVar3}, whenever all three factors are different, the expectation reduces to zero. When at least two of them are the same, such as in $(c_ic_j^*)(c_jc_i^*)(c_kc_\ell)$ or in chains of the form $(c_ic_j^*)(c_jc_k^*)(c_kc_\ell^*)$. Those chains will contribute to the corresponding entries of $\bs c_{\ell,n}\bs c_{\ell,n}^*$. In other words, we have the freedom to choose the first four indices arbitrarily and the last two are fixed. As a consequence, the only part of~\eqref{line1tmpb} that will contribute to the variance bound when multiplying by $(\bs c_{\ell,n}\bs c_{\ell,n}^* - \diag(|\bs c_{\ell,n}[k]|^2))$ and taking the expectation is of the form
\begin{align}
\sum_{i_1,i_2,i_3,i_4} c_{\ell,n}[i_1]|c_{\ell,n}[i_2]|^2|c_{\ell,n}[i_2]|^2|c_{\ell,n}[i_3]|^2c_{\ell,n}[i_4] m_{n}[i_1]|m_{n}[i_2]|^2|m_n[i_2]|^2|m_{n}[i_3]|^2m_{n}[i_4]. 
\end{align}
When multiplying this sum by $(\bs c_{\ell,n}\bs c_{\ell,n}^* - \diag(|\bs c_{\ell,n}[k]|^2))$, each of the $(i_1,i_4)$ terms contribute to one off-diagonal entry in the matrix, and we thus get 
\begin{align}
\mathbb{E}(\bs c_{\ell,n}\bs c_{\ell,n}^* - \diag(|\bs c_{\ell,n}[k]|^2))\sum_{\substack{k_1\neq k_1'\\k_2\neq k'_2\\k_3\neq k_3'} } \prod_{i=1}^3\left(c_{\ell,n}[k_i]c_{\ell,n}[k_i'] m_n[k_i] m_n[k_i']\right)\\
\asymp \sum_{k_2,k_3} \bs m_n\bs m_n^* |\bs m_n[k_2]|^2|\bs  m_n[k_3]|^2.
\end{align}
From this, we get the following bound on the operator norm of~\eqref{offDiagonalVar3},
\begin{align}
\left\|\sum_{\ell=1}^L\sum_{n=1}^N\|\bs f_\ell\|^2 |\hat{h}[\ell]|^2\mathbb{E}(\bs c_{\ell,n}\bs c_{\ell,n}^* - \diag(|\bs c_{\ell,n}[k]|^2))\sum_{\substack{k_1\neq k_1', k_2\neq k'_2\\k_3\neq k'_3} } \prod_{i=1}^3\left(c_{\ell,n}[k_i]c_{\ell,n}[k_i'] m_n[k_i] m_n[k_i']\right)\right\|\lesssim  \frac{\mu_m^6}{N^3}\label{boundTemp01}
\end{align}
Similarly, For~\eqref{offDiagonalVar6}, simply note that the only possibility for~\eqref{bound1FirstofLastTwoTerms} to contribute to the off-diagonal entries, when taking the expectation, is for the second factor in this expression to exhibit a chain of the form $c_{\ell,n}[i]c_{\ell,n}[j] c_{\ell,n}^*[j] c_{\ell,n}^*[k] = c_{\ell,n}[i]|c_{\ell,n}[j]|^2c_{\ell,n}[k]^*$. In this case, the whole second factor contributes to each off-diagonal entry and we can write,
\begin{align}
&\mathbb{E} (\bs c_{\ell,n}\bs c_{\ell,n}^* - \diag(|\bs c_{\ell,n}[k]|^2))\|\bs m_n\|^2\left(\sum_{k_1=1}^K |c_{\ell,n}[k_1]|^2 m_n[k_1]^2 \sum_{k_2,k_3}\prod_{j=2}^3c_{\ell,n}[k_j] m_n[k_j]\right)\\
& = \|\bs m_n\|^2 \left(\sum_{k} |m_n[k]|^2\right) \bs m_n\bs m_n^*
\end{align}
which gives 
\begin{align}
\left\|\sum_{\ell,n} |\hat{h}[\ell]|^2\|\bs f_\ell\|^2 \|\bs m_n\|^4 \bs m_n\bs m_n^*\right\| \lesssim \frac{\mu_m^6}{N^3}\label{boundTemp02}
\end{align}
%
%
%
%
%
%
%
Multiplying by $|\hat{h}[\ell]|^2$ and taking the operator norm for the terms in~\eqref{offDiagonalVar1} to~\eqref{offDiagonalVar5}, gives 
\begin{align}
\left\|\mathbb{E}\sum_{n=1}^N \sum_{\ell=1}^L \|\bs f_\ell\|^2 (\bs c_{\ell,n}\bs c_{\ell,n}^* - \diag(|\bs c_{\ell,n}[k]|^2)) \left(|\hat{h}[\ell]|^2 |\langle \bs c_{\ell,n}, \bs m_n\rangle|^6\right)\right\|\lesssim  \frac{\mu_m^6}{N^3}.\label{diagonalContribSecondVarianceOffDiag1dd}
\end{align}
This first contribution can thus be made sufficiently small as soon as $N\gtrsim \mu_m^2$. The same bound applies to~\eqref{diagonalContribSecondVarianceOffDiag3c} up to multiplication by $|\hat{h}[\ell]|^4$ from which we immediately get,
\begin{align}
\left\|\mathbb{E} \sum_{n=1}^N \sum_{\ell=1}^L \|\bs f_\ell\|^2 (\bs c_{\ell,n}\bs c_{\ell,n}^* - \diag(|\bs c_{\ell,n}[k]|^2)) \left(|\hat{h}[\ell]|^6 |\langle \bs c_{\ell,n}, \bs m_n\rangle|^6\right)\right\|\lesssim \frac{\mu_m^6}{N^3}\frac{\mu_h^4}{L^2}.\label{diagonalContribSecondVarianceOffDiag1ddd}
\end{align}
Grouping~\eqref{diagonalContribSecondVarianceOffDiag1ddd} together with~\eqref{diagonalContribSecondVarianceOffDiag1dd} and~\eqref{diagonalContribSecondVarianceOffDiag1d} and adding the diagonal contributions~\eqref{diagonal1} to~\eqref{diagonal3}, we get 
\begin{align}
\left\|\sum_{\ell=1}^L\sum_{n=1}^N\mathcal{Z}^*_{\ell,n}\mathcal{Z}_{\ell,n}\right\|&\lesssim \left( \frac{\mu_m^6}{N^3}\frac{\mu_h^4}{L^2} + \frac{\mu_m^6}{N^3} + \frac{\mu_h^4 K^2}{L^2}\frac{\mu_m^2}{N}\right) +\left(\frac{K^2 \mu_h^4}{L^2}\frac{\mu^2_m}{N}+ \frac{\mu_m^2}{N} + \frac{\mu^4_h}{L^2}\frac{\mu_m^2}{N}\right)\\
&\lesssim \frac{\mu_m^2}{N} + \frac{\mu_h^4K^2}{L^2}\label{boundZstartZ}
\end{align}
Combining the bound~\eqref{boundZstartZ} with~\eqref{boundFirstTerm4OrderVar1},~\eqref{lasBoundOpNorm001} and~\eqref{bboundFOrder42}, we get the final variance bound for the fourth order terms,
\begin{align}
\sigma& \lesssim \max\left\{\frac{\mu_m^2}{N} + \frac{\mu_h^4K^2}{L^2}, \left(K^3\frac{\mu_h^6}{L^3}+\frac{\mu_m^4}{N^2}  \frac{K^3\mu_h^6}{L^3} + \frac{\mu_h^2}{L} K\frac{\mu_m^4}{N^2}\right)\right\}
%
\label{finalVarianceFourthOrder}
\end{align}
This bound can be made less than $\delta$ by taking $L\gtrsim (1/\delta)K\mu_h^2$ and $N\gtrsim (1/\delta)\mu_m^2$. In order to use proposition~\eqref{rosenthalTropp} we still need to bound the expected maximal operator norm among each of the terms in~\eqref{fourthOrderequation1}. This is the point of section~\ref{secOpNormFourthOrder} below.

\subsubsection{\label{secOpNormFourthOrder}Expected maximal operator norm}

%
%
We now derive a bound on the expected largest operator norm~\eqref{expectedOpNormRosP} for each of the terms in~\eqref{fourthOrderequation1}. The operator norm of any of the term in~\eqref{fourthOrderequation1} reads as,
\begin{align}\begin{split}
\|\mathcal{Z}_{\ell,n}\|  = &\left\|\bs f_\ell \bs c_{\ell,n}^* \langle \bs f_\ell \bs c_{\ell,n}^*,\mathcal{P}_T(\bs f_\ell\bs c_{\ell,n}^*)\langle \mathcal{P}_T(\bs f_\ell\bs c_{\ell,n}^*, \bs h\bs m^*\rangle \rangle\right. \\
&-\left. \mathbb{E}\bs f_\ell \bs c_{\ell,n}^* \langle \bs f_\ell \bs c_{\ell,n}^*,\mathcal{P}_T(\bs f_\ell\bs c_{\ell,n}^*)\langle \mathcal{P}_T(\bs f_\ell\bs c_{\ell,n}^*, \bs h\bs m^*\rangle \rangle\right\|\end{split}\label{boundOpNormRosenthal}
\end{align}
The first term can be bounded as follows,
\begin{align}
&\mathbb{E}\left\|\bs f_\ell \bs c_{\ell,n}^* \langle \bs f_\ell \bs c_{\ell,n}^*,\mathcal{P}_T(\bs f_\ell\bs c_{\ell,n}^*)\langle \mathcal{P}_T(\bs f_\ell\bs c_{\ell,n}^*), \bs h\bs m^*\rangle \rangle \right\|^2\nonumber \\
&\leq \mathbb{E} \|\bs f_{\ell}\|^2\|\bs c_{\ell,n}\|^2 |\hat{h}[\ell]|^2 |\langle \bs c_{\ell,n},\bs m_n\rangle|^2 \left(|\hat{h}[\ell]|^2 \|\bs c_{\ell,n}\|^2 + \|\bs f_\ell\|^2 |\langle \bs m_n,\bs c_{\ell,n}\rangle |^2 + |\hat{h}[\ell]|^2|\langle \bs m_n, \bs c_{\ell,n}\rangle|^2\right)^2\\
&\lesssim \mathbb{E} \|\bs f_{\ell}\|^2\|\bs c_{\ell,n}\|^2 |\hat{h}[\ell]|^2 |\langle \bs c_{\ell,n},\bs m_n\rangle|^2\left(|\hat{h}[\ell]|^4 \|\bs c_{\ell,n}\|^4 + \|\bs f_\ell\|^4 |\langle \bs m_n,\bs c_{\ell,n}\rangle |^4 + |\hat{h}[\ell]|^4|\langle \bs m_n, \bs c_{\ell,n}\rangle|^4\right) \\
&\lesssim \mathbb{E} \|\bs f_{\ell}\|^2
\left(|\hat{h}[\ell]|^6 \|\bs c_{\ell,n}\|^6|\langle \bs m_n,\bs c_{\ell,n}\rangle |^2 + \|\bs f_\ell\|^4 |\hat{h}[\ell]|^2\|\bs c_{\ell,n}\|^2 |\langle \bs m_n,\bs c_{\ell,n}\rangle |^6\right)\\
& + \mathbb{E} \|\bs f_{\ell}\|^2\left(\|\bs c_{\ell,n}\|^2|\hat{h}[\ell]|^6|\langle \bs m_n, \bs c_{\ell,n}\rangle|^6\right) 
\label{boundOpnormCase1}
\end{align}
Note that the three terms above are in fact very similar to the ones (~\eqref{boundFirstTerm4OrderVar1}~\eqref{lasBoundOpNorm001} and~\eqref{bboundFOrder42}) that appeared in the derivation of the variance. Using those results, we can bound~\eqref{boundOpnormCase1} as
\begin{align}
\mathbb{E}\left\|\bs f_\ell \bs c_{\ell,n}^* \langle \bs f_\ell \bs c_{\ell,n}^*,\mathcal{P}_T(\bs f_\ell\bs c_{\ell,n}^*)\langle \mathcal{P}_T(\bs f_\ell\bs c_{\ell,n}^*), \bs h\bs m^*\rangle \rangle \right\|^2 \lesssim \frac{\mu_m^2}{N}\left(K^3\frac{\mu_h^6}{L^3}+\frac{\mu_m^4}{N^2}  \frac{K^3\mu_h^6}{L^3} + \frac{\mu_h^2}{L} K\frac{\mu_m^4}{N^2}\right)\label{boundOperatorNormFourthOrder}
\end{align}
where we again used the fact that
\begin{align}
|\langle \bs f_\ell\bs c_{\ell,n}^*,\mathcal{P}_T(\bs f_\ell\bs c_{\ell,n}^*)\rangle| \leq |\hat{h}[\ell]|^2 \|\bs c_{\ell,n}\|^2 + \|\bs f_\ell\|^2 |\langle \bs m_n,\bs c_{\ell,n}\rangle |^2 + |\hat{h}[\ell]|^2|\langle \bs m_n, \bs c_{\ell,n}\rangle|^2
\end{align}
as well as 
\begin{align}
|\langle \mathcal{P}_T(\bs f_\ell\bs c_{\ell,n}^*),\bs h\bs m^*\rangle |^2 = |\langle \bs f_\ell\bs c_{\ell,n}^*,\bs h\bs m^*\rangle |^2 \leq |\hat{h}[\ell]|^2 |\langle \bs c_{\ell,n},\bs m_n\rangle|^2
\end{align}
For the expectation, simply note that for a random variable $\bs X$, $\mbox{Var}(\bs X) = \langle \bs X-\mathbb{E}\bs X, \bs X-\mathbb{E}\bs X\rangle = \mathbb{E}\|\bs X\|^2 - \|\mathbb{E}(\bs X)\|^2\geq 0$. Again,~\eqref{boundOperatorNormFourthOrder} can be made less than $\delta$ for any constant $\delta$ as soon as $L\gtrsim (1/\delta) K\mu_h^2$ and $N\gtrsim (1/\delta)\mu_m^2$.

\subsubsection{\label{rosenthalPcclorder4}Conclusion through Rosenthal-Pinelis}

We now use the results of section~\ref{secVarianceFourthOrder} and~\ref{secOpNormFourthOrder} to bound the norm~\eqref{fourthOrderequation1}. Using the bound on the operator norm~\eqref{boundOperatorNormFourthOrder} together with the bound on the variance~\eqref{finalVarianceFourthOrder}, as well as proposition~\eqref{rosenthalTropp}, as soon as $L\gtrsim K\mu_h^2$ and $N\gtrsim \mu_m^2$, we have that 
\begin{align}
\mathbb{E}\left\|\sum_{\ell,=1}^L\sum_{n=1}^N \mathcal{A}_{\ell,n}^*\mathcal{A}_{\ell,n}(\mathcal{P}_T\mathcal{A}_{\ell,n}^*\mathcal{A}_{\ell,n}\mathcal{P}_T)\bs h\bs m^*  - \mathbb{E} \mathcal{A}_{\ell,n}^*\mathcal{A}_{\ell,n}(\mathcal{P}_T\mathcal{A}_{\ell,n}^*\mathcal{A}_{\ell,n}\mathcal{P}_T)\bs h\bs m^*\right\|^2 \lesssim  \left(\frac{K\mu_h^2}{L} + \frac{\mu_m^2}{N}\right) 
\end{align}
As explained above, we can then use Markov's inequality, 
\begin{align}
\left\|\sum_{\ell,=1}^L\sum_{n=1}^N \mathcal{A}_{\ell,n}^*\mathcal{A}_{\ell,n}(\mathcal{P}_T\mathcal{A}_{\ell,n}^*\mathcal{A}_{\ell,n}\mathcal{P}_T)\bs h\bs m^*  - \mathbb{E} \mathcal{A}_{\ell,n}^*\mathcal{A}_{\ell,n}(\mathcal{P}_T\mathcal{A}_{\ell,n}^*\mathcal{A}_{\ell,n}\mathcal{P}_T)\bs h\bs m^*\right\|\lesssim t
\end{align}
with probability at least $1 - \frac{1}{t}\left(\frac{K\mu_h^2}{L} + \frac{\mu_m^2}{N}\right)^{1/2} $. Taking $t = \delta$ gives the desired result.
\end{proof}

\subsection{\label{sectionUstatistics}Proof of lemma~\ref{lemmaTperpSecondTerm}}

The proof of of lemma~\ref{lemmaTperpSecondTerm} relies on an argument developed in the proof of Theorem 3.4.1 in~\cite{de2012decoupling}, we recall this argument, as well as its proof, in section~\ref{decouplingStrategySec}. The idea is that "just" splitting the norm into independent components is not enough to get a sufficient level of concentration. Fortunately, the sum inside the norm~\eqref{normMixed2UStatistics} is a special case of a $U$-statistics for which it is possible to derive efficient tail bounds through a decoupling argument. Let us start by recalling the statement of  lemma~\ref{lemmaTperpSecondTerm}. 

\lemmaTperpSecondTerm*

\begin{proof}
The proof of lemma~\ref{lemmaTperpSecondTerm} relies on a $3$ steps argument. The first step uses the decoupling inequality due to de la Pen\~a and Montgomery-Smith~\cite{de1995decoupling} which is summarized by proposition~\ref{decouplingProp}. We provide an adapted version of the proof of this proposition in section~\ref{decouplingStrategySec} for completeness although this proof is essentially the same as the proof of Theorem 3.4.1 in~\cite{de2012decoupling}. Proposition~\ref{decouplingProp} basically shows that the the probability of success for the event $\mathcal{E}_8(\bs A_{\ell,n})$ corresponding to the $U$-statistics and defined as 
\begin{align}\begin{split}
&\mathcal{E}_8(\bs A_{\ell,n})\equiv \left\{\left\|\mathcal{P}_T^\perp\sum_{(\ell,n)}\bs A_{\ell,n}\left\langle \bs A_{\ell,n}, \left(\sum_{(\ell',n')\neq (\ell,n)} \mathcal{P}_T(\bs A_{\ell',n'})\langle \mathcal{P}_T(\bs A_{\ell',n'}), \bs h \bs m^*\rangle - \bs E_{\ell',n'}\right)  \right\rangle\right.\right.\\
&\left.\left.-\mathcal{P}_T^\perp\mathbb{E}  \sum_{(\ell,n)} \bs A_{\ell,n}\left\langle \bs A_{\ell,n}, \left(\sum_{(\ell',n')\neq (\ell,n)} \mathcal{P}_T(\bs A_{\ell',n'})\langle \mathcal{P}_T(\bs A_{\ell',n'}), \bs h \bs m^*\rangle - \bs E_{\ell',n'}\right)  \right\rangle\right\|\geq  \delta\right\}
\end{split}\label{defEventE3}
\end{align}
can be upper bounded by the probability of success of the decoupled event $\mathcal{E}_8^d(\bs A_{\ell,n}, \tilde{\bs A}_{\ell,n})$ where the $\tilde{\bs A}_{\ell,n}$ are independent copies of the $\bs A_{\ell,n}$ and defined as
\begin{align}\begin{split}
&\mathcal{E}_8^d(\bs A_{\ell,n},\tilde{\bs A}_{\ell,n})\equiv \left\{\left\|\mathcal{P}_T^\perp\sum_{(\ell,n)}\bs A_{\ell,n}\left\langle \bs A_{\ell,n}, \left(\sum_{(\ell',n')\neq (\ell,n)} \mathcal{P}_T(\tilde{\bs A}_{\ell',n'})\langle \mathcal{P}_T(\tilde{\bs A}_{\ell',n'}), \bs h \bs m^*\rangle - \bs E_{\ell',n'}\right)  \right\rangle\right.\right.\\
&\left.\left.-\mathcal{P}_T^\perp\mathbb{E}  \sum_{(\ell,n)} \bs A_{\ell,n}\left\langle \bs A_{\ell,n}, \left(\sum_{(\ell',n')\neq (\ell,n)} \mathcal{P}_T(\tilde{\bs A}_{\ell',n'})\langle \mathcal{P}_T(\tilde{\bs A}_{\ell',n'}), \bs h \bs m^*\rangle - \bs E_{\ell',n'}\right)  \right\rangle\right\|\geq  \delta\right\}.
\end{split}\label{defEventE4}
\end{align}
Once we know that $\pr(\mathcal{E}_8)\leq \pr(\mathcal{E}_8^d)$, we can use the following argument. First note that,
\begin{align}
\mathcal{P}_T^\perp\mathbb{E}  \sum_{(\ell,n)} \bs A_{\ell,n}\left\langle \bs A_{\ell,n}, \left(\sum_{(\ell',n')\neq (\ell,n)} \mathcal{P}_T(\tilde{\bs A}_{\ell',n'})\langle \mathcal{P}_T(\tilde{\bs A}_{\ell',n'}), \bs h \bs m^*\rangle - \bs E_{\ell',n'}\right)  \right\rangle= 0.
\end{align}
as the variables $\mathcal{P}_T(\tilde{\bs A}_{\ell',n'})\langle \mathcal{P}_T(\tilde{\bs A}_{\ell',n'}), \bs h \bs m^*\rangle - \bs E_{\ell',n'}$ are centered,
\begin{align}
\mathbb{E}\left(\sum_{(\ell',n')\neq (\ell,n)} \mathcal{P}_T(\tilde{\bs A}_{\ell',n'})\langle \mathcal{P}_T(\tilde{\bs A}_{\ell',n'}), \bs h \bs m^*\rangle - \bs E_{\ell',n'}\right) = 0
\end{align}
In particular, we thus have 
\begin{align}\begin{split}
&\left\|\mathcal{P}_T^\perp\sum_{(\ell,n)}\bs A_{\ell,n}\left\langle \bs A_{\ell,n}, \left(\sum_{(\ell',n')\neq (\ell,n)} \mathcal{P}_T(\tilde{\bs A}_{\ell',n'})\langle \mathcal{P}_T(\tilde{\bs A}_{\ell',n'}), \bs h \bs m^*\rangle - \bs E_{\ell',n'}\right)  \right\rangle\right.\\
&\left.-\mathcal{P}_T^\perp\mathbb{E}  \sum_{(\ell,n)} \bs A_{\ell,n}\left\langle \bs A_{\ell,n}, \left(\sum_{(\ell',n')\neq (\ell,n)} \mathcal{P}_T(\tilde{\bs A}_{\ell',n'})\langle \mathcal{P}_T(\tilde{\bs A}_{\ell',n'}), \bs h \bs m^*\rangle - \bs E_{\ell',n'}\right)  \right\rangle\right\|\end{split}\label{Sequence01}\\
\leq &\left\|\mathcal{P}_T^\perp\sum_{(\ell,n)}\bs A_{\ell,n}\left\langle \bs A_{\ell,n}, \left(\sum_{(\ell',n')\neq (\ell,n)} \mathcal{P}_T(\tilde{\bs A}_{\ell',n'})\langle \mathcal{P}_T(\tilde{\bs A}_{\ell',n'}), \bs h \bs m^*\rangle - \bs E_{\ell',n'}\right)  \right\rangle\right.\nonumber\\
&\left.-\mathcal{P}_T^\perp\mathbb{E}_{\bs A_{\ell,n}}  \sum_{(\ell,n)} \bs A_{\ell,n}\left\langle \bs A_{\ell,n}, \left(\sum_{(\ell',n')\neq (\ell,n)} \mathcal{P}_T(\tilde{\bs A}_{\ell',n'})\langle \mathcal{P}_T(\tilde{\bs A}_{\ell',n'}), \bs h \bs m^*\rangle - \bs E_{\ell',n'}\right)  \right\rangle\right\|\label{beforeLastHopeNorm}\\
+&\left\|\mathcal{P}_T^\perp\mathbb{E}_{\bs A_{\ell,n}}  \sum_{(\ell,n)} \bs A_{\ell,n}\left\langle \bs A_{\ell,n}, \left(\sum_{(\ell',n')\neq (\ell,n)} \mathcal{P}_T(\tilde{\bs A}_{\ell',n'})\langle \mathcal{P}_T(\tilde{\bs A}_{\ell',n'}), \bs h \bs m^*\rangle - \bs E_{\ell',n'}\right)  \right\rangle\right\|\label{lastHopeNorm} 
\end{align}
Where the expectation in~\eqref{beforeLastHopeNorm} is now taken with respect to the outer matrices $\bs A_{\ell,n}$ only. That is the second factor in~\eqref{lastHopeNorm} remains a random variable. Moreover, as we have 
\begin{align}
\left\|\mathcal{P}_T^\perp\mathbb{E}_{\bs A_{\ell,n}}  \sum_{(\ell,n)} \bs A_{\ell,n}\left\langle \bs A_{\ell,n}, \left(\sum_{(\ell',n')} \mathcal{P}_T(\tilde{\bs A}_{\ell',n'})\langle \mathcal{P}_T(\tilde{\bs A}_{\ell',n'}), \bs h \bs m^*\rangle - \bs E_{\ell',n'}\right)  \right\rangle\right\| = 0,
\end{align}
one can further bound the norm~\eqref{lastHopeNorm} as
\begin{align}
&\left\|\mathcal{P}_T^\perp\mathbb{E}_{\bs A_{\ell,n}}  \sum_{(\ell,n)} \bs A_{\ell,n}\left\langle \bs A_{\ell,n}, \left(\sum_{(\ell',n')\neq (\ell,n)} \mathcal{P}_T(\tilde{\bs A}_{\ell',n'})\langle \mathcal{P}_T(\tilde{\bs A}_{\ell',n'}), \bs h \bs m^*\rangle - \bs E_{\ell',n'}\right)  \right\rangle\right\|\label{sequence11}\\
&= \left\|\mathcal{P}_T^\perp\mathbb{E}_{\bs A_{\ell,n}}  \sum_{(\ell,n)} \bs A_{\ell,n}\left\langle \bs A_{\ell,n}, \left(\sum_{(\ell',n')} \mathcal{P}_T(\tilde{\bs A}_{\ell',n'})\langle \mathcal{P}_T(\tilde{\bs A}_{\ell',n'}), \bs h \bs m^*\rangle - \bs E_{\ell',n'}\right)  \right\rangle\right.\\
& -\left.\mathcal{P}_T^\perp\mathbb{E}_{\bs A_{\ell,n}}  \sum_{(\ell,n)} \bs A_{\ell,n}\left\langle \bs A_{\ell,n}, \left( \mathcal{P}_T(\tilde{\bs A}_{\ell,n})\langle \mathcal{P}_T(\tilde{\bs A}_{\ell,n}), \bs h \bs m^*\rangle - \bs E_{\ell,n}\right)  \right\rangle\right\|\\
&\leq \left\|\mathcal{P}_T^\perp\mathbb{E}_{\bs A_{\ell,n}}  \sum_{(\ell,n)} \bs A_{\ell,n}\left\langle \bs A_{\ell,n}, \left( \mathcal{P}_T(\tilde{\bs A}_{\ell,n})\langle \mathcal{P}_T(\tilde{\bs A}_{\ell,n}), \bs h \bs m^*\rangle - \bs E_{\ell,n}\right)  \right\rangle\right\|\label{normSwappedExp}
\end{align}
Given those comments. One can proceed with the rest of the proof. The second norm~\eqref{normSwappedExp} is bounded through lemma~\ref{lemmaLemmaExpectationSwap} below. This lemma follows from a direct application of proposition~\ref{bernstein} as the outer matrices are averaged out. It is proved in section~\ref{sectionExpectationSwap}.
\begin{restatable}{lemma}{lemmaLemmaExpectationSwap} 
Let $\hat{\bs c}_{\ell,n}$ be defined as in section~\ref{notations} (including the scaling $\sqrt{L}$), with $C_n[\ell,k]\sim \mathcal{N}(0,1/L)$ i.i.d. gaussian. Let $\bs A_{\ell,n} = \bs f_\ell\hat{\bs c}_{\ell,n}^*$ and let the operator $\mathcal{A}\;:\;\mathbb{C}^{L\times KN} \mapsto \mathbb{C}^{L\times N}$ be defined from those matrices $\bs A_{\ell,n}$ as in~\eqref{DefinitionLinearMap}. The coherences $\mu_m^2$ and $\mu_h^2$ are defined as in~\eqref{coherencem} and~\eqref{coherenceh}. Then for any constants $\delta_9,\beta_9$, as soon as $L\gtrsim \beta_9(1/\delta_9)K\mu_h^2$ and $N\gtrsim\beta_9(1/\delta_9)\mu_m^2$
\label{lemmaLemmaExpectationSwap} 
\begin{align}
\left\|\mathcal{P}_T^\perp\mathbb{E}_{\bs A_{\ell,n}}  \sum_{(\ell,n)} \bs A_{\ell,n}\left\langle \bs A_{\ell,n}, \left( \mathcal{P}_T(\tilde{\bs A}_{\ell,n})\langle \mathcal{P}_T(\tilde{\bs A}_{\ell,n}), \bs h \bs m^*\rangle - \bs E_{\ell,n}\right)  \right\rangle\right\|\leq \delta_9 \label{normLemmaLemmaExpectationSwap}
\end{align}
with probability at least $1-\left(LN\right)^{-\beta_9}$. 
\end{restatable}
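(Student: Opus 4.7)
The plan is to exploit independence of the outer matrices $\bs A_{\ell,n}$ and the inner copies $\tilde{\bs A}_{\ell,n}$ in order to collapse the outer layer into a deterministic operator, reducing the quantity to be bounded to a sum of independent centered matrix-valued random variables in the $\tilde{\bs A}_{\ell,n}$, and then to invoke the matrix Bernstein inequality (Proposition~\ref{bernstein}) exactly as in Lemmas~\ref{lemmaTperp} and~\ref{lemmaEbound}. Concretely, since the summand $\bs Z_{\ell,n}\equiv \mathcal{P}_T(\tilde{\bs A}_{\ell,n})\langle \mathcal{P}_T(\tilde{\bs A}_{\ell,n}),\bs h\bs m^*\rangle-\bs E_{\ell,n}$ depends only on $\tilde{\bs A}_{\ell,n}$ and is independent of $\bs A_{\ell,n}$, the inner expectation distributes across the sum,
\begin{equation*}
\mathbb{E}_{\bs A_{\ell,n}}\sum_{(\ell,n)}\bs A_{\ell,n}\langle \bs A_{\ell,n},\bs Z_{\ell,n}\rangle \;=\; \sum_{(\ell,n)} \mathbb{E}_{\bs A_{\ell,n}}\!\left[\mathcal{A}_{\ell,n}^*\mathcal{A}_{\ell,n}\right]\!(\bs Z_{\ell,n}).
\end{equation*}

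Using the block structure of $\hat{\bs c}_{\ell,n}$ (i.i.d.\ entries inside the $n$-th block of size $K$, zeros elsewhere), a direct moment computation in $\vc$-form gives $\mathbb{E}_{\bs A_{\ell,n}}[\mathcal{A}_{\ell,n}^*\mathcal{A}_{\ell,n}](\bs X) = \bs f_\ell\bs f_\ell^*\bs X\bs P_n$ with $\bs P_n=\bs e_n\bs e_n^*\otimes \bs I_K$. Thus the sum one must bound reads
\begin{equation*}
\mathcal{P}_T^\perp\sum_{(\ell,n)} \mathcal{W}_{\ell,n}, \qquad \mathcal{W}_{\ell,n} \;\equiv\; \bs f_\ell\bs f_\ell^*\,\bs Z_{\ell,n}\,\bs P_n,
\end{equation*}
and the $\mathcal{W}_{\ell,n}$ are independent (since the $\tilde{\bs A}_{\ell,n}$ are) and centered (by the very definition of $\bs E_{\ell,n}$). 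Moreover, each $\mathcal{W}_{\ell,n}$ is a second-order polynomial in the i.i.d.\ Gaussian block of $\tilde{\bs c}_{\ell,n}$ (a product of two Gaussians), and is therefore sub-exponential; Proposition~\ref{bernstein} applies with $q=1$.

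What remains is a variance and Orlicz-norm calculation that tracks exactly the same scalings as in the proofs of Lemmas~\ref{lemmaTperp} and~\ref{lemmaEbound}. For the variance $\sigma^2=\max\{\|\sum\mathbb{E}\mathcal{W}_{\ell,n}\mathcal{W}_{\ell,n}^*\|,\|\sum\mathbb{E}\mathcal{W}_{\ell,n}^*\mathcal{W}_{\ell,n}\|\}$, I would expand $\mathcal{P}_T(\tilde{\bs A}_{\ell,n})$ via~\eqref{projectorTBD} into the three familiar pieces, and then bound $\|\mathbb{E}\bs Z_{\ell,n}\bs Z_{\ell,n}^*\|$ and $\|\mathbb{E}\bs Z_{\ell,n}^*\bs Z_{\ell,n}\|$ using the same chi-squared moment identities and coherence definitions~\eqref{coherencem},~\eqref{coherenceh}; the outer factors $\bs f_\ell\bs f_\ell^*$ and $\bs P_n$ then collapse via $\sum_\ell \bs f_\ell\bs f_\ell^*=\bs I$ and $\|\sum_n \bs e_n\bs e_n^*\otimes \bm A_n\|\le \sup_n\|\bm A_n\|$, yielding $\sigma^2\lesssim K\mu_h^2/L+\mu_m^2/N$. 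For the Orlicz norm, $\|\bs Z_{\ell,n}\|$ is controlled by the product of two sub-Gaussians $\|\tilde{\bs c}_{\ell,n}\|$ and $|\langle\tilde{\bs c}_{\ell,n},\bs m_n\rangle|$, so Proposition~\ref{EqNormPsi} together with $\|XY\|_{\Psi_1}\le \|X\|_{\Psi_2}\|Y\|_{\Psi_2}$ gives $\|\|\mathcal{W}_{\ell,n}\|\|_{\Psi_1}\lesssim \mu_h\mu_m\sqrt{K/(LN)}$ (the sandwiching by $\bs f_\ell\bs f_\ell^*$ and $\bs P_n$ does not increase this norm).

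Feeding these two bounds into Proposition~\ref{bernstein} with $t=\beta_9\log(LN)$ produces the stated estimate as soon as $L\gtrsim \beta_9(1/\delta_9)K\mu_h^2$ and $N\gtrsim\beta_9(1/\delta_9)\mu_m^2$, with failure probability at most $(LN)^{-\beta_9}$. The main technical obstacle is the bookkeeping in the variance expansion: because $\bs Z_{\ell,n}$ is itself a centered quadratic form, one must separate the three projection components in $\mathcal{P}_T(\tilde{\bs A}_{\ell,n})$, compute all cross moments, and rely critically on the centering by $\bs E_{\ell,n}$ to kill the dominant deterministic part that would otherwise overwhelm the target scaling $K\mu_h^2/L+\mu_m^2/N$. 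Once those cancellations are carried out, the argument runs in parallel with the preceding lemmas, with the additional simplification that the outer layer contributes only the rank-one factors $\bs f_\ell\bs f_\ell^*$ and $\bs P_n$ rather than a fresh quadratic chaos.
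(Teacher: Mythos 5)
Your proposal is correct and follows essentially the same route as the paper: the outer expectation collapses $\mathcal{A}_{\ell,n}^*\mathcal{A}_{\ell,n}$ into the deterministic sandwich $\bs f_\ell\bs f_\ell^*(\cdot)(\bs e_n\bs e_n^*\otimes\bs I_K)$, leaving a sum of independent centered sub-exponential matrices in the $\tilde{\bs c}_{\ell,n}$, to which Proposition~\ref{bernstein} is applied with variance and Orlicz-norm bounds and $t=\beta_9\log(LN)$. Your stated bounds $\sigma^2\lesssim K\mu_h^2/L+\mu_m^2/N$ and Orlicz norm $\lesssim\mu_h\mu_m\sqrt{K/(LN)}$ are looser than the paper's (which retain additional factors of $|\hat{h}[\ell]|$ and $\|\bs m_n\|$ coming from the collapsed outer layer), but they still suffice under the stated sample complexities.
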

The first norm in~\eqref{beforeLastHopeNorm} will be bounded by lemmas~\ref{lemmaSecondTermIndependent} and~\ref{convergenceCoherence}. The idea behind these two lemmas is as follows. Since the variables $\bs A_{\ell,n}$ are independent from the variables $\tilde{\bs A}_{\ell,n}$, we can now condition the probability of having a small norm~\eqref{beforeLastHopeNorm} on the $\tilde{\bs A}_{\ell,n}$ and therefore consider those variables as fixed over a first phase. In particular, if we let 
\begin{align}
\bs H_{\ell,n} \equiv \left(\sum_{(\ell',n')\neq (\ell,n)} \mathcal{P}_T(\tilde{\bs A}_{\ell',n'})\langle \mathcal{P}_T(\tilde{\bs A}_{\ell',n'}), \bs h \bs m^*\rangle - \bs E_{\ell',n'}\right).
\end{align}
Let us now define the coherences $\mu^2_{\bs H_{\ell,n}}$, $\rho^2_{\bs H_{\ell,n}}$ and $\nu^2_{\bs H_{\ell,n}}$, for given matrices $\bs H_{\ell,n}$ as
\begin{equation}
\mu^2_{\bs H_{\ell,n}} = L\cdot \sup_{\ell_2} \sum_{n_2=1}^N \|\bs H_{\ell,n}[\ell_2,\sim n_2]\|^2,\label{coherenceHmu}
\end{equation}
\begin{equation}
\rho^2_{\bs H_{\ell,n}} = N\cdot \sup_{n_2} \sum_{\ell_2=1}^L\|\bs H_{\ell,n}[\ell_2,\sim n_2]\|^2\label{coherenceHrho}
\end{equation}
\begin{equation}
\nu^2_{\bs H_{\ell,n}} = LN\cdot\sup_{\ell_2,n_2} \|\bs H_{\ell,n}[\ell_2,\sim n_2]\|^2\label{coherenceHnu}
\end{equation}
and let the event $\mathcal{E}_9$ be defined as   
\begin{align}
\mathcal{E}_9\equiv\left\{\mu^2_{\bs H_{\ell,n}}\lesssim \mu_h^2,\; \rho^2_{\bs H_{\ell,n}}\lesssim \mu_m^2,\quad \nu^2_{\bs H_{\ell,n}}\lesssim \mu_h^2\mu_m^2\right\}
\end{align}
one can now bound the probability 
\begin{align}
&\pr \left(\left\|\mathcal{P}_T^\perp  \sum_{(\ell,n)} \bs A_{\ell,n}\left\langle \bs A_{\ell,n}, \bs H_{\ell,n} \right\rangle -\mathcal{P}_T^\perp\mathbb{E}_{\bs A_{\ell,n}}  \sum_{(\ell,n)} \bs A_{\ell,n}\left\langle \bs A_{\ell,n}, \bs H_{\ell,n}  \right\rangle\right\|\geq \delta \right)
\end{align}
as
\begin{align}\begin{split}
&\pr \left(\left\|\mathcal{P}_T^\perp\mathbb{E}_{\bs A_{\ell,n}}  \sum_{(\ell,n)} \bs A_{\ell,n}\left\langle \bs A_{\ell,n}, \bs H_{\ell,n} \right\rangle-\mathcal{P}_T^\perp\mathbb{E}_{\bs A_{\ell,n}}  \sum_{(\ell,n)} \bs A_{\ell,n}\left\langle \bs A_{\ell,n}, \bs H_{\ell,n} \right\rangle\right\|\geq \delta \right)\\
&\leq \pr \left(\left\|\mathcal{P}_T^\perp\mathbb{E}_{\bs A_{\ell,n}}  \sum_{(\ell,n)} \bs A_{\ell,n}\left\langle \bs A_{\ell,n}, \bs H_{\ell,n} \right\rangle-\mathcal{P}_T^\perp\mathbb{E}_{\bs A_{\ell,n}}  \sum_{(\ell,n)} \bs A_{\ell,n}\left\langle \bs A_{\ell,n}, \bs H_{\ell,n} \right\rangle\right\|\geq \delta \Bigg|  \mathcal{E}_9\right)
+\pr\left(\mathcal{E}_9^c\right)\end{split}\label{inequalityProbabilityFinal}
\end{align}
Because of the decoupling argument, the event $\mathcal{E}_9$ is now independent of the matrices $\bs A_{\ell,n}$ and can be fixed while bounding the norm. Bounding the two probabilities on the RHS of~\eqref{inequalityProbabilityFinal} is the point of lemmas~\ref{lemmaSecondTermIndependent} and~\ref{convergenceCoherence} below which are respectively proved in sections~\ref{proofIndependentSecondOrder} and~\ref{proofconvergenceCoherence}. 
\begin{restatable}{lemma}{lemmaSecondTermIndependent}
\label{lemmaSecondTermIndependent} 
Let the linear map $\mathcal{A}$ be defined as in~\eqref{DefinitionLinearMap} with $\mathcal{A}_{\ell,n}(\bs X) = \langle \bs A_{\ell,n}, \bs X\rangle =\langle \bs f_\ell\bs c_{\ell,n}^*, \bs X\rangle $. For any fixed matrices $\bm X_{\ell,n}$ independent of the $\mathcal{A}_{\ell,n}$. Let the coherences of the $\bs X_{\ell,n}$ be defined as,
\begin{equation}
\mu^2_{\bm X_{\ell,n}} = L\cdot \sup_{\ell_2} \sum_{n_2=1}^N \|\bm X_{\ell,n}[\ell_2,\sim n_2]\|^2,\label{coherenceXmu}
\end{equation}
\begin{equation}
\rho^2_{\bm X_{\ell,n}} = N\cdot \sup_{n_2} \sum_{\ell_2=1}^L\|\bm X_{\ell,n}[\ell_2,\sim n_2]\|^2\label{coherenceXrho}
\end{equation}
\begin{equation}
\nu^2_{\bm X_{\ell,n}} = LN\cdot\sup_{\ell_2,n_2} \|\bm X_{\ell,n}[\ell_2,\sim n_2]\|^2\label{coherenceXnu}
\end{equation}
Correspondingly, let $\overline{\mu}_{\bs X}^2,\overline{\rho}_{\bs X}^2$ and $\overline{\nu}^2_{\bs X}$ denote the supremas over $(\ell,n)$ of these quantities, 
\begin{equation}\label{supremasCoherence}
\overline{\mu}_{\bs X}^2 = \sup_{\ell,n} \mu^2_{\bm X_{\ell,n}},\quad \overline{\rho}_{\bs X}^2 =  \sup_{\ell,n} \rho^2_{\bm X_{\ell,n}},\quad\mbox{and}\quad \overline{\nu}_{\bs X}^2 =  \sup_{\ell,n} \nu^2_{\bm X_{\ell,n}}.\end{equation}
One can write, 
\begin{multline}
\left\|\sum_{\ell=1}^L\sum_{n=1}^N \mathcal{A}_{\ell,n}^*\mathcal{A}_{\ell,n}(\bm X_{\ell,n}) - \mathbb{E}\sum_{\ell=1}^L\sum_{n=1}^N\mathcal{A}_{\ell,n}^*\mathcal{A}_{\ell,n}(\bm X_{\ell,n})\right\|\lesssim \\
\beta_{10}\max\left\{\sqrt{\left(\frac{\overline{\mu}^2_{\bm X}K}{L}  + \frac{\overline{\rho}^2_{\bm X}}{N} \right)}\sqrt{\log(LKN)},  \frac{\overline{\nu}^2_{\bm X}}{LN} K \log(LKN))\right\}\label{tailBoundTperDecoupled}
\end{multline}
with probability at least $1 - (LN)^{-\beta_{10}}$. In particular, letting $\bs X_{\ell,n} = \bs H_{\ell,n}$, assuming $\mu^2_{\bs H_{\ell,n}}\lesssim \mu_h^2$, $\nu^2_{\bs H_{\ell,n}}\lesssim\mu_h^2\mu_m^2$ and $\rho^2_{\bs H_{\ell,n}}\lesssim \mu_m^2$, and taking $L\gtrsim \beta_{10}(1/\delta_{10})K\mu_h^2$ as well as $N\gtrsim \mu_m^2$, one can make the norm on the LHS of~\eqref{tailBoundTperDecoupled} less than $\delta_{10}$. Consequently, the first probability on the RHS of~\eqref{inequalityProbabilityFinal} can be bounded by $(LN)^{-\beta_{10}}$.
\end{restatable}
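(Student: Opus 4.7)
The plan is to apply the matrix Bernstein inequality from proposition~\ref{bernstein} to the centered sum $\sum_{\ell,n}\mathcal{Z}_{\ell,n}$, where $\mathcal{Z}_{\ell,n} \equiv \bs A_{\ell,n}\langle \bs A_{\ell,n}, \bm X_{\ell,n}\rangle - \mathbb{E}\bs A_{\ell,n}\langle \bs A_{\ell,n}, \bm X_{\ell,n}\rangle$, following the template of the proofs of lemmas~\ref{lemmaT} and~\ref{lemmaTperp}. Crucially, because the $\bm X_{\ell,n}$ are fixed (they depend only on the decoupled copies $\tilde{\bs A}_{\ell',n'}$, and we condition through the event $\mathcal{E}_9$), the $\mathcal{Z}_{\ell,n}$ remain independent across $(\ell,n)$, and each is a quadratic form in the block-sparse Gaussian vector $\bs c_{\ell,n}$, so the Orlicz version of Bernstein applies directly.

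First I would compute the variance proxy $\sigma^2 = \max\{\|\sum_{\ell,n}\mathbb{E}\mathcal{Z}_{\ell,n}\mathcal{Z}_{\ell,n}^*\|,\|\sum_{\ell,n}\mathbb{E}\mathcal{Z}_{\ell,n}^*\mathcal{Z}_{\ell,n}\|\}$. Writing $\bs A_{\ell,n}\bs A_{\ell,n}^* = \bs f_\ell\bs f_\ell^*\|\bs c_{\ell,n}\|^2$ and applying the Isserlis identity to the fourth-order Gaussian moment $\mathbb{E}[\|\bs c_{\ell,n}\|^2|\bs f_\ell^*\bm X_{\ell,n}\bs c_{\ell,n}|^2]$ produces a scalar of order $K\|\bs f_\ell^*\bm X_{\ell,n}[:,\sim n]\|^2/L^2$, and the Cauchy--Schwarz estimate $\|\bs f_\ell^*\bm X_{\ell,n}[:,\sim n]\|^2 \leq \sum_{\ell_2}\|\bm X_{\ell,n}[\ell_2,\sim n]\|^2 \leq \rho^2_{\bm X_{\ell,n}}/N$ (using $\|\bs f_\ell\|=1$), combined with $\sum_\ell\bs f_\ell\bs f_\ell^* = \bs I$, delivers one of the two variance pieces. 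The symmetric expression $\sum\mathbb{E}\mathcal{Z}_{\ell,n}^*\mathcal{Z}_{\ell,n}$ is block-diagonal in $n$ by the sparsity pattern of $\bs c_{\ell,n}$, and the norm of the $n$-th diagonal block is controlled by $\sum_\ell\|\bm X_{\ell,n}[:,\sim n]\|_F^2/L^2$ through a second Isserlis computation, which is in turn absorbed via the complementary row-coherence $\overline{\mu}_{\bs X}^2$. Putting the two sides together yields $\sigma^2 \lesssim \overline{\mu}_{\bs X}^2 K/L + \overline{\rho}_{\bs X}^2/N$, matching the first term of the max in~\eqref{tailBoundTperDecoupled}.

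Next I would bound the Orlicz-$1$ norm $\|\|\mathcal{Z}_{\ell,n}\|\|_{\Psi_1}$. Using $\|X-\mathbb{E}X\|_{\Psi_1}\leq 2\|X\|_{\Psi_1}$ and the product rule $\|XY\|_{\Psi_1}\leq \|X\|_{\Psi_2}\|Y\|_{\Psi_2}$ from section~\ref{subgaussianAndSubexponential}, it suffices to bound the $\Psi_2$ norms of the two scalar factors $\|\bs c_{\ell,n}\|$ and $\bs f_\ell^*\bm X_{\ell,n}\bs c_{\ell,n}$. The former is of order $\sqrt{K/L}$ by the chi-squared argument in section~\ref{subgaussianAndSubexponential}; the latter is Gaussian with variance at most $\nu^2_{\bm X_{\ell,n}}/(LN)$, obtained by expanding the bilinear form and using $|f_\ell[\ell_2]|^2 = 1/L$ together with the entrywise coherence $\|\bm X_{\ell,n}[\ell_2,\sim n]\|^2 \leq \nu^2_{\bm X_{\ell,n}}/(LN)$, so its $\Psi_2$ norm is of order $\sqrt{\overline{\nu}_{\bs X}^2/(LN)}$. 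Multiplying yields a $\Psi_1$ bound that, once fed into the subexponential term of~\eqref{eq:BernsteinSum}, generates the second term $\overline{\nu}_{\bs X}^2 K\log(LKN)/(LN)$ in~\eqref{tailBoundTperDecoupled}. Invoking proposition~\ref{bernstein} with $t=\beta_{10}\log(LN)$ then produces~\eqref{tailBoundTperDecoupled} with probability at least $1-(LN)^{-\beta_{10}}$. The specialization to $\bm X_{\ell,n} = \bs H_{\ell,n}$ under $\mathcal{E}_9$ substitutes $\overline{\mu}_{\bs X}^2 \lesssim \mu_h^2$, $\overline{\rho}_{\bs X}^2\lesssim \mu_m^2$, $\overline{\nu}_{\bs X}^2 \lesssim \mu_h^2\mu_m^2$, and the sample size assumptions $L\gtrsim \beta_{10}(1/\delta_{10})K\mu_h^2$, $N\gtrsim \mu_m^2$ then push both terms below $\delta_{10}$.

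The main technical obstacle will be keeping the two competing variance contributions cleanly separated: the row-wise coherence $\overline{\mu}_{\bs X}^2$ must be extracted when the Fourier completeness $\sum_\ell\bs f_\ell\bs f_\ell^* = \bs I$ is used to collapse over $\ell$ (so the coherence acts along the column index), while the column-wise coherence $\overline{\rho}_{\bs X}^2$ appears when the $n$-block-sparsity of $\bs c_{\ell,n}$ is exploited (so the coherence acts inside a single row-block). Verifying that no wrong-order cross term $\overline{\mu}_{\bs X}^2\overline{\rho}_{\bs X}^2$ creeps in, and that the Orlicz bound genuinely reduces to the entry-level quantity $\overline{\nu}_{\bs X}^2/(LN)$ rather than to a looser Frobenius surrogate, will require careful accounting of which index is summed by orthogonality and which is frozen by sparsity.
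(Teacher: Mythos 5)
Your overall strategy is exactly the paper's: condition on the decoupled copies so that the $\bm X_{\ell,n}$ are frozen, center the sum, and apply the Orlicz--Bernstein bound of Proposition~\ref{bernstein} by computing the variance proxy and the $\Psi_1$ norm of each summand. The Orlicz part of your argument ($\|\|\bs c_{\ell,n}\|\|_{\Psi_2}\lesssim\sqrt{K}$ times the $\Psi_2$ norm of the Gaussian $\langle\bs c_{\ell,n},\hat{\bm X}_{\ell,n}[\ell,\sim n]\rangle$, the latter controlled by $\overline{\nu}_{\bm X}/\sqrt{LN}$) matches the paper's~\eqref{psi1tmpLemma2}.

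There is, however, a genuine problem in your variance computation: you attach the two coherences to the wrong sides, and your own closing paragraph contradicts the computation you actually perform. On the $\mathcal{Z}_{\ell,n}\mathcal{Z}_{\ell,n}^*$ side (where $\bm A_{\ell,n}\bm A_{\ell,n}^*=\|\bs c_{\ell,n}\|^2\bs f_\ell\bs f_\ell^*$ and the $\ell$-sum collapses by $\sum_\ell\bs f_\ell\bs f_\ell^*=\bm I$), the weight sitting on $\bs f_\ell\bs f_\ell^*$ after taking expectations is $K\sum_n\|\hat{\bm X}_{\ell,n}[\ell,\sim n]\|^2$, and the operator norm is the supremum over $\ell$ of this quantity --- a sup over the row index of a sum over column blocks, which is precisely what $\mu^2_{\bm X}$ encodes, giving $K\overline{\mu}^2_{\bm X}/L$. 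Your step instead applies Cauchy--Schwarz to replace $\|\bs f_\ell^*\bm X_{\ell,n}[:,\sim n]\|^2$ by the full column-block Frobenius norm $\sum_{\ell_2}\|\bm X_{\ell,n}[\ell_2,\sim n]\|^2\le\rho^2_{\bm X_{\ell,n}}/N$; this is a per-term bound in which the row index $\ell$ has been summed out, so when you subsequently sum over $n$ you obtain $K\overline{\rho}^2_{\bm X}$ rather than $K\overline{\mu}^2_{\bm X}/L$ --- a loss of a factor of order $N$ (equivalently $L$ once the coherences are normalized), and a quantity that does not appear in~\eqref{tailBoundTperDecoupled}. Symmetrically, the block-diagonal $\mathcal{Z}_{\ell,n}^*\mathcal{Z}_{\ell,n}$ side requires $\sup_n\sum_\ell\|\hat{\bm X}_{\ell,n}[\ell,\sim n]\|^2\le\overline{\rho}^2_{\bm X}/N$, not the row coherence $\overline{\mu}^2_{\bm X}$ you invoke there. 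The fix is simply to avoid the Cauchy--Schwarz step, note that $\|\bs f_\ell^*\bm X_{\ell,n}[:,\sim n]\|=\|\hat{\bm X}_{\ell,n}[\ell,\sim n]\|$ exactly, and read off the two variance pieces directly from the definitions~\eqref{coherenceXmu} and~\eqref{coherenceXrho}, as the paper does in~\eqref{boundlemma2variance1}--\eqref{boundlemma2variance2}; with that correction your argument coincides with the paper's proof.
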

\begin{restatable}{lemma}{convergenceCoherence}
\label{convergenceCoherence}
Let $\bs A_{\ell,n}$ be defined as in~\eqref{DefinitionLinearMap}, and let $\bs X_{\ell,n}$ (or equivalently $\bs H_{\ell,n}$) be defined as 
\begin{align}
\bs X_{\ell,n} &= \sum_{(\ell',n')\neq (\ell,n)}\mathcal{P}_T(\bs A_{\ell',n'}) \langle \mathcal{P}_T(\bs A_{\ell',n'}),\bs h\bs m^*\rangle - \mathbb{E} \sum_{(\ell',n')\neq (\ell,n)}\mathcal{P}_T(\bs A_{\ell',n'}) \langle \mathcal{P}_T(\bs A_{\ell',n'}),\bs h\bs m^*\rangle \end{align} 
where the matrices $\bs A_{\ell',n'}$ are defined as in~\eqref{deconv3}. The coherences $\mu^2_{\bs X_{\ell,n}}$, $\rho^2_{\bs X_{\ell,n}}$ and $\nu^2_{\bs X_{\ell,n}}$ are defined as in~\eqref{coherenceHmu},~\eqref{coherenceHrho} and~\eqref{coherenceHnu}. Then we have 
\begin{align}
\mu^2_{\bs X_{\ell,n}}\lesssim \mu_h^2, \quad \rho^2_{\bs X_{\ell,n}}\lesssim \mu_m^2,\quad \mbox{and}\quad \nu^2_{\bs X_{\ell,n}}\lesssim \mu_m^2\mu_h^2, \qquad \forall (\ell,n) \in [L]\times[N],
\end{align}
where each event holds with probability at least $1-(LN)^{-\beta_{11}}$ for any constant $\beta_{11}$ as soon as $L\gtrsim \beta_{11} K\mu_h^2$, $N\gtrsim \beta_{11}\mu_m^2$. 
\end{restatable}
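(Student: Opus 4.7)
The plan is to bound each of the three coherences $\mu^2_{\bs X_{\ell,n}}$, $\rho^2_{\bs X_{\ell,n}}$, $\nu^2_{\bs X_{\ell,n}}$ by viewing them as tail quantities for a sum of $LN-1$ independent random summands indexed by $(\ell',n')\neq (\ell,n)$, applying the Bernstein inequality (Proposition~\ref{bernstein}) to each, and closing with a union bound over $(\ell,n,\ell_2,n_2)$. Each coherence is the squared Frobenius norm of a fixed block extracted from $\bs X_{\ell,n}$ (a single $K$-block for $\nu^2$, a whole row's worth of blocks for $\mu^2$, a whole column block's worth for $\rho^2$), so fixing these indices reduces the problem to concentrating a centered sum of rank-one Gaussian terms whose structure is already familiar from the earlier lemmas.

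First I would expand a generic summand $\mathcal{P}_T(\bs A_{\ell',n'})\langle \mathcal{P}_T(\bs A_{\ell',n'}),\bs h\bs m^*\rangle$ using the explicit form of $\mathcal{P}_T$ in~\eqref{projectorTBD} together with $\bs A_{\ell',n'} = \bs f_{\ell'}\hat{\bs c}_{\ell',n'}^*$, and use the fact that $\hat{\bs c}_{\ell',n'}$ is supported on the $n'$-th column block, so that the scalar $\langle \mathcal{P}_T(\bs A_{\ell',n'}),\bs h\bs m^*\rangle$ collapses to $\hat h[\ell']\langle \bs c_{\ell',n'},\bs m_{n'}\rangle$. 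Extracting the $[\ell_2,\sim n_2]$ block then simply picks out a Fourier sample $\bs f_{\ell'}[\ell_2]$ and some coordinates of $\bs c_{\ell',n'}$, and each summand becomes an explicit polynomial in the i.i.d.\ Gaussian coordinates $c_{\ell',n'}[k]$.

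Next, following the template of the proofs of Lemmas~\ref{lemmaT} and~\ref{lemmaTperp}, I would estimate the two variance bounds in~\eqref{varianceBernstein} and the Orlicz-$1$ norm of each summand. The sub-Gaussian and sub-exponential arithmetic of Section~\ref{subgaussianAndSubexponential} produces the required $\Psi_1$ estimates via $\|\|\bs c_{\ell',n'}\|\|_{\Psi_2}\lesssim\sqrt{K}$ and $\||\langle \bs c_{\ell',n'},\bs m_{n'}\rangle|\|_{\Psi_2}\lesssim \|\bs m_{n'}\|$, together with the product inequality $\|XY\|_{\Psi_1}\leq \|X\|_{\Psi_2}\|Y\|_{\Psi_2}$. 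On the variance side, the scaling is orchestrated by whichever aggregate identity applies: summing the blocks over $n_2$ to reach $\mu^2$ invokes $\sum_{n'}\|\bs m_{n'}\|^2=1$; summing over $\ell_2$ to reach $\rho^2$ invokes $\sum_{\ell'}|\hat h[\ell']|^2=1$ together with $\sum_{\ell'}\bs f_{\ell'}\bs f_{\ell'}^*=\bs I$; and the pointwise bound $\nu^2$ keeps both factors, giving the $\mu_h^2\mu_m^2$ scaling after the $LN$ multiplier.

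Finally I would apply Proposition~\ref{bernstein} with $t\asymp \beta_{11}\log(LN)$ for each individual event so that each holds with probability at least $1-(LN)^{-\beta_{11}-2}$; a union bound over the $O(L^2N^2)$ tuples $(\ell,n,\ell_2,n_2)$ then yields each of the three coherence bounds uniformly with probability at least $1-(LN)^{-\beta_{11}}$. The prescribed sample complexities $L\gtrsim \beta_{11}K\mu_h^2$ and $N\gtrsim \beta_{11}\mu_m^2$ are exactly what makes the Bernstein deviation dominated by the deterministic mean, which itself obeys the stated coherence bounds because $\mathbb{E}\sum_{\ell',n'}\mathcal{P}_T(\bs A_{\ell',n'})\langle \mathcal{P}_T(\bs A_{\ell',n'}),\bs h\bs m^*\rangle$ is comparable to $\bs h\bs m^*$. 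The main obstacle will be the bookkeeping: one must verify that each of the three variance contributions collapses under the appropriate row or column aggregation without losing a spurious factor of $K$, $L$, or $N$, and check that deleting the single diagonal term $(\ell',n')=(\ell,n)$ from the sum contributes at most a benign correction of the same order as the deviation one is trying to control.
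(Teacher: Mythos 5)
Your proposal follows essentially the same route as the paper's proof: expand $\mathcal{P}_T(\bs A_{\ell',n'})\langle\mathcal{P}_T(\bs A_{\ell',n'}),\bs h\bs m^*\rangle$ entrywise, apply the Bernstein inequality of Proposition~\ref{bernstein} to the centered sum over $(\ell',n')\neq(\ell,n)$ for each fixed block index, and aggregate over $k_2$ before taking suprema (the paper only writes out the $\nu^2$ case and leaves the union bound over the index tuples implicit, which you make explicit). One small caveat: since $\bs X_{\ell,n}$ is already centered, your closing remark about the deviation being ``dominated by the deterministic mean'' is not needed --- the coherence bounds come entirely from the Bernstein deviation itself, exactly as in the paper.
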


Before going through the decoupling argument underlying the bound $\pr(\mathcal{E}_8)\leq \pr(\mathcal{E}_8^d)$, we combine all previous results and conclude the proof of lemma~\ref{lemmaTperpSecondTerm}. Combining the decoupling argument relating $\mathcal{E}_8$ and $\mathcal{E}_8^c$ together with the sequence of bounds~\eqref{Sequence01} to~\eqref{lastHopeNorm}, the sequence~\eqref{sequence11} to~\eqref{normSwappedExp}, the fact that $\pr(\|A + B\|\geq \delta)\leq \pr(\|A\|\geq \delta/2) + \pr(\|B\|\geq \delta/2) $, as well as the results of lemmas~\ref{lemmaSecondTermIndependent} and~\ref{convergenceCoherence} and~\ref{lemmaLemmaExpectationSwap}, we get,
\begin{align}
& \left\|\mathcal{P}_T^\perp\sum_{(\ell,n)}\bs A_{\ell,n}\left\langle \bs A_{\ell,n}, \left(\sum_{(\ell',n')\neq (\ell,n)} \mathcal{P}_T(\bs A_{\ell',n'})\langle \mathcal{P}_T(\bs A_{\ell',n'}), \bs h \bs m^*\rangle - \bs E_{\ell',n'}\right)  \right\rangle\right.\\
&\left.-\mathcal{P}_T^\perp\mathbb{E}  \sum_{(\ell,n)} \bs A_{\ell,n}\left\langle \bs A_{\ell,n}, \left(\sum_{(\ell',n')\neq (\ell,n)} \mathcal{P}_T(\bs A_{\ell',n'})\langle \mathcal{P}_T(\bs A_{\ell',n'}), \bs h \bs m^*\rangle - \bs E_{\ell',n'}\right)  \right\rangle\right\|\lesssim \delta
\end{align}
with probability at least $1 - c(LN)^{-\beta}$ where $c,\beta>0$ are positive constants. This concludes the proof of lemma~\ref{lemmaTperpSecondTerm}.
\end{proof}

\subsubsection{\label{decouplingStrategySec}Decoupling strategy}
In this section, we explain the decoupling argument used in the proof of lemma~\ref{lemmaTperpSecondTerm} to relate $\mathcal{E}_3$ and $\mathcal{E}_4$. This argument is derived from Theorem 3.1.1 and 3.4.1 in~\cite{de2012decoupling, de1995decoupling}. A similar result is given for the expectation in 
~\cite{vershynin2011simple} (Theorem 1.). We provide the proof of this result as well for completeness although the proof is essentially the same as the proof given in~\cite{de2012decoupling}.
\begin{proposition}[\label{decouplingProp}Decoupling inequality for U-statistics (matrix version)]
Let us recall the norm,
\begin{align}
&E = \left\|\mathcal{P}_T^\perp \sum_{(\ell,n)}\sum_{(\ell',n')\neq (\ell,n)}\mathcal{A}_{\ell,n} \langle \mathcal{A}_{\ell,n}, \mathcal{P}_T(\mathcal{A}_{\ell',n'}) \langle \mathcal{P}_T(\mathcal{A}_{\ell',n'}),\bs h\bs m^*\rangle \rangle\right.\\
&  \left.-\mathcal{P}_T^\perp \mathbb{E}\sum_{(\ell,n)\neq (\ell',n')}\sum_{(\ell',n')\neq (\ell,n)}\mathcal{A}_{\ell,n} \langle \mathcal{A}_{\ell,n}, \mathcal{P}_T(\mathcal{A}_{\ell',n'}) \langle \mathcal{P}_T(\mathcal{A}_{\ell',n'}),\bs h\bs m^*\rangle \rangle \right\|\label{inequalitySecondOrderbb}
\end{align}
and let us consider the following norm where the sequence of variables $\mathcal{A}_{\ell',n'}$ have been replaced by an independent copy of the sequence $\left\{\mathcal{A}_{\ell,n}\right\}_{(\ell,n)\in [L]\times[N]}$,
\begin{align}
&\tilde{E}= \left\|\mathcal{P}_T^\perp \sum_{(\ell,n)}\sum_{(\ell',n')\neq (\ell,n)}\mathcal{A}_{\ell,n} \langle \mathcal{A}_{\ell,n}, \mathcal{P}_T(\tilde{\mathcal{A}}_{\ell',n'}) \langle \mathcal{P}_T(\tilde{\mathcal{A}}_{\ell',n'}),\bs h\bs m^*\rangle \rangle\right.\\
&  \left.-\mathcal{P}_T^\perp \mathbb{E}\sum_{(\ell,n)\neq (\ell',n')}\sum_{(\ell',n')\neq (\ell,n)}\mathcal{A}_{\ell,n} \langle \mathcal{A}_{\ell,n}, \mathcal{P}_T(\tilde{\mathcal{A}}_{\ell',n'}) \langle \mathcal{P}_T(\tilde{\mathcal{A}}_{\ell',n'}),\bs h\bs m^*\rangle \rangle \right\|\label{inequalitySecondOrderbbb}
\end{align}
Then $\pr(E\geq t)\leq C\pr\left(\tilde{E}\geq t/C\right)$ where $C$ is a constant.
\end{proposition}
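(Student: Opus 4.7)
The plan is to adapt the randomization argument of de la Pe\~na and Montgomery-Smith. Introduce i.i.d.\ Bernoulli$(1/2)$ selector variables $\{\delta_{\ell,n}\}_{(\ell,n)\in [L]\times [N]}$, independent of both families $\{\mathcal{A}_{\ell,n}\}$ and $\{\tilde{\mathcal{A}}_{\ell,n}\}$, and form the doubly-indexed partial sum
\begin{align*}
S_\delta = 4 \sum_{\substack{(\ell,n)\neq (\ell',n') \\ \delta_{\ell,n}=1,\ \delta_{\ell',n'}=0}} \mathcal{P}_T^\perp\, \mathcal{A}_{\ell,n}\,\langle \mathcal{A}_{\ell,n}, \mathcal{P}_T(\mathcal{A}_{\ell',n'})\langle \mathcal{P}_T(\mathcal{A}_{\ell',n'}),\bs h\bs m^*\rangle\rangle \;-\; (\text{centering}).
\end{align*}
The factor $4$ is calibrated so that $\mathbb{E}_\delta[4\delta_{\ell,n}(1-\delta_{\ell',n'})]=1$ for every ordered pair, whence $\mathbb{E}_\delta S_\delta = E_0$, with $E_0$ denoting the random matrix whose norm appears in~\eqref{inequalitySecondOrderbb}.

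The crux is a conditional distributional identity. Conditionally on $\delta$, the index sets $I_\delta = \{(\ell,n):\delta_{\ell,n}=1\}$ and $J_\delta = \{(\ell',n'):\delta_{\ell',n'}=0\}$ are disjoint, so the families $\{\mathcal{A}_{\ell,n}\}_{(\ell,n)\in I_\delta}$ and $\{\mathcal{A}_{\ell',n'}\}_{(\ell',n')\in J_\delta}$ are independent. Hence replacing the inner family by its independent copy $\{\tilde{\mathcal{A}}_{\ell',n'}\}_{(\ell',n')\in J_\delta}$ preserves the joint law, so $S_\delta\stackrel{d}{=}\tilde S_\delta$, where $\tilde S_\delta$ is the analogous partial sum built with the inner matrices $\tilde{\mathcal{A}}_{\ell',n'}$. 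Moreover $\mathbb{E}_\delta \tilde S_\delta = \tilde E_0$, the random matrix whose norm appears in~\eqref{inequalitySecondOrderbbb}.

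The tail transfer is then carried out in two moves. First, Jensen's inequality applied to the convex function $\|\cdot\|$ gives $\|E_0\| = \|\mathbb{E}_\delta S_\delta\| \leq \mathbb{E}_\delta \|S_\delta\|$, so $\pr(\|E_0\|\geq t) \leq \pr(\mathbb{E}_\delta \|S_\delta\|\geq t)$. Second, one establishes a Paley-Zygmund-type lower bound showing that, on the event $\{\mathbb{E}_\delta \|S_\delta\|\geq t\}$, the random variable $\|S_\delta\|$ exceeds a constant multiple of $t$ with $\delta$-probability bounded below by a universal $c_0 > 0$; this is the standard anti-concentration step for Rademacher-weighted sums. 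Integrating against the underlying randomness and using $S_\delta\stackrel{d}{=}\tilde S_\delta$ yields $\pr(\|E_0\|\geq t) \leq c_0^{-1}\pr(\|\tilde S_\delta\|\geq c_0 t)$. Finally, because $\tilde S_\delta$ is itself a restriction of $\tilde E_0$ to index pairs in $I_\delta\times J_\delta$, one may pass from $\|\tilde S_\delta\|$ to $\|\tilde E_0\|$ at the cost of another universal constant (using $\tilde E_0 = \mathbb{E}_\delta \tilde S_\delta$ and Jensen in the reverse direction, or a uniform-over-$\delta$ comparison), which produces the desired conclusion $\pr(E\geq t)\leq C\,\pr(\tilde E\geq t/C)$.

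The hard part is the Paley-Zygmund step: the upper bound $\|E_0\|\leq \mathbb{E}_\delta \|S_\delta\|$ is immediate, but showing that $\|S_\delta\|$ is typically of the same order as this expectation requires a non-trivial hypercontractivity-style argument for matrix-valued Rademacher chaos. It is at this point that the universal constant $C$ is produced, and care is needed so that it does not inherit a hidden dependence on the ambient dimensions $L$, $N$, or $K$; any such dependence would ruin subsequent use of the decoupled bounds in Lemmas~\ref{lemmaSecondTermIndependent} and~\ref{convergenceCoherence}.
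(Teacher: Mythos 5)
Your overall strategy---randomize, recover the coupled statistic as a conditional expectation via Jensen, then transfer the tail with an anti-concentration bound for fixed-degree chaos---is the same family of ideas the paper uses, and your hypercontractivity/Paley--Zygmund ingredient corresponds exactly to Theorem~\ref{theoremOuterProba} (Theorem 3.3.6 of de la Pen\~a and Gin\'e) applied with $p=1$; you are right that its constant depends only on the degree of the chaos and not on $L$, $N$ or $K$. The difference is in the randomization: you use Bernoulli selectors $\delta_{\ell,n}$ and restrict the double sum to $I_\delta\times J_\delta$, whereas the paper uses the accompanying sequences $\bs Z_{\ell,n},\tilde{\bs Z}_{\ell,n}$ obtained by swapping $\bs X_{\ell,n}$ and $\tilde{\bs X}_{\ell,n}$ according to a Rademacher sign. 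That difference is what creates the gap at the end of your argument.

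After your anti-concentration step you are left with $\pr(\|\tilde S_\delta\|\geq c_0 t)$, where $\tilde S_\delta$ is the decoupled sum restricted to pairs in $I_\delta\times J_\delta$, and you still must dominate this by $\pr(\tilde E\geq t/C)$, the tail of the \emph{full} decoupled sum. Neither tool you propose accomplishes this. Jensen applied to $\tilde E_0=\mathbb{E}_\delta \tilde S_\delta$ gives $\|\tilde E_0\|\leq \mathbb{E}_\delta\|\tilde S_\delta\|$, which bounds the full sum by the average of the partial sums---the opposite of what you need---and there is no ``Jensen in the reverse direction''; a uniform-over-$\delta$ comparison also fails, since for particular realizations of $\delta$ the restricted sum can be larger than the full sum (the omitted terms have no favorable sign). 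Completing a randomly restricted decoupled $U$-statistic to the full one at the level of tail probabilities requires its own argument (a conditional L\'evy-type maximal inequality, or another round of moment comparison conditioning on one of the two independent families). The paper's accompanying-sequence construction sidesteps this entirely: the swapped sum $U=4\sum h(\bs Z_{\ell,n},\tilde{\bs Z}_{\ell',n'})$ already ranges over all pairs $(\ell,n)\neq(\ell',n')$, its conditional expectation over the Rademacher signs is the four-term statistic $T$ containing the coupled sum, and the distributional identity $(\bs Z,\tilde{\bs Z})\stackrel{d}{=}(\bs X,\tilde{\bs X})$ delivers the full decoupled statistic with no completion step. Either switch to that construction or supply the missing completion lemma; as written, the last step of your proof does not go through.
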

\begin{proof}
The proof is, in essence, the same as the proof given by de la Pen\~a and Gin\'e in~\cite{de2012decoupling}. The only difference lies in its transposition to matrix valued random variables. Let $\bs X_{\ell,n}$ and $\bs Y_{\ell',n'}$ be defined as follows,
\begin{align}
\bs X_{\ell,n}& = \left(\mbox{vec}(\bs A_{\ell,n})\otimes  \mbox{vec}(\bs A_{\ell,n})^*\right) \\
\bs Y_{\ell',n'}& = \left(\mbox{vec}(\mathcal{P}_T(\bs A_{\ell',n'})) \langle \mathcal{P}_T(\bs A_{\ell',n'}),\bs h\bs m^*\rangle \right)
\end{align}
The norm~\eqref{inequalitySecondOrderbbb} can read compactly as 
\begin{align}
&\sum_{(\ell,n)}\sum_{(\ell',n')\neq (\ell,n)}\mathcal{P}_T^\perp \left(\bs X_{\ell,n} \bs Y_{\ell',n'} \right) -\mathbb{E}\sum_{(\ell,n)}\sum_{(\ell',n')\neq (\ell,n)}\mathcal{P}_T^\perp \left(\bs X_{\ell,n} \bs Y_{\ell',n'} \right)\nonumber \\
& = \sum_{(\ell,n)}\sum_{(\ell',n')\neq (\ell,n)} \left(\mbox{vec}(\bs A_{\ell,n})\otimes  \mbox{vec}(\bs A_{\ell,n})\right)\left(\mbox{vec}(\mathcal{P}_T(\bs A_{\ell',n'})) \langle \mathcal{P}_T(\bs A_{\ell',n'}),\bs h\bs m^*\rangle \right)- \bs C_{\ell,n,\ell',n'}\\
& = \sum_{(\ell,n)}\sum_{(\ell',n')\neq (\ell,n)} \left(\mbox{vec}(\bs A_{\ell,n})\otimes  \mbox{vec}(\bs A_{\ell,n})\right)\mathcal{P}_T\left(\mbox{vec}(\bs A_{\ell',n'})\otimes \mbox{vec}(\bs A_{\ell',n'})^*\right)\mbox{vec}(\bs h\bs m^*)- \bs C_{\ell,n,\ell',n'}\\
&= h_{\ell,n,\ell',n'}\bigg(\mbox{vec}(\bs A_{\ell,n})\otimes  \mbox{vec}(\bs A_{\ell,n}), \mbox{vec}(\bs A_{\ell',n'})\otimes  \mbox{vec}(\bs A_{\ell',n'}) \bigg)\\
& = h_{\ell,n,\ell',n'}(\bs X_{\ell,n},\bs X_{\ell',n'})
\end{align}
where $\bs C_{\ell,n,\ell',n'}$ are constant (i.e. deterministic) matrices and $h_{\ell,n,\ell',n'}(\bs X,\bs Y)$ is a function defined as 
\begin{align}
h_{\ell,n,\ell',n'}(\bs X_{\ell,n},\bs X_{\ell',n'}) = \mathcal{P}_T^\perp\bs X_{\ell,n} \bs P_T \bs X_{\ell',n'}\mbox{vec}(\bs h\bs m^*) - \bs C_{\ell,n,\ell',n'}
\end{align}
$\bs P_T$ is the matrix formulation of the projector $\mathcal{P}_T$ defined in~\eqref{projectorTBD}. Let $\{\tilde{\bs X}_{\ell,n}\}_{\ell,n}$ denote an independent copy of the sequence $\{\bs X_{\ell,n}\}_{\ell,n}$.
Let $\{\varepsilon_{\ell,n}\}$ denote a sequence of independent Rademacher random variables and consider the accompanying sequences of matrices $\{\bs Z_{\ell,n}\}$ and $\{\tilde{\bs Z}_{\ell,n}\}$ defined as 
\begin{align}
{\bs Z}_{\ell,n} = \left\{\begin{array}{ll}
\bs X_{\ell,n} & \mbox{if $\varepsilon_{\ell,n}=1$}\\
\tilde{\bs X}_{\ell,n} & \mbox{if $\varepsilon_{\ell,n}=-1$}\\
\end{array}\right.,\qquad \tilde{\bs Z}_{\ell,n} = \left\{\begin{array}{ll}
\tilde{\bs X}_{\ell,n} & \mbox{if $\varepsilon_{\ell,n}=1$}\\
\bs X_{\ell,n} & \mbox{if $\varepsilon_{\ell,n}=-1$}\\
\end{array}\right.
\end{align}
The law $\mathcal{D}$ of $(\bs Z_{1,1},\ldots, \bs Z_{L,N})$ is the same as the law of $(\bs A_{(1,1)},\ldots,\bs A_{(L,N)})$ and similarly, 
$$\mathcal{D}(\bs Z_{1,1},\ldots, \bs Z_{L,N}, \tilde{\bs Z}_{1,1},\ldots, \tilde{\bs Z}_{L,N}) = \mathcal{D}(\bs A_{1,1},\ldots, \bs A_{L,N}, \tilde{\bs A}_{1,1},\ldots, \tilde{\bs A}_{L,N})$$ 
Both of these laws are given by $(\mathcal{P}_{1,1},\ldots, \mathcal{P}_{L,N}) = \mathcal{P}^{LN}$ and $(\mathcal{P}_{1,1},\ldots, \mathcal{P}_{L,N})^2 = \mathcal{P}^{2LN}$ where $\mathcal{P}$ is the law of each $\bs A_{\ell,n}$. If we let $\mathcal{X}$ denote the sigma algebra $\sigma(\bs A_{\ell,n}, \tilde{\bs A}_{\ell,n})$ generated by the sequences $\bs A_{\ell,n}$ and $\tilde{\bs A}_{\ell,n}$, we have 
\begin{align}\begin{split}
h(\bs Z_{\ell,n}, \bs Z_{\ell',n'}|\mathcal{X}) & = h(\bs X_{\ell,n}, \bs X_{\ell',n'})\delta(\varepsilon_{\ell,n}=1,\varepsilon_{\ell',n'}=1) \\
&+ h(\bs X_{\ell,n}, \tilde{\bs X}_{\ell',n'})\delta(\varepsilon_{\ell,n}=1,\varepsilon_{\ell',n'}=-1)\\
& + h(\tilde{\bs X}_{\ell,n}, \bs X_{\ell',n'})\delta(\varepsilon_{\ell,n}=-1,\varepsilon_{\ell',n'}=1)\\
& + h(\tilde{\bs X}_{\ell,n}, \tilde{\bs X}_{\ell',n'})\delta(\varepsilon_{\ell,n}=-1,\varepsilon_{\ell',n'}=-1)
\end{split}\label{decompositionh01}
\end{align}
which, when averaging over the Rademacher sequence, implies 
\begin{align}
\mathbb{E}_{\varepsilon} h(\bs Z_{\ell,n}, \bs Z_{\ell',n'}|\mathcal{X}) &= \frac{1}{4}\left(h(\bs X_{\ell,n}, \bs X_{\ell',n'})+ h(\bs X_{\ell,n}, \tilde{\bs X}_{\ell',n'})+ h(\tilde{\bs X}_{\ell,n}, \bs X_{\ell',n'})+ h(\tilde{\bs X}_{\ell,n}, \tilde{\bs X}_{\ell',n'})\right)
\end{align}
Equivalently, we will use the following relation later in the proof,
\begin{align}\begin{split}
4 h(\bs Z_{\ell,n}, \tilde{\bs Z}_{\ell',n'})& = (1+\varepsilon_{\ell,n})(1+\varepsilon_{\ell',n'})h(\bs X_{\ell,n}, \tilde{\bs X}_{\ell',n'}) \\
& + (1+\varepsilon_{\ell,n})(1-\varepsilon_{\ell',n'})h(\bs X_{\ell,n}, \bs X_{\ell',n'})\\
& + (1-\varepsilon_{\ell,n})(1+\varepsilon_{\ell',n'})h(\tilde{\bs X}_{\ell,n}, \tilde{\bs X}_{\ell',n'})\\
& + (1-\varepsilon_{\ell,n})(1-\varepsilon_{\ell',n'})h(\tilde{\bs X}_{\ell,n}, \bs X_{\ell',n'})
\end{split}\label{decompositionhZ}
\end{align}
Now we can follow the proof of Theorem 3.4.1 in~\cite{de2012decoupling} (Theorem 1 in~\cite{de1995decoupling}). We will need the following result which follows as a special case of Theorem 1.1.3. in~\cite{de2012decoupling}.  For $\bs X, \bs Y$ and $\bs Z$ i.i.d., we have  
\begin{align}
\pr\left(\|\bs X\|>t\right) & = \pr\left(\|(\bs X+\bs Y) +  (\bs X+\bs Z) - (\bs Y+\bs Z)\|>2t\right)\\
& \leq \pr\left(\|(\bs X+\bs Y)\| > \frac{2t}{3}\right) +  \pr\left(\|\bs X+\bs Z\|>\frac{2t}{3}\right) +\pr\left(\|\bs Y+\bs Z\|>\frac{2t}{3}\right)\\
&\leq 3\pr\left(\|\bs X + \bs Y\|>\frac{2t}{3}\right)\label{inequalityGine}
\end{align}
Using~\eqref{inequalityGine} with the independent sequence $\{\tilde{\bs X}_{\ell,n}\}_{\ell,n}$, one can write
\begin{align}
&\pr\left(\left\|\sum_{\ell,n}\sum_{(\ell',n')\neq (\ell,n)} h(\bs X_{\ell,n}, \bs X_{\ell',n'})\right\|>t\right) \\
&\leq 3\pr\left(\left\|\sum_{\ell,n}\sum_{(\ell',n')\neq (\ell,n)} h(\bs X_{\ell,n}, \bs X_{\ell',n'}) + h(\tilde{\bs X}_{\ell,n}, \tilde{\bs X}_{\ell',n'})\right\|>\frac{2t}{3}\right)\label{probaboundDecoup1}
\end{align}
Splitting the norm, 
\begin{align}
&\left\|\sum_{\ell,n}\sum_{(\ell',n')\neq (\ell,n)} h(\bs X_{\ell,n}, \bs X_{\ell',n'}) + h(\tilde{\bs X}_{\ell,n}, \tilde{\bs X}_{\ell',n'})\right\|\\
&\leq \left\|\sum_{\ell,n}\sum_{(\ell',n')\neq (\ell,n)} h(\bs X_{\ell,n}, \bs X_{\ell',n'}) + h(\tilde{\bs X}_{\ell,n}, \tilde{\bs X}_{\ell',n'}) - h(\tilde{\bs X}_{\ell,n}, \bs X_{\ell',n'}) - h(\bs X_{\ell,n}, \tilde{\bs X}_{\ell',n'})\right\|\\
&+\left\|\sum_{\ell,n}\sum_{(\ell',n')\neq (\ell,n)} h(\tilde{\bs X}_{\ell,n}, \bs X_{\ell',n'}) - h(\bs X_{\ell,n}, \tilde{\bs X}_{\ell',n'})\right\|
\end{align}
Susbtituting into~\eqref{probaboundDecoup1} yields,
\begin{align}
&\pr\left(\left\|\sum_{\ell,n}\sum_{(\ell',n')\neq (\ell,n)} h(\bs X_{\ell,n}, \bs X_{\ell',n'})\right\|>t\right)\\
&\leq 3\pr\left(\left\|\sum_{\ell,n}\sum_{(\ell',n')\neq (\ell,n)} h(\bs X_{\ell,n}, \bs X_{\ell',n'}) + h(\tilde{\bs X}_{\ell,n}, \tilde{\bs X}_{\ell',n'}) - h(\tilde{\bs X}_{\ell,n}, \bs X_{\ell',n'}) - h(\bs X_{\ell,n}, \tilde{\bs X}_{\ell',n'})\right\|> \frac{t}{3}\right)\label{decouple2}\\
& + 3\pr \left(\left\|\sum_{\ell,n}\sum_{(\ell',n')\neq (\ell,n)} h(\tilde{\bs X}_{\ell,n}, \bs X_{\ell',n'}) - h(\bs X_{\ell,n}, \tilde{\bs X}_{\ell',n'})\right\|>\frac{t}{3}\right)\label{decouple3}
\end{align}
\begin{align}
&\pr\left(\left\|\sum_{\ell,n}\sum_{(\ell',n')\neq (\ell,n)} h(\bs X_{\ell,n}, \bs X_{\ell',n'})\right\|>t\right)\\
&\leq 3\pr\left(\left\|\sum_{\ell,n}\sum_{(\ell',n')\neq (\ell,n)} h(\bs X_{\ell,n}, \bs X_{\ell',n'}) + h(\tilde{\bs X}_{\ell,n}, \tilde{\bs X}_{\ell',n'}) - h(\tilde{\bs X}_{\ell,n}, \bs X_{\ell',n'}) - h(\bs X_{\ell,n}, \tilde{\bs X}_{\ell',n'})\right\|> \frac{t}{3}\right)\label{decouple4}\\
& + 3 \pr \left(\left\|\sum_{\ell,n}\sum_{(\ell',n')\neq (\ell,n)} h(\tilde{\bs X}_{\ell,n}, \bs X_{\ell',n'}) \right\|>\frac{t}{6}\right)\label{decouple5}\\
& + 3 \pr \left(\left\|\sum_{\ell,n}\sum_{(\ell',n')\neq (\ell,n)} h(\bs X_{\ell,n}, \tilde{\bs X}_{\ell',n'}) \right\|>\frac{t}{6}\right)\label{decouple6}\\
& \leq  3\pr\left(\left\|\sum_{\ell,n}\sum_{(\ell',n')\neq (\ell,n)} h(\bs X_{\ell,n}, \bs X_{\ell',n'}) + h(\tilde{\bs X}_{\ell,n}, \tilde{\bs X}_{\ell',n'}) - h(\tilde{\bs X}_{\ell,n}, \bs X_{\ell',n'}) - h(\bs X_{\ell,n}, \tilde{\bs X}_{\ell',n'})\right\|> \frac{t}{3}\right)\label{decouple7}\\
& + 6 \pr \left(\left\|\sum_{\ell,n}\sum_{(\ell',n')\neq (\ell,n)} h(\tilde{\bs X}_{\ell,n}, \bs X_{\ell',n'}) \right\|>\frac{t}{6}\right)\label{decouple8}
\end{align}
%
In~\eqref{decouple8} we use the fact that $h(\bs X_{\ell,n}, \tilde{\bs X}_{\ell',n'}) $ and $h(\tilde{\bs X}_{\ell,n}, \bs X_{\ell',n'})$ have the same distribution. We are thus left with bounding the $U$-statistics~\eqref{decouple7}. Let us use $U$ to denote the sum,
\begin{align}
U &= 4\sum_{\ell,n}\sum_{(\ell',n')\neq (\ell,n)} h(\bs Z_{\ell,n}, \tilde{\bs Z}_{\ell',n'})
\end{align}
and let $\mathcal{U} = \left\{h(\bs X_{\ell,n},\bs X_{\ell',n'}), h(\bs X_{\ell,n},\tilde{\bs X}_{\ell',n'}), h(\tilde{\bs X}_{\ell,n},\bs X_{\ell',n'}),h(\tilde{\bs X}_{\ell,n},\tilde{\bs X}_{\ell',n'}) \right\}$. 
Using~\eqref{decompositionhZ}, one can see that $U$ conditionned on $\mathcal{U}$ is a second order Rademacher chaos. finally, to conclude, we use a conditional version of Jensen's inequality following from Theorem 3.3.6~\cite{de2012decoupling}. Let us use $T$ to denote the U-statistics,
\begin{align}
T = \sum_{\ell,n}\sum_{\ell',n'\neq (\ell,n)}\left[h(\bs X_{\ell,n},\bs X_{\ell',n'}) + h(\bs X_{\ell,n},\tilde{\bs X}_{\ell',n'}) + h( \tilde{\bs X}_{\ell,n},\bs X_{\ell',n'}) + h(\tilde{\bs X}_{\ell,n},\tilde{\bs X}_{\ell',n'}) \right]
\end{align}
Clearly,~\eqref{decompositionh01} and~\eqref{decompositionhZ} show that we have $\mathbb{E}\left\{U|\mathcal{U}\right\} = T$. Following the proof of~\cite{de1995decoupling}, we now use the following Theorem from~\cite{de2012decoupling},
\begin{theorem}[\label{theoremOuterProba}Theorem 3.3.6 in~\cite{de2012decoupling}]
Let $\mathcal{F}$ be a normed linear space and let $n\in \mathbb{N}$. Let $\left\{\varepsilon_i\right\}_{1\leq i\leq n}$ denote a sequence of independent Rademacher variables. Let $\bs X = \left\{X_0,X_1,\ldots, X_i,\ldots,X_{i_1,\ldots,i_d}\right\}$, $1\leq i,i_k\leq n$ denote a sequence of random variables taking values in a normed vector space $\mathcal{F}$. Let $\bs \varepsilon$ and $\bs X$ be defined on different factors of a product probability space. Then for every $p$ there exists 
a constant $C_p^d$ such that the conditional Rademacher chaos 
\begin{align}
X = X_0 + \sum_{i=1}^n X_i \xi_i + \ldots + \sum_{1\leq i_1\neq i_2\neq \ldots\neq i_d\leq n}X_{i_1,\ldots,i_d}\varepsilon_1\ldots \varepsilon_{i_d}
\end{align}
satisfies 
\begin{align}
\pr^*\left(\mathbb{E}\left\{\|X\|^p|\bs X\right\}\geq 2t^p\right)\leq C_{p,d}\pr^*\left(\|X\|>t\right).\label{boundOuterProbability}
\end{align}
Here the notation $\pr^*\left(X\right)$ means the outer probability of the event $X$.
\end{theorem}
Using Theorem~\ref{theoremOuterProba} for $p=1$, one can write, 
\begin{align}
\pr\left(\|T\|>t\right) &= \pr\left(\|\mathbb{E}\{U|\mathcal{U}\}\|>t\right)\\
&\leq c\pr\left(2\|U\|>t\right)\\
&=c \pr\left(2\|4\sum_{(\ell,n)}\sum_{\ell',n'} h(\bs Z_{\ell,n},\tilde{\bs Z}_{\ell',n'})\|\right)\\
&=c \pr\left(2\|4\sum_{(\ell,n)}\sum_{\ell',n'} h(\bs X_{\ell,n},\tilde{\bs X}_{\ell',n'})\|\right)
\end{align}
In the last line, we use the fact that $(\bs Z_{\ell,n}, \tilde{\bs Z}_{\ell',n'})$ and $(\bs X_{\ell,n}, \tilde{\bs X}_{\ell',n'})$ have the same distribution. This concludes the proof of proposition~\ref{decouplingProp}. 
\end{proof}

\subsection{\label{proofIndependentSecondOrder}Proof of lemma~\ref{lemmaSecondTermIndependent}}
Building upon the result of proposition~\ref{decouplingProp} and the discussion of section~\ref{sectionUstatistics}, we now prove the remaining two lemmas~\ref{lemmaSecondTermIndependent} and~\ref{convergenceCoherence}
\lemmaSecondTermIndependent*
\begin{proof}
Let $\bm X_{\ell,n}$ denote arbitrary matrices of size $L\times NK$ and let $\bm X[\ell,\sim n]$ denote the submatrix given by considering the $\ell^{th}$ row of $\bm X$ and the columns $(n-1)K+1\leq k\leq nK$. I.e 
\begin{align}
\hat{\bs X}[\ell,\sim n]\equiv \sum_{k=1}^K \bs f_{\ell}^*\bs X[:,(n-1)K+k](\bs e_n\otimes \bs e_k).
\end{align}
$\bs e_{k}$ is a $K$-dimensional zero vector with its $k^{th}$ entry set to $1$ and $\bs e_n$ is the $N$-dimensional zero vector with its $n^{th}$ entry set to $1$. The norm on the LHS of~\eqref{tailBoundTperDecoupled} can expand as
\begin{align}
\left\|\sum_{(\ell,n)}\mathcal{Z}_{\ell,n}\right\|&= \left\|\sum_{(\ell,n)}\bm  A_{\ell,n}\langle \bm A_{\ell,n},\bm X_{\ell,n}\rangle - \mathbb{E}\bm A_{\ell,n}\langle \bm A_{\ell,n},\bm X_{\ell,n} \rangle\right\|\\
&= \left\|\sum_{(\ell,n)} \bm A_{\ell,n}\langle \bm A_{\ell,n},\bm X_{\ell,n}\rangle - \mathbb{E}\bm A_{\ell,n}\langle \bm A_{\ell,n},\bm X_{\ell,n} \rangle\right\|
\end{align}
We will use proposition~\eqref{bernstein} to derive the tail bound of lemma~\ref{lemmaSecondTermIndependent}. To use this proposition, we start by deriving the bound for the variance. Note that the $\mathcal{Z}_{\ell,n}$ are not hermitian as we have $\bm A_{\ell,n}\bm A_{\ell,n}^* = \|\bs c_{\ell,n}\|^2\bs f_\ell\bs f_\ell^*$ and $\bm A_{\ell,n}^*\bm A_{\ell,n}^* = \|\bs f_\ell\|^2\bs c_{\ell,n}\bs c_{\ell,n}^*$. The first variance bound gives
\begin{align}
\left\|\sum_{(\ell,n)}\mathbb{E}\bm Z_{\ell,n}\bm Z_{\ell,n}^*\right\|& = \left\|\sum_{(\ell,n)}\mathbb{E}\bm A_{\ell,n}\bm A_{\ell,n}^* |\langle \bm A_{\ell,n}, \bm X_{\ell,n}\rangle |^2 - \left|\mathbb{E}\bm A_{\ell,n}\langle \bm A_{\ell,n}, \bm X_{\ell,n}\rangle \right|^2\right\|\nonumber\\
&\leq \left\|\sum_{(\ell,n)}\mathbb{E}\bm A_{\ell,n}\bm A_{\ell,n}^* |\langle \bm A_{\ell,n}, \bm X_{\ell,n}\rangle |^2\right\|\nonumber\\
&\leq \left\|\sum_{(\ell,n)} \mathbb{E}\bs f_\ell\bs f_\ell^*\|\bs c_{\ell,n}\|^2 |\langle \bs c_{\ell,n}, \hat{\bm X}_{\ell,n}[\ell,\sim n]\rangle |^2\right\|\nonumber\\
&\leq K\left\|\sum_{(\ell,n)}\bs f_\ell\bs f_\ell^* \|\hat{\bm X}_{\ell,n}[\ell,\sim n]\|^2\right\|\label{boundlemma2variance1}
\end{align}
 Now using the coherences defined in~\eqref{coherenceXmu} to~\eqref{coherenceXnu} and their supremas defined in~\eqref{supremasCoherence}, the last line of~\eqref{boundlemma2variance1} can be reduced to 
\begin{align}
\left\|\sum_{(\ell,n)\in \Gamma_p}\mathbb{E}\bm Z_{\ell,n}\bm Z_{\ell,n}^*\right\|&\leq K\cdot \frac{\overline{\mu}^2_{\bm X}}{L}\label{boundlemma2variance1b}
\end{align}
For the second term, a similar argument gives,
\begin{align}
\left\|\mathbb{E}\sum_{(\ell,n)}  \bm Z_{\ell,n}^*\bm Z_{\ell,n}\right\|& = \left\|\mathbb{E}\sum_{(\ell,n)\in \Gamma_p}\|\bs f_\ell\|^2\bs c_{\ell,n}\bs c_{\ell,n}^* |\langle \bs f_\ell\bs c_{\ell,n}, \bm X_{\ell,n}\rangle |^2 \right\|,\nonumber \\
&\leq \nonumber \left\| \mathbb{E}\sum_{(\ell,n)}\|\bs f_\ell\|^2\bs c_{\ell,n}\bs c_{\ell,n}^* |\langle\bs c_{\ell,n}, \hat{\bm X}_{\ell,n}[\ell,\sim n]\rangle |^2\right\|\\
&\lesssim \left\|\sum_{(\ell,n)}\|\hat{\bm X}_{\ell,n}[\ell,\sim n]\|^2 (\bs e_n\bs e_n^*)\otimes \bm I_K\right\|\nonumber + \left\|\sum_{(\ell,n)}\hat{\bm X}_{\ell,n}^T[\ell,\sim n]\overline{\hat{\bm X}}_{\ell,n}[\ell,\sim n]\right\|\nonumber \\
&\lesssim \left( \frac{\overline{\rho}^2_{\bm X}}{N}\right). \label{boundlemma2variance2}
\end{align}
The last line comes from orthogonality of the $\bm X[\ell,\sim n_1]$ for $\bm X[\ell,\sim n_2]$ for distinct $n_1,n_2$. Combining~\eqref{boundlemma2variance1b} with~\eqref{boundlemma2variance2}, we get the following bound on $\sigma$,
\begin{align}
\sigma \lesssim \left( \frac{\overline{\mu}^2_{\bm X}K}{L} + \frac{\overline{\rho}^2_{\bm X}}{N}\right)
\end{align}
We now compute the bound on the $\Psi_1$ norm. Following the same argument as in previous lemmas, we get 
\begin{align}
\|\|\bm A_{\ell,n}\langle \bm A_{\ell,n}, \bm X_{\ell,n} \rangle\|\|_{\psi_1}& \leq \|\bs f_\ell\|\|\|\bs c_{\ell,n}\|\|_{\psi_2} \||\langle \bs c_{\ell,n},\hat{\bm X}_{\ell,n}[\ell,\sim n]\rangle |\|_{\psi_2} \nonumber \\
&\leq  \sqrt{K}\|\bm X_{\ell,n}[\ell,\sim n]\|\\
& \leq  \sqrt{K}\frac{\overline{\nu}_{\bm X}}{\sqrt{LN}} \label{psi1tmpLemma2}
\end{align}
The last line follows from the discussion at the end of section~\ref{subgaussianAndSubexponential}. Substituting those bounds into proposition~\ref{bernstein}, we have
\begin{multline*}
\left\|\sum_{(\ell,n)} \mathcal{A}_{\ell,n}^*\mathcal{A}_{\ell,n}\bm X_{\ell,n} - \bm X_{\ell,n}\right\|\lesssim \\
\max\left\{\sqrt{\left(\frac{\overline{\mu}^2_{\bm X}K}{L}  + \frac{\overline{\rho}^2_{\bm X}}{N} \right)}\sqrt{t+\log(LKN)},  \frac{\overline{\nu}_{\bm X}}{\sqrt{LN}} \sqrt{K} \log(LKN) (t+ \log(LKN))\right\}\label{eq:BernsteinSum}
\end{multline*}
with probability at least $1-e^{-t}$. Taking $t = \beta\log(LN)$ concludes the proof of lemma~\ref{lemmaSecondTermIndependent}.

\end{proof}

\subsection{\label{proofconvergenceCoherence}Proof of lemma~\ref{convergenceCoherence}}

Again, we recall the statement of lemma~\ref{convergenceCoherence} for clarity before proving this lemma.

\convergenceCoherence* 

\begin{proof}
The proof is similar for any of the three relations on either $\nu^2_{\bm X_{\ell,n}}$, $\mu^2_{\bm X_{\ell,n}}$ or $\rho^2_{\bm X_{\ell,n}}$.  We thus only show the relation $\nu^2_{\bm X_{\ell,n}}\leq \mu_h^2\mu_m^2$. From the definition of $\nu_{\bm X_{\ell,n}}$, we have,
\begin{align}
&\nu^2_{\bm X_{\ell_1,n_1}}/LN = \sup_{\ell_2,n_2}\|\hat{\bm X}_{\ell_1,n_1}[\ell_2,\sim n_2]\|^2_F \\
& = \sup_{\ell_2,n_2} \sum_{k_2\sim n_2} \left|\sum_{(\ell,n)\neq (\ell_1,n_1)} \bs F \mathcal{P}_T (\bm A_{\ell,n})\langle \bm A_{\ell,n}, \bs h\bs m^*\rangle  - \mathbb{E}\sum_{(\ell,n)\neq (\ell_1,n_1)} \bs F \mathcal{P}_T (\bm A_{\ell,n})\langle \bm A_{\ell,n}, \bs h\bs m^*\rangle \right|^2_{\ell_2,k_2}\label{modulusToBound}
\end{align}
The notation above really means the modulus squared of the entry $(\ell_2,k_2)$ of the underlying matrix.
We start by bounding the sum of terms inside the modulus. Let us denote each of the terms of this sum by $z_{\ell,n}$. We have
\begin{align}
z_{\ell,n} = \left(\bs F\mathcal{P}_T(\bm A_{\ell,n})\langle \bm A_{\ell,n}, \bs h\bs m^*\rangle - \mathbb{E}\bs F\mathcal{P}_T(\bm A_{\ell,n})\langle \bm A_{\ell,n}, \bs h \bs m^*\rangle\right)_{(\ell_2,k_2)}. 
\end{align}
for fixed $(\ell,n)\neq (\ell_1,n_1)$, $\ell_2,n_2$ and $k_2$. We will use proposition~\ref{bernstein} to bound the sum $\sum_{(\ell,n)\neq (\ell_1,n_1)}z_{\ell,n}$. We start by deriving a bound on the variance of the $z_{\ell,n}$. 
\begin{align*}
\mathbb{E}|z_{\ell,n}|^2 & = \mathbb{E}\left|\bs F\mathcal{P}_T(\bm A_{\ell,n})\langle \bm A_{\ell,n}, \bs h\bs m^*\rangle\right|^2_{\ell_2,k_2} -  \left|\mathbb{E}\bs F\mathcal{P}_T(\bm A_{\ell,n})\langle \bm A_{\ell,n}, \bm X\rangle\right|^2_{\ell_2,k_2}\\
&\leq \mathbb{E}\left|\bs F\mathcal{P}_T(\bm A_{\ell,n})\langle \bm A_{\ell,n}, \bm X\rangle\right|^2_{\ell_2,k_2}.
\end{align*}
Let us use $\hat{\mathcal{P}}_T$ to denote the Fourier transform of the projector onto the tangent space, $\hat{\mathcal{P}}_T = \bs F\mathcal{P}_T$. From definition~\eqref{projectorTBD}, we have
\begin{align*}
&\left|\hat{\mathcal{P}}_T(\bm A_{\ell,n})\langle \bm A_{\ell,n}, \bs h\bs m^*\rangle\right|^2_{\ell_2,k_2}\leq
\left|\hat{\mathcal{P}}_T(\bm A_{\ell,n})\right|^2_{(k_2,\ell_2)} |\langle \bs c_{\ell,n},\hat{h}[\ell]\bs m_n\rangle |^2\\
 &\leq \left||\hat{h}[\ell]|h[\ell_2]\bs c_{\ell,n}[k_2] + \bs e_\ell[\ell_2]\langle \bs c_{\ell,n}, \bs m_n \rangle \bs m^*[k_2] - 
\hat{h}[\ell_2]\hat{h}[\ell]\langle \bs c_{\ell,n}, \bs m\rangle \bs m^*[k_2]\right|^2|\langle \bs c_{\ell,n},\hat{h}[\ell]\bs m^*_n\rangle |^2\\
&\lesssim  \left(|\hat{h}[\ell]|^2|\hat{h}(\ell_2)|^2|\bs c_{\ell,n}[k_2]|^2\right)|\langle \bs c_{\ell,n}, \hat{h}[\ell]\bs m^*_n\rangle |^2\\
&+ \left(|\bs e_\ell[\ell_2]|^2|\langle \bs c_{\ell,n},\bs m_n\rangle |^2|\bs m_{n_2}[k_2]|^2\right)|\langle \bs c_{\ell,n}, \hat{h}[\ell]\bs m^*_n\rangle |^2\\
&+  \left(|\hat{h}[\ell_2]|^2|\hat{h}[\ell]|^2|\langle \bs c_{\ell,n}, \bs m\rangle |^2|\bs m_{n_2}[k_2]|^2\right)|\langle \bs c_{\ell,n}, \hat{h}[\ell]\bs m^*_n\rangle |^2.
\end{align*}
Taking the expectation gives,
\begin{align*}
\mathbb{E}|z_{\ell,n}|^2&\lesssim |\hat{h}[\ell]|^2 |\hat{h}[\ell_2]|^2\delta(n,n_2)|\hat{h}[\ell]\bs m_n^*[k_2]|^2+ \delta(\ell_2,\ell)\|\bs m_n\|^2 |\bs m_n[k_2] |^2\|\hat{h}[\ell]\bs m_n^*\|^2\\
&+ |\hat{h}[\ell_2]|^2|\hat{h}[\ell]|^2\|\bs m_n\|^2|\bs m_n[k_2]|^2\|\hat{h}[\ell]\bs m_n^*\|^2,\\
&\lesssim \left(|\hat{h}[\ell]|^2 |\hat{h}[\ell_2]|^2\delta(n,n_2)+ \delta(\ell_2,\ell)\|\bs m_n\|^2 |\bs m_n[k_2] |^2 + |\hat{h}[\ell_2]|^2|\hat{h}[\ell]|^2\|\bs m_n\|^2|\bs m_n[k_2]|^2\right)\frac{\mu_h^2\mu_m^2}{LN}
\end{align*}
Summing over the measurements, we get 
\begin{align}
\sigma^2 &= \sum_{(\ell,n)\neq (\ell_1,n_1)}\mathbb{E}|z_{\ell,n}|^2\lesssim  (|\hat{h}[\ell_2]|^2 + |\bs m[k_2]|^2)\frac{\mu_h^2\mu_m^2}{LN}\label{boundVarianceCoherenceLemma}
\end{align}
In order to apply proposition~\ref{bernstein}, we are left with computing the bound on the $\bs \Psi_1$ norm of the $z_{\ell,n}$. Again, we can write
\begin{align*}
&\left|\bs F\mathcal{P}_T(\bm A_{\ell,n})\langle \bm A_{\ell,n}, \bs h\bs m^*\rangle\right|_{\ell_2,k_2}\\
&=\left| \left(\hat{h}[\ell]\hat{h}[\ell_2]\bs c_{\ell,n}[k_2]+ \bs e_{\ell}[\ell_2]\langle \bs c_{\ell,n}, \bs m\rangle m[k_2] - \hat{h}[\ell_2]\hat{h}[\ell]\langle \bs c_{\ell,n},\bs m\rangle m[k_2]\right) \langle \bs c_{\ell,n},\hat{h}[\ell]\bs m_n \rangle \right| \\
&\lesssim   \left|\hat{h}[\ell]\hat{h}[\ell_2]\bs c_{\ell,n}[k_2]\langle \bs c_{\ell,n},\hat{h}[\ell]\bs m_n \rangle \right|\\
&+\left| \bs e_{\ell}[\ell_2]\langle \bs c_{\ell,n}, \bs m\rangle m[k_2] \langle \bs c_{\ell,n},\hat{h}[\ell]\bs m_n \rangle \right|\\
& + \left|\hat{h}[\ell_2]\hat{h}[\ell]\langle \bs c_{\ell,n},\bs m\rangle m[k_2] \langle \bs c_{\ell,n},\hat{h}[\ell]\bs m_n \rangle \right|
\end{align*}
Each of the terms above can be bounded by again using the fact that the product of two subgaussian random variables follows a subexponential distribution and that the Orlicz one norm of this product can be upper bounded by the product of the Orlicz-$2$ norms of each of the subgaussian random variables. In other words, $\|XY\|_{\bs \Psi_1}\leq \|X\|_{\bs \Psi_2}\|Y\|_{\bs \Psi_2}$. For each of the three terms in the expression of $z_{\ell,n}-\mathbb{E}z_{\ell,n}$, we can use the fact that $\|\bs c_{\ell,n}[k']\|_{\bs \Psi_2}\leq c$, $\langle \bs c_{\ell,n}, \bs m_n\rangle$is a mixture of centered gaussians. Note that the corresponding variable $|\langle \bs c_{\ell,n}, \bs m_n\rangle|^2$ is a chi-squared that is sub-exponential with parameters $(2, 4\|\bs m_n\|^2)$ as we saw in the proof of lemma~\ref{lemmaT} and using proposition~\ref{EqNormPsi}, we get,
\begin{align}
\|\left|\mathcal{P}_T(\bm A_{\ell,n})\langle \bm A_{\ell,n},\bs h\bs m^*\rangle\right|_{\ell_2,n_2}\|_{\bs \Psi_1}\lesssim \left(|\hat{h}[\ell]\hat{h}[\ell_2]| + e_\ell[\ell_2] m[k_2]\|\bs m_n\|\right)\frac{\mu_h\mu_m}{\sqrt{LN}}\label{boundPsiNormtt}
\end{align}
Finally for the expectation, simply recall that $\|X - \mathbb{E}X\|_{\psi_q}\leq 2\|X\|_{\psi_q}$
%
 
Using the bounds~\eqref{boundPsiNormtt} and~\eqref{boundVarianceCoherenceLemma}, one can now apply proposition~\ref{bernstein} to derive the tail bound guaranteeing that each each term within the modulus of~\eqref{modulusToBound} remains sufficiently small. Applying proposition~\ref{bernstein} gives, 
\begin{align*}
&\left|\sum_{(\ell,n)\neq (\ell_1,n_1)}\bs F\mathcal{P}_T(\bm A_{\ell,n})\langle \bm A_{\ell,n}, \bs h\bs m^*\rangle -  \mathbb{E}\bs F\mathcal{P}_T(\bm A_{\ell,n})\langle \bm A_{\ell,n}, \bs h\bs m^*\rangle\right|_{\ell_2,k_2}\\
&\leq \beta\max\left\{\left(\sqrt{(|\hat{h}[\ell_2]|^2 + |\bs m[k_2]|^2)}\right), \log(N\vee L)\left(|\hat{h}[\ell]\hat{h}[\ell_2]| + e_\ell[\ell_2] m[k_2]\|\bs m\|\right)\right\}\frac{\mu_h\mu_m}{\sqrt{LN}}
\end{align*}
with probability at least $1 - c_3(LN)^{-\beta}$ for some constant $c_3$. We then square the modulus and take the sum over the indices $k_2\sim n_2$, which gives 
\begin{align}
\sup_{\ell_2,n_2}\sum_{k_2\sim n_2}|\sum_{(\ell,n)\neq (\ell_1,n_1)}&z_{\ell,n}|^2_{\ell_2,k_2}\lesssim \\
&\max\left\{K \|\hat{\bs h}\|^2_{\infty} + \|\bs m_{n_2}\|^2,K\log^2(N\vee L)\left(\|\hat{\bs h}\|^2_{\infty} + |m[k_2]|\|\bs m_n\|\right)^2\right\}\frac{\mu_h^2\mu_m^2}{LN}.\label{beforelastBoundCoherenceNu}\\
\lesssim &\max\left\{K\|\hat{\bs h}\|^2_{\infty} + \|\bs m_{n_2}\|^2, K\|\bs h\|^4_{\infty} + \|\bs m_n\|^2\|\bs m_{n_2}\|^2\right\}\frac{\mu_h^2\mu_m^2}{LN}\label{lastBoundCoherenceNu}
\end{align}
With probability at least $1 - 2c_3(LN)^{-\beta}$. In~\eqref{beforelastBoundCoherenceNu}, we used $\sum_{k_2=1}^K|m_{n_2}[k_2]|^2 = \|\bs m_{n_2}\|^2$

The bound~\eqref{lastBoundCoherenceNu} can be made smaller than any constant from the coherences $\mu^2_h/L = \|\hat{\bs h}\|_\infty^2$ and $\mu^2_m/N =\max_{n}\|\bs m_n\|^2$ as soon as $L\gtrsim K\mu^2_h$ and $N\gtrsim \mu^2_m $ up to log factors. This concludes the proof of lemma~\ref{convergenceCoherence}.

\end{proof}

\subsection{\label{sectionExpectationSwap}Proof of lemma~\ref{lemmaLemmaExpectationSwap}}

\lemmaLemmaExpectationSwap*

\begin{proof}
We use proposition~\ref{bernstein}. The terms within the norm~\eqref{normLemmaLemmaExpectationSwap} expand as
\begin{align}
&\sum_{\ell,n} \bs f_\ell \left(|\hat{h}[\ell]|^2\tilde{\bs c}^*_{\ell,n} + \|\bs f_\ell\|^2\langle \bs m_n , \tilde{\bs c}^*_{\ell,n}\rangle\bs m_n^* - |\hat{h}[\ell]|^2\langle \bs m_n , \tilde{\bs c}^*_{\ell,n}\rangle\bs m_n^*\right)|\hat{h}[\ell]|\langle \tilde{\bs c}_{\ell,n}, \bs m_n\rangle\\
&-\mathbb{E}\bs f_\ell \left(|\hat{h}[\ell]|^2\tilde{\bs c}^*_{\ell,n} + \|\bs f_\ell\|^2\langle \bs m_n , \tilde{\bs c}^*_{\ell,n}\rangle\bs m_n^* - |\hat{h}[\ell]|^2\langle \bs m_n , \tilde{\bs c}^*_{\ell,n}\rangle\bs m_n^*\right)|\hat{h}[\ell]|\langle \tilde{\bs c}_{\ell,n}, \bs m_n\rangle
\end{align}
We will bound each of the first two term through proposition~\ref{bernstein}. The bound on the last term follows from the first two. For each of those two terms, letting aside the expectation, the first and second terms in the variance bound can be expressed as 
\begin{align}
&\left\|\mathbb{E}\sum_{\ell,n} \bs f_\ell\bs f_\ell^* |\hat{h}[\ell]|^4 \|\tilde{\bs c}_{\ell,n}\|^2 |\hat{h}[\ell]|^2|\langle \tilde{\bs c}_{\ell,n}, \bs m_n\rangle|^2\right\|\lesssim \frac{K\mu_h^6}{L^3}\frac{\mu_m^2}{N}\\
&\left\|\mathbb{E}\sum_{\ell,n} \|\bs f_\ell\|^2 \bs c_{\ell,n}\bs c_{\ell,n}^* |\hat{h}[\ell]|^4  |\hat{h}[\ell]|^2|\langle \tilde{\bs c}_{\ell,n}, \bs m_n\rangle|^2\right\|\lesssim \frac{\mu_h^4}{L^2}(\|\bs m_n\|^2 + 1)
\end{align}
for the first term, and 
\begin{align}
&\left\|\mathbb{E}\sum_{\ell,n} \bs f_\ell\bs f_\ell^* \|\bs f_\ell\|^2|\langle \bs m_n , \tilde{\bs c}^*_{\ell,n}\rangle|^4 \|\bs m_n \|^2  |\hat{h}[\ell]|^2 \right\|\lesssim \frac{\mu_h^2}{L} \|\bs m_n\|^4\\
&\left\|\mathbb{E}\sum_{\ell,n} \|\bs f_\ell\|^4|\langle \bs m_n , \tilde{\bs c}^*_{\ell,n}\rangle|^4 \bs m_n \bs m_n^*  |\hat{h}[\ell]|^2\right\|\lesssim \frac{\mu_h^4}{L^2}(\|\bs m_n\|^2 + 1)
\end{align}
For the Orlicz norm, simply note that 
\begin{align}
\left\|\left\|\bs f_\ell \left(|\hat{h}[\ell]|^2\tilde{\bs c}^*_{\ell,n} \right)|\hat{h}[\ell]|\langle \tilde{\bs c}_{\ell,n}, \bs m_n\rangle\right\|\right\|_{\psi_1}&\lesssim \left\|\left\|\bs f_\ell \left(|\hat{h}[\ell]|^2\tilde{\bs c}^*_{\ell,n} \right)|\hat{h}[\ell]|\right\|\right\|_{\psi_2} \left\|\langle \tilde{\bs c}_{\ell,n}, \bs m_n\rangle\right\|_{\psi_2}\\
&\lesssim \|\bs m_n\|\sqrt{K}\frac{\mu_h}{\sqrt{L}}\frac{\mu_h^2}{L}
\end{align}
as well as 
\begin{align}
&\left\|\left\|\bs f_\ell \left(\|\bs f_\ell\|^2\langle \bs m_n , \tilde{\bs c}^*_{\ell,n}\rangle\bs m_n^*\right)|\hat{h}[\ell]|\langle \tilde{\bs c}_{\ell,n}, \bs m_n\rangle\right\|\right\|_{\psi_1}\nonumber \\
&\lesssim \left\|\left\|\bs f_\ell \left(\|\bs f_\ell\|^2\langle \bs m_n , \tilde{\bs c}^*_{\ell,n}\rangle\bs m_n^*\right)\right\|\right\|_{\psi_2}\left\||\hat{h}[\ell]|\langle \tilde{\bs c}_{\ell,n}, \bs m_n\rangle\right\|_{\psi_2}\\
&\lesssim \left\|\bs m_n\right\|^3\frac{\mu_h}{\sqrt{L}}
\end{align}
All of these terms can be made less than $\delta$ whenever $L\gtrsim (1/\delta)\beta\mu_h^2K$ and $N\gtrsim \mu_m^2$. The conclusion follows from applying proposition~\ref{bernstein} with $t=\beta\log(LN)$. 

\end{proof}

\subsection{\label{finNeumann}Proof of lemma~\ref{lemmafinNeumann}}

The proof of lemma~\ref{lemmafinNeumann} follows the approach in~\cite{candes2009exact}. simply note that we have,
\begin{align}
\|\mathcal{P}_T^\perp\left(\mathcal{A}^*\mathcal{A}\mathcal{P}_T\right)\left(\mathcal{P}_T - \mathcal{P}_T\mathcal{A}^*\mathcal{A}\mathcal{P}_T\right)^k\bs h\bs m^*\| &\leq \|\mathcal{P}_T^\perp\left(\mathcal{A}^*\mathcal{A}\mathcal{P}_T\right)\| \left\|\left(\mathcal{P}_T - \mathcal{P}_T\mathcal{A}^*\mathcal{A}\mathcal{P}_T\right)^k\bs h\bs m^*\right\| \\
&\leq \|\mathcal{P}_T^\perp\left(\mathcal{A}^*\mathcal{A}\mathcal{P}_T\right)\| \left\|\mathcal{P}_T - \mathcal{P}_T\mathcal{A}^*\mathcal{A}\mathcal{P}_T\right\|^k \|\bs h\bs m^*\|\\
&\leq \|\mathcal{A}^*\|\|\mathcal{P}_T(\mathcal{A})\| \left\|\mathcal{P}_T - \mathcal{P}_T\mathcal{A}^*\mathcal{A}\mathcal{P}_T\right\|^k \|\bs h\bs m^*\|
\end{align}
Now taking the sum and using the fact that $\|\mathcal{A}^*\mathcal{A}\|\leq \|\mathcal{A}\|^2\lesssim K\log(L\vee N)$ as well as 
$$|\|\mathcal{A}\mathcal{P}_T\|^2 - \|\mathcal{P}_T\||= |\|\mathcal{P}_T(\mathcal{A}^*\mathcal{A})\mathcal{P}_T\| - \|\mathcal{P}_T\||\lesssim \delta$$ 
gives,
\begin{align}
\sum_{k=k_0}^\infty \|\mathcal{P}_T^\perp\left(\mathcal{A}^*\mathcal{A}\mathcal{P}_T\right)\left(\mathcal{P}_T - \mathcal{P}_T\mathcal{A}^*\mathcal{A}\mathcal{P}_T\right)^k\bs h\bs m^*\| &\leq \sqrt{K\log(L\vee N)} \frac{\|\mathcal{P}_T - \mathcal{P}_T\mathcal{A}^*\mathcal{A}\mathcal{P}_T\|^{k_0}}{1 - \|\mathcal{P}_T - \mathcal{P}_T\mathcal{A}^*\mathcal{A}\mathcal{P}_T\|} \|\bs h\bs m^*\|
\end{align}
Now we can use the fact that lemma~\ref{lemmaT} gives $\|\mathcal{P}_T - \mathcal{P}_T\mathcal{A}^*\mathcal{A}\mathcal{P}_T\|\lesssim \sqrt{\mu_h^2K/L + \mu_m^2/N}\lesssim \delta$ as soon as $L\gtrsim K\mu_h^2$ and $N\gtrsim \mu_m^2$, the denominator is always larger than $1-\delta$. To conclude, it suffices to again apply lemma~\ref{lemmaT} on the numerator to get 
\begin{align}
\sum_{k=k_0}^\infty \|\mathcal{P}_T^\perp\left(\mathcal{A}^*\mathcal{A}\mathcal{P}_T\right)\left(\mathcal{P}_T - \mathcal{P}_T\mathcal{A}^*\mathcal{A}\mathcal{P}_T\right)^k\bs h\bs m^*\| &\leq  \sqrt{K\log(L\vee N)} \left(\frac{\mu_h^2K}{L} + \frac{\mu_m^2}{N}\right)^{k_0/2}\label{almostFinalBound}
\end{align}
In particular, as soon as $L\gtrsim K^{1+1/k_0}$ and $N\gtrsim K^{1/k_0}$, the bound~\eqref{almostFinalBound} reduces to 
\begin{align}
\sum_{k=k_0}^\infty \|\mathcal{P}_T^\perp\left(\mathcal{A}^*\mathcal{A}\mathcal{P}_T\right)\left(\mathcal{P}_T - \mathcal{P}_T\mathcal{A}^*\mathcal{A}\mathcal{P}_T\right)^k\bs h\bs m^*\| &\leq  C\label{almostFinalBound}
\end{align}
 for a constant $C$ which can be taken arbitrarly small so as to satisfy the bound on $\|\mathcal{P}_T^{\perp}(\bs Y)\|$. In our case, as $k_0 = 2$, we get the sample complexities $L\gtrsim K^{3/2}\mu_h^2$ and $N\gtrsim K^{1/2}\mu_m^2$.

\section{\label{sec:num}Numerical simulations}

In this section we provide additional numerical experiments and study the resulting phase diagrams in order to 
quantify the probability of success for the formulation~\eqref{nuclearNormDeconvolution}. To conduct the numerical experiments, we first write problem~\eqref{nuclearNormDeconvolution} as a trace minimization problem (see~\cite{fazel2001rank}). Introducing $\bm V$ as a proxy for the rank one matrix $\bm V_0$, 
$$\bm V = \left(\begin{array}{cc}
\bm V_{11} & \bm V_{12}\\
\bm V_{21} & \bm V_{22}
\end{array}\right)  \approx \bm V_0 = \left(\begin{array}{cc}
\bm h\bm h^\H & \bm h\bs m^\H\\
\bs m\bm h^\H & \bs m \bs m^\H\\
\end{array}\right),$$ 
one can write problem~\eqref{nuclearNormDeconvolution} as
\begin{align}\begin{split}
\min \quad & \text{Tr}(\bm V_{11}) + \text{Tr}(\bm V_{22})\\
s.t \quad & \mathcal{A}(\bm V_{12}) = \{ \vm{y}_n \},\quad \bm V\succeq 0.
\end{split}\label{nucnorm2bb}
\end{align}
To handle reasonably large instances of~\eqref{nuclearNormDeconvolutionLinearMap}, we follow the approach in~\cite{burer2003nonlinear} and replace the matrix $\bm V$ by a low rank factorization,
\begin{equation}
\bm V= \left[\begin{array}{c}
\bm R_{1}\\
\bm R_{2}\end{array}\right]
\left[\begin{array}{c}
\bm R_{1}\\
\bm R_{2}\end{array}\right]^\H = \bm R\bm R^\H\label{eq:factorization}
\end{equation}
where $\bm R_{1}\in \mathbb{C}^{L\times r}$ and $\bm R_2\in \mathbb{C}^{KN\times r}$ for some rank $r\ll L+KN$. Introducing multipliers $\lambda_{\ell,n}$ for each one of the constraints $\{\langle \bm A_{\ell,n}, \bm X\rangle  = \bs y_{n}[\ell]\}_{\ell,n}$ and $\sigma>0$, we then minimize the augmented Lagrangian $\mathcal{L}(\bm R)$ associated to~\eqref{nucnorm2bb} with respect to $\displaystyle \bm R_1$ and $\bm R_2$,
\begin{align}
\mathcal{L}(\bm R)=\frac{1}{2}\|\bm R_1\|_F^2+ \frac{1}{2}\|\bm R_2\|_F^2 + \sum_{\ell,n} \lambda_{\ell,n} ( \langle \bm A_{\ell,n},\bm R_1\bm R_2^\H\rangle  - \vm{y}_n[\l] )
+ \sum_{\ell,n}\frac{\sigma}{2}(\langle \bm A_{\ell,n},\bm R_1\bm R_2^\H\rangle -\vm{y}_n[\l] )^2.\label{augmentedLagrangian}
\end{align}
In addition to dealing with fewer variables, the factorization introduced in~\eqref{eq:factorization} is also a very convenient way to circumvent the constraint $\bm V\succeq 0$ since $\bm V\succeq 0 \Leftrightarrow \bm V = \bm Y\bm Y^\H$ for some $\bm Y$. In particular, taking $r=L+KN$ is equivalent to solving the original semidefinite program~\eqref{nuclearNormDeconvolutionLinearMap}. As explained in~\cite{burer2003nonlinear}, one can thus minimize~\eqref{augmentedLagrangian} for a very small rank $r\ll L+KN$, check optimality with respect to the original problem and then increase the rank in case the rank-$r$ factorization doesn't lead to the solution of the original problem. In practice we don't even need to do so and setting $r=4$ is enough to achieve convergence to the minimizer of~\eqref{nuclearNormDeconvolutionLinearMap}. 

\subsection{\label{sectionPhaseTrans}Phase transition}

The success rates for different values of the parameters $L$, $N$ and $K$ are shown in Fig.~\ref{figure1}. 
The first set of numerical experiments, corresponding to the first diagram of Fig.~\ref{figure1}, shows the rate of success as a function of both the size of the input subspace 
$K$ and the size of the ambient space $L$, the number of input signals $N$ being set to $40$. For each of the values of the pairs $(K,L)$, $100$ 
experiments were run by taking gaussian i.i.d sensing matrices $\bm C_{n}$ with $\bm C_{n}[\ell,k]\sim \mathcal{N}(0,1)$, as well as gaussian i.i.d vectors $\bs h$ and $\bs m$. For each of the $100$ experiments, we ran $40$ iterations of 
the limited memory BFGS on the augmented Lagrangian~\eqref{augmentedLagrangian}. An experiment was classified as a success (white/1)  as opposed to failure (black/0) whenever 
the relative difference between the obtained matrix $\bm X$ and the optimal solution $\bm X_0 = \bs h\bs m^{\H}$ was less than $2\%$. In other words, 
\begin{equation}
\frac{\|\bm X - \bs h\bs m^{\H}\|_F}{\|\bs h\bs m^{\H}\|_F}<.02
\end{equation}
The second diagram of Figure~\ref{figure1} shows the rate of success as a function of $K$ and $N$ for the same experimental framework. Here $L$ is fixed to $800$.

The third diagram shows the rate of success for various values of $L$ and $N$ for a fixed $K = 40$. It is worth noting that as long as there is a sufficient number of columns ($>50$), the recovery mainly depends on the parameters $K$ and $L$.

Finally, we provide a phase diagram highlighting the independence of the recovery vis a vis the sparsity of the impulse response $\bs h$ (Sparsity was studied as an additional condition on the recovery for example in~\cite{ahmed2016leveraging,ling2015self, li2016identifiability}). Figure~\eqref{figureSparse} shows that for fixed $L$ and $N$, the sparsity has no influence on the recovery which is essentially driven by the subspace dimension. This observation further motivates the extension to blind super-resolution discussed below.

\begin{figure}[h!]
  \begin{minipage}{0.5\textwidth}
    \hspace{0.4cm}\includegraphics[trim = 0cm 7cm 0cm 6cm, clip, width=\textwidth]{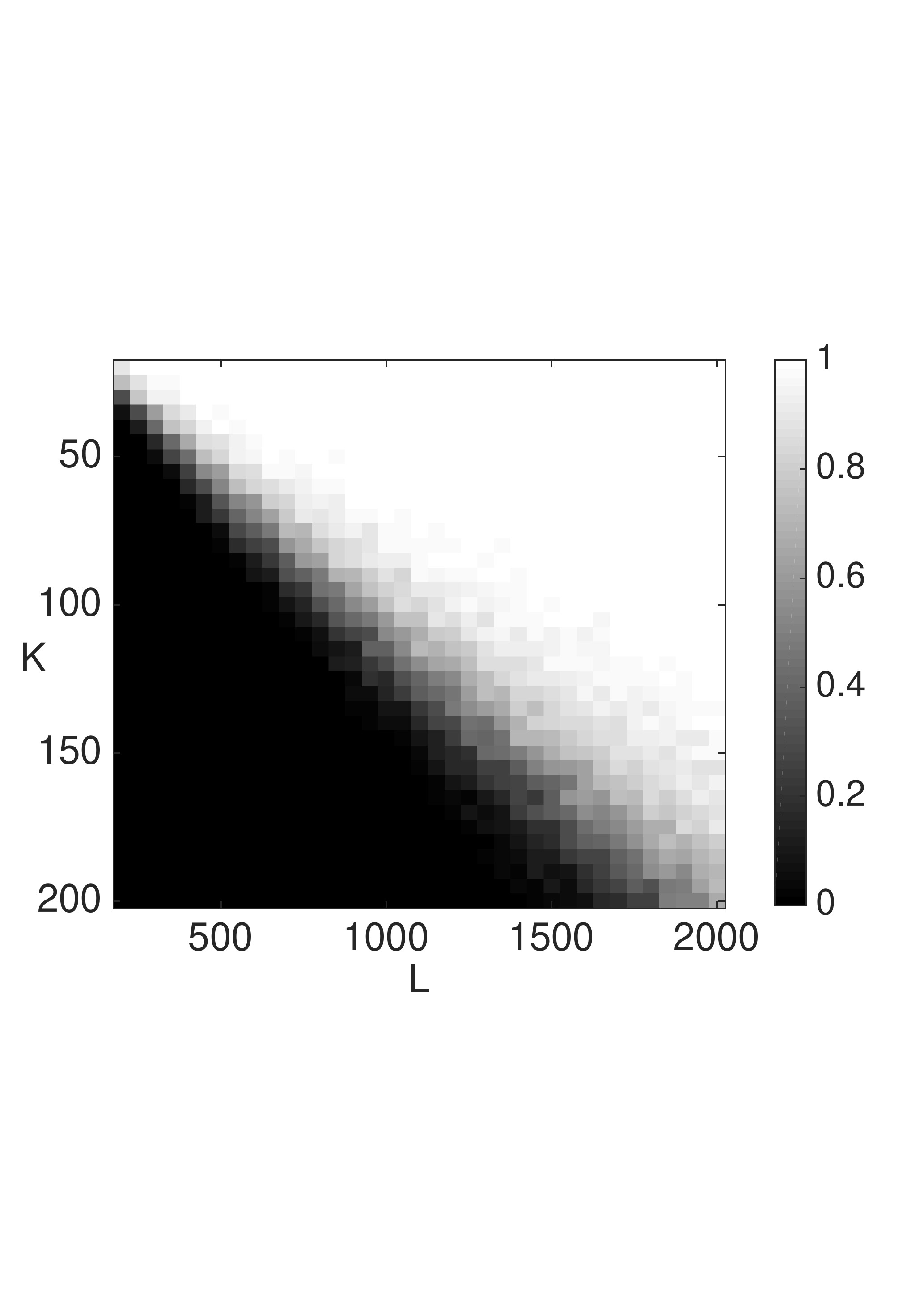}
  \end{minipage}
  \begin{minipage}{0.5\textwidth}
    \includegraphics[trim = 0cm 7cm 0cm 6cm, clip, width=\textwidth]{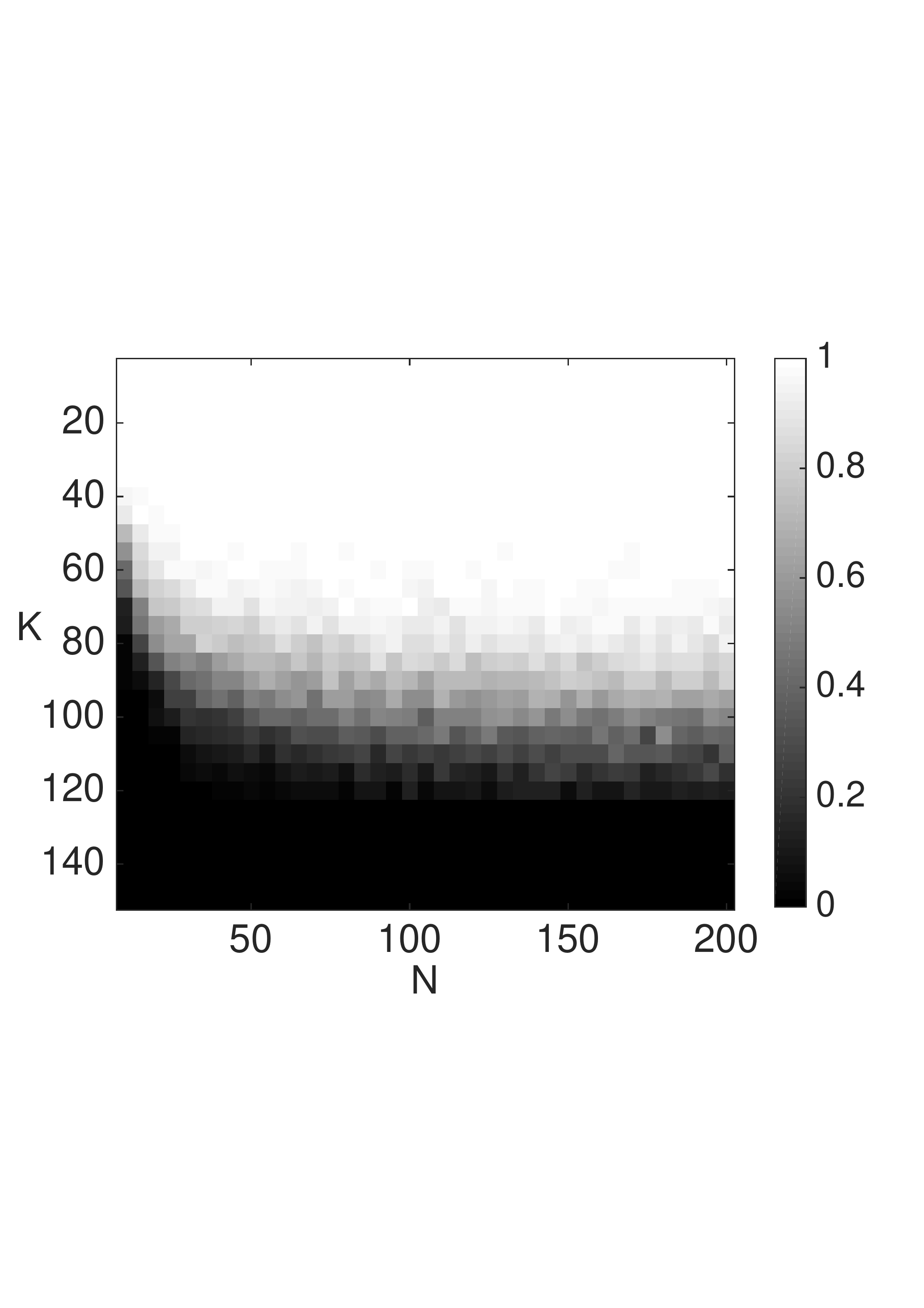}
  \end{minipage}\\
  \begin{minipage}{0.5\textwidth}
   \hspace{0.35cm}\includegraphics[trim = 0cm 7cm 0cm 6cm, clip, width=\textwidth]{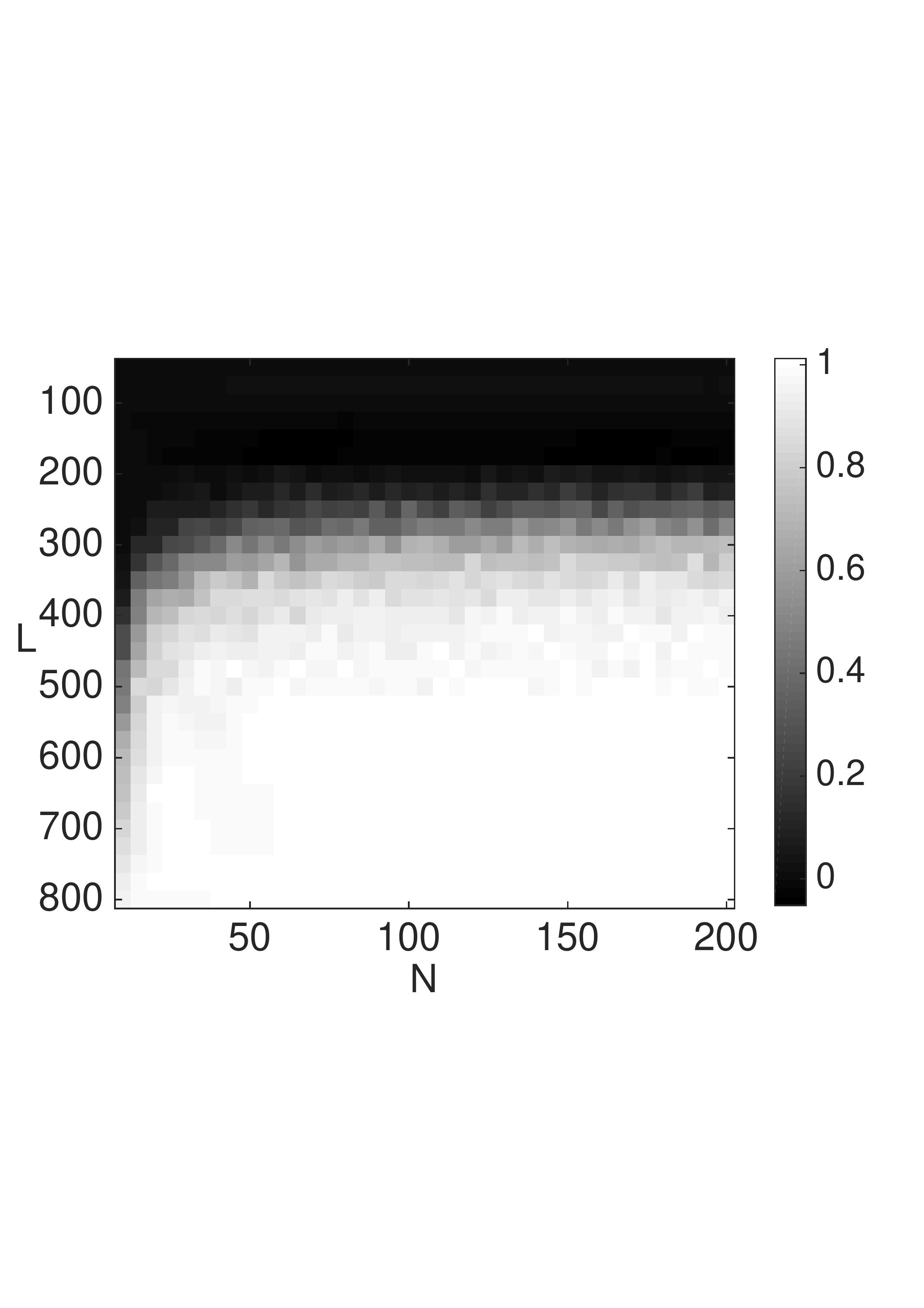}
  \end{minipage}
\caption{\label{figure1}Recovery of $\hh\mm^\H$ from nuclear norm minimization for (top left) various values of $K$ and $L$, (top right) various values of $K$ and $N$, and (bottom) various values of $L$ and $N$. White (1) is used to indicate success while black (0) is used to indicate failure. From the first figure (top left), the empirical recovery rate is seen to increase when $L$ increases and when $K$ decreases, with an expected phase transition when $L$ is a multiple of $K$. From the remaining two figures, we see that a minimum (small) value of $N$ is needed for recovery, but past this threshold, $K$ and $L$ are the only parameters affecting recovery. All these phase transitions can be explained by noting that the number of unknowns in the original problem is given by $KN + L$ whether the number of measurements is given by $LN$.}
\end{figure}

\begin{figure}[h!]\centering
  \begin{minipage}{0.8\textwidth}
    \hspace{0.6cm}\includegraphics[trim = 0cm 4.5cm 0cm 11cm, clip, width=.9\textwidth]{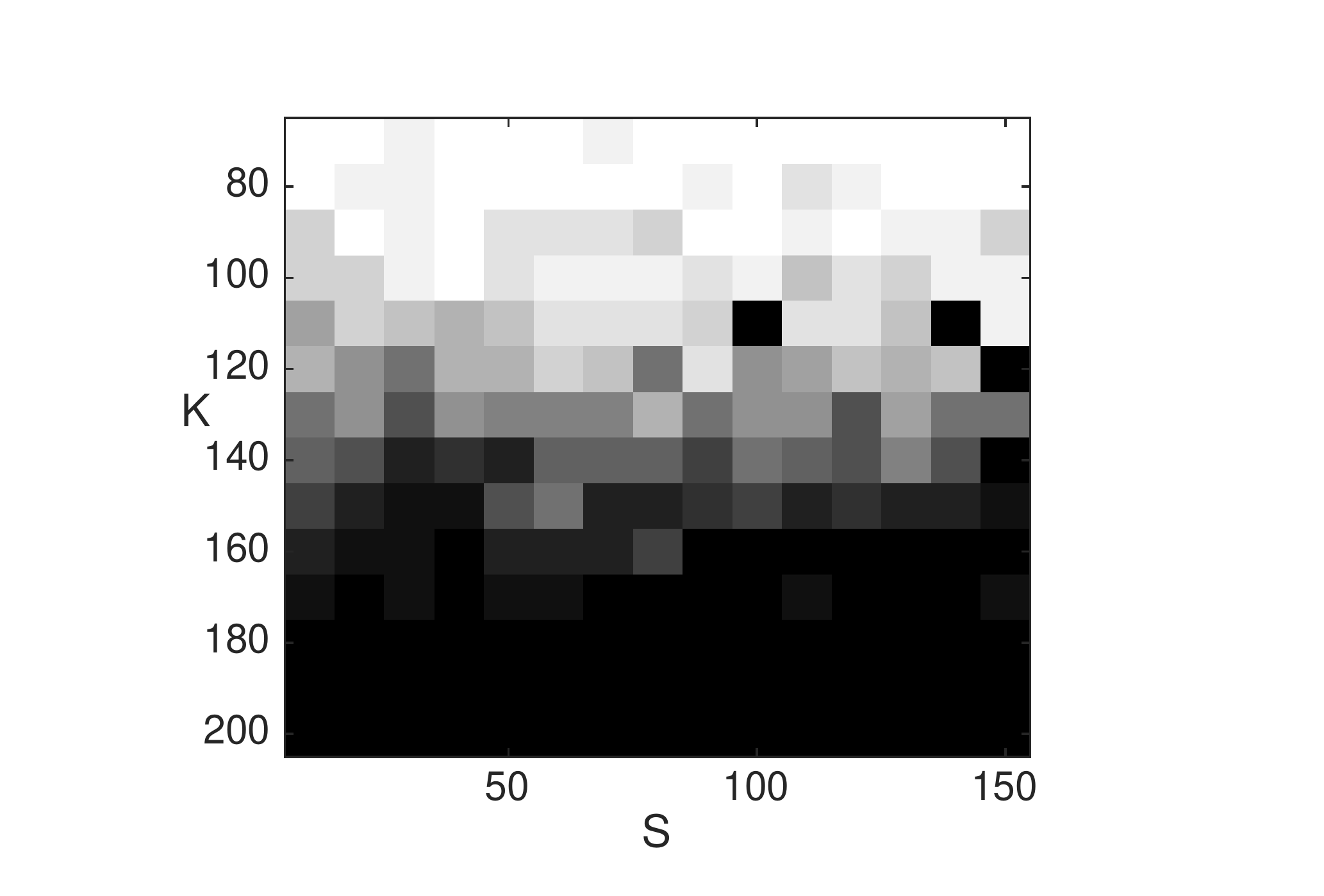}
  \end{minipage}
\caption{\label{figureSparse}Recovery of $\hh\mm^\H$ from nuclear norm minimization for various values of $K$ and $S$-sparse vectors $h$ with randomized supports. White (1) is used to indicate success while black (0) is used to indicate failure. The phase transition here illustrates the fact that recovery through the nuclear norm minimization~\eqref{nuclearNormDeconvolutionLinearMap} is essentially driven by the dimension of the inputs subspace as well as the coherences~\eqref{coherenceh} and~\eqref{coherencem} more than the sparsity $S$ of the filter.}
\end{figure}

\subsection{\label{certificateConvergence}Convergence of the certificate}

In this section, we provide numerial evidence supporting the choice of the ansatz~\eqref{candidateCert0}. As explained in section~\ref{sectionCandidateDualCert}, the most natural certificate construction as soon as the normal map $\mathcal{A}^*\mathcal{A}$ concentrates to the identity sufficiently fast, would be to use $\bs Y_1 = \mathcal{A}^*\mathcal{A}\bs h\bs m^*$. In practice, however, such a certificate performs poorly and does not achieve the sample complexity observed empirically in Fig.~\ref{figure1}. In particular, it requires poor ratios $L/K$ and $N/K$ in order for the condition $\mathcal{P}_T(\bs Y_1)$ to be met in~\eqref{dualCertConditions}. 

As soon as the map $\mathcal{P}_T\mathcal{A}^*\mathcal{A}\mathcal{P}_T$ can be shown to be injective, the second ansatz $\bs Y_2 = \mathcal{A}^*\mathcal{A}(\mathcal{P}_T\mathcal{A}^*\mathcal{A}\mathcal{P}_T)^{-1}\bs h\bs m^*$ provides a much better candidate to satisfy the first condition in~\eqref{dualCertConditions}.  In this section, we compare the two ansatz and provide numerical evidence that the conditions~\eqref{dualCertConditions} are met for both, albeit in different sample complexity regimes.

Fig.~\ref{certificateConvergence1a} first compares the projection of the two ansatz on the tangent space $T$ for the particular choice of dimensions given by $K=8$, $L = 80$ and $N=20$, and in a framework where the minimization program~\eqref{nuclearNormDeconvolutionLinearMap} is known to recover the exact solution. In this regime, the direct ansatz fails to meet the condition in~\eqref{dualCertConditions}. The use of the inverse however enables the second ansatz to satisfy the first condition exactly, in passing highlighting the fact that in practice, the constant in the first condition of~\eqref{dualCertConditions} can be taken arbitrarily large. The norms corresponding to the first condition are respectively given by $\|\mathcal{P}_T(\bs Y_1)-\bs h\bs m^*\|\approx 0.48$ and $\|\mathcal{P}_T(\bs Y_2)-\bs h\bs m^*\|\approx 1e^{-13}$.

Fig.~\ref{certificateConvergence1c} then compares the projection of the two ansatz onto the orthogonal complement $T^\perp$. The two ansatz exhibit comparable magnitude on the orthogonal complement $T^\perp$ and both of them satisfy the second condition in~\eqref{dualCertConditions} although the direct certificate seems to perform slightly better for the particular dimensions considered when considering this second condition. The norms corresponding to the second condition in~\eqref{dualCertConditions} are given respectively by $\|\mathcal{P}_T^\perp(\bs Y_1)\|\approx 0.67$ and $\|\mathcal{P}_T^\perp(\bs Y_2)\|\approx 0.72$, thus showing that for the particular choice $K=8$, $L = 80$ and $N=20$, the first candidate certificate fails to satisfy the optimality conditions corresponding to a zero subgradient whether the second candidate certificate meets both of these conditions.

\begin{figure}\centering
\input{CertificateGolfingTFinal.tex}  
\caption{\label{certificateConvergence1a}Convergence of the dual certificates on $T$ for (Top) the direct construction $\bs Y_1 = \mathcal{A}^*\mathcal{A}(\bs h\bs m^*)$ and (bottom) the construction involving the inverse, $\bs Y_2 = \mathcal{A}^*\mathcal{A}\mathcal{P}_T(\mathcal{P}_T\mathcal{A}^*\mathcal{A}\mathcal{P}_T)^{-1}\bs h\bs m^*$. Here we take $K = 8$, $N = 20$ and $L= 80$. In this framework, the nuclear norm minimization program~\eqref{nuclearNormDeconvolutionLinearMap} is known to recover the exact solution. The certificates have been downsampled by a factor $50$ for clarity.}
\end{figure}


\begin{figure}\centering
\input{ComparisonTperpFinal.tex}  
\caption{\label{certificateConvergence1c}Convergence of the dual certificates on $T^\perp$ for the direct construction $\bs Y_1 = \mathcal{A}^*\mathcal{A}(\bs h\bs m^*)$ and the construction involving the inverse, $\bs Y_2 = \mathcal{A}^*\mathcal{A}\mathcal{P}_T(\mathcal{P}_T\mathcal{A}^*\mathcal{A}\mathcal{P}_T)^{-1}\bs h\bs m^*$. Here we take $K = 8$, $N = 20$ and $L= 80$. In this framework, the nuclear norm minimization program~\eqref{nuclearNormDeconvolutionLinearMap} is known to recover the exact solution. The certificates have been downsampled by a factor $50$ for clarity.}
\end{figure}

\subsection{Applications}

In this section, we discuss some of the application of the result of this paper. Among the many applications of blind deconvolution, one should mention astronomical imaging, movie deblurring, seismic data processing, super-resolution and medical imaging. In this section we discuss two of these applications in greater details: Communication and blind super-resolution for medical imaging. 

\subsubsection{Communication and Rayleigh fading}

When transmitting signals within a densely built environment such as in wireless communication, the 
multipath nature of the transmission arising from the many reflections that the signal will face will result in a channel impulse response that can be considered completely arbitrary. A common assumption in this case is to view each of the entries in the channel transfer matrix are identically and i.i.d., following a gaussian distribution~\cite{tse2005fundamentals,li2016rapid}. This assumption is known as i.i.d. Rayleigh fading. In such a regime, it is clear that 
\begin{itemize}
\item No sparsity or non vanishing assumption holds on the channel impulse response, and
\item when some of the entries from the channel transfer matrix are very small, or vanish, linearized approaches such as~\cite{gribonval2012blind,ling2016self} do not hold.
\end{itemize}
 
This paper is precisely interested in this regime as it removes the need for sparsity or non vanishing assumptions on the filter $\bs h$. For more details on Rayleigh fading, see chapters 2, 7 in~\cite{tse2005fundamentals}.

\subsubsection{\label{secBlindSuperResMI}Blind super-resolution and medical imaging}

To evaluate the interest of the nuclear norm minimization~\eqref{nuclearNormDeconvolutionLinearMap} for the super-resolution of signals, we consider $3$ datasets. Before discussing each of these datasets and provide corresponding results of the nuclear norm reconstruction for each of these datasets, we briefly recall the framework of blind super-resolution.  
In super-resolution, one is interested in recovering a signal $\bs x$ from the result of its convolution with a known (ideal) low pass filter $h(x,y)$. In Fourier space, the convolution with the low pass filter thus reads as 
\begin{align}
\hat{y}(\omega_1,\omega_2) = \hat{x}(\omega_1,\omega_2)\hat{h}(\omega_1,\omega_2)
\end{align}

with $\hat{h}(\omega_1,\omega_2) = 0$ for $|\omega|:=\sqrt{\omega_1^2 +\omega_2^2}>\Omega_c$ (see for example~\cite{candes2014towards}). The super-resolution framework naturally extends to the problem of recovering a signal from the result of its convolution with an unknown low pass filter. This extension, known as blind super-resolution, arises in many engineering applications such as medical imaging, microscopy, astronomy or even seismic imaging. In these applications, acquisition at lower resolution is often either performed by adding a blurring filter in order to reduce the influence of the side lobes arising from the convolution with the sinc, or the ideal low pass filter is corrupted by spurious reflections, noise or phase ambiguities, and is therefore only partially known. An illustration of the convolution with an ideal low pass filter with and without gaussian blurr is given in Fig.~\ref{superResolutionKernel}. As explained in section~\ref{connectionsExistingWork}, recent developments on the subject include~\cite{yang2016super} in which Yang et al. discuss the problem in the case where the output is given by a sum of spikes modulated by unknown (random) waveforms. The spikes have to satisfy a minimum separation condition and the waveforms are assumed to be generated as random combinations of random vectors satisfying an incoherence condition.

Although this paper primarily focuses on blind deconvolution, it also applies in the case of blind super-resolution as is shown below. Using a gaussian ideal low pass filter, we now provide three examples of blind super-resolution through the minimization program~\eqref{nuclearNormDeconvolutionLinearMap}.
\begin{itemize}
\item In the first one, we consider the recovery of one dimensional wavelet trains from their convolution with a one dimensional gaussian ideal low pass filter. The results are shown in Figs.~\ref{SRwavelet1} and~\ref{SRwavelet2}. In these examples, nuclear norm minimization is shown to recover the exact wavelet train and to remove the spurious oscillations arising from the convolution with the sinc.
\item In the second example, we consider the 3D Shepp Logan phantom of Schabel\footnote{https://www.mathworks.com/matlabcentral/fileexchange/9416-3d-shepp-logan-phantom}. For each of the $x$-$y$ slices from this phantom, we use as subspaces $\bs C_n$, the wavelets corresponding to the $K$ largest coefficients from the discrete wavelet transform of each of the frames. We consider data generated from the convolution of the slices with a gaussian (ideal) low pass filter similar to the one shown in Fig.~\ref{superResolutionKernel}. The low resolution images together with the recovered (super-resolved) ones are shown in Fig.~\ref{superResolution1},~\ref{superResolution2}. Figs~\ref{superresolution1a} and~\ref{superresolution2a} illustrate the evolution of the recovery with the cutoff frequency $\omega_c$.
\item Super-resolution is particularly interesting in medical imaging and Magnetic Resonance Imaging (MRI) where a reduction in the number of samples acquired in $k$-space, the equivalent of the two-dimensional Fourier space, leads to a reduction in the acquisition time. For an expensive imaging modality such as MRI, a reduction in the acquisition time means a larger number of faster examinations and thus a reduction of the waiting time for a fixed number of magnets. As a third example, we consider the MRI dataset distributed by the  Stanford data archive\footnote{https://graphics.stanford.edu/data/voldata/}, originally generated at the University of North Carolina. This MRI dataset consists of $109$ slices of the skull, each being of size $256$ by $256$. We start by applying a low pass gaussian filter to each slice in the volume. We then study reconstruction of these slices through the nuclear norm minimization program~\eqref{nuclearNormDeconvolutionLinearMap}. The results are shown in Figs~\ref{MRsequenceNoisy}, ~\ref{MRComparisonImages2},~\ref{MRComparisonImages2Arrows} and ~ \ref{MRComparisonImages2Arrows2}. The subspaces $\bs C_n$ are defined from the wavelets corresponding to the largest $K$ coefficients of the discrete wavelet transform obtained for each slice.  Figs~\ref{MRsequenceNoisy} and~\ref{MRComparisonImages2} compare the original image to the low pass image and recovered image as well as the correponding original low pass filter and recovered filter. Figs~\ref{MRComparisonImages2Arrows} and~\ref{MRComparisonImages2Arrows2} provide further highlights on the details that can be recovered through nuclear norm minimization. 
\end{itemize}

\begin{figure}[h!]\centering
  \begin{minipage}{0.8\textwidth}
    \hspace{0.6cm}\includegraphics[trim = 0cm 1cm 0cm 8cm, clip, width=.9\textwidth]{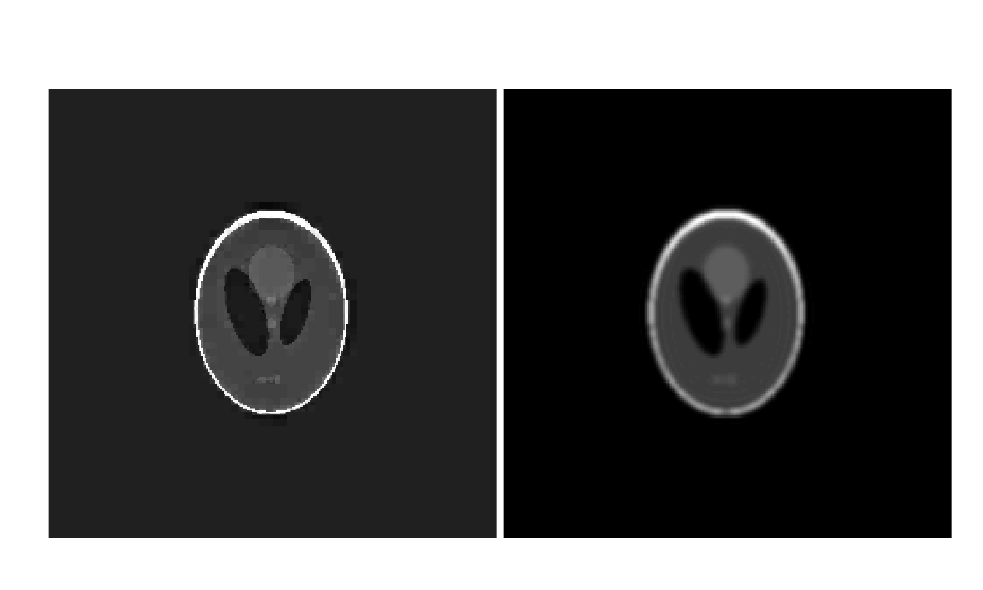}
  \end{minipage}
\caption{\label{superresolution1a}Evolution of the recovery in the blind super-resolution framework, for different sizes of the filter support (Part I). The original (low resolution) slice from the 3D Shepp Logan volume is shown on the right and the result obtained through the nuclear norm relaxation~\eqref{nuclearNormDeconvolutionLinearMap} is shown on the left. Here $K \approx L/20$, $L= 200^2$, $|\mbox{supp}(\bs h)| = L/10$ and the subspaces $\bs C_n$ are defined from the discrete wavelet transform with Daubechies wavelets by taking the wavelets corresponding to the $K$ largest coefficients of the DWT.}
\end{figure}

\begin{figure}[h!]\centering
  \begin{minipage}{0.8\textwidth}
    \hspace{0.6cm}\includegraphics[trim = 0cm 1cm 0cm 8cm, clip, width=.9\textwidth]{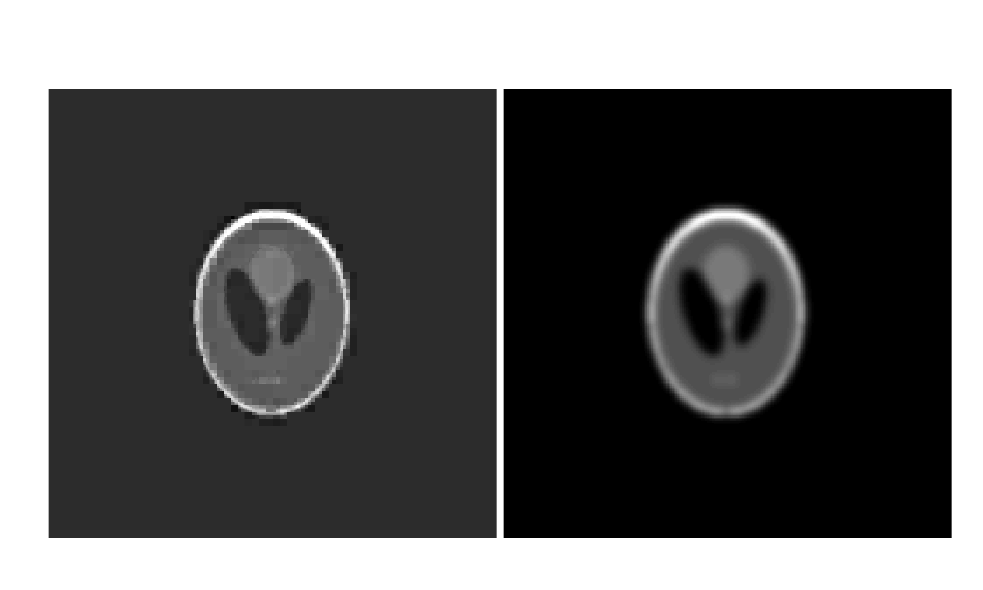}
  \end{minipage}
\caption{\label{superresolution2a}Evolution of the recovery in the blind super-resolution framework, for different sizes of the filter support (Part I). The original (low resolution) slice from the 3D Shepp Logan volume is shown on the right and the result obtained through the nuclear norm relaxation~\eqref{nuclearNormDeconvolutionLinearMap} is shown on the left. Here $K \approx L/20$, $L= 200^2$, $\mbox{supp}(\bs h) = L/16$ and the subspaces $\bs C_n$ are defined from the discrete wavelet transform with Daubechies wavelets by taking the wavelets corresponding to the $K$ largest coefficients of the DWT.}
\end{figure}

\begin{figure}[h!]\centering
  \begin{minipage}{0.8\textwidth}
    \hspace{0.6cm}\includegraphics[trim = 0cm 0cm 0cm 6cm, clip, width=.9\textwidth]{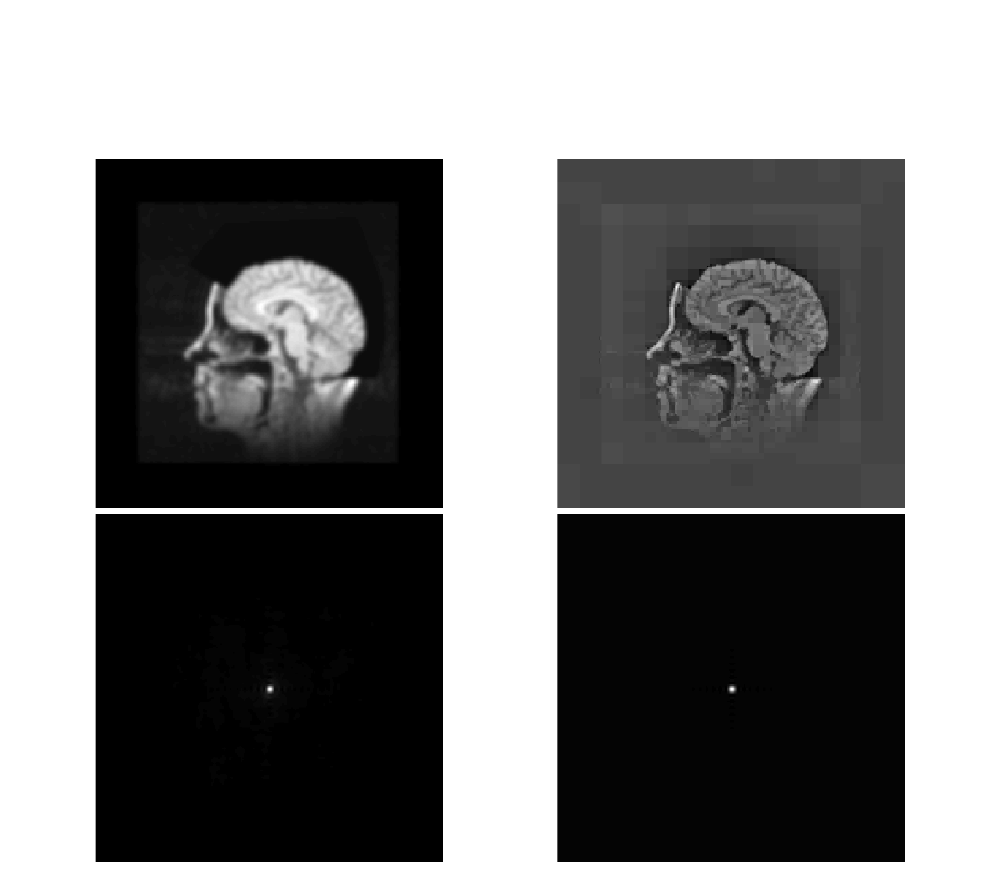}
  \end{minipage}
\caption{\label{MRsequenceNoisy}Ideal low pass filtering and blind super-resolution of MR images through nuclear norm minimization. From top to botton, left to right. (Top left) Original ideal low pass filtering of the Magentic resonance image shown in Fig.~\ref{MRComparisonImages2}. (Top right) recovery through semidefinite programming. (Bottom left) Original low pass (sinc) filter and (bottom right) recovery of the low pass filter. Additional comparison of the quality of the super-resolved image vs low pass image is given in Fig.~\ref{MRComparisonImages2Arrows}. The change in contrast appearing in the low pass image comes from the blurring of the forehead which has the highest intensity in the original image. Such blurring homogeneize the forehead with the background then resulting in a reduced contrast and apparent higher intensity of the global image.}
\end{figure}

\begin{figure}[h!]\centering
\includegraphics[width=\linewidth]{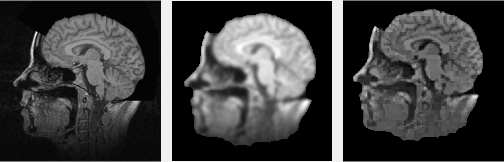}
\caption{\label{MRComparisonImages2} From left to right, Slice from the original MR volume, approximation obtained throug ideal low pass filtering through a truncated gaussian filter similar to the one depicted in Fig~\ref{superResolutionKernel} and recovered (super-resolved) image obtained through the nuclear norm minimization program~\eqref{nuclearNormDeconvolutionLinearMap}. An additional background subtraction step has been added to facilitate the comparison. The original (unprocessed) images are shown in Fig.~\ref{MRsequenceNoisy}.}
\end{figure}

\begin{figure}[h!]\centering
    \begin{minipage}{\textwidth}
    \hspace{0.5cm}\includegraphics[trim = 0cm 0cm 0cm 0cm, clip, width=.9\textwidth]{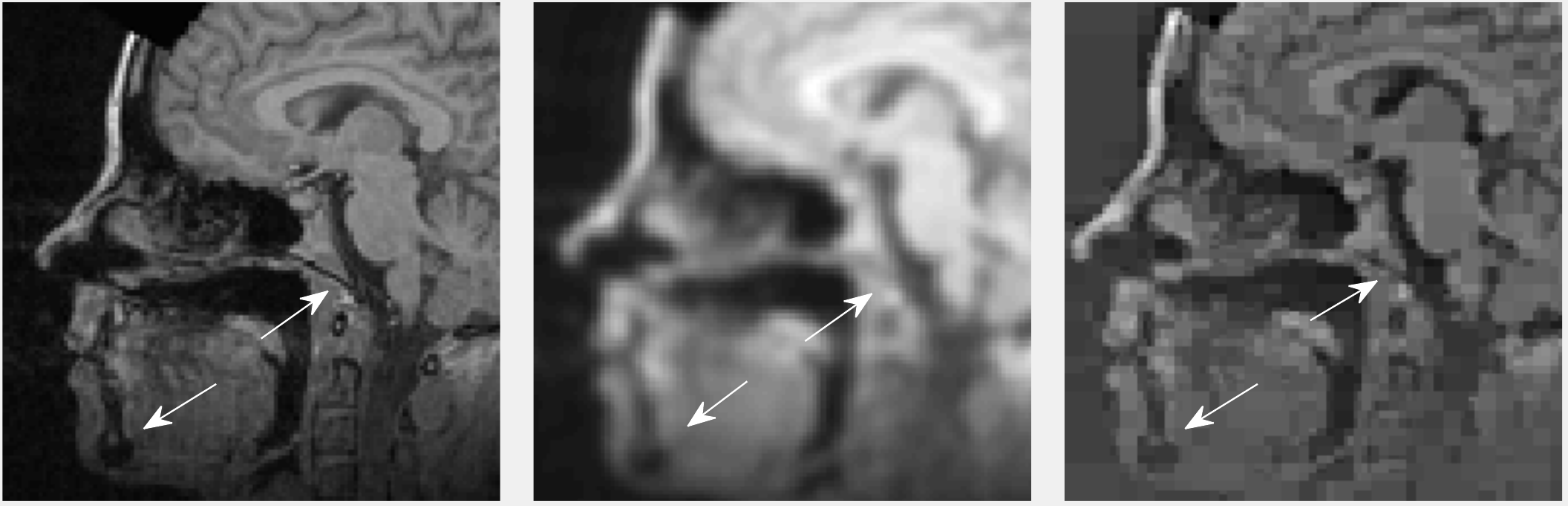}
  \end{minipage}
\caption{\label{MRComparisonImages2Arrows}. Additional illustration of the use of the nuclear norm minimization program~\eqref{nuclearNormDeconvolutionLinearMap} in the framework of the blind super-resolution of MR images from a same volume. (Part I)}
\end{figure}

\begin{figure}[h!]\centering
    \begin{minipage}{\textwidth}
    \hspace{0.5cm}\includegraphics[trim = 0cm 0cm 0cm 0cm, clip, width=.9\textwidth]{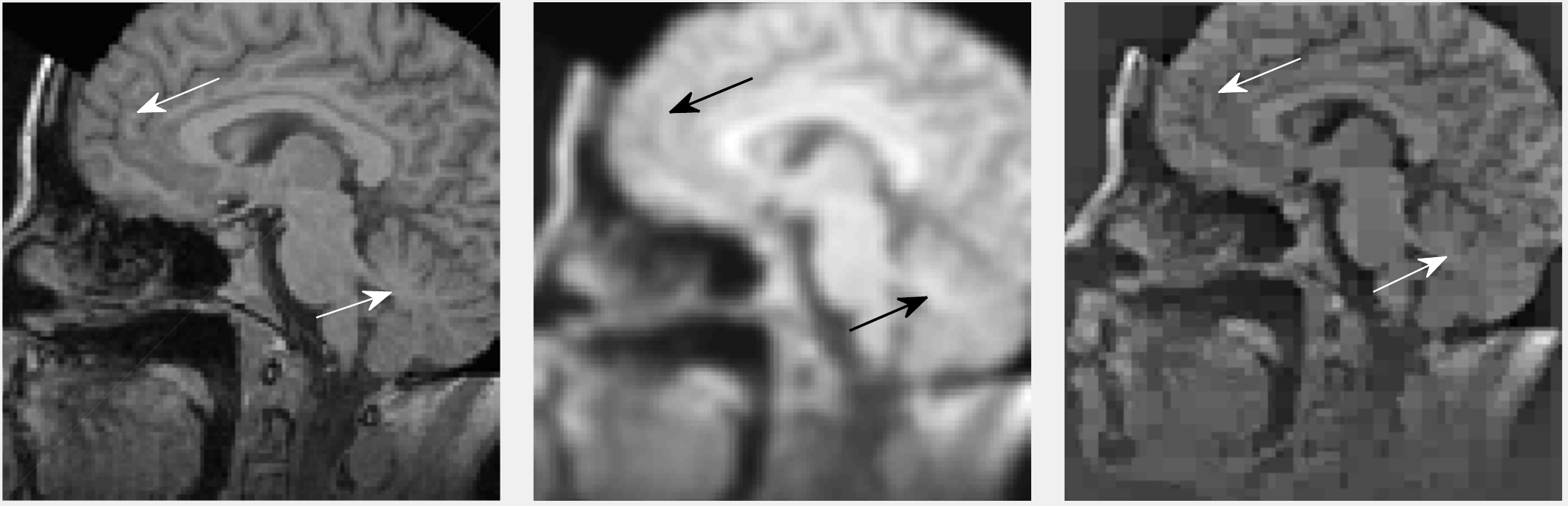}
  \end{minipage}
\caption{\label{MRComparisonImages2Arrows2}. Additional illustration of the use of the nuclear norm minimization program~\eqref{nuclearNormDeconvolutionLinearMap} in the framework of the blind super-resolution of MR images from a same volume (Part 2).}
\end{figure}


\begin{figure}
\input{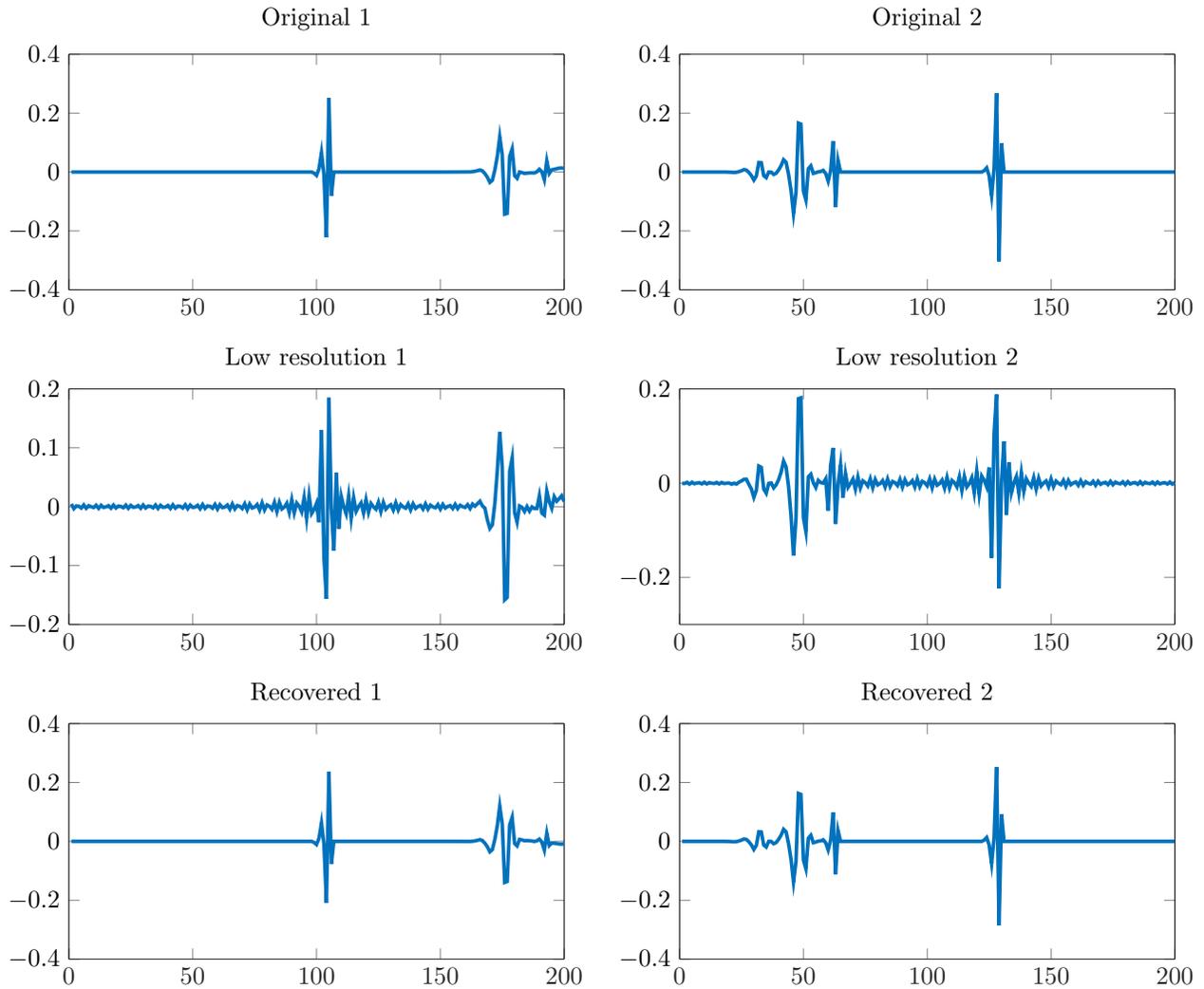}
\caption{\label{SRwavelet1}Blind super-resolution of a wavelet train from its convolution with an ideal gaussian low pass filter. The side lobes arising from the convolution with $\mbox{sinc}$ induced by the windowing in Fourier space are clearly visible on the right figure. The subspaces $\bs C_n$ are defined from the wavelets corresponding to the $K$ largest coefficients of the discrete wavelet transform of each train. Elimination of the spurious high frequency oscillations following from the low pass nature of the filter, through the minimization program~\eqref{nuclearNormDeconvolutionLinearMap} are further highlighted in Fig.~\ref{SRwavelet2} where the recovered train is superimposed on its low pass approximation.}
\end{figure}

 \begin{figure}
\input{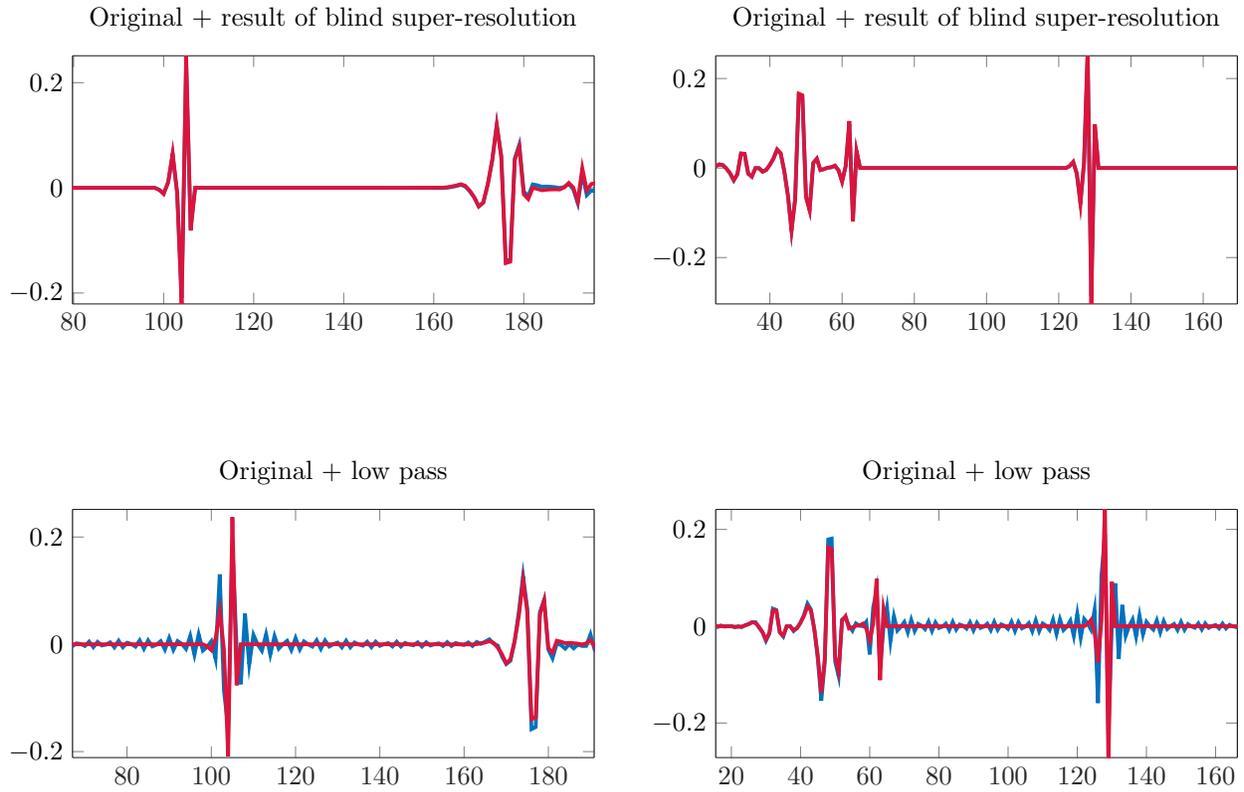}
\caption{\label{SRwavelet2}Further illustration of the blind super-resolution of a wavelet train from its convolution with an ideal gaussian low pass filter. The recovered (super-resolved) wavelet train (red) is superimposed on its low pass approximation. Note the side lobes arising from the $\mbox{sinc}$ and resulting from the multiplication in Fourier space with the ideal low pass filter of Fig.~\ref{superResolutionKernel}. The subspaces $\bs C_n$ are defined from the wavelets corresponding to the $K$ largest coefficients of the discrete wavelet transform of each train.}
\end{figure}

\begin{figure}\centering
     \includegraphics[trim = 0cm 4cm 0cm 5cm, clip, width=.8\linewidth]{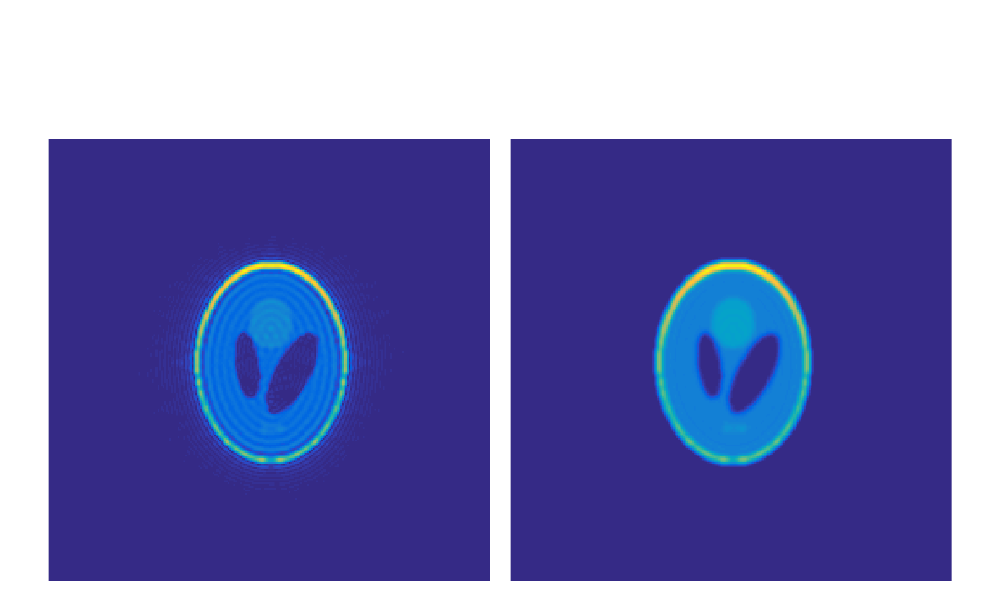}
\caption{\label{superResolutionFramework} Low resolution images obtained through ideal low pass filtering with (right) and without (left) the addition of a blurring kernel. In the left image, which is obtained without the blurring kernel, the effect of the side lobes arising from the sinc is clearly visible whether on the right image, the blurring step helped mitigrate those effects. Both images result from a windowed acquisition in frequency.}
\end{figure}

\begin{figure}\centering
    \includegraphics[trim = 0cm 0cm 0cm 0cm, clip, width=.5\linewidth]{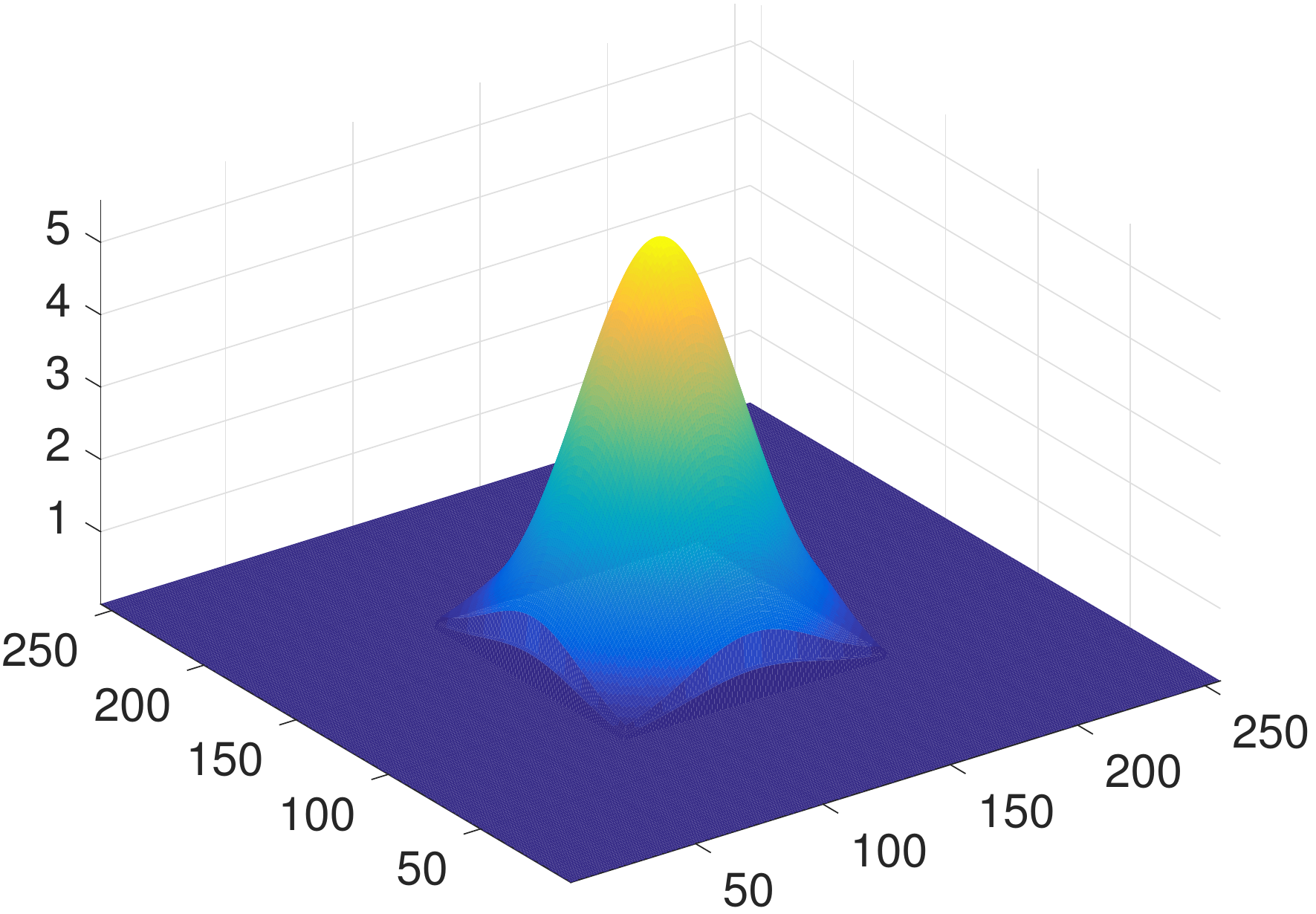}
\caption{\label{superResolutionKernel} Representation of a gaussian ideal low pass filter such as the one used in the super-resolution examples of section~\ref{secBlindSuperResMI}. All the coefficients above and below the cutoff frequencies $\pm \omega_1$ and $\pm \omega_2$ have been set to zero.}
\end{figure}


\begin{figure}\centering
     \includegraphics[trim = 0cm 2cm 0cm 2cm, clip, width=.8\linewidth]{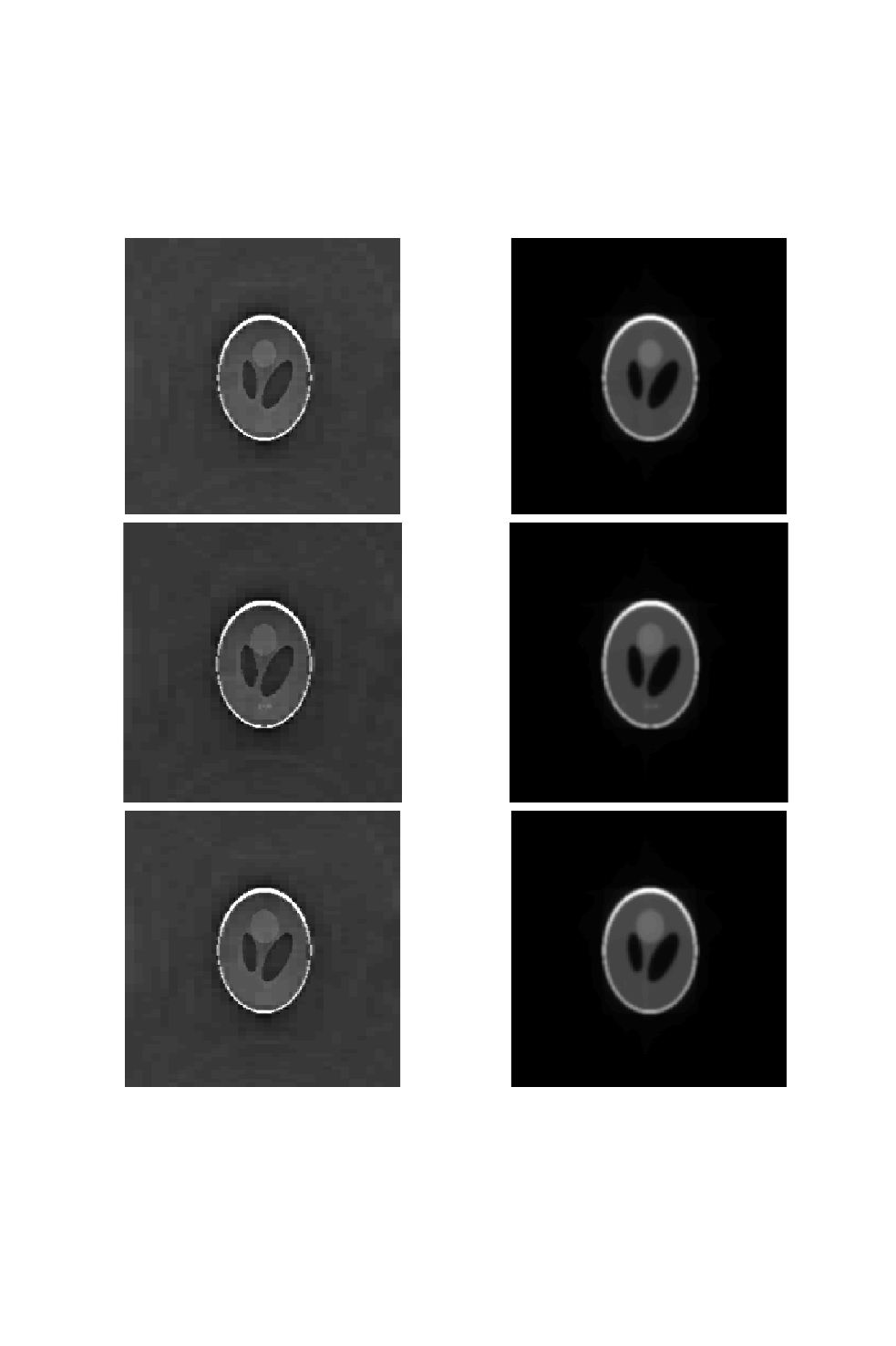}
\caption{\label{superResolution1} Reconstruction of the 3D Shepp-Logan phantom from blurred, low pass sample images. The original image before undersampling together with the filter and its corresponding transfer function are shown in Fig.~\ref{SingleFrameFail} (top row). The filter used here is gaussian and multiplied by a rectangular window zeroing out all frequencies above $\pm (k^c_x,k_y^c)$ such as shown in Fig.~\ref{superResolutionKernel}. The subspaces $\bs C_n$ are defined by considering the wavelets corresponding to the $K$ largest coefficients obtained from the discrete wavelet transform of each of the volume slices.}
\end{figure}

\begin{figure}\centering
     \includegraphics[trim = 0cm 2cm 0cm 4cm, clip, width=.8\linewidth]{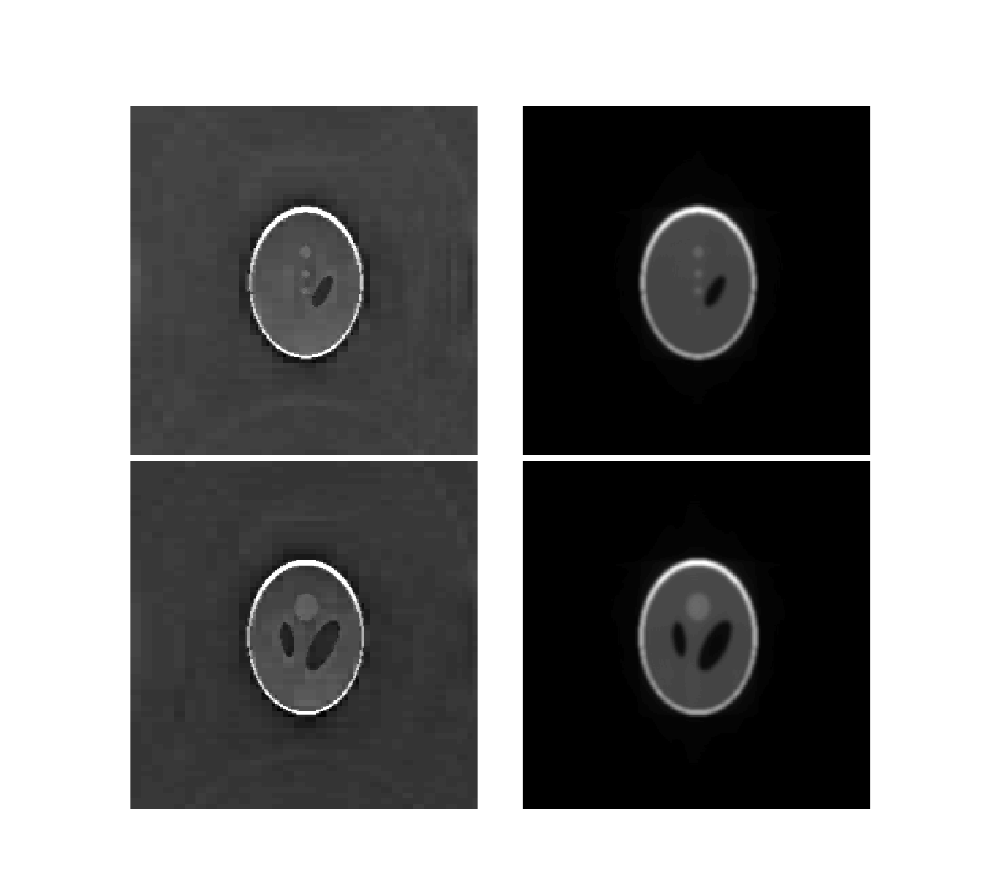}
\caption{\label{superResolution2} Reconstruction of the 3D Shepp-Logan phantom from blurred, low pass samples (continued). (Right) Original, blurred and low pass images. (Left) Recovery through the nuclear norm minimization program~\eqref{nuclearNormDeconvolutionLinearMap}. The gaussian ideal low pass filter has the same structure as shown in Fig.~\ref{superResolutionKernel} and the subspaces $\bs C_n$ are defined by considering the wavelets corresponding to the $K$ largest coeffcients obtained from the discrete wavelet transform of each of the volume slices.}
\end{figure}

\section{\label{sec:conclusion}Conclusions and perspectives}

In this paper we have considered a version of the blind deconvolution problem where the paradigm is shifted towards an arbitrary transfer function at the expense of requiring a small number of sufficiently distinct input signals to guarantee the recovery. Theory shows that whenever the number of inputs $N$, the ambient dimension $L$ and the dimension of the inputs subspace $K$ satisfy $L\gtrsim K^{3/2}\mu_h^2$ and $N\gtrsim K^{1/2}\mu_m^2$ up to log factors and for coherences defined as in~\eqref{coherenceh} and~\eqref{coherencem}, both the unknown filter as well as the unknown input signals can be recovered from the outputs to the filter by means of the nuclear norm minimization program~\eqref{nuclearNormDeconvolutionLinearMap}. When trying to recover an unknown filter of length $L$ whose Fourier transform is sufficiently "spread out" from its outputs, having a collection of a few outputs is therefore helpful. Such a framework finds applications in medical imaging, astronomy or microscopy where multiple compressible slices of a same volume are usually acquired. 

The importance of the coherences~\eqref{coherencem} and~\eqref{coherenceh} in the formulation of Theorem~\ref{theorem:BlindDeconvTh} illustrates the fact that the measured outputs have to carry a sufficient amount of information on both the impulse response h and the input signals. An intuition for this is that if the energy of the filter is concentrated at a single position in Fourier space, most of the measurements will be zero and won’t therefore carry information on the input signals. An equivalent statement holds whenever a number of input signals are zero. In other words, what really matters is the number of "meaningful" input signals and the fact that the probing of the filter through those signals should provide a sufficient amount of information on this filter. The result of this paper also shows that the notion of coherence alone is sufficient to describe the recovery. This is in contrast to previous work where both sparsity and sufficient incoherence were required (see for example~\cite{ahmed2016leveraging} or~\cite{ling2015self}). A direct consequence of this conclusion is that as soon as the support of the Fourier transform of the filter is on the order of the ambient dimension, it remains possible to recover both the filter and input signals including when the support is stricly smaller than this dimension. This observation implies that the nuclear norm minimization program~\eqref{nuclearNormDeconvolutionLinearMap} can be used in the framework of blind super-resolution as well.

To certify recovery through nuclear norm minimization, we construct a certificate of optimality. Exact recovery can then be shown by proving that the certificate satisfies the conditions derived from requiring the subgradient of the Lagrangian to vanish at the solution $\bs X_0$. In this case, the candidate certificate expands as a Neumann series, and certifying these conditions can be done by controlling each of the terms in the series. We applied \textit{ad-hoc} concentration results to the first two terms only as it is assumed that deriving bounds on a constant number of higher order terms should follow the same approach and is likely to become gradually heavier in terms of the derivations needed. It is likely that the sample complexity will benefit from further applications of the decoupling inequality of de la Pen\~a and Gin\'e~\cite{de1995decoupling}. It is not clear, however whether such an inequality can be used to achieve the sample complexity observed empirically ($L\gtrsim K$, $N\gtrsim 1$) as the constant appearing in this inequality scales badly with the order of the $U$-statistics involved.  

This work naturally raises an important open question: How far can we go in the complexity of the filter while still being able to certify the recovery? So far we have assumed that the filter remained constant spatially. In many applications, however, the point spread function varies with the position and it is not clear whether the proof techniques used in this paper can translate easily to that framework. Similar open problems include the more general field of blind linear system identification or even non linear system identification. 

Certainly equally interesting is the question of the efficiency of nuclear norm minimization for blind deconvolution. Why does the nuclear norm perform so well on the blind deconvolution problem? The probabilistic method reveals an elegant tool to derive recovery guarantees yet it does not make use of the particular structure of the problem, and as a consequence, is unable to explain the efficiency of nuclear norm minimization for that particular problem structure. Blind deconvolution however seems a natural candidate for a better understanding of the propagation of information in semidefinite relaxations such as discussed in~\cite{cosserank}, in the framework of matrix completion.

\bibliography{biblio.bib}
\bibliographystyle{abbrv}

\end{document}